\documentclass[11pt]{article} 
\pdfoutput=1
% keep \pdfoutput=1 in the first 5 lines

\usepackage{hyperref}
\usepackage{fullpage}

\usepackage{amsthm}
\usepackage{eucal}
\usepackage{dsfont}
\usepackage{libertine}
\usepackage{newtxmath}
\usepackage[capitalise]{cleveref}
\usepackage{thm-restate}
\usepackage{subfig}
\usepackage{environ}
\usepackage{tikzsymbols}
\usepackage[normalem]{ulem}
\usepackage{multirow}
\usepackage{afterpage}

%% new section font style
\makeatletter
\renewcommand{\section}{\@startsection {section}{1}{\z@}%
             {-3.5ex \@plus -1ex \@minus -.2ex}%
             {2.3ex \@plus .2ex}%
             {\normalfont\Large\scshape\bfseries}}
\renewcommand{\subsection}{\@startsection{subsection}{2}{\z@}%
             {-3.25ex\@plus -1ex \@minus -.2ex}%
             {1.5ex \@plus .2ex}%
             {\normalfont\large\scshape\bfseries}}
\renewcommand{\subsubsection}{\@startsection{subsubsection}{2}{\z@}%
             {-3.25ex\@plus -1ex \@minus -.2ex}%
             {1.5ex \@plus .2ex}%
             {\normalfont\normalsize\scshape\bfseries}}
\makeatother

\allowdisplaybreaks[1]

\newcommand{\Description}[1] {}

%\newcommand {\authorblock} [3] {
%	\begin{minipage}[t]{0.3\linewidth}
%		\centering
%		{#1}\\[0.8ex]
%		{\footnotesize {#2}\\[-0.7ex]
%		\email{#3}}
%	\end{minipage}\vspace{1ex}
%}

%\hypersetup{
%	pdfstartview={FitH},
%	pdfdisplaydoctitle={true},
%	breaklinks={true},
%	bookmarksopen={true},
%	bookmarksnumbered={false},
%	pdftitle={\mytitle}
%%	pdfauthor={\Anurag, \Rahul, \Priyanka, \Ala, \Penghui}
%}

% tikz

\usepackage{mathtools}

\usepackage{tikz}
\usetikzlibrary{arrows}
\usepackage{qcircuit}

% Mathcal

% theorem, lemma, fact, etc.

\theoremstyle{plain}
\newtheorem{theorem}{Theorem}[section]
\newtheorem{lemma}[theorem]{Lemma}

\newtheorem{cor}[theorem]{Corollary}

\theoremstyle{definition}
\newtheorem{definition}[theorem]{Definition}

\newtheorem{remark}[theorem]{Remark}
\newtheorem{fact}[theorem]{Fact}
\newtheorem{example}[theorem]{Example}

% basic math

\newcommand {\minusspace} {\: \! \!}

\newcommand {\Fn} [2] {\ensuremath{ #1 \minusspace \Br{ #2 } }}

\newcommand {\set} [1] {\ensuremath{ \left\lbrace #1 \right\rbrace }}

% brackets, norm, etc.

\newcommand{\normthree}[1]{{\left\vert\kern-0.25ex\left\vert\kern-0.25ex\left\vert #1 \right\vert\kern-0.25ex\right\vert\kern-0.25ex\right\vert}}
\newcommand {\br} [1] {\ensuremath{ \left( #1 \right) }}
\newcommand {\Br} [1] {\ensuremath{ \left[ #1 \right] }}

\newcommand {\norm} [1] {\ensuremath{ \left\| #1 \right\| }}
\newcommand {\normsub} [2] {\ensuremath{ \norm{#1}_{#2} }}

\newcommand {\abs} [1] {\ensuremath{ \left| #1 \right| }}

\newcommand {\bra} [1] {\ensuremath{ \left\langle #1 \right| }}
\newcommand {\ket} [1] {\ensuremath{ \left| #1 \right\rangle }}
\newcommand {\ketbratwo} [2] {\ensuremath{ \left| #1 \middle\rangle \middle\langle #2 \right| }}
\newcommand {\ketbra} [1] {\ketbratwo{#1}{#1}}
\newcommand {\braket} [2] {\ensuremath{\left \langle #1 | #2 \right \rangle}}

% probability and information theory

%\newcommand {\defeq} {\ensuremath{ \stackrel{\mathrm{def}}{=} }}

\newcommand {\prob} [1] {\Fn{\Pr\,}{#1}}

\DeclareMathOperator*{\bigE}{\mathbb{E}}
\newcommand {\expec} [2] {\Fn{\bigE_{\substack{#1}}}{#2}}

% matrix theory

\newcommand {\Tr} {\ensuremath{ \mathrm{Tr} }}

\newcommand {\id} {\ensuremath{\mathds{1}}}

% communication complexity

%miscellaneous

%\newcommand {\email} [1] {\href{mailto:#1}{\texttt{#1}}}

\usetikzlibrary{calc}
\tikzset{meter/.append style={draw, inner sep=10, rectangle, font=\vphantom{A}, minimum width=30, line width=.8,
 path picture={\draw[black] ([shift={(.1,.3)}]path picture bounding box.south west) to[bend left=50] ([shift={(-.1,.3)}]path picture bounding box.south east);\draw[black,-latex] ([shift={(0,.1)}]path picture bounding box.south) -- ([shift={(.3,-.1)}]path picture bounding box.north);}}}
 \usetikzlibrary{decorations.pathreplacing}

\NewEnviron{Anonymous}{\ifx\AnonymousSwitch\undefined\BODY\fi}

\graphicspath{ {images/} }

 % make inside tex red
 % symbol for (the set of) density operators
 % symbol for (the set of) PSD operators
 % symbol for (the set of) Hermitian operators
 % symbol for (the set of) Matrices

%\newcommand {\tr} {\ensuremath{\tau}} % normalized trace operator

 % this symbol: |E><E|
 % Entropy operator with epsilon

 % \epsnorm{A}{p} is |||A|||_{eps, p}

% \newcommand {\PA} {(\diamondsuit)}

%%%%%%%%%%%%%%%%%%% bzb -- begin

%\usepackage{commath} % don't use this package !!! it breaks \normthree
\usepackage{physics}
 % normalized trace operator
\usepackage{bm}
\def\f{\frac}
\makeatletter
\DeclarePairedDelimiter{\@lrp}{(}{)}
\DeclarePairedDelimiter{\@lrb}{[}{]}
\DeclarePairedDelimiter{\@lrs}{\{}{\}}
\def\p{\@lrp*}
\def\lrb{\@lrb*}
\def\lrs{\@lrs*}

\def\diag{\mathrm{diag}}
\makeatother
\renewcommand{\tilde}{\widetilde}

\def\x{\textbf{x}}

%%%%%%%%%%%%%%%%%%% bzb -- end

%%%%%%%%%%%%%%%%%%% definitions regarding the Quantum Erasure Channel

%%%%%%%%%%%%%%%%%%%

\newcommand {\Wgt}[2] {\ensuremath{\mathrm{W}^{#1}\br{#2}}}

\newcommand {\AC}[1][] {\ensuremath{\mathbf{AC}^{#1}}}

\newcommand {\QAC}[1][] {\ensuremath{\mathbf{QAC}^{#1}}}
\newcommand {\QACz} {\QAC[0]}

\newcommand {\QLCz} {\ensuremath{\mathbf{QLC}^0}}

\newcommand {\LCz} {\ensuremath{\mathbf{LC}^0}}
\newcommand {\stateQLCz} {\ensuremath{\mathbf{stateQLC}^0}}
\newcommand {\stateQACz} {\ensuremath{\mathbf{stateQAC}^0}}
\newcommand {\CZGate} {CZ-gate}
\newcommand {\CZG} {\ensuremath{\operatorname{CZ}}}
\newcommand {\poly} {\ensuremath{\operatorname{poly}}}
\newcommand {\negl} {\ensuremath{\operatorname{negl}}}

\newcommand {\degeps}[2] {\ensuremath{\tilde{\operatorname{deg}}_{#1}\br{#2}}}

\newcommand {\Parity}[1] {\ensuremath{\operatorname{Parity}_{#1}}}
\newcommand {\Parityn} {\Parity{n}}
\newcommand {\Majorityn} {\ensuremath{\operatorname{Majority}_n}}
\newcommand {\Modraw}[1] {\ensuremath{\operatorname{Mod}_{#1}}}
\newcommand {\Modnk} {\Modraw{n, k}}

\newcommand {\CE} {\ensuremath{\mathcal{E}}}

\newcommand {\EPR} {\text{EPR}}

\newcommand {\Base}[1] {\ensuremath{\mathcal{B}_{#1}}}

\newcommand {\ancillas} {\ensuremath{\psi}}

\newcommand{\parity} {\ensuremath{\text{PARITY}}}

\newcommand{\maj} {\ensuremath{\text{MAJORITY}}}

\newcommand{\modk} {\ensuremath{\text{MOD}_k}}

%%%%%%%%%%
\newcommand{\cat}[1] {\ket{\Cat_{#1}}}

\newcommand{\circsize}[3] {\ensuremath{\Br{#1\,\middle|\,#2\,\middle|\,#3}}}

\tikzset{
    circnode/.style
    = {rectangle, draw=black, minimum width=2cm, minimum height=0.5cm}
}

% Language setting
% Replace `english' with e.g. `spanish' to change the document language
\usepackage[english]{babel}

% Set page size and margins
% Replace `letterpaper' with `a4paper' for UK/EU standard size
\usepackage[letterpaper,top=2cm,bottom=2cm,left=3cm,right=3cm,marginparwidth=1.75cm]{geometry}

% Useful packages
\usepackage{amsmath}
\usepackage{graphicx}

%\title{On the Computational Power of \QACz\ with Bounded Ancilla Qubits}
\title{On the Computational Power of \QACz\ with Barely Superlinear Ancillae}

\begin{Anonymous}
\author{
Anurag Anshu\thanks{\scriptsize School of Engineering and Applied Sciences, Harvard University, Cambridge, MA, USA. Email: anuraganshu@fas.harvard.edu.}
\and Yangjing Dong\thanks{\scriptsize State Key Laboratory for Novel Software Technology, New Cornerstone Science Laboratory, Nanjing University, China. Email: dongmassimo@gmail.com.}
\and Fengning Ou\thanks{\scriptsize State Key Laboratory for Novel Software Technology, New Cornerstone Science Laboratory, Nanjing University, China. Email: reverymoon@gmail.com.}
\and Penghui Yao\thanks{\scriptsize State Key Laboratory for Novel Software Technology, New Cornerstone Science Laboratory, Nanjing University, China. Email: phyao1985@gmail.com.}~\thanks{\scriptsize Hefei National Laboratory, Hefei 230088, China.}
}
\end{Anonymous}

\begin{document}
\maketitle

\begin{abstract}
%Proving quantum circuit lower bound beyond $\mathbf{QNC}_0$ is a longstanding challenge.
%$\QACz$, introduced by Moore~\cite{moore1999quantum},
%is the family of constant-depth quantum circuits where the gate set includes single qubit unitaries and many-qubit Toffoli gates. 
%We prove that $\QACz$ circuits 

$\QACz$ is the family of constant-depth polynomial-size quantum circuits consisting of arbitrary single qubit unitaries and multi-qubit Toffoli gates.
It was introduced by Moore as a quantum counterpart of $\AC[0]$,
along with the conjecture that $\QACz$ circuits cannot compute \parity.
In this work, we make progress on this long-standing conjecture: we show that any depth-$d$ $\QACz$ circuit requires $n^{1+3^{-d}}$ ancillae to compute a function with approximate degree $\Theta(n)$, which includes \parity, \maj\ and $\mathrm{MOD}_k$. 
 We further establish superlinear lower bounds on quantum state synthesis and quantum channel synthesis. This is the first lower bound on the super-linear sized \QACz. Regarding \parity, we show that any further improvement on the size of ancillae to $n^{1+\exp\br{-o(d)}}$ would imply that $\parity\notin\QACz$.

These lower bounds are derived by giving low-degree approximations to $\QACz$ circuits.
We show that a depth-$d$ $\QACz$ circuit with $a$ ancillae,
when applied to low-degree operators,
has a degree $(n+a)^{1-3^{-d}}$ polynomial approximation in the spectral norm.
This implies that the class $\mathbf{QLC}^0$,
corresponding to linear size $\QACz$ circuits,
has an approximate degree $o(n)$.
This is a quantum generalization of the result that $\mathbf{LC}^0$ circuits have an approximate degree $o(n)$ by
Bun, Kothari, and Thaler.
Our result also implies that $\QLCz\neq\mathbf{NC}^1$.

\end{abstract}

\newpage
\tableofcontents

\newpage

\section{Introduction}

% \textcolor{red}{Shallow quantum circuits (quantum circuits with constant depth) stand as fundamental constructs within the realm of quantum computing, arising naturally from the practical constraints and design of quantum hardware. These circuits, while seemingly simple in structure, represent an exciting area of research within quantum information theory and quantum complexity theory. One of the notable achievements in the recent years has been the exploration and understanding of the true capabilities \cite{doi:10.1126/science.aar3106} and inherent limitations [cite some papers] of such circuits. It is now well recognized that their computational reach is particularly constrained in their ability to generate long-range quantum entanglement—a cornerstone quantum phenomena. In essence, flow of information from a qubit is limited to the `light cone' of the qubit in such circuits . This raises the fundamental question: what is the simplest architecture beyond shallow quantum circuits that is capable of generating long range entanglement?} 

% \textcolor{red}{Inspired by the study of AC0 circuits as extensions of NC0 circuits, Green, Homer, Moore and Pollett introduce the notion of QAC0 circuits, where they augmented shallow quantum circuits with generalized Tofolli gates...$U_F\ket{x_1,\ldots,x_n,b}=\ket{x_1,\ldots,x_n,b\oplus\wedge_i x_i}$. Despite being quantum generalization...}

% \anote{It might be helpful to make the intro less dense by moving several references to prior works section or something.}

Shallow quantum circuits (quantum circuits with constant depths) stand as a fundamental construction within the realm of quantum computing, arising naturally from the practical constraints and design of quantum hardware. These circuits, while seemingly simple in structure, represent an exciting area of research within quantum information theory and quantum complexity theory. One of the notable achievements in recent years has been the discovery of the quantum advantages of constant-depth quantum circuits~\cite{doi:10.1126/science.aar3106}. It is now well recognized that their computational reach is particularly constrained in their ability to generate long-range quantum entanglement, a cornerstone quantum phenomenon. In essence, the flow of information from a qubit is limited to the ``light cone'' of the qubit in such circuits, severely restricting the computational power of constant-depth quantum circuits, especially when solving decision problems. This raises a fundamental question: What is the simplest architecture beyond shallow quantum circuits that is beyond the light-cone constraint?

Inspired by the study of \AC[0] circuits as extensions of $\mathbf{NC}^0$ circuits, Green, Homer, Moore and Pollett~\cite{moore1999quantum,moor} introduced the notion of $\QACz$ circuits, where they augmented shallow quantum circuits with generalized Toffoli gates, which are defined as
$$U_F\ket{x_1,\ldots,x_n,b}=\ket{x_1,\ldots,x_n,b\oplus\wedge_i x_i}.$$
$\QACz$ gives a simple theoretical model beyond light-cone constraints, with which to study the power of many-qubit operations in quantum computing. Researchers have also investigated the realization of many-qubit operations on different platforms~\cite{9605276,PhysRevApplied.18.034072,Bluvstein2022,PhysRevResearch.4.L042016}.

% The study of shallow quantum circuits has emerged in recent years~\cite{moore1999quantum,moor,10.5555/2011679.2011682,10.1016/j.ipl.2011.05.002,DBLP:journals/corr/abs-2005-12169,rosenthal:LIPIcs.ITCS.2021.32,nadimpalli_pauli_2024}.
% In a seminal work, Bravyi, Gosset and K\"onig\cite{doi:10.1126/science.aar3106} proved an unconditional separation between constant-depth quantum circuits $\mathbf{QNC}^0$ and constant-depth bounded fan-in classical circuits $\mathbf{NC}^0$.
% They exhibited a certain search problem that can be solved by $\mathbf{QNC}^0$ circuits, while any bounded fan-in circuits require logarithmic depth.
% Their result has been further strengthened by a series of subsequent works~\cite{10.1145/3313276.3316404,legall:LIPIcs.CCC.2019.21,bravyi2020quantum,coudron2021trading,hasegawa_et_al:LIPIcs.ISAAC.2021.74,grewal2024improved,hsieh2024unconditionally} and now we have separations of $\mathbf{QNC}^0$ against a wide range of classical circuit complexity classes.
% It is tempting to capture the computational power of bounded-depth quantum circuits. 

% Understanding the computational power of bounded-depth circuits is not only interesting in quantum computing but is also a core topic in classical complexity theory.
% The celebrated results of H\aa stad showed that the parity function is not in $\AC[0]$ via the switching lemma~\cite{10.1145/12130.12132}, which nowadays is still one of the most powerful tools to prove the circuit lower bounds. Despite decades of effort, constant-depth circuits still represent the frontier of our understanding of circuit lower bounds.

Despite being a quantum generalization of $\AC[0]$, our understanding of the computational power of $\QACz$ is still very limited. We even do not know whether $\QACz$ contains $\AC[0]$. To see it, arbitrary fan-out is allowed in $\AC[0]$ circuits, which means that the output of a gate can be fed to arbitrary many other gates as input.
However, for quantum circuits, due to the quantum no-cloning theorem, the output of a quantum gate can only be used once as the input for another quantum gate, so fan-out is restricted to $1$.
Another difference that arises from the quantumness of $\QACz$ circuits is that quantum computations are reversible,
while classical circuits can perform the \textbf{AND} and \textbf{OR} gates, which are nonreversible.
To mitigate this, $\QACz$ circuits allow other auxiliary quantum qubits in the circuit, referred to as ancillae. The ancillae act as the memory or internal nodes in classical $\AC[0]$ circuits. The size of the ancillae together with the inputs is analogous to the size of classical circuits.

Does $\QACz$ contain a decision problem that is not in $\AC[0]$?
This problem remains open.
It is well known that $\parity$ and $\maj$ are not in $\AC[0]$~\cite{10.1145/12130.12132}.
The question of whether these functions are contained in $\QACz$ was immediately raised by Moore~\cite{moore1999quantum} when the class $\QACz$ was first proposed.
Since then,
many works have been devoted to investigating this problem, which are largely summarized in \Cref{fig:main-results-function}. However, progress on this problem is still very slow. Before this work, it was still unknown whether any linear-sized \QACz circuit can compute \parity.
% Fang, Fenner, Green, Homer and Zhang~\cite{10.5555/2011679.2011682}
% proved that $\mathrm{QAC}$ circuits with $a$ ancillae in the state $\ket{0}^a$ require a depth at least $\Omega(\log(n/(a+1)))$ to compute the $n$-bit $\parity$ exactly and cleanly.
% The case $a=0$ was reproved by Bera~\cite{10.1016/j.ipl.2011.05.002} via a different approach.
% Pad\'e, Fenner, Grier and Thierauf~\cite{DBLP:journals/corr/abs-2005-12169} proved that $\QACz$ of depth $2$ cannot cleanly compute $4$-bit $\parity$, regardless of the number of ancillae.
% Rosenthal~\cite{rosenthal:LIPIcs.ITCS.2021.32} further proved that arbitrary depth-$d$ $\QACz$ circuits requires at least $\Omega(n/d)$ ancillae to achieve a $1/2+\exp\br{-o(n/(d+1))}$ average case approximation of $\parity$.
% In a very recent work, Nadimpalli, Parham, Vasconcelos and Yuen~\cite{nadimpalli_pauli_2024} improved Rosenthal's result by showing that arbitrary depth-$d$ $\QACz$ circuits requires at least $n^{\Omega(1/d)}$ ancillae cannot achieve a $(1/2+2^{-O(n^{1/d})})$ approximation of $\parity$ or a $(1-O(n^{-1/2}))$ approximation of $\maj$ in the average case.

Understanding the computational power of bounded-depth circuits is not only interesting in quantum computing but is also a core topic in classical complexity theory.
The celebrated results of H\aa stad showed that $\parity$ is not in $\AC[0]$ via the well-known switching lemma~\cite{10.1145/12130.12132}, which nowadays is still one of the most powerful tools to prove the circuit lower bounds. Despite decades of effort, constant-depth circuits still represent the frontier of our understanding of circuit lower bounds.

A natural approach to proving the lower bounds in $\QACz$ is to generalize the existing techniques for the lower bounds of classical circuits~\cite{10.1145/12130.12132,Razborov1987-fh,10.1145/28395.28404,rossman2016parity}.
However, it seems that all of the techniques face certain barriers when generalizing to the quantum world. As mentioned above, a crucial difference between $\AC[0]$ and $\QACz$ is that all the circuits in $\QACz$ are reversible and, thus the number of output qubits is the same as the number of input qubits. The ancilla qubits act as intermediate nodes in $\AC[0]$ circuits, which play a crucial role in quantum computing. Indeed, Rosenthal proved that an exponential-sized and constant-depth quantum circuit allowing many-qubits Toffoli gates can compute parity~\cite{rosenthal:LIPIcs.ITCS.2021.32}. To our knowledge, all the lower bound techniques for classical circuits, roughly speaking, substitute part of the inputs with simpler ones, which eliminate intermediate nodes. However, the input qubits and the ancilla qubits are fed into the circuit at the same time. Therefore, it is unclear how to eliminate ancilla qubits via simplifying the inputs.

\subsection{Our Results}

Our work extends the Pauli analysis framework on quantum circuits by~\cite{nadimpalli_pauli_2024}.
The main technical result is a low-degree approximation for $\QACz$ circuits:
For an $n$-qubit unitary $U$ implemented by a $\QACz$ circuit, and any low-degree measurement operator $A$,
we show that the resulting Heisenberg-evolved measurement operator $U^\dagger A U$ can not have a $\Omega(n)$ full approximate Pauli degree.
Here an operator having approximate Pauli degree $k$ means that it can be approximated by a summation of $k$-local operators,
and the approximation is with respect to the spectral norm.
The notion of Pauli degree generalizes the Fourier degree of Boolean functions.
In particular, a diagonal operator can be interpreted as the truth table of a Boolean function,
and in this case the Pauli degree and Fourier degree coincide.
The following is the key technical theorem in this paper.

\begin{theorem}[informal of \cref{cor:qac0-whole}]\label{thm:informal:qac0-whole}
  For any $2^n\times2^n$ operator $A$ with degree $\ell$,
  and any unitary $U$ implemented by a depth-$d$ $\QACz$ circuit,
  the approximate degree of $UAU^\dagger$ is upper bounded by $\tilde{O}\br{n^{1-3^{-d}}\ell^{3^{-d}}}$.
\end{theorem}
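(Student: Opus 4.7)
The plan is to prove \Cref{thm:informal:qac0-whole} by induction on the circuit depth $d$, with the heart of the argument being a single-layer lemma that quantifies how conjugation by one layer of generalized Toffoli gates inflates the approximate Pauli degree. Writing a depth-$d$ circuit as $U = V_d V_{d-1} \cdots V_1$, where each $V_i$ is a tensor product of single-qubit unitaries and generalized Toffolis on disjoint qubit sets, conjugation by single-qubit unitaries preserves Pauli degree exactly, so only the Toffoli layers contribute. The base case $d=0$ is immediate.

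The main technical step I would isolate is a single-layer bound: for any depth-$1$ $\QACz$ unitary $V$ on $n$ qubits and any operator $B$ of approximate spectral degree $k$, the conjugate $VBV^\dagger$ has approximate spectral degree $\tilde{O}\br{n^{2/3} k^{1/3}}$. Granted this, the theorem follows by a clean recursion: if $U' = V_{d-1}\cdots V_1$ and $U' A U'^\dagger$ has approximate degree $k_{d-1} = \tilde{O}\br{n^{1-3^{-(d-1)}}\ell^{3^{-(d-1)}}}$ by induction, then applying the single-layer lemma to $V_d$ gives $UAU^\dagger$ with approximate degree $\tilde{O}\br{n^{2/3} k_{d-1}^{1/3}} = \tilde{O}\br{n^{1-3^{-d}} \ell^{3^{-d}}}$, matching the theorem. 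One checks that the choice of exponents $a = 2/3$ and $b = 1/3$ in the recurrence $k_j \mapsto n^a k_j^b$ is exactly what produces the $3^{-d}$ schedule.

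To prove the single-layer lemma I would expand $B$ in the Pauli basis and replace each generalized Toffoli on $s$ qubits by its best low-degree spectral-norm approximant, which has degree $\tilde{O}\br{\sqrt{s}}$ coming from the classical $\tilde{O}\br{\sqrt{s}}$ approximate degree of $\operatorname{AND}_s$ lifted to the operator setting via a dual-polynomial construction. The key combinatorial observation is that a Pauli of weight $w$ intersects at most $w$ of the disjoint Toffoli supports nontrivially, so conjugating it by $V$ only involves the Toffolis on those supports, with cumulative approximation cost controlled by $\sqrt{s_1} + \cdots + \sqrt{s_j}$ subject to $j \le w$ and $\sum_i s_i \le n$. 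Balancing three parameters -- the number of intersected Toffolis, their cumulative support size, and the Pauli weight budget $k$ -- through a H\"older/Cauchy--Schwarz-type optimization is what yields the $1/3$ exponent.

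The main obstacle, and the step I expect to absorb most of the work, is proving this single-layer lemma tightly in spectral norm rather than in a Hilbert--Schmidt/$\ell_2$ sense. Two subtleties stand out: (i) summing many Pauli-level approximation errors while keeping the total spectral-norm error small requires a careful operator-valued polynomial-method argument that genuinely exploits the dual polynomial for $\operatorname{AND}$, and (ii) jointly optimizing across many disjoint Toffolis of widely varying sizes so that the per-layer bound composes cleanly through $d$ layers into the $3^{-d}$ exponent. Controlling the low-degree Pauli coefficients of $B$ on the "heavy" Toffoli qubits, simultaneously with the $\sqrt{s}$ approximation cost per Toffoli, is where I expect the technical crux to lie.
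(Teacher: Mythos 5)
Your high-level skeleton is exactly the paper's: induct layer by layer, prove a single-layer lemma of the form $k \mapsto \tilde{O}(n^{2/3}k^{1/3})$, obtain the $3^{-d}$ schedule from the recurrence, split each Toffoli layer into small and large gates by a threshold, keep the small ones exactly (their conjugation multiplies the degree by at most the threshold $t$, a pure light-cone fact), and replace each large $s$-qubit gate by a degree-$\tilde{O}(\sqrt{s})$ spectral-norm approximant, with $\sum_{i}\sqrt{s_i r} \le n\sqrt{r/t}$ and the choice $t = n^{2/3}\ell^{-2/3}r^{1/3}$ giving $\ell t + 2n\sqrt{r/t} \le 3n^{2/3}\ell^{1/3}r^{1/3}$. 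The paper obtains the $\tilde{O}(\sqrt{s})$ approximant for $\CZG = \id - 2\ketbra{1}^{\otimes s}$ from a Chebyshev-type bound for commuting-projector Hamiltonians (not from a dual polynomial, which is a lower-bound object, but this is cosmetic since the gate is diagonal and the classical $\mathrm{AND}$ approximation lifts directly to spectral norm).

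The genuine gap is in the step you yourself flag as the crux: you propose to expand $B$ in the Pauli basis and approximate the conjugation of each Pauli term separately, then sum the per-term spectral errors. That route does not close — $B$ can have exponentially many Pauli terms whose coefficients are only controlled in $\ell_2$, so the accumulated spectral-norm error is not bounded by any useful quantity, and no H\"older/Cauchy--Schwarz balancing over the intersected Toffoli supports rescues it. The paper's resolution is structurally different and much simpler: it never touches the Pauli expansion of $B$ in the error analysis. Instead it builds a \emph{single} approximant $\tilde{V} = V_S \otimes \bigl(\bigotimes_{i\in T}\tilde{\CZG}_i\bigr)$ of the whole layer, bounds $\norm{V - \tilde{V}}$ once and for all (it is at most roughly $n\cdot 2^{-\Omega(r)}$, made negligible by taking $r = \Theta(\log n)$), and then controls $\norm{VBV^\dagger - \tilde{V}B\tilde{V}^\dagger}$ purely by submultiplicativity and the triangle inequality, using $\norm{B}\le 1$ (a normalization you should add to your hypotheses, since the formal statement requires it). The degree bound and the error bound are thus decoupled: the degree of $\tilde{V}B\tilde{V}^\dagger$ is bounded term-by-term combinatorially exactly as you describe, while the error depends only on $\norm{V-\tilde{V}}$ and not on the number of monomials of $B$. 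Replacing your per-Pauli error accounting with this global-approximant argument is the missing idea; with it, the rest of your plan goes through essentially as written.
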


Since we are working with the spectral norm,
this degree upper bound also holds when we post-select on ancillae.
This is because for an operator $A$ and ancillae in the state $\ket{\varphi}$,
it holds that
\begin{equation*}
    \norm{\br{\id\otimes\bra{\varphi}}A\br{\id\otimes\ket{\varphi}}} \le \norm{A}.
\end{equation*}
In this case, the upper bound of degrees is with respect to the total number of qubits.
% And as long as the total number of working qubits is $o(n^{\frac{1}{1-\delta_d}})$,
% then the total degree will not exceed $n$. 
For linear ancillae $\QACz$ circuits,
which form the class $\mathbf{QLC}^0$,
\cref{thm:informal:qac0-whole} shows that $\mathbf{QLC}^0$ circuits have approximate degree $o(n)$.
See \cref{cor:approxdegqlc}.

With~\Cref{thm:informal:qac0-whole}, we study the power of $\QACz$ circuits in different quantum computational tasks.

\paragraph{Compute Boolean functions}
By using the relation between Pauli degree and Fourier degree,
our first results are about the hardness for computing Boolean functions. We investigate both the worst case and the average case. In the worst case, the circuit computes all inputs correctly with high probability. For the average case, the circuit computes the function correctly with high probability when the input is drawn uniformly. The results along with previous works are summarized in \Cref{fig:main-results-function}.

\begin{table*}
\centering
\caption{Hardness of Boolean functions}
\label{fig:main-results-function}
\begin{tabular}{|c|c|c|c|}
    \hline
    \multirow{6}{*}{\parity} & $a\geq n2^{-d}-1$ & exact & \cite{10.5555/2011679.2011682}   \\
    \cline{2-4}
    & impossible when $d=2$ & exact & \cite{DBLP:journals/corr/abs-2005-12169}\\
    \cline{2-4}
    & impossible when $d=2$ & average case & \cite{rosenthal:LIPIcs.ITCS.2021.32} \\
    \cline{2-4}
    &$a \le \exp\br{O(n\log n/\varepsilon)}$ when $d=7$ & worst case & \cite{rosenthal:LIPIcs.ITCS.2021.32} \\
    \cline{2-4}
    &$a\geq n^{\Omega(1/d)}$ & average case & \cite{nadimpalli_pauli_2024} \\
    \cline {2-4}
    & $a\geq n^{1+3^{-d}}$ & average/worst case & \textbf{This work}\\
    \hline
    \multirow{2}{*}{\maj} & $a \ge n^{\Omega(1/d)}$ & average case\textsuperscript{$\dagger$} & \cite{nadimpalli_pauli_2024}\\
    \cline{2-4}
    & $a \ge n^{1+3^{-d}}$ & average/worst case & \textbf{This work}\\
    \hline
    \modk & $a \ge n^{1+3^{-d}}$ & worst case & \textbf{This work}\\
    \hline
\end{tabular}

% Yangjing: It's hard to write normal footnotes inside floating figures.
\small\textsuperscript{$\dagger$} For the \maj\ function, the ancillae lower bound for average case works if the average case error is small then $1/\sqrt{n}$. This also applies to our average case lower bound for $\maj$.
\end{table*}

We prove that $\QACz$ circuits with ancillae only slightly more than linear can not compute Boolean functions
with approximate degree $\Omega(n)$ in the worst case with any probability strictly larger than $1/2$.
To the best of our knowledge, these are the first hardness results that allow superlinear pure state ancillae.
\begin{theorem}[informal of \cref{thm:main:WorstCase}]
  Let $f: \set{0,1}^n\to\set{0,1}$ be a Boolean function with approximate degree $\Omega(n)$.
  Suppose $U$ is a depth $d$ $\QACz$ circuit with
  $n$ input qubits and $a = \tilde{O}\br{n^{1+3^{-d}}}$ ancillae initialized in any quantum state.
  %If the circuit computes $f$ with worst case error a constant below $1/2$, then
  %$\ell = \tilde{O}\br{(n+a)^{1-3^{-d}}}$.
  Then $U$ cannot compute $f$ with the worst-case error strictly below $1/2$.
  In particular, this includes \parity, \maj, and also \modk\ when $k\le cn$ for some $c < 1$.
\end{theorem}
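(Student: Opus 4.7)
The plan is to turn the hypothesis that $U$ computes $f$ with worst-case error $\epsilon < 1/2$ into an approximating polynomial for $f$ of degree $o(n)$, contradicting the $\Omega(n)$ approximate-degree lower bound for $f$ (known classically for \parity, \maj, and for \modk\ whenever $k \le cn$ for a constant $c < 1$). Let $\ket{\varphi}$ denote the ancilla initialization and assume the output is encoded on the first qubit. Take the observable $M := Z_1 \otimes \id$, a Pauli of weight $1$. Then for each input $x \in \{0,1\}^n$,
\[
  \bra{x,\varphi}\, U^\dagger M U \,\ket{x,\varphi} \;=\; 1 - 2\Pr[\text{output}=1 \mid x],
\]
which by assumption lies within $2\epsilon$ of $(-1)^{f(x)}$.

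I would then invoke \cref{thm:informal:qac0-whole} on the depth-$d$ $\QACz$ circuit $U^\dagger$ (the inverse of a $\QACz$ circuit is again $\QACz$) and the degree-$1$ operator $M$: the Heisenberg-evolved observable $P := U^\dagger M U$ admits an approximation $\tilde{P}$ in spectral norm which is a sum of Pauli operators of weight at most $D = \tilde{O}\!\bigl((n+a)^{1-3^{-d}}\bigr)$, where the slack $\|P - \tilde P\|$ can be driven down to any prescribed constant, say $1/4$. Post-selecting the ancillae on $\ket{\varphi}$ and using the contraction $\|(\id\otimes\bra{\varphi}) X (\id\otimes\ket{\varphi})\| \le \|X\|$ noted in the excerpt, the resulting operators $P_\varphi$ and $\tilde P_\varphi$ on the $n$ input qubits satisfy $\|P_\varphi - \tilde P_\varphi\| \le 1/4$, and $\tilde P_\varphi$ is a sum of Paulis on the input of weight $\le D$ (each ancilla-side Pauli contracts to a scalar against $\ket{\varphi}$).

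Reading off the diagonal, for every $x \in \{0,1\}^n$ the value $q(x) := \bra{x}\tilde P_\varphi\ket{x}$ is a multilinear polynomial in the $\pm 1$-encoding of $x$ of degree at most $D$, since only $Z$-type Pauli strings contribute to diagonal entries on the computational basis. Combining the two approximation steps yields
\[
  \bigl|q(x) - (-1)^{f(x)}\bigr| \;\le\; 2\epsilon + \tfrac14,
\]
which is strictly less than $1$ once $\epsilon$ is bounded away from $1/2$ (and any fixed $\epsilon<1/2$ can be accommodated by choosing the approximation slack in \cref{thm:informal:qac0-whole} smaller). Hence $f$ has approximate degree at most $D$. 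Plugging in $a = \tilde{O}(n^{1+3^{-d}})$ gives $D = \tilde{O}\bigl(n^{(1+3^{-d})(1-3^{-d})}\bigr) = \tilde{O}(n^{1-3^{-2d}}) = o(n)$, the required contradiction.

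The step I expect to demand the most care is the error bookkeeping. One must coordinate the spectral-norm slack in \cref{thm:informal:qac0-whole} with the error parameter $\epsilon$ so that the total approximation error stays strictly below $1$, and simultaneously absorb the polylogarithmic factors hidden in $\tilde{O}(\cdot)$ into the gap between the ancilla exponent $1+3^{-d}$ and the critical exponent $1/(1-3^{-d})$ at which $D$ would meet $n$. In particular, I would expect to instantiate the approximation parameter in the main technical theorem at something like $1/\mathrm{poly}(n)$ and verify that this only shifts the exponent on $n+a$ by an additive $o(1)$, which still fits inside the $n^{3^{-d}-3^{-2d}}$ slack provided by the choice of $a$.
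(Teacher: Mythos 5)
Your proposal is correct and follows essentially the same route as the paper: apply the low-degree spectral-norm approximation (\cref{cor:qac0-whole}) to the Heisenberg-evolved output observable, contract away the ancillae (the paper uses the partial-trace form $\Tr_{\mathrm{anc}}\Br{A\br{\id\otimes\varphi}}$ of \cref{lem:zeropart}, which also covers mixed ancilla states, whereas your post-selection formula covers only pure ones), restrict to the diagonal, and contradict the $\Omega(n)$ approximate-degree lower bound, with the same exponent computation $(1+3^{-d})(1-3^{-d}) = 1-3^{-2d} < 1$. The remaining differences are cosmetic: you use $Z_1\otimes\id$ in place of $\ketbra{1}\otimes\id$, and you amplify the constant error at the polynomial level rather than by repeating the circuit and taking a majority as in the paper's proof of \cref{thm:main:WorstCase}.
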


%Our result shows that $\QACz$ circuits with linear ancillary qubits cannot compute functions with an approximate degree $\Omega(n)$, with worst-case error below $1/2$.
We further adapt the techniques from~\cite{nadimpalli_pauli_2024} to obtain average-case hardness results.
For Boolean functions that remain high degree even with Frobenius norm approximations,
we prove that it is hard for $\QACz$ circuits to approximate them within a certain error regime.
In particular, we state our average-case hardness results for \parity\ and \maj. 
\begin{theorem}[informal of \cref{thm:main:AverageCase}]
    Suppose $U$ is a depth $d$ $\QACz$ circuit with
    $n$ input qubits and $a = \tilde{O}\br{n^{1+3^{-d}}}$ ancillae initialized in any quantum state.
    Let $C_U(x) \in \set{0,1}$ denote the classical output of the circuit on input $x$. Then it hold that:
    \begin{itemize}
        \item $U$ can not approximate $\Parityn$ over uniform inputs, i.e.,
        \begin{equation*}
            \expec{\x\in\set{0,1}^n}{\prob{C_U(\x) = \Parityn(\x)}} \le \frac{1}{2} + O(d/n).
        \end{equation*}
        \item $U$ can not approximate $\Majorityn$ over uniform inputs, i.e., 
        \begin{equation*}
            \expec{\x\in\set{0,1}^n}{\prob{C_U(\x) = \Majorityn(\x)}} \le 1 - \Omega\br{\frac{1}{\sqrt{n}}} + O(d/n).
        \end{equation*}
    \end{itemize}
   % 
   % Then for any $k\ge\tilde{\Omega}\br{(n+a)^{1-3^{-d}}}$,
   % the average case accept probability is upper bounded by
   % \begin{equation*}
   % p = \frac{1}{2} + \frac{1}{2}\sqrt{1-4\Wgt{>k}{f}}.
   % \end{equation*}
   % In particular, for \parity\ we have $p = \frac{1}{2}$ and for \maj\ we have $p = 1 - 1/\sqrt{n}$.
\end{theorem}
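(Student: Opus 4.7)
The plan is to push the approximate-degree bound of \Cref{thm:informal:qac0-whole} through to the input Boolean cube and then read off the average-case failure probabilities as Fourier statistics. Let $\ket{\varphi}$ be the (arbitrary) ancilla initial state and let $M := \ketbra{1}\otimes\id$, where $\ketbra{1}$ acts on the output qubit and $\id$ on the remaining $n+a-1$ qubits; as a Pauli expansion $M$ has degree $1$. Set $f(\x) := \bra{\x,\varphi} U^\dagger M U \ket{\x,\varphi} = \prob{C_U(\x)=1}$. By \Cref{thm:informal:qac0-whole}, there is an operator $\widetilde M$ of Pauli degree $k = \tilde O((n+a)^{1-3^{-d}})$ with $\norm{U^\dagger M U - \widetilde M} \le \varepsilon$. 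Post-selection on $\ket{\varphi}$ does not increase the spectral norm (as noted directly after the theorem), so $\widetilde f(\x) := \bra{\x,\varphi} \widetilde M \ket{\x,\varphi}$ is a real multilinear polynomial in $x_1,\ldots,x_n$ of degree at most $k$ -- only $Z$-type Paulis on the input register survive the diagonal $\bra{\x}(\cdot)\ket{\x}$, and the ancilla expectation produces real coefficients since $\widetilde M$ is Hermitian -- and $\abs{f(\x)-\widetilde f(\x)} \le \varepsilon$ for every $\x \in \set{0,1}^n$. Choosing $a = \tilde O(n^{1+3^{-d}})$ yields $k = \tilde O(n^{(1+3^{-d})(1-3^{-d})}) = \tilde O(n^{1-9^{-d}}) < n$.

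For parity I would use the identity
\begin{equation*}
  \expec{\x \in \set{0,1}^n}{\prob{C_U(\x)=\Parityn(\x)}} \;=\; \tfrac12 - \hat{f}([n]),
\end{equation*}
obtained by expanding $\Parityn(\x) f(\x) + (1-\Parityn(\x))(1-f(\x))$ and using $\Ex_\x[(-1)^{\Parityn(\x)}]=0$. Since $\deg(\widetilde f) = k < n$, the polynomial $\widetilde f$ has zero coefficient on the full parity character, so $\abs{\hat f([n])} = \abs{\hat f([n]) - \hat{\widetilde f}([n])} \le \norm{f-\widetilde f}_\infty \le \varepsilon$. Absorbing a logarithmic factor into the $\tilde O$ to force $\varepsilon = O(d/n)$ delivers the claimed $\tfrac12 + O(d/n)$ ceiling.

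For majority I would follow the Pauli-analysis route of~\cite{nadimpalli_pauli_2024}. Writing the success probability as $1 - \expec{\x}{\abs{f(\x)-\Majorityn(\x)}}$ and applying the triangle inequality,
\begin{equation*}
  \expec{\x}{\abs{f(\x)-\Majorityn(\x)}} \;\ge\; \expec{\x}{\abs{\widetilde f(\x)-\Majorityn(\x)}} - \varepsilon,
\end{equation*}
reduces the task to lower-bounding the average disagreement between $\Majorityn$ and degree-$k$ polynomials. I would invoke the Paturi-style fact that any real polynomial of degree $o(n)$ differs from $\Majorityn$ in $L_1$ by $\Omega(1/\sqrt n)$ on the uniform cube, a consequence of the $\Theta(1/\sqrt n)$ mass concentrated in the threshold region that low-degree polynomials cannot resolve. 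The same choice $\varepsilon = O(d/n)$ then produces the $1 - \Omega(1/\sqrt n) + O(d/n)$ bound.

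The main obstacle is tracking the $\varepsilon$-versus-$k$ trade-off inside \Cref{thm:informal:qac0-whole}. The informal statement bounds only the degree, not the approximation error, so the argument requires revisiting the underlying Pauli-analysis proof to confirm that $\varepsilon$ can be driven down to $O(d/n)$ while keeping $k$ strictly below $n$; the $\tilde O$ must absorb at most polylogarithmic overhead from this push. A secondary concern, specific to majority, is extracting the average-case ($L_1$) hardness bound from an $L_\infty$ spectral-norm approximation; this is what forces the success-probability ceiling to be $1 - \Omega(1/\sqrt n)$ rather than something stronger, matching the caveat in the~\cite{nadimpalli_pauli_2024} table.
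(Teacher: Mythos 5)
Your overall route is the same as the paper's: approximate the acceptance-probability function $p(\x)=\prob{C_U(\x)=1}$ by a degree-$k$ polynomial $\tilde p$ with $k=\tilde{O}\br{(n+a)^{1-3^{-d}}}<n$ and pointwise error $\varepsilon=O(d/n)$ --- this is exactly \cref{thm:qac0-function-degree}, which already packages the $\varepsilon$-versus-$k$ trade-off you flag as your main obstacle --- and then read off the failure probability from high-level Fourier mass. Your parity step, via the identity $\expec{\x}{\prob{C_U(\x)=\Parityn(\x)}}=\tfrac12-\hat p([n])$ together with $\abs{\hat p([n])-\hat{\tilde p}([n])}\le\norm{p-\tilde p}$, is a clean special case of the paper's general bound $\tfrac12+\tfrac12\sqrt{1-4\Wgt{>k}{f}}+\varepsilon$ and is correct.

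The one genuine soft spot is Majority. The ``Paturi-style fact'' you invoke --- that \emph{every} real polynomial of degree $o(n)$ is $\Omega(1/\sqrt n)$-far from $\Majorityn$ in $L_1$ over the uniform cube --- is not an off-the-shelf theorem in that generality, and for unbounded polynomials it does not follow from the standard $L_\infty$ (Paturi) lower bound; as written, the quantitatively decisive step is asserted rather than derived. What saves the argument is that your $\tilde p$ is a sup-norm $\varepsilon$-approximation of the $[0,1]$-valued $p$, so $\normsub{2\tilde p-1}{2}\le 1+2\varepsilon$; with that boundedness in hand the needed fact is precisely the paper's computation: writing $g=2\Majorityn-1$ and splitting the correlation $\expec{\x}{(2\tilde p-1)(\x)g(\x)}$ at level $k$, Cauchy--Schwarz gives an upper bound of $(1+2\varepsilon)\sqrt{1-\Wgt{>k}{g}}$, and the Fourier-weight estimate $\Wgt{>k}{\Majorityn}=\Omega(1/\sqrt{k})$ from \cite[Equation 5.11]{ODonnell2014} then yields the $1-\Omega(1/\sqrt n)+O(d/n)$ ceiling. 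You should either restrict your invoked $L_1$ fact to (near-)bounded polynomials and prove it this way, or replace it by the Fourier-weight bound directly, as the paper does.
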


% However, the problem of whether polynomial-size $\QACz$ circuits can compute \parity\ remains open.
Our result is incomparable to the results of~\cite{nadimpalli_pauli_2024}:
On the one hand, if the ancillae are initialized into pure states, we have a super-linear lower bound on the ancillae,
while their lower bound is $n^{\Omega\br{1/d}}$.
On the other hand, if the ancillae are initialized as mixed states,
our bound remains $n^{1+3^{-d}}$, while
\cite{nadimpalli_pauli_2024} proved that $\QACz$ with an arbitrarily dimensional maximally mixed state as ancillae cannot compute \parity.

Our lower bound $\tilde{\Omega}\br{n^{1+3^{-d}}}$ seems far away from an arbitrarily polynomial lower bound. Surprisingly, we show that any improvement of the exponent $d$ to $o(d)$ would lead to $\parity\notin\QACz$, a complete resolution of the problem.

\begin{theorem}[informal of \cref{cor:arbitrary}]
        If any $\QACz$ circuit with $n^{1+\exp\br{-o(d)}}$ ancillae,
    where $d$ is the depth of this circuit family,
    can not compute $\Parityn$ with the worst-case error $\negl(n)$,
    then any $\QACz$ circuit family with arbitrary polynomial ancillae can not compute $\Parityn$ with the 
    worst-case error $\negl(n)$.
  % 
   % Then for any $k\ge\tilde{\Omega}\br{(n+a)^{1-3^{-d}}}$,
   % the average case accept probability is upper bounded by
   % \begin{equation*}
   % p = \frac{1}{2} + \frac{1}{2}\sqrt{1-4\Wgt{>k}{f}}.
   % \end{equation*}
   % In particular, for \parity\ we have $p = \frac{1}{2}$ and for \maj\ we have $p = 1 - 1/\sqrt{n}$.
\end{theorem}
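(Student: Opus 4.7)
The plan is to argue by contrapositive: assume that $\Parity \in \QACz$, so that some circuit family of fixed depth $d_0$ with $n^c$ ancillae computes $\Parity_n$ with worst-case error $\negl(n)$, and then, for every target depth $d$, exhibit a depth-$d$ circuit on $N$ input qubits using only $N^{1+\exp(-o(d))}$ ancillae that computes $\Parity_N$ with negligible error, directly contradicting the hypothesis.

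Concretely, I would set the block size $n := N^{d_0/d}$ and the number of levels $k := d/d_0 = \log_n N$, and build a block-parity tree. At layer $i\in\cbr{1,\dots,k}$, partition the $N/n^{i-1}$ ``live'' qubits into $N/n^i$ consecutive blocks of size $n$, apply the assumed small parity circuit to each block in parallel with $n^c$ fresh $\ket{0}$ ancillae per block, and designate the output qubit of each subcircuit, which carries the block's parity, as a live qubit for layer $i+1$. Because the subcircuits within a layer act on disjoint registers, each layer contributes depth $d_0$, for total depth $kd_0 = d$. The ancilla count at layer $i$ is $(N/n^i)\cdot n^c = Nn^{c-i}$, so summing the geometric series yields
\[
    \sum_{i=1}^k Nn^{c-i} \;=\; O\br{Nn^{c-1}} \;=\; O\br{N^{1+(c-1)d_0/d}}.
\]
Since $(c-1)d_0/d = \exp\br{-\br{\log d - \log((c-1)d_0)}}$ and $\log d - O(1) = o(d)$, this exponent is $\exp(-f(d))$ for some $f(d)=o(d)$, matching the prohibited bound $N^{1+\exp(-o(d))}$ for all sufficiently large $d$.

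For the error analysis, each of the $O(N/n)$ subcircuit invocations is within trace distance $\negl(n) = \negl(N^{d_0/d})$ of an ideal version whose designated output qubit deterministically carries the correct block-parity; a triangle-inequality union bound then yields overall trace distance $O(N/n)\cdot\negl(N^{d_0/d}) = \negl(N)$, so the composed circuit computes $\Parity_N$ with negligible worst-case error. The main technical subtlety---essentially the only step beyond bookkeeping---is upgrading the scalar worst-case-error guarantee on the single output qubit of the small circuit to a trace-distance bound on the entire post-subcircuit state, so that the output qubit can be threaded coherently into the next layer without intermediate measurement. This is standard: for any classical input, the worst-case error probability upper-bounds the weight off the correct computational-basis value of the designated qubit, and disjointness of the registers within a layer lets the errors accumulate additively via the triangle inequality. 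Once this is in place, the construction above contradicts the hypothesis, proving the theorem.
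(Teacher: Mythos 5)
There is a genuine gap in the ancilla accounting, and it is exactly the obstruction the paper's two-step construction is designed to overcome. Your uniform tree (all blocks of size $n = N^{d_0/d}$, with $k = d/d_0$ levels) is dominated by its bottom layer and yields $O\br{N^{1+(c-1)d_0/d}}$ ancillae, i.e.\ a savings of only $\Theta(1/d)$ in the exponent. You then argue that $\Theta(1/d) = \exp(-\log d + O(1))$ is of the form $\exp(-o(d))$ and hence ``matches the prohibited bound,'' but this reverses the quantifiers. The formal statement (\cref{cor:arbitrary}) assumes the lower bound only for one particular threshold $n^{1+\exp(-d/\delta(d))}$, where $\delta$ is an \emph{arbitrary} function tending to infinity --- possibly very slowly, e.g.\ $\delta(d) = \sqrt{d}$, giving a threshold of $n^{1+\exp(-\sqrt{d})}$. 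To derive a contradiction you must construct a circuit whose ancilla count is \emph{below that threshold}, and $N^{1+\Theta(1/d)}$ is far above $N^{1+\exp(-\sqrt{d})}$. So your construction only proves the theorem when $\delta(d) = \Omega(d/\log d)$, not for all $\delta \to \infty$. The paper explicitly flags this: the uniform construction ``alone can not prove'' the theorem, topping out at exponent savings $d^{-c}$ for $c<1$.

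The fix is the paper's second step: after first normalizing to a family of $\circsize{O(1)}{m}{m^2}$ circuits (your construction essentially accomplishes this normalization), one builds a $k$-level tree whose block sizes grow \emph{doubly exponentially} across levels --- $n, n^2, n^4, \dots, n^{2^{k-1}}$. This balances the ancilla usage so that every level consumes roughly $n^{2^k}$ ancillae while the total input size is $N = n^{2^k-1}$, giving exponent $1 + \frac{1}{2^k-1} \approx 1 + 2^{-k} = 1 + \exp(-\Theta(d))$ at depth $\Theta(k)$. An exponentially small (in depth) savings beats $\exp(-d/\delta(d))$ for every $\delta\to\infty$ once $k$ is large enough, which is what the contradiction requires. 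A secondary, more minor point: the paper does not thread the block outputs coherently into the next layer as you propose; it inserts a CNOT onto a fresh $\ket{0}$ ancilla to decohere each block's output qubit, which makes the success probability compose exactly as $\br{1+\delta_b^{n_t}\delta_t}/2$. Your coherent version can likely be salvaged via a gentle-measurement argument at the cost of a square root in the per-block error (harmless for negligible errors), but the decoherence trick is cleaner and is why the paper's per-level depth is $d_0+1$ rather than $d_0$.
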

This result is a consequence of the fact that if we can compute \parity\ in $\QACz$,
then for any large enough depth $d$,
there exists a circuit computing $\parity$ using only $n^{1+\exp\br{-d}}$ qubits of ancillae.
The idea is repeating the following simple argument: Note that $\parity$ can be computed recursively. So if we have a $\QACz$ circuit computing $\parity$ with $n^{100}$ ancillae, then we can split the task of $n$-bit $\parity$ into computing $\sqrt{n}$ instances of $\sqrt{n}$-bit $\parity$, and using only $\sqrt{n}\cdot n^{50}$ qubits of ancillae.

\paragraph{Quantum State Synthesis}
We further investigate the hardness of quantum state synthesis in \QACz\ circuits.

%We don't need such a table.
%The results are summarized in \cref{fig:state-synthesis-table}.
%\begin{figure}[h]
%    \centering
%    \begin{tabular}{|c|c |}
%        \hline
%        cat state &  \\
%        \hline
%        nekomata &  \\
%        \hline
%        NLTS &  \\
%        \hline
%    \end{tabular}
%    \caption{Quantum State Synthesis Results}
%    \label{fig:state-synthesis-table}
%\end{figure}

\begin{theorem}[informal of \cref{thm:state-synthesis}]
    Let $\varphi = \ketbra{\varphi}$ be a pure state on $n$ qubits
    with approximate degree $\Omega(n)$.
    Suppose that there exists a depth-$d$ $\QACz$ circuit working on $a$ qubits such that the first $n$ qubits of $U\ket{0^a}$ measure to $\varphi$ with constant probability, then we have
    \begin{equation*}
        a = \tilde{\Omega}(n^{1+3^{-d}/2}).
    \end{equation*}
    In particular,
    the $n$-nekomata state,
    and the low energy states of the code Hamiltonian in~\cite{anshu2022nlts} satisfies $\Omega(n)$ approximate degree.
    Hence $\QACz$ circuits with linear ancillae cannot synthesize these states.
\end{theorem}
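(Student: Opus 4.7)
The plan is to reduce the state-synthesis question to a low-degree approximation contradiction via the main technical theorem (\cref{cor:qac0-whole}). I begin by rewriting the success probability as
\[
p \;=\; \Tr\bigl[(\varphi \otimes \id_{a-n})\, U\ketbra{0^a}U^\dagger\bigr] \;=\; \langle 0^a |\, M \,|0^a\rangle, \qquad M := U^\dagger(\varphi\otimes\id)\,U.
\]
Since $U^\dagger$ is itself a depth-$d$ \QACz\ circuit, \cref{cor:qac0-whole} is available in both directions: it controls the approximate degree of any conjugated low-degree operator, and symmetrically, if $M$ admits a low exact-degree approximant then the same must be true of $\varphi\otimes\id = U M U^\dagger$. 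The strategy is to use this simultaneously to extract a controlled low-degree approximation of $\varphi$ itself from the circuit.

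The heart of the argument is a Pauli-weight truncation. Expanding $\varphi = \sum_P \hat\varphi(P)\,P$ and splitting at a weight threshold $k$ gives $\varphi = \varphi_{\leq k} + \varphi_{>k}$, and correspondingly $M = M_{\leq k} + M_{>k}$. By \cref{cor:qac0-whole} applied termwise, each $U^\dagger(P\otimes\id)U$ with $|P|\leq k$ has approximate spectral degree $\tilde O(a^{1-3^{-d}} k^{3^{-d}})$, so (with polylogarithmic overhead for summing over the at most $\binom{n}{\leq k}3^k$ low-weight Paulis) the operator $M_{\leq k}$ inherits the same bound. For the high-weight part, the key estimate uses that $\varphi$ is a pure state: Parseval's identity in the Pauli basis gives $\sum_P |\hat\varphi(P)|^2 = 2^{-n}$, which combined with Cauchy-Schwarz against the analogous Parseval bound $\sum_P \langle 0^a|U^\dagger(P\otimes\id)U|0^a\rangle^2 \leq 2^n \Tr(\rho_n^2) \leq 2^n$ controls the contribution of $M_{>k}$ to the scalar $\langle 0^a|M|0^a\rangle$.

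Choosing $k\approx \sqrt{n+a}$ equalises the two error sources: the low-weight side gives an approximate-degree bound of $\tilde O\bigl((n+a)^{1-3^{-d}}(n+a)^{3^{-d}/2}\bigr)=\tilde O\bigl((n+a)^{1-3^{-d}/2}\bigr)$, and this square-root balance is exactly what produces the exponent $1-3^{-d}/2$ in place of the full $1-3^{-d}$ that appeared for Boolean function computation. Running the conjugation in the reverse direction via \cref{cor:qac0-whole} converts the resulting structural statement about $M$ into an approximate-degree upper bound on $\varphi\otimes\id$, hence on $\varphi$. Combining with the hypothesis $\tilde{\operatorname{deg}}(\varphi)=\Omega(n)$ forces $n \leq \tilde O\bigl((n+a)^{1-3^{-d}/2}\bigr)$, which rearranges to $a = \tilde\Omega\bigl(n^{1+3^{-d}/2}\bigr)$.

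I expect the main obstacle to be the high-weight tail estimate. A naive triangle-inequality bound on $\|\varphi_{>k}\otimes\id\|$ is useless, since high-weight truncations of pure states need not be small in spectral norm; the only way forward is to exploit that we care about a single inner product $\langle 0^a|M|0^a\rangle$ rather than a full operator norm, and to pair Cauchy-Schwarz with the Parseval bound. This step crucially uses the purity of $\varphi$ to keep $\sum_P|\hat\varphi(P)|^2$ as small as $2^{-n}$, which is why the theorem is stated for pure $\varphi$ and why the concrete examples (nekomata and NLTS code states) enter through exactly this pure-state structure.
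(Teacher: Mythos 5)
There is a genuine gap in your argument, and it sits exactly where you predicted: the high-weight tail. Your plan is to conjugate $\varphi\otimes\id$ \emph{backward} through the circuit, truncate the Pauli expansion of $\varphi$ at weight $k$, and control the discarded part of the scalar $\bra{0^a}M\ket{0^a}$ by Cauchy--Schwarz against Parseval. But the Parseval bound only controls the \emph{Frobenius} mass of the tail, and after Cauchy--Schwarz the tail contribution is of order $\sqrt{\Wgt{>k}{\varphi}}$ (in the normalization where the total weight of a pure state is $1$). For precisely the states this theorem targets --- cat states, nekomata, NLTS code states --- that quantity is a constant: e.g.\ for $\ket{\Cat_n}$ the cross terms $\ketbratwo{0^n}{1^n}+\ketbratwo{1^n}{0^n}$ carry half of the total Pauli weight at level $n$. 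So your tail bound yields only $O(1)$, which swallows the whole estimate; if the tail \emph{were} small in Frobenius norm the state would be Frobenius-approximable by low degree and none of the spectral-norm machinery would be needed. A second, independent problem is that even if the scalar $\bra{0^a}M\ket{0^a}$ were controlled, that does not give you a spectral-norm low-degree approximant of $M$, so the final "run the conjugation in reverse" step has nothing to act on: \cref{cor:qac0-whole} transfers degree bounds on operators, not on single matrix entries.

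The paper's proof runs the circuit in the \emph{forward} direction on the one operator that is provably low degree: the initial state. By \cref{lem:locality-of-product} (the Chebyshev/commuting-projector bound), $\degeps{O(1/(n+a))}{\ketbra{0}^{n+a}} = \tilde{O}(\sqrt{n+a})$; feeding $\ell = \tilde{O}(\sqrt{n+a})$ into \cref{cor:qac0-whole} gives $\degeps{O(d/n)}{U\ketbra{0}^{n+a}U^\dagger} = \tilde{O}\br{(n+a)^{1-3^{-d}}(n+a)^{3^{-d}/2}} = \tilde{O}\br{(n+a)^{1-3^{-d}/2}}$ --- this is where the exponent $1-3^{-d}/2$ really comes from, not from a weight-truncation threshold. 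Then Fuchs--van de Graaf converts the fidelity hypothesis into a spectral-norm bound on $\norm{\varphi - \Tr_{\text{anc}}[U\ketbra{0}^{n+a}U^\dagger]}$, and \cref{lem:pure-state-purification} upgrades this to $\norm{\varphi\otimes\nu - U\ketbra{0}^{n+a}U^\dagger}\le 10\delta^{1/4}$ for some pure $\nu$ on the ancillae. Since $\degeps{\varepsilon}{\varphi}\le\degeps{\varepsilon}{\varphi\otimes\nu}$, the target state inherits the $\tilde{O}\br{(n+a)^{1-3^{-d}/2}}$ bound, which contradicts $\degeps{\varepsilon}{\varphi}=\Omega(n)$ unless $a=\tilde{\Omega}(n^{1+3^{-d}/2})$. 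If you want to salvage your write-up, replace the Pauli truncation of $\varphi$ with this forward evolution of $\ketbra{0}^{n+a}$ and add the purification step; the rest of your skeleton (Fuchs--van de Graaf, the final rearrangement) then goes through.
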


\paragraph{Quantum Channels Synthesis}
Our last result is a general hardness result for quantum channel synthesis.
\begin{theorem}[informal of \cref{thm:qchannel-degree}] 
    Suppose $\CE_{U, \psi}$ is a quantum channel from $n$ qubits to $k$ qubits,
    implemented by a depth-$d$ \QACz\ circuit $U$ with $n$ input qubits and $a$ ancillae.    
    The upper bound of approximate degree of the Choi representation $\Phi_{U, \psi}$ of  $\CE_{U, \psi}$ is then given by $\tilde{O}\br{(n+a)^{1-3^{-d}}k^{3^{-d}/2}}$.
\end{theorem}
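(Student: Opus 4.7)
The plan is to reduce the statement to the main technical theorem~\cref{thm:informal:qac0-whole} via the Stinespring dilation of $\CE_{U,\psi}$, and then achieve the refined $k^{3^{-d}/2}$ exponent by restricting the output-side Pauli expansion of $\Phi_{U,\psi}$ to weights at most $\sqrt{k}$ and controlling the tail error by a Cauchy--Schwarz bound, paralleling the mechanism that produces the $n^{1+3^{-d}/2}$ exponent in state synthesis~\cref{thm:state-synthesis}.

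First, I would write the Choi representation in Stinespring form
\begin{equation*}
\Phi_{U,\psi} = \partrace{S}{\br{\id_A \otimes U}\br{\ketbra{\Omega}_{AA'} \otimes \ketbra{\psi}_E}\br{\id_A \otimes U^\dagger}},
\end{equation*}
where $A'E$ is the $(n+a)$-qubit register on which $U$ acts, partitioned into the $k$-qubit output register $R$ and the $(n+a-k)$-qubit discarded register $S$. Expanding $\ketbra{\Omega}_{AA'}=2^{-n}\sum_P P_A\otimes\bar{P}_{A'}$ in the Pauli basis, a short computation gives the Pauli coefficients of $\Phi_{U,\psi}$ as
\begin{equation*}
\hat{\Phi}(P_A,Q_R) = 2^{-(n+k)}\Tr\br{\br{\bar{P}_A\otimes\ketbra{\psi}_E}\,U^\dagger\br{Q_R\otimes\id_S}U},
\end{equation*}
so that for each output Pauli $Q_R$, the Heisenberg-evolved operator $Y_{Q_R}\defeq U^\dagger(Q_R\otimes\id_S)U$ on the $(n+a)$-qubit circuit register --- with starting Pauli degree $w(Q_R)\le k$ --- controls the Pauli spectrum of $\Phi_{U,\psi}$ along the $A$-register.

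Second, I would apply~\cref{thm:informal:qac0-whole} to $Y_{Q_R}$ for each Pauli $Q_R$: this yields a low-degree approximation $\tilde Y_{Q_R}$ with $\operatorname{deg}(\tilde Y_{Q_R})\le\tilde O\br{(n+a)^{1-3^{-d}}w(Q_R)^{3^{-d}}}$ and spectral-norm error at most $\epsilon$. Since contraction against $\ketbra{\psi}_E$ does not increase the spectral norm --- $\norm{\partrace{E}{(\id\otimes\ketbra{\psi})X}}_\infty\le\norm{X}_\infty$ --- this descends to a matching-degree spectral approximation of the $A$-register marginal associated with each Pauli block $Q_R$. Collecting these per-$Q_R$ approximations defines a candidate approximant $\tilde\Phi$ of $\Phi_{U,\psi}$; its degree is at most $\tilde O\br{(n+a)^{1-3^{-d}}\,t^{3^{-d}}}+t$ if we restrict to output Paulis of weight at most $t$.

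Third, and this is the critical step, I would set the cutoff $t=\sqrt{k}$ rather than $t=k$, which yields the target degree bound $\tilde O\br{(n+a)^{1-3^{-d}}k^{3^{-d}/2}}$. The cost is the discarded tail $\Phi_{U,\psi}-\tilde\Phi$ supported on output Paulis with $w(Q_R)>\sqrt{k}$; I would bound this tail in spectral norm by Cauchy--Schwarz applied to the Stinespring purification $\ket{\Phi'}\defeq(\id_A\otimes U)(\ket{\Omega}_{AA'}\otimes\ket{\psi}_E)$. Using the dual characterization of spectral approximate degree, the bound reduces to showing that $\abs{\Tr\br{X\Phi_{U,\psi}}}=\abs{\bra{\Phi'}(X\otimes\id_S)\ket{\Phi'}}$ is small for every $X$ on $n+k$ qubits whose Pauli support is confined to weights exceeding the claimed degree. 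Heisenberg-evolving $X\otimes\id_S$ through $\id_A\otimes U$ and invoking~\cref{thm:informal:qac0-whole} on the low-weight $R$-factors of this evolution, a Cauchy--Schwarz between $\ket{\Phi'}$ and the evolved test operator halves the effective output-side Pauli budget from $k$ to $\sqrt k$, giving the desired exponent.

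The main obstacle will be the tail-control argument in the third step: ensuring that the spectral-norm error incurred by the $\sqrt k$-weight cutoff on the output register stays $\epsilon$-small without paying a $2^{\Omega(k)}$ factor when summing over the $4^k$ output Paulis, and handling the fact that the ancilla state $\ket{\psi}$ is arbitrary rather than $\ket{0^a}$. I expect this step to reuse, essentially verbatim, the reduced-density-degree bound underlying~\cref{thm:state-synthesis}, adapted to the Choi setting in which the reference register $A$ is entangled via $\ket{\Omega}$ instead of being initialized to $\ket{0}$.
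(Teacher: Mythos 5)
There is a genuine gap, and it sits exactly where you predicted: the tail-control argument in your third step. Truncating the output-side Pauli expansion of $\Phi_{U,\psi}$ at weight $\sqrt{k}$ and assembling per-$Q_R$ approximants does not yield a spectral-norm approximation: you are summing over exponentially many output Paulis, each carrying its own $\varepsilon$ spectral error from a separate invocation of \cref{thm:informal:qac0-whole}, and neither Cauchy--Schwarz on the Stinespring purification nor the dual characterization of approximate degree rescues you from the resulting $2^{\Omega(k)}$ blow-up. More fundamentally, a hard Pauli-weight cutoff is a Frobenius-norm-friendly operation, not a spectral-norm-friendly one; the $\sqrt{k}$ saving in this paper never comes from ``halving a Pauli budget'' by Cauchy--Schwarz. (The same misreading applies to your model of \cref{thm:state-synthesis}: there the $\sqrt{\cdot}$ comes from the spectral approximate degree of $\ketbra{0}^{\otimes(n+a)}$, not from any truncation of a reduced density operator.)

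The missing idea is structural and makes the proof a three-line reduction. By \cref{fact:Choi:1},
\begin{equation*}
2^{-k}\Phi_{U,\psi} \;=\; \bra{\psi}\br{\id\otimes U^{T}}\br{2^{-k}\EPR_k\otimes\id_{n-k}}\br{\id\otimes\overline{U}}\ket{\psi},
\end{equation*}
and the normalized operator $2^{-k}\EPR_k$ is a tensor product of $k$ two-qubit pure states. Hence \cref{lem:locality-of-product} (the Chebyshev/concentration bound for product states) gives $\degeps{O(1/n)}{2^{-k}\EPR_k}\le\tilde{O}\br{\sqrt{k}}$ directly in spectral norm --- this single fact is where the exponent $k^{3^{-d}/2}$ originates. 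One application of \cref{cor:qac0-whole} with $\ell=\tilde{O}(\sqrt{k})$ (which also absorbs the contraction against $\ket{\psi}$, since that does not increase the spectral norm) then yields $\degeps{O(d/n)}{2^{-k}\Phi_{U,\psi}}\le\tilde{O}\br{(n+a)^{1-3^{-d}}k^{3^{-d}/2}}$. Your first two steps contain correct ingredients (the Stinespring form of the Choi state, the monotonicity of the spectral norm under contraction with $\ketbra{\psi}$), but the route through a per-Pauli decomposition of the output register should be replaced by this single global approximation of the EPR block.
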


%\paragraph{Technical Contribution}

\subsection{Related Works}\label{subsec:relatedwork}

%Despite being a quantum generalization of classical circuits,
%\QACz circuits are very different from \AC[0] circuits.
%For example, classical circuits allow arbitrary fanout,
%while in quantum circuits qubits can not be cloned.
%Indeed computing fanout is equivalent to computing parity \cite{moore1999quantum}.
%Also, since quantum computations are reversible,
%\QACz\ circuits need the notion of ancilla qubits.

The study of quantum circuit complexity was initiated in large part by Green, Homer, Moore and Pollett~\cite{moore1999quantum,moor},
who introduced quantum counterparts of a number of classical circuit classes, including $\QACz$.
The problem whether the parity function can be computed by \QACz\ circuits was also raised in their papers.
However, this problem has made slow progress over the years.

To compute \parity~, we need to implement unitary
$$\ket{b, x_1, \dots, x_n} \mapsto \ket{b\oplus\bigoplus_ix_i, x_1, \dots, x_n}.$$
Green, Homer, Moore and Pollett~\cite{moor} have shown that \parity is equivalent to fan-out until conjugation with Hadamard gates,
where fan-out is unitary
$\ket{b, x_1, \dots, x_n}\mapsto\ket{b, x_1\oplus b, \dots, x_n\oplus b}$.

Fang, Fenner, Green, Homer, and Zhan~\cite{10.5555/2011679.2011682} gave the first lower bound on this question.
They showed that \QACz circuits with sublinear ancillae cannot compute parity exactly,
where ``exactly'' means that the output qubit should be precisely in the state $\ketbra{0}$ or $\ketbra{1}$ for any input.
In particular, they showed that for any depth-$d$ \QACz\ circuit with $a$ ancillae,
to exactly compute parity, we should have $d\ge 2\log(n/(a+1))$.
This lower bound is derived in the following way:
They are able to find an input state on $(a+1)2^{d/2}$ qubits
such that the circuit output is fixed to be $1$ irrelevant to the other input qubits.
Then if $(a+1)2^{d/2} < n$, such circuits clearly cannot compute parity exactly,
because any flip on another input qubit changes the output.
The case of $0$ ancillae is also implied by a result of Bera~\cite{10.1016/j.ipl.2011.05.002}.
Pad\'e, Fenner, Grier, and Thierau~\cite{DBLP:journals/corr/abs-2005-12169}
showed that no depth-$2$ circuits can
compute the $\Parityn$ function exactly for $n\ge 4$, even with any number of ancillae.
This is achieved by a careful analysis on the structure of depth-$2$ quantum circuits.

These techniques heavily rely on the condition of exact computation, and thus are difficult to extend to the approximate case.
Actually, Rosenthal~\cite{rosenthal:LIPIcs.ITCS.2021.32} has showed a depth-$7$ quantum circuit with many-qubit Toffoli gates and exponentially ancillae that approximately computes the \parity\ function.
This implies that when proving the hardness of computing \parity,
we need to take into account the number of ancilla.
Rosenthal has also established a series of equivalent problems to \parity.
In particular, a circuit preparing the cat state defined as $\ket{\Cat_n} = \frac{1}{\sqrt{2}}\br{\ket{0^n}+\ket{1^n}}$ is equivalent to parity up to constant-degree circuit reductions. In the same work, Rosenthal showed that any depth-$2$ \QACz circuits with arbitrary ancilla qubits cannot compute parity even in the approximate case, generalizing the results of \cite{DBLP:journals/corr/abs-2005-12169} to the average case. Very recently, Nadimpalli, Parham, Vasconcelos, and Yuen~\cite{nadimpalli_pauli_2024} have initiated the study of Pauli spectrum of \QACz\ circuits and proved the first lower bound on $\QACz$ circuit with an arbitrary depth. More specifically, they proved that a \QACz\ circuit with depth $d$ that computes \parity\ needs at least $n^{\Omega(1/d)}$ ancillae.

% \pnote{move somewhere else?} An important difference between classical circuits and quantum circuits is that quantum states can not be cloned,
% and hence the output of a quantum gate can only be used once as the input for another quantum gate,
% whilst the fan-out of classical circuits can be arbitrary large.
% For this reason, we need to explicitly consider the fan-out gate and the quantum circuit class $\QACz_f$,
% which includes fan-out gates~\cite{moor,moore1999quantum}.
% In $\QACz_f$, fan-out is equivalent to \parity.
% Also, for any constant $k$,
% fan-out is also equivalent to \modk,
% up to constant depth and linear ancillae reductions~\cite{moor}.
% This implies that for $\QACz$ circuits the above problems are all equivalent,
% and they are all contained in $\QACz_f$.
% However, the reduction between \parity\ and \modk\ is not robust in the approximate case,
% so an approximate hardness result for \parity\ does not imply hardness results for \modk,
% and vice versa.

\paragraph{Fourier Analysis and Pauli Analysis.} Fourier analysis on the space of Boolean functions seeks to understand Boolean functions via their Fourier transformations. The very first application of Fourier analysis on Boolean functions is the well-known KKL theorem proved by Kahn, Kalai, and Linial~\cite{21923} in 1988. Today, Fourier analysis on Boolean functions has played a crucial role in various areas. Readers may refer to O'Donnell's excellent book~\cite{ODonnell2014} for a thorough treatment.

Pauli analysis is a quantum generalization of the Fourier analysis of Boolean functions. Recall that the set of $n$-qubit Pauli operators $\mathcal{P}_n=\set{I,X,Y,Z}^{\otimes n}$ forms an orthonormal basis for the space of all operators acting on $n$-qubits. Given an $n$-qubit operation $M$, its Pauli expansion is $M=\sum_{P\in\mathcal{P}_n}\widehat{M}\br{P}P$, analogous to the Fourier expansion of a Boolean function, where the coefficients $\widehat{M}\br{P}$'s are the Pauli spectrum of $M$. Such expansions can be further generalized to quantum channels~\cite{pmlr-v195-bao23b,nadimpalli_pauli_2024}. To see that Pauli analysis is indeed a generalization of Fourier analysis, for a Boolean function $f:\set{0,1}^n\rightarrow\mathbb{R}$, the Pauli expansion of the diagonal matrix $M_f=\sum_{x\in\set{0,1}^n}f\br{x}\ketbra{x}$ is exactly the same as the Fourier expansion of $f$. 

Pauli analysis was first studied by Montanaro and Obsorne in the context of the so-called quantum Boolean functions~\cite{MO10} and has received increasing attention in the past couple of years, finding applications in testing and learning quantum operations~\cite{chen2024predictingquantumchannelsgeneral,pmlr-v195-bao23b,doi:10.1137/1.9781611977554.ch43,arunachalam_et_al:LIPIcs.ICALP.2024.13,
Rouze2024}, classical simulations of noisy quantum circuits~\cite{
10.1145/3564246.3585234}, multiprover quantum interactive proof systems in the noisy world~\cite{yao2019doublyexponentialupperbound,doi:10.1137/20M134592X,qin_et_al:LIPIcs.ICALP.2023.97,dong_et_al:LIPIcs.CCC.2024.30}. Very recently, Nadimpalli, Parham, Vasconcelos, and Yuen~\cite{nadimpalli_pauli_2024} initiated the study on the Pauli spectrum of \QACz circuits,  which has led to a series of lower bounds and learning algorithms for \QACz.

\paragraph{Comparison with \cite{nadimpalli_pauli_2024}}
% In~\cite{nadimpalli_pauli_2024}, Nadimpalli, Parham, Vasconcelos, and Yuen
% exploit the Pauli spectrum of \QACz\ circuits and prove a lower bound of $n^{O(1/d)}$ ancilla qubits to compute parity.
% Which is the best lower bound we know for the average case.
% We have summarized the above results in \cref{fig:main-results-function}.

Similarly to the work~\cite{nadimpalli_pauli_2024}, our work is also built on the theory of Pauli analysis. Instead of analyzing the Pauli spectrum of the channels induced by \QACz\ circuits as in~\cite{nadimpalli_pauli_2024}, we study the Pauli spectrum of the projectors induced by a \QACz\ circuits, namely $U^{\dagger}\br{\ketbra{0}\otimes \id}U$, where $U$ is a \QACz\ circuit. 

\cite{nadimpalli_pauli_2024} removed the large Toffoli gates in a circuit
and argued that the Frobenius norm does not change significantly.
In this way, they obtain a low-degree approximation of \QACz\ circuits within the Frobenius norm.
However, with this approach, the ancillary qubits are restricted to $n^{O(1/d)}$,
because the normalized Frobenius norm grows exponentially after operation
$A\mapsto\br{\id\otimes\bra{0^a}}A\br{\id\otimes\ket{0^a}}$, which projects the ancillary registers to the initial state $\ket{0^a}$.

In our work, we consider the approximation with respect to the spectral norm. We use Chebyshev polynomials to approximate the high-degree gates, and thus are able to get $\sqrt{n}$-degree approximations with respect to the spectral norm. Meanwhile, the projection of the ancilla registers to the initial state does not increase the spectral norm. 
One more advantage of using the spectral norm is that we can prove the hardness for a much broader class of Boolean functions.
Since the Frobenius norm can be exponentially smaller than the spectral norm,
many Boolean functions that have low-degree approximations with small Frobenius norm
actually have an $\Omega(n)$ approximate degree if we consider the spectral norm.
Hence in our work we are also able to prove optimal $1/2$ worst case hardness results for the \maj\ function and the \modk\ function.
For comparison, \cite{nadimpalli_pauli_2024} shows that it is hard for $\QACz$ circuits
to approximate the \maj\ function with a success probability greater than $1 - \frac{1}{\sqrt{n}}$.

In this paper, we further establish lower bounds on quantum state synthesis. Notice that the normalized Frobenius norms of quantum states are exponentially small,
while the spectral norm can be as large as $1$.
Hence, the results of \cite{nadimpalli_pauli_2024} cannot be applied directly to quantum state synthesis.

\subsection{Discussion}

It is still open whether $\QACz$ circuits can compute the \parity\ function.
Despite being hard in the quantum case,
the classical counterpart, that is, $\AC[0]$ circuits cannot compute the \parity\ function,
has admitted several different proofs.
Here we briefly explain some well-known techniques for classical circuit lower bounds and explain why it seems to be difficult to generalize to quantum circuits.

The most well studied technique goes to H\aa stad's switching lemma~\cite{10.1145/12130.12132},
which describes the effect of random restrictions on depth-$2$ circuits.
Random restriction fixes part of the input in a random manner,
% A restriction of a Boolean function $f: \set{0,1}^n\to\set{0,1}$ is the process of fixing some of its inputs to $0$ or $1$.
% Formally speaking, a restriction can be represented by a string $\rho\in\set{0,1,*}^n$.
% Let $S_\rho = \set{i: \rho_i = *}$,
% then the bits in $S_\rho$ are the ``free'' inputs which are not restricted,
% and the other input bits are fixed to $\rho_i$ respectively.
% Then the restricted function $f_\rho: \set{0,1}^{S_\rho}\to\set{0, 1}$ can be represented as
% \begin{equation*}
%     f_\rho(y) = f(x) \text{ where } x = \begin{cases}
%                                             y_i &\text{if } \rho_i = *, \\
%                                             \rho_i &\text{if } \rho_i \neq *.
%                                         \end{cases}
% \end{equation*}
% A random restriction with parameter $p$ is simply taking a restriction $\rho$ at random such that
% $\prob{\rho_i = *} = p$ and $\prob{\rho_i = 0} = \prob{\rho_i = 1} = (1-p)/2$ and applying to the Boolean function $f$.
which effectively kills large \textbf{AND} or \textbf{OR} gates. Moreover, \textbf{AND}-\textbf{OR} circuits and \textbf{OR}-\textbf{AND} circuits switch with each other via decision tree representation, which decreases the depth of circuits.
% By utilizing techniques for decision trees,
% the switching lemma actually with the correct parameter can decrease the depth of a circuit by $1$.
% On the other hand, however, when computing the \parity\ function,
% random restrictions do not effect the hardness of \parity\ functions.
% This is because if some of the inputs of a \parity\ function are fixed,
% the remaining function becomes another smaller \parity\ function,
% or another smaller negated \parity\ function.
% By using this idea H\aa stad~\cite{10.1145/12130.12132} has managed to prove that $\AC[0]$ circuits can not compute the \parity\ function.
For quantum circuits, however, there is not an obvious way to apply random restrictions.
Fang, Fenner, Green, Homer, and Zhang~\cite{10.5555/2011679.2011682}
proved that for any depth $d$ $\QACz$ circuit with $a$ ancillae,
there exists a quantum state $\varphi$ on $(a+1)2^{d/2}$ qubits such that
when part of the input is restricted to the state $\varphi$,
then the output of the $\QACz$ circuit is fixed to be $1$.
This can be seen as a restriction-based method.
However, in their work the state $\varphi$ is chosen in a deterministic manner.
Also, their method only works if the circuit computes \parity\ exactly.
The difficulty for quantum random restrictions
arises from the fact that large Toffoli gates might interact with many ancilla qubits but only a few input qubits, such as Rosenthal's circuit for nekomata states synthesis~\cite{rosenthal:LIPIcs.ITCS.2021.32}. Then random restricting inputs  won't kill many large Toffoli gates.
% In the extreme case,
% when the input of an \CZGate\ is restricted to the state $\ket{+} = \frac{1}{\sqrt{2}}\br{\ket{0} + \ket{1}}$,
% this \CZGate\ will not be removed.
% However, if the input of the \parity\ function is restricted to the state $\ket{+}$,
% then the difficulty of this \parity\ function vanishes,
% because its correct output is simply a completely mixed state.
% Also, we lack ``quantum decision trees'' models in the quantum computation regime,
% which are crucial in the proof of \cite{10.1145/12130.12132}.

Rossman~\cite{rossman2016parity} provided a simple proof of $\mathrm{PARITY}\notin\AC[0]$ via the DNF sparsification introduced by Gopalan, Meka and Reingold~\cite{6243388}. 
An $m$-junta is a Boolean function $f: \set{0,1}^n\to\mathbb{R}$ that actually only looks at the input
in the $m$ indices.
The DNF sparsification theorem from \cite{6243388} allows one to approximate any $k$-DNF or $k$-CNF to an
$m$-junta, where $m$ only depends on $k$ and the approximation error.
Then, through a random restriction, one can convert a large junta to a smaller junta.
These two steps together can convert an depth-$3$ circuit,
which are essentially $k$-DNFs or $k$-CNFs, into a $k$-junta,
which can be represented by a constant sized depth-$2$ circuit,
decreasing the depth by $1$.
By repeating the process above,
we can reduce an arbitrary $\AC[0]$ circuit into a trivial depth-$2$ $k$-junta.
On the other hand, the \parity\ function is robust against random restrictions,
thus we can prove that $\AC[0]$ circuits cannot compute the \parity\ function.

This methods appears to be easier to adapt to the quantum case,
compared to the switching lemma of~\cite{10.1145/12130.12132}.
This is because the notion of ``quantum junta'' is more natural and has been studied in~\cite{MO10,Wang2011PropertyTO,doi:10.1137/1.9781611977554.ch43,pmlr-v195-bao23b}.
However, it is still built on random restrictions, and it is not clear how to extend it to quantum circuits.

The polynomial method is a different approach to prove the hardness results of $\AC[0]$.  Razborov~\cite{Razborov1987-fh} and Smolensky~\cite{10.1145/28395.28404} to prove that $\maj$ and $\mathrm{MOD}_3$ cannot be efficiently computed by $\mathbf{AC}^0[\oplus]$.
Refer to the excellent survey of polynomial methods by Williams~\cite{williams:LIPIcs.FSTTCS.2014.47} for the polynomial method in circuit complexity.
For classical circuits, it is first shown that the elementary gates (\textbf{AND}, \textbf{OR}, and \textbf{NOR})
can be approximated by low-degree polynomials.
And then by combining these polynomial together,
we can represent the whole $\AC[0]$ circuit by a low-degree polynomial.
Finally, since \parity, \maj, and \modk\ have an $\Omega(n)$ approximate degree,
we can show that these problems are hard for $\AC[0]$ circuits.

Our techniques resemble the polynomial method in spirit,
where low-degree polynomials are replaced by low-degree operators.
The elementary gates in a $\QACz$ circuit are the multi-qubit \CZGate s,
along with all single-qubit unitaries.
The single-qubit unitaries are already of degree $1$ so we do not need further actions on them.
For an $n$ qubit \CZGate, we use \cref{cor:approximate-CZGate} to represent it by a degree $\tilde{O}(\sqrt{n})$ operator.
Using a layer-by-layer argument,
we prove \cref{cor:qac0-whole},
which states that the whole effect of a $\QACz$ circuit is low degree. A crucial difference between our proof and the proofs for \AC[0] is that the approximation for \AC[0]\ is over a certain finite field while our approximation for \QACz\ is over complex number fields.
It is not clear how to represent the computation of \QACz\ in a finite field. 

% Finally, when we are only concerned about the classical output of a $\QACz$ circuit,
% by \cref{thm:qac0-function-degree},
% we prove that the classical output can also be described by a low-degree polynomial,
% and thus bridging quantum operators with Boolean functions.
% This immediately implies that $\Omega(n)$-degree
% Boolean functions can not be computed by $\QACz$ circuits.
% One caveat is that our degree upper bounds related to the number of ancillae,
% and so we are only prove hardness results for $\QACz$ circuits
% with ancillae slightly more than linear.
% The major difficulty in proving hardness results for $\QACz$ circuits is the existence of ancillae.
% On one hand, $\QACz$ circuits without ancillae or with a few ancillae are already proved to be incapable of computing \parity~\cite{10.5555/2011679.2011682,rosenthal:LIPIcs.ITCS.2021.32,nadimpalli_pauli_2024}.
% On the other hand, the \parity\ function indeed admits a $\QACz$ circuits with exponential ancillae~\cite{rosenthal:LIPIcs.ITCS.2021.32}.
% This means that we need to take special care on the number of ancillae.
%Also, Slote~\cite{slote:LIPIcs.ITCS.2024.92} has pointed out a natural barrier

%In the work~\cite{nadimpalli_pauli_2024},
%they approximated the $\QACz$ circuits with low-degree operators,
%with small Frobenius norm.
%However when there are $a$ ancillae, the Frobenius norm will get multiplied by $2^a$,
%and the number of ancillae should be constrained to be $n^{O(1/d)}$.

Our work uses spectral norm approximations, which do not increase when the number of ancillae increase.
However, this requires us to approximate $\QACz$ circuits in the spectral norm using low-degree operators.
Even in the classical case, proving that $\AC[0]$ circuits have $o(n)$ approximate degree is a notorious open problem~\cite{10.1145/3444815.3444825}, and if solved we may get several consequences in complexity theory;
see \cite{slote:LIPIcs.ITCS.2024.92} and \cite{10.1145/3444815.3444825} for more discussion.
Our work can be seen as a quantum generalization of the work of Bun, Kothari, and Thaler~\cite[Theorem 5]{doi:10.1137/1.9781611975482.42},
who proved that $\AC[0]$ circuits with linear ancillae and depth $d$, aka $\mathbf{LC}^0_d$,
have an approximate degree $\tilde{O}(n^{1-2^{-d}})$.
However, in both the quantum case and the classical case,
proving approximate degree upper bounds for $\QACz$ circuits with superlinear ancillae or $\AC[0]$ circuits of superlinear size requires novel techniques.

%In quantum $\QACz$ circuits, the ancilla size is a natural correspondence to the size of a classical circuit.

% We believe that proving  \parity is not \QACz requires some novel circuit lower bound techniques. Here we list some open problems for furture work.

% \begin{enumerate}
%     \item 
% \end{enumerate}

\begin{Anonymous}
\subsection*{Acknowledgement} 
We thank the anonymous reviewers for their careful reading of our manuscript and their many insightful comments and suggestions.
We thank Zongbo Bao for his contribution in the early stage of this project. PY thanks Xun Gao for the helpful discussion. 
YD, FO, and PY were supported by National Natural Science Foundation of China (Grant No. 62332009, 12347104), Innovation Program for Quantum Science and Technology (Grant No. 2021ZD0302901), NSFC/RGC Joint Research Scheme (Grant no. 12461160276), Natural Science Foundation of Jiangsu Province (No. BK20243060) and the New Cornerstone Science Foundation.
\end{Anonymous}

\section{Preliminaries}

%Normalized Schatten $p$-norms do not increase over normalized partial trace:
%\begin{lemma}[\cite{rastegin2012relations}]\label{lem:normptrace}
%  Let $A, B$ be two Hilbert spaces, $M\in\Matrix_{A\times B}$,
%  then for any $p\ge 1$ or $p = \infty$,
%  $$\normthree{\tr_A(M)}_p \le \normthree{M}_p.$$
%\end{lemma}

% \begin{definition}[Worst Case Error]
%   Let $f: \set{0,1}^n\to\set{0,1}$ be a Boolean function.
%   An algorithm $A$ computes $f$ with worst case error $\varepsilon$ if
%   for any $x\in\set{0,1}^n$,
%   \begin{equation*}
%     \prob{A(x) \neq f(x)} \le \varepsilon.
%   \end{equation*}
% \end{definition}

% \begin{definition}[Average Case Error]
%   Let $f: \set{0,1}^n\to\set{0,1}$ be a Boolean function.
%   An algorithm $A$ computes $f$ with average case error $\varepsilon$ if
%   \begin{equation*}
%     \expec{x\in\set{0,1}}{\prob{A(x) \neq f(x)}} \le \varepsilon.
%   \end{equation*}
% \end{definition}
% It is clear that if an algorithm $A$ computes a function $f$ with worst case error $\varepsilon$,
% then it also computes $f$ with average error $\varepsilon$.

\subsection{Analysis of Boolean Functions}

In this subsection, we briefly introduce the theory of analysis of Boolean functions. Readers may refer to O'Donnell's excellent book~\cite{ODonnell2014} for a thorough treatment.

Given a Boolean function $f: \set{0,1}^n\to\mathbb{R}$,
its $p$-norm is defined to be $\normsub{f}{p} = \br{\expec{\x}{\abs{f(\x)}^p}}^{1/p}$ for $p\geq 1$.
Its infinity norm is defined to be $\normsub{f}{\infty} = \lim_{p\to\infty}\normsub{f}{p} = \max_{x}\abs{f(x)}.$
We will use the notation $\norm{f} = \normsub{f}{\infty}$.

Given Boolean functions $f,g:\set{0,1}^n\to\mathbb{R}$, the inner product of $f$ and $g$ is $\langle f,g\rangle=\expec{\x}{f(\x)g(\x)}$, where $\mathbf{x}$ is uniformly distributed over $\set{0,1}^n$. 
For any $S\subseteq[n]$, define the Fourier basis $\chi_S$ as
$\chi_S(x) = (-1)^{\sum_{i\in S}x_i}.$ It is not hard to see that $\set{\chi_S}_{S\subseteq[n]}$ is an orthonormal basis. The Fourier expansion of $f$ is $f = \sum_{S\subseteq[n]}\widehat{f}(S)\chi_S.$

% \begin{remark}
%     Another probably more popular form of Fourier expansion considers the functions of the form
%     $f: \set{-1,1}^n\to\mathbb{R}$.
%     In this case the basis function $\chi_S$ is defined as
%     $$\chi_S(x) = \Pi_{i\in S}x_i.$$
%     These two forms are equivalent.
%     However, throughout this work we only consider functions on $\set{0,1}^n$,
%     which will be more consistent with quantum operators.
%     Even in the case where we refer to a work that uses the latter notation of functions on $\set{-1,1}^n$,
%     we will convert the notation to functions on $\set{0,1}^n$.
% \end{remark}

The following is the well-known Parseval theorem, which relates the $2$-norm and Fourier coefficients of a Boolean function.
\begin{theorem}[Parseval's theorem]\label{thm:Parseval}
    Let $f: \set{0,1}^n\to\mathbb{R}$ be a Boolean function.
    Then
    \begin{equation*}
        \normsub{f}{2}^2 = \sum_{S\subseteq[n]}\widehat{f}(S)^2.
    \end{equation*}
\end{theorem}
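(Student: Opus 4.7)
The plan is to prove Parseval's identity by direct expansion in the Fourier basis and using orthonormality of the characters $\chi_S$. This is a textbook manipulation, so I will just make sure each step falls out cleanly from the definitions already set up in the paper.

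First I would expand $\normsub{f}{2}^2$ using the definition of the inner product and the Fourier expansion. Writing $f = \sum_{S\subseteq[n]}\widehat{f}(S)\chi_S$, we get
\begin{equation*}
\normsub{f}{2}^2 = \expec{\x}{f(\x)^2} = \langle f, f\rangle = \sum_{S,T\subseteq[n]} \widehat{f}(S)\widehat{f}(T)\,\langle \chi_S, \chi_T\rangle.
\end{equation*}
So the whole statement reduces to computing $\langle \chi_S, \chi_T\rangle$ for arbitrary $S,T\subseteq[n]$.

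Next I would verify the orthonormality of the Fourier basis. Since $\chi_S(x)\chi_T(x) = (-1)^{\sum_{i\in S}x_i+\sum_{i\in T}x_i} = (-1)^{\sum_{i\in S\triangle T}x_i} = \chi_{S\triangle T}(x)$, it suffices to show $\expec{\x}{\chi_R(\x)} = 1$ if $R=\emptyset$ and $0$ otherwise. For $R=\emptyset$ this is immediate. For $R\ne\emptyset$, pick any $j\in R$ and factor the expectation over independent uniform bits as $\expec{\x}{\chi_R(\x)} = \expec{x_j}{(-1)^{x_j}} \cdot \prod_{i\in R\setminus\{j\}}\expec{x_i}{(-1)^{x_i}}$; each factor of the form $\expec{x_j}{(-1)^{x_j}}$ equals $0$, killing the whole product. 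Hence $\langle \chi_S,\chi_T\rangle = \mathbf{1}[S=T]$.

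Finally I would substitute back: only the diagonal terms $S=T$ in the double sum survive, yielding $\normsub{f}{2}^2 = \sum_{S\subseteq[n]}\widehat{f}(S)^2$, which is exactly the claim. There is no real obstacle here; the only thing worth being precise about is the factorization step in computing $\expec{\x}{\chi_R(\x)}$, which uses independence of the coordinates under the uniform distribution on $\set{0,1}^n$. Since this is a classical identity used only as a black box in the rest of the paper, the proof can be presented in a few lines.
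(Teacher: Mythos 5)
Your proof is correct and is the standard orthonormality argument; the paper itself states Parseval's theorem as a well-known fact without proof (it even notes just above that $\set{\chi_S}_S$ is an orthonormal basis), so your expansion into the Fourier basis and the diagonal-term computation is exactly the argument being implicitly invoked.
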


\begin{definition}
    Let $f: \set{0,1}^n\to\mathbb{R}$ be a Boolean function with Fourier expansion
    $$f = \sum_{S\subseteq[n]}\widehat{f}(S)\chi_S.$$
    Then the degree of $f$ is defined as
    $$\deg\br{f} = \max_{S: \widehat{f}(S) \neq 0} \abs{S}.$$
\end{definition}

\begin{definition}[Approximate Degree]
    Let $f: \set{0,1}^n\to\mathbb{R}$ be a Boolean function.
    For $\varepsilon\in[0, 1]$, the approximate degree of $f$ is defined as
    $$\degeps{\varepsilon}{f} = \min_{g: \norm{f - g} \le \varepsilon} \deg\br{g}.$$
    %where the notation $\norm{f - g} \le \varepsilon$ means $\abs{f(x) - g(x)} \le \varepsilon$ for each $x\in\set{-1,1}^n$.
\end{definition}

It is worth noticing that the approximation is with respect to the infinity norm. The notion of approximate degrees has played a crucial role in quantum query complexity and quantum communication complexity~\cite{10.1145/502090.502097,BurhmanWolf:2001}.

A Boolean function $f:\set{0,1}^n\rightarrow\mathbb{R}$ is symmetric if $f\br{x_1,\ldots,x_n}=f\br{x_{\sigma(1)},\ldots,x_{\sigma(n)}}$ for any $\sigma\in S_n$.

\begin{fact}[{\cite[Theorem 3]{10.1145/3444815.3444825}}]\label{lem:locality-of-function}
    Let $f: \set{0, 1}^n\to\set{0,1}$ be a symmetric function and $t$ be the number such that $f$ is constant on all inputs of Hamming weight between $t$ and $n-t$. Then for $\varepsilon\in\br{2^{-n}, 1/3}$, we have
    $$
        \degeps{\varepsilon}{f} = \Theta\br{\sqrt{nt} + \sqrt{n\log\br{1/\varepsilon}}}.
    $$
\end{fact}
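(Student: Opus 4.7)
The plan is to prove the two matching bounds separately, following Paturi's classical strategy for symmetric Boolean functions and its extension to the $\log\br{1/\varepsilon}$ regime. Since $f$ is symmetric, it factors through Hamming weight: $f(x) = F\br{\abs{x}}$ for some $F: \set{0, 1, \ldots, n}\to\set{0,1}$, and by hypothesis $F$ is constant on the interior window $\set{t, t+1, \ldots, n-t}$. Call this constant $c$. The task reduces to constructing, and lower-bounding the degree of, a univariate real polynomial $p$ with $\abs{p(k) - F(k)}\le\varepsilon$ for all integers $k\in[0,n]$, because Minsky--Papert symmetrization converts between multilinear approximants on the cube and univariate approximants on the integer grid without changing the degree.

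For the upper bound I would construct $p$ as a product of two factors. The first factor interpolates on the $2t$ ``edge'' points $\set{0,\ldots,t-1}\cup\set{n-t+1,\ldots,n}$: take $p(y) = c + q(y)\cdot T(y)$, where $q$ is the unique degree $2t-1$ polynomial that matches $F - c$ on those $2t$ points (and vanishes nowhere there), and $T$ is a rescaled Chebyshev polynomial chosen so that $T \equiv 1$ on the edge points and decays exponentially across the interior window. A Chebyshev polynomial on the interval $\br{t-1,n-t+1}$ that is bounded on the interior and grows at the endpoints gives $\abs{T(k)} \le \varepsilon/\norm{q}_\infty$ on $\set{t,\ldots,n-t}$ using degree $O\br{\sqrt{n\log(1/\varepsilon)}}$ for the decay term plus $O(\sqrt{nt})$ to control growth at the endpoints, because Chebyshev polynomials grow as $\cosh(D\cdot\mathrm{arccosh}(1+\Theta(t/n)))$, which equals $e^{\Theta(D\sqrt{t/n})}$. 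Multilinearizing $p\br{\sum_i x_i}$ on the cube yields an approximating polynomial of the same total degree, giving the claimed upper bound.

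For the lower bound, suppose a degree-$d$ multilinear polynomial $q$ satisfies $\norm{q - f}_\infty \le \varepsilon$. Minsky--Papert symmetrization yields $\tilde q(k) = \expec{\x:\,\abs{\x}=k}{q(\x)}$, a univariate real polynomial of degree at most $d$ with $\abs{\tilde q(k) - F(k)} \le \varepsilon$ for each integer $k\in[0,n]$. Since $F$ switches value somewhere between a point in $\set{0,\ldots,t-1}$ and a neighboring point (or within $t$ of an endpoint), $\tilde q$ must change by $\Omega(1)$ in a step of size $1$ near the boundary while remaining bounded on $[0,n]$. Applying a Markov--Bernstein type inequality on $[0,n]$ at the endpoint (the Ehlich--Zeller--Rivlin extremal inequality for polynomials bounded on a grid) gives $d = \Omega(\sqrt{nt})$. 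To extract the $\sqrt{n\log(1/\varepsilon)}$ term I would argue separately: on the interior window $\tilde q$ is within $\varepsilon$ of the constant $c$, but at some nearby integer it must be within $\varepsilon$ of the opposite value; a standard Chebyshev extremal argument (the inverse of the upper-bound construction) then forces $d = \Omega\br{\sqrt{n\log(1/\varepsilon)}}$. Taking the maximum of the two lower bounds and noting $\max\set{a,b} = \Theta(a+b)$ matches the upper bound.

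The main obstacle is the $\sqrt{n\log(1/\varepsilon)}$ term: the $\sqrt{nt}$ part is classical Paturi, but extending the lower bound to capture the correct $\varepsilon$-dependence requires either a delicate direct argument via a dual witness for approximate degree, or a reduction that amplifies a rough approximant to a near-exact one using $O(\log(1/\varepsilon))$ applications of a polynomial amplification lemma, then appealing to the exact-degree lower bound. Making this amplification tight so the two terms combine additively, rather than multiplicatively, is the subtle step.
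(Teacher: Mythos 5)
The paper does not prove this statement at all: it is imported verbatim as a \emph{Fact} with a citation to Theorem~3 of the referenced survey, so there is no in-paper argument to compare against, and your attempt has to be judged against the known proof in the literature. Your skeleton --- Minsky--Papert symmetrization to a univariate polynomial on the grid $\set{0,\dots,n}$, a Chebyshev-based upper bound, and a Markov--Bernstein/Ehlich--Zeller extremal argument for the lower bound --- is exactly right for the $\sqrt{nt}$ term; that part is Paturi's theorem and would go through.

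The genuine gap is the $\sqrt{n\log\br{1/\varepsilon}}$ term, in both directions. For the upper bound, your Chebyshev accounting is off by a quadratic factor: a degree-$D$ Chebyshev polynomial rescaled to the window $[t,n-t]$ grows like $e^{\Theta(D/\sqrt{n})}$ at a point at distance $1$ outside the window, so forcing the ratio between its value at the nearest edge point and its value on the interior to be $1/\delta$ costs degree $\Theta\br{\sqrt{n}\log(1/\delta)}$, not $\Theta\br{\sqrt{n\log(1/\delta)}}$. Worse, in your construction $\delta$ must be taken as $\varepsilon/\norm{q}_{\infty}$, where $q$ is a Lagrange interpolant through the $2t$ edge points; on the interior such an interpolant can be as large as $n^{O(t)}$, which would inflate the degree to roughly $\sqrt{n}\br{t\log(n/t)+\log(1/\varepsilon)}$ --- far above $\sqrt{nt}+\sqrt{n\log(1/\varepsilon)}$ (for $t=\sqrt{n}$ this is $n\log n$ versus the true $n^{3/4}$). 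The idea that rescues the bound, and which is absent from your sketch, is that after symmetrization the approximant only needs to be bounded at the \emph{integer points} $\set{0,\dots,n}$, not on the whole real interval; polynomials bounded on a grid (the Coppersmith--Rivlin regime) can oscillate far more violently between grid points than interval-bounded ones, and exploiting this is precisely what converts $\sqrt{n}\log(1/\varepsilon)$ into $\sqrt{n\log(1/\varepsilon)}$. The same grid-versus-interval distinction drives the matching lower bound: you correctly flag that term as the subtle step, but neither a ``standard Chebyshev extremal argument'' (which is an interval statement) nor $O(\log(1/\varepsilon))$-fold amplification (which combines the two terms multiplicatively rather than additively) closes it, so as written the proposal establishes only the $\varepsilon=\Theta(1)$ case of the Fact.
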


\begin{example}\label{cor:locality-of-parity}
    With \cref{lem:locality-of-function},
    we can show that the following functions have an $\Omega(n)$ approximate degree.
    \begin{itemize}
    \item For any $n$ define the function $\Parityn: \set{0,1}^n\to\set{0,1}$ as
        $$\Parityn(x) = \bigoplus_ix_i.$$
        Then $\degeps{1/3}{\Parityn} = \Theta\br{n}$.
    This result is first derived in~\cite{10.1145/129712.129758}.

    \item
        For any odd $n$ define the function $\Majorityn:\set{0,1}^n\to\set{0,1}$ as
        $$
        \Majorityn(x) = \begin{cases}
            1 &\text{ if } \sum_ix_i\ge n/2\\
            0 &\text{ if } \sum_ix_i< n/2
        \end{cases}
        $$
        Then $\degeps{1/3}{\Majorityn} = \Theta\br{n}$.
    \item
        For any $n$ and $k$, define the function $\Modnk: \set{0,1}^n\to\set{0,1}$ as
        $$
        \Modnk(x) = \begin{cases}
            1 &\text{ if } \sum_ix_i \bmod k \neq 0 \\
            0 &\text{ if } \sum_ix_i \bmod k = 0 \\
        \end{cases}
        $$
        Then $\degeps{1/3}{\Modnk} = \Omega\br{n - k}$.
        Note that $\Modraw{n, 2} = \Parityn$.
  \end{itemize}
\end{example}

\subsection{Quantum Information and Pauli analysis}

A quantum system $A$ is associated with a finite-dimensional Hilbert space, which we also denote by $A$.
The quantum registers in the quantum system $A$ are represented by {\it density operators},
which are trace-one positive semidefinite operators, in the Hilbert space $A$.
We also use the Dirac notation $\ket{\varphi}$ to represent a pure state.
In this case, we have the convention that $\varphi = \ketbra{\varphi}$,
where here $\varphi$ is a rank-one density operator.
For two separate quantum registers $\varphi$ and $\sigma$ from quantum systems $A$ and $B$,
the compound register is the Kronecker product $\varphi\otimes\sigma$.
A {\it positive operator-valued measure} (POVM) is a quantum measurement described by a set of positive semidefinite operators that sum up to identity.
Let $\set{P_a}_{a}$ be a POVM applied on a quantum register $\varphi$,
then the probability that the measurement outcome is $a$ is $\Tr\Br{P_a\varphi}$.

For any integer $n\geq 2$, let $\mathcal{M}_n$ be the set of $n\times n$ matrix.
For any matrix $M\in\mathcal{M}_n$, we let $\abs{M} = \sqrt{M^\dagger M}$. For any $M,N\in\mathcal{M}_n$, the inner product of $M, N$ is $\langle M,N\rangle=\Tr\Br{M^{\dagger}N}/n$. It is evident that $\br{\mathcal{M}_n,\langle\cdot,\cdot\rangle}$ forms a Hilbert space.

For $p\ge 1$, the {\it normalized} Schatten $p$-norm of $M$ is defined to be
\begin{equation*}
    \normsub{M}{p} = \br{\frac{1}{n}\Tr\Br{\abs{M}^p}}^{1/p}.
\end{equation*}
It is not hard to see that $\langle M,M\rangle=\normsub{M}{2}^2$. Moreover, $\normsub{\cdot}{p}$ is monotone non-decreasing with respect to $p$ and $\normsub{\cdot}{\infty}=\lim_{p\rightarrow\infty}\normsub{\cdot}{p}$ is the spectral norm.
The fidelity between two quantum states $\rho$ and $\varphi$ is defined as
\begin{equation*}
    F(\rho, \sigma) = \Tr\Br{\sqrt{\sqrt{\rho}\sigma\sqrt{\rho}}}.
\end{equation*}
The above definition is symmetric: $F(\rho, \sigma) = F(\sigma, \rho)$.
When one of the input states is pure, say $\rho = \ketbra{\rho}$,
then we have
\begin{equation*}
    F(\ketbra{\rho}, \sigma) = \sqrt{\bra{\rho}\sigma\ket{\rho}}.
\end{equation*}
The Fuchs–van de Graaf inequalities give a relation between the norms and fidelity:
\begin{lemma}[{\cite[Theorem 3.33]{watrous2018theory}}]\label{lem:fuchs-vandegraaf}
    Let $\rho, \sigma$ be positive semi-definite operators of size $2^n\times 2^n$.
    Let $\normsub{\rho}{\text{TD}} = 2^n\normsub{\rho}{1}$ denote the unnormalized trace norm of an operator.
    It holds that
    \begin{equation*}
        1 - \frac{1}{2}\normsub{\rho - \sigma}{\text{TD}} \le F(\rho, \sigma) \le \sqrt{1 - \frac{1}{4}\normsub{\rho - \sigma}{\text{TD}}^2}.
    \end{equation*}
    Equivalently,
    \begin{equation*}
        2 - 2F(\rho, \sigma) \le \normsub{\rho - \sigma}{\text{TD}} \le 2\sqrt{1 - F(\rho, \sigma)^2}.
    \end{equation*}
    Also, for any operator $\rho$, we have $\normsub{\rho}{\text{TD}} \ge \normsub{\rho}{p}$ for any $p\ge 1$ or $p=\infty$.
\end{lemma}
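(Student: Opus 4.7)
The plan is to handle the two nontrivial inequalities of the lemma separately using two classical tools from quantum information theory: Uhlmann's theorem for the upper bound on the fidelity, and the Powers--Stormer operator inequality for the lower bound. I will work throughout in the density operator case $\Tr\rho = \Tr\sigma = 1$, since the general positive semi-definite statement either reduces to this by rescaling or follows from essentially the same arguments.

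For the upper bound on $F$, I would first verify the pure-state equality $F\br{\ketbra{\psi}, \ketbra{\varphi}} = \sqrt{1 - \tfrac{1}{4}\normsub{\ketbra{\psi} - \ketbra{\varphi}}{\text{TD}}^2}$ directly: the fidelity equals $\abs{\braket{\psi}{\varphi}}$ from the definition, and diagonalizing the rank-two Hermitian operator $\ketbra{\psi} - \ketbra{\varphi}$ inside the two-dimensional span of the two vectors yields eigenvalues $\pm\sqrt{1 - \abs{\braket{\psi}{\varphi}}^2}$, so its unnormalized trace norm equals $2\sqrt{1 - F^2}$. To lift this to mixed states, I take Uhlmann purifications $\ket{\psi_\rho}, \ket{\psi_\sigma}$ on an extended Hilbert space that achieve $F(\rho, \sigma) = \abs{\braket{\psi_\rho}{\psi_\sigma}}$; combining the pure-state equality with contractivity of the trace distance under the partial trace, $\normsub{\rho - \sigma}{\text{TD}} \le \normsub{\ketbra{\psi_\rho} - \ketbra{\psi_\sigma}}{\text{TD}}$, immediately yields $F(\rho, \sigma) \le \sqrt{1 - \tfrac{1}{4}\normsub{\rho - \sigma}{\text{TD}}^2}$.

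The lower bound is the main obstacle, since the analogous purification argument breaks: monotonicity of trace distance under partial trace goes in the wrong direction for lifting a pure-state inequality from the purifications back to $\rho$ and $\sigma$. Here my plan is to invoke the Powers--Stormer inequality $\Tr\Br{(\sqrt{\rho} - \sqrt{\sigma})^2} \le \normsub{\rho - \sigma}{\text{TD}}$, whose standard proof decomposes $\rho - \sigma$ into orthogonal positive and negative parts and uses the operator monotonicity of the square root. Expanding the left-hand side yields $2 - 2\Tr\Br{\sqrt{\rho}\sqrt{\sigma}}$ in the density-operator case, and since $F(\rho, \sigma) = \Tr\abs{\sqrt{\rho}\sqrt{\sigma}} \ge \Tr\Br{\sqrt{\rho}\sqrt{\sigma}}$ (the unnormalized trace norm dominates the trace of any matrix, and $\Tr\Br{\sqrt{\rho}\sqrt{\sigma}}$ is already real and non-negative since $\sqrt{\rho}, \sqrt{\sigma} \succeq 0$), we conclude $2 - 2F(\rho, \sigma) \le \normsub{\rho - \sigma}{\text{TD}}$, which rearranges to the lower bound. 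Finally, the auxiliary comparison $\normsub{\rho}{\text{TD}} \ge \normsub{\rho}{p}$ is elementary: writing $\sigma_1, \dots, \sigma_{2^n}$ for the singular values of $\rho$ and raising to the $p$-th power, it reduces to $\sum_i \sigma_i^p \le \br{\sum_i \sigma_i}^p$ for $p \ge 1$, which is valid by multinomial expansion of the right-hand side and keeping only the diagonal terms, with the factor $2^{-n} \le 1$ only helping; the $p = \infty$ case is immediate from $\max_i \sigma_i \le \sum_i \sigma_i$. Overall, the Powers--Stormer inequality is the single nontrivial ingredient, and in practice I would cite it from a standard reference such as Watrous's textbook rather than reprove it.
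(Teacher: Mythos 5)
The paper does not prove this lemma at all: it is imported verbatim as \cite[Theorem 3.33]{watrous2018theory}, so there is no in-paper argument to compare against. Your proof is correct and is one of the standard textbook derivations. The upper bound via the exact pure-state identity $\normsub{\ketbra{\psi}-\ketbra{\varphi}}{\text{TD}} = 2\sqrt{1-\abs{\braket{\psi}{\varphi}}^2}$ (the rank-two eigenvalue computation checks out), Uhlmann purifications achieving the fidelity, and contractivity of the trace norm under partial trace is exactly right. The lower bound via Powers--Stormer, $\Tr\Br{(\sqrt{\rho}-\sqrt{\sigma})^2} \le \normsub{\rho-\sigma}{\text{TD}}$, combined with $\Tr\Br{\sqrt{\rho}\sqrt{\sigma}} \le \Tr\abs{\sqrt{\rho}\sqrt{\sigma}} = F(\rho,\sigma)$, cleanly gives $2-2F \le \normsub{\rho-\sigma}{\text{TD}}$; this differs from Watrous's own proof (which goes through the block-operator characterization of fidelity), but is equally standard and arguably shorter given Powers--Stormer as a black box. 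The final norm comparison is elementary and handled correctly. One small caveat: the inequalities as stated only hold for trace-one operators (e.g.\ $\rho=\sigma=2\cdot\id/2^n$ violates the upper bound), so your restriction to density operators is not merely a convenience but is in fact necessary -- the hedge that the general positive semi-definite case ``reduces by rescaling'' should be dropped, since the lemma is only ever applied to states in the paper anyway.
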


The following lemma will also be used throughout this work.
The proof is deferred to \cref{appendix:proofs}.
\begin{lemma}\label{lem:spectral-multiply}
    Let $A, B, \tilde{A}, \tilde{B}$ be operators satisfying
    \begin{itemize}
        \item $\norm{A} \le 1$ and $\norm{B} \le 1$.
        %\item $A$ is $\br{\ell_0, \varepsilon_0}$-local.
        \item $\norm{A - \tilde{A}} \le \varepsilon_0$.
        %\item $B$ is $\br{\ell_1, \varepsilon_1}$-local.
        \item $\norm{B - \tilde{B}} \le \varepsilon_1$.
    \end{itemize}
    %Then $\norm{AB} \le 1$ and $AB$ is $\br{\ell_0+\ell_1, \varepsilon_0+\varepsilon_1+\varepsilon_0\varepsilon_1}$-local.
    Then $\norm{AB} \le 1$ and
    %$$\degeps{\varepsilon}{AB} \le \ell_0 + \ell_1.$$
    $\norm{AB - \tilde{A}\tilde{B}} \le \varepsilon = \varepsilon_0 + \varepsilon_1 + \varepsilon_0\varepsilon_1 = (1+\varepsilon_0)(1+\varepsilon_1) - 1$.
\end{lemma}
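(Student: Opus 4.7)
The plan is to prove the two claims using only submultiplicativity and the triangle inequality of the spectral norm, which are standard properties of any operator norm. No heavy machinery is required.

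First, for the bound $\norm{AB} \le 1$, I would simply invoke submultiplicativity: $\norm{AB} \le \norm{A}\cdot\norm{B} \le 1 \cdot 1 = 1$.

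For the main bound, the key idea is to insert a hybrid term $A\tilde{B}$ and telescope. Writing
\begin{equation*}
  AB - \tilde{A}\tilde{B} \;=\; A(B - \tilde{B}) \;+\; (A - \tilde{A})\tilde{B},
\end{equation*}
the triangle inequality gives
\begin{equation*}
  \norm{AB - \tilde{A}\tilde{B}} \;\le\; \norm{A}\cdot\norm{B-\tilde{B}} \;+\; \norm{A-\tilde{A}}\cdot\norm{\tilde{B}}.
\end{equation*}
The first summand is at most $1\cdot\varepsilon_1 = \varepsilon_1$. For the second summand, I need to bound $\norm{\tilde{B}}$: by the reverse triangle inequality,
\begin{equation*}
  \norm{\tilde{B}} \;\le\; \norm{B} + \norm{B - \tilde{B}} \;\le\; 1 + \varepsilon_1,
\end{equation*}
so the second summand is at most $\varepsilon_0(1+\varepsilon_1)$. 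Adding the two contributions yields $\varepsilon_1 + \varepsilon_0 + \varepsilon_0\varepsilon_1$, which is exactly the claimed $\varepsilon$.

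There is no substantive obstacle here; the only subtle choice is which hybrid to insert between $AB$ and $\tilde{A}\tilde{B}$. The insertion $A\tilde{B}$ (rather than $\tilde{A}B$) is what lets us use the clean bounds $\norm{A}\le 1$ on one factor while paying the slightly larger price $\norm{\tilde B}\le 1+\varepsilon_1$ on the other, producing the symmetric product $(1+\varepsilon_0)(1+\varepsilon_1)-1$ claimed in the statement. The analogous choice $\tilde{A}B$ would give the same bound by symmetry.
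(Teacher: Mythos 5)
Your proof is correct and follows essentially the same route as the paper: the same hybrid insertion of $A\tilde{B}$, the same use of the triangle inequality and submultiplicativity, and the same bound $\norm{\tilde{B}} \le \norm{B} + \norm{B - \tilde{B}} \le 1 + \varepsilon_1$ to collect the $\varepsilon_0\varepsilon_1$ cross term. Nothing further is needed.
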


The Pauli matrices $\Base{0}, \dots, \Base{3}$ are
$$
\Base{0} = \begin{bmatrix}
    1 & 0 \\
    0 & 1
\end{bmatrix},
\Base{1} = \begin{bmatrix}
    0 & 1 \\
    1 & 0
\end{bmatrix},
\Base{2} = \begin{bmatrix}
    0 & -i \\
    i & 0
\end{bmatrix},
\Base{3} = \begin{bmatrix}
    1 & 0 \\
    0 & -1 
\end{bmatrix},
$$
which form an orthonormal basis in $\mathcal{M}_2$. For integer $n\geq 1$ and $\sigma\in\set{0,1,2,3}^n$, we define 
$$\Base{\sigma} = \Base{\sigma_1}\otimes\dots\otimes\Base{\sigma_n}.$$
The set of Pauli matrices $\set{\Base{\sigma}}_{\sigma\in\set{0,1,2,3}^n}$ forms an orthonormal basis in $\mathcal{M}_{2^n}$.
For any $2^n\times2^n$ matrix $A$,
the Pauli expansion of $A$ is
$$A = \sum_{\sigma\in\set{0,1,2,3}^n}\widehat{A}(\sigma)\cdot\Base{\sigma}.$$
The coefficients $\widehat{A}(\sigma)$'s are called the Pauli coefficients of $A$.
We can then define the degree and the approximate degree of a matrix.
\begin{definition}\label{def:quantum-approximate-degree}
    Let $n$ be an integer and $A$ be a $2^n\times2^n$ matrix.
    The degree of $A$ is defined as
    $$\deg(A) = \max_{\sigma: \widehat{A}(\sigma) \neq 0} \abs{\sigma},$$
    where $\abs{\sigma} = \abs{\set{i: \sigma_i\neq 0}}$.
    For $\varepsilon\in[0, 1]$, the approximate degree of $A$ is defined as
    $$\degeps{\varepsilon}{A} = \min_{B: \norm{A - B} \le \varepsilon} \deg\br{B},$$
    where $\norm{\cdot}$ is the spectral norm.
\end{definition}

%\begin{cor}\label{cor:approximate-local-PSD}
%    Let $\varphi$ be an $\br{\ell, \varepsilon}$-local pure state.
%    Then there exists a $2\ell$-local positive semi-definite operator $P$ such that
%    $$\norm{\varphi - P} \le 2\varepsilon + \varepsilon^2$$
%\end{cor}
%\begin{proof}
%    Since $\varphi$ is a pure state, we have $\varphi = \varphi^2 = \varphi\varphi^\dagger$.
%    By \cref{lem:approximate-local-multiply}, we see that $\varphi\varphi^\dagger$ is $\br{2\ell, 2\varepsilon + \varepsilon^2}$-local.
%    Also, suppose the $\br{\ell, \varepsilon}$-locality of $\varphi$ implies the existence of an $\ell$-local operator $Q$
%    that is $\varepsilon$ close to $\varphi$, then following the proof of \cref{lem:approximate-local-multiply} we see that $P = QQ^\dagger$ is a $2\ell$-local positive semi-definite operator that is $\br{2\varepsilon+\varepsilon^2}$ close to $\varphi\varphi^\dagger = \varphi$.
%\end{proof}

\begin{lemma}\label{lem:approximate-local-U-tensor}
    %Let $A$ be $\br{\ell, \varepsilon}$-local.
    %Let $U$ be a unitary of the form $U = \bigotimes_iU_i$,
    %where each $U_i$ is a local unitary acting on at most $t$ qubits,
    %then the operator $UAU^\dagger$ is $\br{\ell t, \varepsilon}$-local.
    Let $A$ be a $2^n\times2^n$ matrix satisfying $\deg\br{A} = \ell$.
    Let $U$ be a unitary of the form $U = \bigotimes_iU_i$,
    where each $U_i$ is a local unitary acting on at most $t$ qubits.
    Then $$\deg\br{UAU^\dagger} \le \ell t.$$
\end{lemma}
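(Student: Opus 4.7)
The plan is to expand $A$ in the Pauli basis and track, for each basis element separately, how conjugation by $U$ affects its non-identity support. Write $A = \sum_{\sigma} \widehat{A}(\sigma)\, \Base{\sigma}$, where every $\sigma$ appearing with nonzero coefficient satisfies $|\sigma| \le \ell$. Since the degree is subadditive under linear combinations, it suffices to show that for each such $\sigma$ the operator $U\Base{\sigma}U^{\dagger}$ has Pauli degree at most $\ell t$.

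Fix such a $\sigma$ and let $S = \{i : \sigma_i \neq 0\} \subseteq [n]$, so $|S| \le \ell$. For each local factor, let $S_i$ denote the set of qubits on which $U_i$ acts, so $|S_i| \le t$ and the $S_i$ are pairwise disjoint. Split $U$ as $U = U' \otimes U''$, where $U' = \bigotimes_{i: S_i \cap S \neq \emptyset} U_i$ and $U'' = \bigotimes_{i: S_i \cap S = \emptyset} U_i$. The factor $U''$ is supported entirely on qubits outside $S$, so on those qubits $\Base{\sigma}$ acts as the identity; hence $U''$ commutes with $\Base{\sigma}$ and cancels against $U''^{\dagger}$ in the conjugation, giving
\begin{equation*}
    U\Base{\sigma}U^{\dagger} = U'\Base{\sigma}U'^{\dagger}.
\end{equation*}

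Let $T = \bigcup_{i: S_i \cap S \neq \emptyset} S_i$; this is exactly the set of qubits on which $U'$ acts nontrivially. Because the $S_i$ are pairwise disjoint and each of the $S_i$ in this union contributes at least one element to $S$, the number of such indices $i$ is at most $|S| \le \ell$, and therefore $|T| \le \ell t$. Since conjugation by a unitary supported on $T$ cannot create any nontrivial Pauli action outside $T$, every Pauli $\Base{\tau}$ appearing with nonzero coefficient in the expansion of $U'\Base{\sigma}U'^{\dagger}$ satisfies $\{j : \tau_j \neq 0\} \subseteq T$, which gives $|\tau| \le \ell t$.

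Combining these estimates for all $\sigma$ in the support of $\widehat{A}$ yields $\deg(UAU^{\dagger}) \le \ell t$. There is no real obstacle in this proof; the only point requiring care is the disjointness argument that bounds $|T|$ by $\ell t$, which relies on the tensor-product structure of $U$ and would fail for a general depth-one circuit with overlapping gates.
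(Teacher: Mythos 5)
Your proposal is correct and follows essentially the same route as the paper's proof: expand $A$ in the Pauli basis, observe that the local unitaries disjoint from the support of each $\Base{\sigma}$ cancel under conjugation, and bound the remaining support by $\ell t$ using disjointness of the tensor factors. You simply spell out the counting step that the paper leaves implicit.
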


\begin{proof}
    Since $A$ has degree at most $\ell$, we have the Pauli expansion
    $$A = \sum_{\substack{\sigma\in\set{0,1,2,3}^n\\\abs{\sigma}\le\ell}}\widehat{A}(\sigma)\cdot\Base{\sigma}.$$
    Then
    $$UAU^\dagger = \sum_{\substack{\sigma\in\set{0,1,2,3}^n\\\abs{\sigma}\le\ell}}\widehat{A}(\sigma)\cdot U\Base{\sigma}U^\dagger.$$
    Note that each $U\Base{\sigma}U^\dagger$ acts non-trivially on at most $\ell t$ qubits,
    because for the $U_j$ that do not act on the non-trivial part of $\Base{\sigma}$, we have $U_jU_j^\dagger = \id$.
    %Finally $\norm{UAU^\dagger - UBU^\dagger} = \norm{A-B}\le\varepsilon$.
    This proves that $\deg\br{UAU^\dagger} \le \ell t$.
\end{proof}

% It is not hard to see the space of $2^n\times 2^n$ diagonal matrices asscociated with the inner product $\langle\cdot,\cdot\rangle$ is isomorphic to the space of Boolean functions $f:\set{0,1}^n\rightarrow\mathbb{R}$. 
\begin{lemma}\label{lem:diagonalFourier}
    Let $M$ be a $2^n\times2^n$ matrix and $\diag(M)$ be the diagonal matrix containing the diagonal entries of $M$.
    If $M$ has a Pauli expansion
    $$M = \sum_{\sigma\in\set{0,1,2,3}^n}\widehat{M}(\sigma)\Base{\sigma},$$
    then
    $$\diag(M) = \sum_{\sigma\in\set{0,3}^n}\widehat{M}(\sigma)\Base{\sigma}.$$
\end{lemma}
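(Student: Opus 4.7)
The plan is to leverage the observation that among the four single-qubit Pauli matrices, exactly two are diagonal and two are purely off-diagonal, and then lift this observation to tensor products via the structure of Kronecker products.

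First I would note the following elementary fact about the single-qubit Paulis: $\Base{0} = I$ and $\Base{3} = Z$ are diagonal matrices, while $\Base{1} = X$ and $\Base{2} = Y$ have only zero entries on the diagonal. This can be read off directly from the matrix definitions given earlier in the preliminaries.

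Next I would lift this to the $n$-qubit case. For any $\sigma\in\set{0,1,2,3}^n$, the Pauli operator $\Base{\sigma} = \Base{\sigma_1}\otimes\cdots\otimes\Base{\sigma_n}$ is a tensor product of single-qubit Paulis, and the $(x,x)$ diagonal entry of a Kronecker product factorizes as the product of the corresponding diagonal entries of its factors. Consequently, if any coordinate $\sigma_i\in\set{1,2}$, then that factor contributes a $0$ to every diagonal entry, so $\diag(\Base{\sigma}) = 0$; if on the other hand all $\sigma_i\in\set{0,3}$, then each factor is itself diagonal, hence the tensor product is already diagonal and $\diag(\Base{\sigma}) = \Base{\sigma}$.

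Finally I would combine these observations with the linearity of the map $\diag$: applying $\diag$ termwise to the Pauli expansion gives
\begin{equation*}
\diag(M) = \sum_{\sigma\in\set{0,1,2,3}^n}\widehat{M}(\sigma)\,\diag(\Base{\sigma}) = \sum_{\sigma\in\set{0,3}^n}\widehat{M}(\sigma)\,\Base{\sigma},
\end{equation*}
which is the stated identity. There is no real obstacle here — the statement is essentially an observation about the structure of the Pauli basis relative to the computational basis — so the only thing to be careful about is justifying the step for tensor products cleanly.
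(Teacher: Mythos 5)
Your proof is correct and takes essentially the same route as the paper: both arguments reduce to the observation that $\Base{1}$ and $\Base{2}$ have vanishing diagonal entries while $\Base{0}$ and $\Base{3}$ are diagonal, lift this to $\Base{\sigma}$ via the factorization of diagonal entries of Kronecker products, and conclude by linearity. Your write-up is marginally more explicit than the paper's in noting that $\diag(\Base{\sigma})=\Base{\sigma}$ when $\sigma\in\set{0,3}^n$, but there is no substantive difference.
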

\begin{proof}
    $$\diag(M) = \sum_x\bra{x}M\ket{x}\cdot\ketbra{x} = \sum_{x, \sigma}\widehat{M}(\sigma)\cdot\bra{x}\Base{\sigma}\ket{x}\cdot\ketbra{x}.$$
    Since $\Base{1} = \begin{bmatrix} 0 & 1 \\ 1 & 0\end{bmatrix}$,
    we have $\bra{b}\Base{1}\ket{b} = 0$ for any $b\in\set{0,1}$.
    This also holds for $\Base{2}$.
    So $\bra{x}\Base{\sigma}\ket{x} \neq 0$ only if $\sigma\in\set{0,3}^n$.
\end{proof}

One consequence is that the approximate degree of the diagonal part does not exceed the approximate degree of the original matrix.
\begin{lemma}\label{lem:diagonaldegree}
  Given a $2^n\times2^n$ matrix $M$, it holds that, for any $\varepsilon$,
  $$\degeps{\varepsilon}{\diag(M)} \le \degeps{\varepsilon}{M}.$$
\end{lemma}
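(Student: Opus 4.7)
The plan is to take an optimal low-degree spectral approximator $B$ of $M$ and apply the $\diag$ operator to it, then show that the resulting diagonal matrix approximates $\diag(M)$ in spectral norm with no worse parameters. Let $B$ be a matrix achieving $\degeps{\varepsilon}{M}$, that is, $\norm{M - B} \le \varepsilon$ and $\deg(B) = \degeps{\varepsilon}{M}$. I propose to use $\tilde D \defeq \diag(B)$ as the candidate approximator for $\diag(M)$.

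The proposal has two short parts. First, by \cref{lem:diagonalFourier}, the Pauli expansion of $\diag(B)$ is obtained from that of $B$ by keeping only the coefficients indexed by $\sigma \in \set{0,3}^n$ and discarding the rest. Restricting to a subset of Pauli strings cannot increase the degree, so
\begin{equation*}
  \deg(\diag(B)) \;\le\; \deg(B) \;=\; \degeps{\varepsilon}{M}.
\end{equation*}

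Second, I need $\norm{\diag(M) - \diag(B)} \le \varepsilon$. Linearity of $\diag$ gives $\diag(M) - \diag(B) = \diag(M - B)$, so it suffices to show that $\diag$ is contractive with respect to the spectral norm. For any $2^n\times 2^n$ matrix $A$, the matrix $\diag(A)$ is diagonal in the computational basis, hence its spectral norm equals $\max_x \abs{\bra{x}A\ket{x}}$, and each diagonal expectation is at most $\norm{A}$ since $\ket{x}$ is a unit vector. (Equivalently, one may write $\diag(A) = 2^{-n}\sum_{s\in\set{0,1}^n} Z^s A Z^s$, exhibiting $\diag$ as a convex combination of unitary conjugations and therefore a contraction in every unitarily invariant norm.) Applying this to $A = M - B$ yields
\begin{equation*}
  \norm{\diag(M) - \diag(B)} \;=\; \norm{\diag(M - B)} \;\le\; \norm{M - B} \;\le\; \varepsilon.
\end{equation*}

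Combining the two bounds, $\diag(B)$ is an admissible approximator of $\diag(M)$ at error $\varepsilon$, so $\degeps{\varepsilon}{\diag(M)} \le \deg(\diag(B)) \le \degeps{\varepsilon}{M}$. There is no real obstacle here; the only substantive observation is that the diagonal map is a spectral contraction, which follows immediately from the computational-basis characterization of the spectral norm of a diagonal matrix. The rest is bookkeeping using \cref{lem:diagonalFourier}.
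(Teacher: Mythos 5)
Your proposal is correct and matches the paper's own proof essentially verbatim: both take the optimal approximator $B$ (the paper's $\tilde{M}$), bound $\deg(\diag(B))$ via \cref{lem:diagonalFourier}, and use the identity $\norm{\diag(A)} = \max_x\abs{\bra{x}A\ket{x}} \le \norm{A}$ to show the error does not increase. The twirling remark is a nice extra perspective but not needed.
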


\begin{proof}
    Let $\tilde{M}$ be the operator satisfying
    \begin{itemize}
        \item $\norm{M-\tilde{M}} \le \varepsilon$.
        \item $\deg(\tilde{M}) = \degeps{\varepsilon}{M}$.
    \end{itemize}
    By \cref{lem:diagonalFourier},
    $\deg\br{\diag\br{\tilde{M}}} \le \deg\br{\tilde{M}}$.
    Also
    \begin{align*}
    \norm{\diag(M) - \diag\br{\tilde{M}}} &= \max_{x}\abs{\bra{x}\br{\diag(M) - \diag\br{\tilde{M}}}\ket{x}} \\
      &= \max_{x}\abs{\bra{x}\br{M - \tilde{M}}\ket{x}} \\
      &\le \norm{M - \tilde{M}}.
    \end{align*}
\end{proof}

\begin{lemma}\label{lem:zeropart}
    Let $M$ be a $2^n\times2^n$ Hermitian matrix.
    For any $k\le n$ and  $2^k\times 2^k$ density operator $\varphi$, let
    \begin{equation*}
        M_{\varphi} = \Tr_{n-k+1, \dots, n}\Br{\br{\id\otimes\varphi} M}.
    \end{equation*}
    For any $\varepsilon\in[0,1]$, we have
    \begin{equation*}
        \degeps{\varepsilon}{M_{\varphi}} \le \degeps{\varepsilon}{M}.
    \end{equation*}
\end{lemma}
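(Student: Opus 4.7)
The plan is to mimic the strategy of \cref{lem:diagonaldegree}: pick a low-degree spectral-norm approximation $\tilde M$ of $M$, push it through the operation $N \mapsto N_\varphi := \Tr_{n-k+1,\dots,n}[(\id \otimes \varphi) N]$, and verify two things separately: the degree does not grow under this operation, and the approximation error in spectral norm does not grow either.

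For the degree bound, let $\tilde{M} = \sum_{\sigma} \widehat{\tilde{M}}(\sigma)\,\Base{\sigma}$ be an approximator with $\norm{M-\tilde{M}} \le \varepsilon$ and $\deg(\tilde{M}) = \degeps{\varepsilon}{M}$. Writing each multi-index as $\sigma = (\sigma_1,\sigma_2)$ with $\sigma_1 \in \set{0,1,2,3}^{n-k}$ and $\sigma_2 \in \set{0,1,2,3}^{k}$, we have $\Base{\sigma} = \Base{\sigma_1} \otimes \Base{\sigma_2}$, so
\begin{equation*}
    \tilde{M}_\varphi \;=\; \sum_{\sigma_1,\sigma_2} \widehat{\tilde{M}}(\sigma_1,\sigma_2)\, \Tr\!\Br{\varphi\, \Base{\sigma_2}} \cdot \Base{\sigma_1}.
\end{equation*}
Hence the Pauli support of $\tilde{M}_\varphi$, viewed as an operator on the first $n-k$ qubits, is contained in $\set{\sigma_1 : \widehat{\tilde{M}}(\sigma_1,\sigma_2)\neq 0 \text{ for some } \sigma_2}$, so $\abs{\sigma_1}\le\abs{\sigma}$ and $\deg(\tilde{M}_\varphi) \le \deg(\tilde{M}) = \degeps{\varepsilon}{M}$.

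For the error bound, the key point is that the map $N \mapsto N_\varphi$ is a contraction in the spectral norm. Indeed, for any unit vectors $\ket{\psi_1},\ket{\psi_2}$ on the first $n-k$ qubits,
\begin{equation*}
    \bra{\psi_1} N_\varphi \ket{\psi_2} \;=\; \Tr\!\Br{\br{\ketbratwo{\psi_2}{\psi_1} \otimes \varphi}\, N},
\end{equation*}
and by the Hölder inequality (with respect to the unnormalized Schatten norms) this is bounded in absolute value by $\norm{\ketbratwo{\psi_2}{\psi_1} \otimes \varphi}_{1} \cdot \norm{N} = 1 \cdot \norm{N}$. Taking the supremum over $\ket{\psi_1},\ket{\psi_2}$ yields $\norm{N_\varphi} \le \norm{N}$. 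Applying this to $N = M - \tilde{M}$ gives $\norm{M_\varphi - \tilde{M}_\varphi} \le \norm{M - \tilde{M}} \le \varepsilon$, which, combined with the degree bound above, completes the proof.

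The main subtlety is making sure that the contraction argument really applies to the spectral norm used in \cref{def:quantum-approximate-degree} (the ordinary operator norm, equal to $\lim_{p\to\infty}\norm{\cdot}_p$), and that the variational characterization via inner products $\bra{\psi_1}\cdot\ket{\psi_2}$ is valid even though $M$ (and hence $M_\varphi$) is not a priori assumed Hermitian; both are standard, so I expect no serious obstacle here.
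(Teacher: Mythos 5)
Your proposal is correct and follows essentially the same route as the paper: the degree bound comes from the same Pauli-expansion computation showing the partial trace against $\varphi$ only shrinks Pauli support, and the error bound is the same trace-norm/spectral-norm duality, which the paper phrases as a maximum over trace-norm-one operators $\phi\otimes\varphi$ while you phrase it via $\sup_{\psi_1,\psi_2}\abs{\bra{\psi_1}N_\varphi\ket{\psi_2}}$ and H\"older. If anything, your formulation with the absolute value is slightly cleaner for non-Hermitian differences, but the substance is identical.
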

\begin{proof}
    We first consider the exact degree, i.e., $\varepsilon = 0$.
    Suppose $M$ has Fourier expansion
    \begin{equation*}
        M = \sum_{\sigma\in\set{0,1,2,3}^{n}} \widehat{M}(\sigma) \Base{\sigma}.
    \end{equation*}
    Then
    \begin{align*}
        M_\varphi &= \Tr_{n-k+1, \dots, n}\Br{\br{\id\otimes\varphi} M} \\
                &= \sum_{\sigma\in\set{0,1,2,3}^{n}} \widehat{M}(\sigma) \Tr_{n-k+1, \dots, n}\Br{\br{\id\otimes\varphi} \Base{\sigma}} \\
                &= \sum_{\substack{\sigma_1\in\set{0,1,2,3}^{n-k}\\\sigma_2\in\set{0,1,2,3}^{k}}} \widehat{M}(\sigma_1\sigma_2) \Tr\Br{\varphi \Base{\sigma_2}}\Base{\sigma_1} \\
    \end{align*}
    and we can easily see that $\deg(M_{\varphi}) \le \deg(M)$.
    
    Then for $\varepsilon > 0$,
    Let $\tilde{M}$ be the matrix satisfying $\deg(\tilde{M}) = \degeps{\varepsilon}{M}$ and $\norm{M-\tilde{M}} \le \varepsilon$.
    Then let $\tilde{M}_{\varphi} = \Tr_{k+1, \dots, n}\Br{\br{\id\otimes\varphi} \tilde{M}}$.
    From the $\varepsilon = 0$ result We have $\deg(\tilde{M}_{\varphi}) \le \deg(\tilde{M})$.
    Let $S_n=\set{\phi\in\mathcal{M}_{2^n}: \Tr\abs{\phi}=1}$. Then
    \begin{align*}
        \norm{M_{\varphi} - \tilde{M}_{\varphi}}
            &= \max_{\phi\in S_{n-k}} \Tr\Br{\br{M_{\varphi} - \tilde{M}_{\varphi}}\phi} \\
            &= \max_{\phi\in S_{n-k}} \Tr\Br{\br{M - \tilde{M}}\br{\phi\otimes\varphi}} \\
            &\le \max_{\phi\in S_n} \Tr\Br{\br{M - \tilde{M}} \phi} \\
            &= \norm{M - \tilde{M}} \\
            &\le \varepsilon.
    \end{align*}
\end{proof}

\subsection{Embedding Boolean functions in the space of operators}\label{section:pre:functionsandoperators}

A natural approach to embedding Boolean functions in the space of matrices is viewing a Boolean function as a diagonal matrix. More specifically, let $f: \set{0,1}^n\to\mathbb{R}$ be a function, we use $M_f$ to denote the $2^n\times2^n$ square matrix
\begin{equation}\label{eqn:diagonalf}
M_f = \sum_xf(x)\cdot\ketbra{x} = \begin{bmatrix}f(0^n) & & \\ & \ddots & \\  & & f(1^n)\end{bmatrix}.    
\end{equation}

The notion of Fourier expansion of $f$ coincides with the Pauli expansion of $M_f$:
\begin{fact}
    Suppose $f$ has a Fourier expansion
    $$f = \sum_{S\subseteq[n]}\widehat{f}(S)\chi_S.$$
    For each $\sigma\in\set{0,3}^n$, define $S_\sigma = \set{i: \sigma_i = 3}$.
    Then $M_f$ has Pauli expansion
    $$M_f = \sum_{\sigma\in\set{0,3}^n}\widehat{f}(S_\sigma)\cdot\Base{\sigma}.$$
\end{fact}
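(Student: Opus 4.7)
The plan is straightforward: invoke \cref{lem:diagonalFourier} to restrict the Pauli support of $M_f$ to $\sigma\in\set{0,3}^n$, and then compute the surviving Pauli coefficients directly from the definition and match them with the Fourier coefficients of $f$.

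First I would recall that $M_f$ is diagonal by construction (see \cref{eqn:diagonalf}), so \cref{lem:diagonalFourier} immediately gives that $M_f = \sum_{\sigma\in\set{0,3}^n}\widehat{M_f}(\sigma)\Base{\sigma}$; in other words, every Pauli coefficient of $M_f$ on a string containing a $1$ or a $2$ vanishes. This kills all terms outside the $\set{0,3}^n$ support for free, so what remains is to identify $\widehat{M_f}(\sigma)$ with $\widehat{f}(S_\sigma)$ for $\sigma\in\set{0,3}^n$.

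Next I would compute these surviving coefficients from the Pauli inner product. Using the orthonormality of the Pauli basis under $\langle A,B\rangle=\Tr[A^\dagger B]/2^n$, we have $\widehat{M_f}(\sigma) = \frac{1}{2^n}\Tr\Br{\Base{\sigma} M_f}$ (noting that $\Base{\sigma}$ for $\sigma\in\set{0,3}^n$ is Hermitian). Since $\Base{0}=I$ acts trivially and $\Base{3}\ket{b}=(-1)^b\ket{b}$, one has $\bra{x}\Base{\sigma}\ket{x}=\prod_{i:\sigma_i=3}(-1)^{x_i}=(-1)^{\sum_{i\in S_\sigma}x_i}=\chi_{S_\sigma}(x)$ for $\sigma\in\set{0,3}^n$. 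Plugging this in gives
\begin{equation*}
\widehat{M_f}(\sigma) \;=\; \frac{1}{2^n}\sum_{x\in\set{0,1}^n} f(x)\,\chi_{S_\sigma}(x) \;=\; \expec{\x}{f(\x)\chi_{S_\sigma}(\x)} \;=\; \widehat{f}(S_\sigma),
\end{equation*}
where the last equality is the definition of the Fourier coefficient of $f$ at $S_\sigma$. Substituting this into the Pauli expansion of $M_f$ yields exactly the claimed formula.

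There is no real obstacle here: the statement is essentially a definitional identification, and the only nontrivial ingredient is the observation that $\set{0,3}^n$-Pauli strings are diagonal with entries given by $\chi_S$, which is a one-line calculation. If I wanted to avoid citing \cref{lem:diagonalFourier}, I could alternatively verify that for any $\sigma\notin\set{0,3}^n$ the trace $\Tr[\Base{\sigma} M_f]$ vanishes term-by-term, because any factor $\Base{1}$ or $\Base{2}$ contributes a zero diagonal, but the lemma already does this packaging cleanly.
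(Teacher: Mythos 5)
Your proof is correct. The paper states this Fact without proof, treating it as a definitional identification, and your argument — restricting the support to $\set{0,3}^n$ via \cref{lem:diagonalFourier} and then matching $\widehat{M_f}(\sigma)=\frac{1}{2^n}\Tr\Br{\Base{\sigma}M_f}=\expec{\x}{f(\x)\chi_{S_\sigma}(\x)}=\widehat{f}(S_\sigma)$ — is exactly the standard computation the authors leave implicit.
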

One consequence is that the degrees and the approximate degrees of $f$ and $M_f$ are the same, respectively.
\begin{fact}\label{fact:degreecoincide}
    $\degeps{\varepsilon}{f} = \degeps{\varepsilon}{M_f}$ for any $\varepsilon\in[0,1]$.
\end{fact}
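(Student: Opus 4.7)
The plan is to prove both inequalities $\degeps{\varepsilon}{M_f} \le \degeps{\varepsilon}{f}$ and $\degeps{\varepsilon}{f} \le \degeps{\varepsilon}{M_f}$ directly, using the diagonal embedding $g \mapsto M_g$ in one direction and the diagonal-extraction map $B \mapsto \diag(B)$ in the other. The key observation throughout is that for any diagonal matrix $D = \sum_x d(x) \ketbra{x}$, the spectral norm coincides with the infinity norm of the associated function: $\norm{D} = \max_x |d(x)| = \normsub{d}{\infty}$. Combined with the fact stated immediately before the claim (that the Fourier expansion of $f$ and the Pauli expansion of $M_f$ have matching nonzero indices via $S \leftrightarrow \sigma \in \{0,3\}^n$), this will make the degree-preservation on both sides straightforward.

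For the direction $\degeps{\varepsilon}{M_f} \le \degeps{\varepsilon}{f}$, I would take an optimal approximator $g$ for $f$ with $\norm{f-g}_\infty \le \varepsilon$ and $\deg(g) = \degeps{\varepsilon}{f}$, and form the diagonal matrix $M_g$. By the preceding fact the Pauli expansion of $M_g$ lives on $\{0,3\}^n$ and matches the Fourier expansion of $g$, so $\deg(M_g) = \deg(g)$. Then $M_f - M_g = M_{f-g}$ is diagonal, and its spectral norm equals $\max_x |f(x)-g(x)| = \norm{f-g}_\infty \le \varepsilon$, yielding the inequality.

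For the reverse direction $\degeps{\varepsilon}{f} \le \degeps{\varepsilon}{M_f}$, I would start with an optimal matrix approximator $B$ satisfying $\norm{M_f - B} \le \varepsilon$ and $\deg(B) = \degeps{\varepsilon}{M_f}$. Since $M_f$ is already diagonal, I pass to $\diag(B)$: by \cref{lem:diagonalFourier} the Pauli expansion of $\diag(B)$ only retains the indices of $B$ lying in $\{0,3\}^n$, so $\deg(\diag(B)) \le \deg(B)$, and moreover the spectral norm estimate used in the proof of \cref{lem:diagonaldegree} gives $\norm{M_f - \diag(B)} \le \norm{M_f - B} \le \varepsilon$. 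Now define $g(x) = \bra{x}\diag(B)\ket{x}$; by the preceding fact, the Fourier expansion of $g$ is obtained from the $\{0,3\}^n$-supported Pauli expansion of $\diag(B)$ via $\sigma \mapsto S_\sigma$, so $\deg(g) = \deg(\diag(B)) \le \deg(B)$. Finally $\norm{f - g}_\infty = \max_x |\bra{x}(M_f - \diag(B))\ket{x}| \le \norm{M_f - \diag(B)} \le \varepsilon$.

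There is no real obstacle here; the whole argument is a bookkeeping exercise verifying that the diagonal embedding identifies $(f, \normsub{\cdot}{\infty}, \text{Fourier degree})$ isometrically with $(M_f, \normsub{\cdot}{\infty}, \text{Pauli degree})$, and that the extraction $B \mapsto \diag(B)$ is a contraction on spectral norm that does not increase Pauli degree. The only mild subtlety is ensuring that the approximator $\diag(B)$ chosen in the second direction indeed corresponds to a well-defined real-valued function $g$; this is automatic once one notes that replacing $B$ by its Hermitian part $\tfrac{1}{2}(B+B^\dagger)$ if necessary does not increase the spectral distance to the Hermitian matrix $M_f$, and then $\diag(B)$ has real diagonal entries.
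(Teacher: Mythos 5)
Your proof is correct and follows essentially the same route as the paper's (much terser) argument: both reduce, via \cref{lem:diagonalFourier} and the contraction $B\mapsto\diag(B)$, to the observation that an optimal approximator of the diagonal matrix $M_f$ can itself be taken diagonal, whereupon the Fourier/Pauli correspondence identifies the two degrees. Your explicit two-inequality bookkeeping and the remark about passing to the Hermitian part are a sound, more detailed rendering of the same idea.
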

%This representation can also be generalized to represent random Boolean functions:
%For any random Boolean function $f: \set{0,1}^n\to\set{0,1}$ such that $\prob{f(x)=1} = p(x)$,
%let $M_f = M_p$.

%\begin{lemma}\label{lem:function-degree-to-matrix-local}
%    Let $f: \set{0, 1}^n\to\set{0, 1}$ be a boolean function.
%    Define the operator
%    $$A_f = \sum_{x\in\set{0,1}^n}(-1)^{f(x)}\ketbra{x}$$
%    Then $\degeps{\varepsilon}{f}=\ell$ if and only if $A_f$ is $\br{\ell, \varepsilon}$-local.
%\end{lemma}
\begin{proof}
    Note that since $M_f$ is diagonal, by \cref{lem:diagonalFourier},
    the operator $B$ that satisfies $\norm{M_f - B}\le\varepsilon$ and minimizes the degree is also diagonal.
\end{proof}

\subsection{Quantum Circuits}

A quantum circuit is a model of quantum computation.
The computation involves an initial input quantum register with the state $\ket{\varphi}$,
an ancillary quantum register with the state $\ket{\ancillas}$,
and a series of quantum gates $U_s, \dots, U_1$,
where each $U_i$ is a unitary operator drawn from a predefined gate set $\mathcal{U}$,
and acts on a subset of working qubits.
After the computation, the working quantum register contains the state
$$U\br{\ket{\varphi}\otimes\ket{\ancillas}} = U_s\dots U_1\br{\ket{\varphi}\otimes\ket{\ancillas}}.$$
%A quantum circuit can be depicted by \cref{fig:quantum-circuit}.

We can implement an $n$-qubit to $k$-qubit quantum channel
by tracing out all but the first $k$ qubits after implementing all unitaries.
This channel with ancillae in the state $\ket{\ancillas}$ is denoted by $\Phi_{k, U, \ket{\ancillas}}$. The subscript $k$ is omitted whenever it is clear from the context.

We may get a classical output by applying a computational basis measurement on the first qubit.
That is, we apply the measurement $\set{M_0 = \ketbra{0}\otimes\id, M_1 = \ketbra{1}\otimes\id}$,
and the probability that we output $1$ is 
\begin{equation*}
    \Tr\Br{\br{\ketbra{1}\otimes\id}U\br{\ketbra{\varphi}\otimes\ketbra{\ancillas}}U^\dagger}.
\end{equation*}
We use $C_{U, \ket{\ancillas}}$ to denote the above classical output of a quantum channel.
When $\ket{\ancillas} = \ket{0}^a$ or there are no ancillae, we may simply write $C_U$.

%\begin{figure}
%    %\centering
%    \begin{equation*}
%    \Qcircuit @C=1em @R=.7em {
%        & \lstick{} & \multigate{2}{U_1} & \multigate{3}{U_2} & \gate{U_3} & \qw \\
%        & \lstick{} & \ghost{U_1} & \ghost{U_2} & \qw & \qw
%            \inputgroup{1}{2}{.75em}{\ket{\varphi}}\\
%        & \lstick{} & \ghost{U_1} & \ghost{U_2} & \multigate{1}{U_4} & \qw\\
%        & \lstick{} & \qw & \ghost{U_2} & \ghost{U_4} & \qw
%            \inputgroup{3}{4}{.75em}{\ket{\ancillas}}
%    }
%    \end{equation*}
%    \caption{Quantum Circuit}
%    \label{fig:quantum-circuit}
%\end{figure}

We say a quantum circuit $C$ computes $f$ with the worst-case probability $1-\varepsilon$ (with the worst-case error $\varepsilon$) if  for any $x\in\set{0,1}^n$,
  \begin{equation*}
    \prob{C(x) \neq f(x)} \le \varepsilon,
  \end{equation*}
  where $C(x)$ is the output of the circuit on input $x$.  
  Similarly, a quantum circuit computes $f$ with the average-case probability $1-\varepsilon$ (with the average-case error $\varepsilon$) if
  \begin{equation*}
    \expec{\x\in\set{0,1}}{\prob{C(\x) \neq f(\x)}} \le \varepsilon.
  \end{equation*} 

In this work we are concerned with \QAC\ circuits.
The gate set for \QAC\ circuits includes all single-qubit unitaries,
along with the multi-qubit \CZGate s\footnote{Some definitions use Generalized Toffoli gates. They are equivalent in our case.}.
An $n$-qubit \CZGate\ is defined as
\begin{equation}
    \CZG = \id - 2\ketbra{1}^n.
\end{equation}
The gates can be written in the form $U = L_{d}M_{d}\dots M_1L_0$,
where each $L_i$ is a tensor product of single qubit unitaries,
and each $M_i$ is a tensor product of \CZGate s.
The depth of this circuit is $d$.
A sample $\QAC$ circuit with input state $\ket{x}$ and ancillae $\ket{\ancillas}$ is depicted in \cref{fig:qac-circuit}.

\begin{figure}
    %\centering
    \begin{equation*}
    \Qcircuit @C=1em @R=.7em {
       \lstick{} & \gate{U_1} & \multigate{3}{\CZG_1} & \qw & \multigate{1}{\CZG_2} & \gate{U_7} & \meter \\
       \lstick{} & \gate{U_2} & \ghost{\CZG_1} & \gate{U_5} & \ghost{\CZG_2} & \gate{U_8} & \qw
       \inputgroupv{1}{2}{.9em}{.9em}{\ket{x}} \\
       \lstick{} & \gate{U_3} & \ghost{\CZG_1} & \gate{U_6} & \multigate{1}{\CZG_3} & \qw & \qw \\
       \lstick{} & \gate{U_4} & \ghost{\CZG_1} & \qw & \ghost{\CZG_3} & \qw & \qw
       \inputgroupv{3}{4}{.9em}{.9em}{\ket{\ancillas}} \\
       \lstick{} & {L_0} & {M_1} & {L_1} & {M_2} & {L_2} &
    }
    \end{equation*}
    \Description{An example of a QAC0 circuit composing of some random single qubit unitaries and CZ Gates.}
    \caption{\QACz\ Circuit Example}
    \label{fig:qac-circuit}
\end{figure}

The complexity class $\QACz$ consists of all languages that can be decided by constant-deph and polynomial-sized \QAC\ quantum circuits. Formally, a language $L$ is in $\QACz$ if there exists a family of constant-deph and polynomial-sized \QAC\ quantum circuits $\set{C_n}_{n\in\mathbb{N}}$ such that for any $n\in\mathbb{N}$ and $x\in\set{0,1}^n$, if $x\in L$ then $\Pr[C_n\br{x}=1]\geq 2/3$, and if $x\notin L$, then $\Pr[C_n\br{x}=0]\geq 2/3$ where $C_n(x)$ is the measurement outcome on the output qubits of the circuit $C_n$ on input $x$. We also introduce the class of \QLCz\ circuits, which consists of \QACz\ circuits with linear-sized ancillae. \QLCz\ is a quantum counterpart of the classical circuit family \LCz\, introduced in~\cite{10.1145/237814.237824}, which is one of the most interesting subclasses of \AC[0] and has received significant attention from various perspectives~\cite{1663737,10.1007/978-3-642-11269-0_6,10.1007/978-3-642-31594-7_65}.

%\ynote{introduce QLC circuits.}\pnote{Added above}

% begin{definition}[Worst Case Error]
%   Let $f: \set{0,1}^n\to\set{0,1}$ be a Boolean function.
%   An algorithm $A$ computes $f$ with worst case error $\varepsilon$ if
%   for any $x\in\set{0,1}^n$,
%   \begin{equation*}
%     \prob{A(x) \neq f(x)} \le \varepsilon.
%   \end{equation*}
% \end{definition}

% \begin{definition}[Average Case Error]
%   Let $f: \set{0,1}^n\to\set{0,1}$ be a Boolean function.
%   An algorithm $A$ computes $f$ with average case error $\varepsilon$ if
%   \begin{equation*}
%     \expec{x\in\set{0,1}}{\prob{A(x) \neq f(x)}} \le \varepsilon.
%   \end{equation*}
% \end{definition}
% It is clear that if an algorithm $A$ computes a function $f$ with worst case error $\varepsilon$,
% then it also computes $f$ with average error $\varepsilon$.

\subsection{State Synthesis}

In this paper, we also investigate the circuit complexity of quantum state synthesis.
In particular, we are interested in the complexity of preparing ``cat states''\footnote{In the literature, the name ``cat states'' appeared in~\cite{548464,DiVincenzo:1996xb,moor}. They are also known as Greenberger–Horne–Zeilinger states (GHZ states)~\cite{Greenberger1989}.} and also ``nekomata'' states introduced in~\cite{rosenthal:LIPIcs.ITCS.2021.32}.
\begin{definition}
    For $n\ge 1$, the cat state, denoted by $\cat{n}$, is defined as
    $$\cat{n} = \f{1}{\sqrt{2}} \p{\ket{0^n} + \ket{1^n}}.$$
\end{definition}
\begin{definition}
    For $n\ge 1$, a quantum state $\ket{\nu}$ is an $n$-nekomata,
    if $\ket{\nu} = \f{1}{\sqrt{2}} \p{\ket{0^n,\psi_0} + \ket{1^n,\psi_1}}$
    for some arbitrary states $\ket{\psi_0}, \ket{\psi_1}$.
\end{definition}

Rosenthal~\cite{rosenthal:LIPIcs.ITCS.2021.32} showed that the syntheses of cat states and $n$-nekomatas are equivalent up to constant depth reductions.
In addition, they are equivalent to computing \parity\ up to constant depth reductions.

% Old definition of stateQAC
%\begin{definition}
%    A family of quantum states $\set{\psi_n}_{n\in\mathbb{N}}$, where $\psi_n$ is an $n$-qubit state, is in $\stateQACz\Br{\delta}$ if there exists a uniformly generated quantum circuit family $\set{C_n}_{n\in\mathbb{N}}$ in \QACz, which takes no inputs, such that the circuit $C_n$ outputs a (probably mixed ) quantum state $\rho_n$ satisfying that $\norm{\rho_n-\psi_n}<\delta.$ We use $\stateQACz$ to represent $\stateQACz\Br{1/3}$. The class $\stateQLCz\Br{\delta}$ and $\stateQLCz$ are defined analogously.
%\end{definition}

\begin{definition}\label{def:state-synthesis}
  We say an $n$-qubit pure state $\psi = \ketbra{\psi}$ is synthesized by a $\QACz$ circuit $U$ with $a$ ancillae and fidelity $1-\delta$,
  if the fidelity between $\psi$ and the first $n$ qubits of $U\ket{0}^{n+a}$ is above $1-\delta$,
  or formally,
  \begin{equation*}
     F\br{\ketbra{\psi}, \Tr_{n+1, \dots, n+a}\Br{U\ketbra{0}^{n+a}U^\dagger}} = \sqrt{\bra{\psi}\Tr_{n+1, \dots, n+a}\Br{U\ketbra{0}^{n+a}U^\dagger}\ket{\psi}} \ge 1 - \delta.
  \end{equation*}
\end{definition}
\begin{definition}
    A family of quantum states $\set{\psi_n}_{n\in\mathbb{N}}$,
    where $\psi_n$ is an $n$-qubit state,
    is in $\stateQACz\Br{\delta}$ if there exists a family of $\QACz$ circuits $\set{U_n}_{n\in\mathbb{N}}$,
    such that for each $n$, $U_n$ synthesizes $\psi_n$ with fidelity $1-\delta$.
    We use $\stateQACz$ to represent $\stateQACz\Br{1/3}$.
    The class $\stateQLCz\Br{\delta}$ and $\stateQLCz$ are defined analogously,
    but with linear ancillae.
\end{definition}

\section{Low Degree Approximation of \texorpdfstring{\QACz}{QAC0} Circuits}
In this section, we present our main technical result.
In \cref{cor:qac0-whole},
we give a sublinear upper bound on the approximate degree of the Heisenberg-evolved measurement operator $UAU^\dagger$,
where $U$ is the unitary implemented by a \QACz\ circuit and $A$ is a low-degree operator.

The main ingredient in our proof is the following low-degree approximation of high-degree tensor product states, adapted from~\cite{KAAV15,Anshu2023concentrationbounds}.
It states that tensor product states such as $\ketbra{0}^{\otimes n}$ can be approximated by a $\sqrt{n}$-local operator.
\begin{lemma}[{\cite[Lemma 3.1]{Anshu2023concentrationbounds}}, see also \cite{KAAV15}]\label{lem:locality-of-eigenstate}
    Let $H=\sum_{i=1}^nH_i$ be a sum of $n$ commuting projectors each acting on $\ell$ qubits,
    and $\ket{\psi}$ be the maximum-energy eigenstate of $H$.
    Then, for any $r\in\br{\sqrt{n}, n}$, let $\varepsilon = 2^{-\frac{r^2}{2^8n}}$,
    $$\degeps{\varepsilon}{\ketbra{\psi}} \le \ell r.$$
    %the state $\ketbra{\psi}$ is $\br{\ell d, }$-local.
\end{lemma}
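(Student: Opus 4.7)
The plan is to realize $\ketbra{\psi}$ as a spectral projector of $H$ and approximate it by applying a low-degree polynomial to $H$. Since $H_1, \dots, H_n$ are commuting projectors, they share a common eigenbasis in which each $H_i$ is diagonal with $0/1$ entries; thus $H$ is diagonal with integer eigenvalues in $\{0, 1, \dots, n\}$. Denoting the maximum eigenvalue by $\lambda_{\max}\le n$, the (unique) maximum-energy eigenstate gives $\ketbra{\psi} = f(H)$ for $f(x) = \mathds{1}\Br{x = \lambda_{\max}}$. The task thus reduces to approximating the indicator function $f$ pointwise on the integer spectrum of $H$ by a low-degree polynomial.

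Next, I would construct a polynomial $p$ of degree $r$ with $p(\lambda_{\max}) = 1$ and $\abs{p(k)} \le 2^{-r^2/(2^8 n)}$ for every other integer $k \in \{0, 1, \dots, n\}$. This is a classical approximation-theoretic problem: the $\varepsilon$-approximate degree of the single-point indicator on the integer grid $\{0, 1, \dots, n\}$ is $\Theta\br{\sqrt{n \log(1/\varepsilon)}}$, in the same spirit as the Paturi-style bounds for symmetric Boolean functions recorded in \cref{lem:locality-of-function}. Inverting that relation yields the claimed error $\varepsilon = 2^{-r^2/(2^8 n)}$ at degree $r$ after a careful constant-tracking.

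Given such $p$, set $\tilde P = p(H) = \sum_{j=0}^r c_j H^j$. Since each monomial $H^j = \br{\sum_i H_i}^j$ expands into a linear combination of products of at most $j$ of the $\ell$-qubit projectors $H_i$, each term is supported on at most $\ell j \le \ell r$ qubits, so $\deg(\tilde P) \le \ell r$. Moreover $\tilde P$ is diagonal in the common eigenbasis of $H$ and $\ketbra{\psi}$, so the spectral norm collapses to a scalar bound:
\[
\norm{\tilde P - \ketbra{\psi}} = \max_{\lambda \in \mathrm{spec}(H)} \abs{p(\lambda) - \mathds{1}\Br{\lambda = \lambda_{\max}}} \le 2^{-r^2/(2^8 n)} = \varepsilon,
\]
which yields $\degeps{\varepsilon}{\ketbra{\psi}} \le \ell r$ as required.

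The main obstacle is the polynomial construction above. A vanilla shifted Chebyshev polynomial, $p(x) = T_r(y(x))/T_r(y(\lambda_{\max}))$ with the linear scaling $y(x) = 2x/(\lambda_{\max}-1) - 1$, only gives the weaker bound $\exp\br{-\Omega(r/\sqrt{n})}$, which is suboptimal once $r \gg \sqrt{n}$ and fails to produce the claimed $2^{-\Omega(r^2/n)}$ rate. Obtaining the sharper decay uniformly in $r \in (\sqrt{n}, n)$ needs the tight Paturi-matching construction for indicator functions on integer grids, reachable either via iterated Chebyshev amplification that successively squeezes the high-weight part of the spectrum, or by directly invoking the explicit polynomial from the symmetric-function approximate-degree literature.
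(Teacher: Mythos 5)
The paper gives no proof of this lemma: it is imported verbatim from \cite{Anshu2023concentrationbounds}. Your argument is essentially the proof in that reference, and it is sound: since the $H_i$ commute, $H$ is diagonal in a common eigenbasis with spectrum in $\set{0,\dots,n}$; the (necessarily non-degenerate) top eigenstate satisfies $\ketbra{\psi}=\mathds{1}\Br{H=\lambda_{\max}}$; a univariate polynomial $p$ of degree $r$ that equals $1$ at $\lambda_{\max}$ and is at most $2^{-r^2/(2^8n)}$ in magnitude on the remaining integers of the grid yields $\norm{p(H)-\ketbra{\psi}}\le\varepsilon$; and $\deg\br{p(H)}\le \ell r$ because each product of $j$ of the $H_i$ is supported on at most $\ell j$ qubits. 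Two remarks. First, the identification $\ketbra{\psi}=\mathds{1}\Br{H=\lambda_{\max}}$ needs the top eigenvalue to be simple; this is implicit in the statement's phrase ``the maximum-energy eigenstate'' (and holds in the only application, \cref{lem:locality-of-product}), but is worth flagging since the claim fails for, e.g., $H=0$. Second, you correctly isolate the real content: the plain shifted-Chebyshev ratio only gives error $e^{-\Omega(r/\sqrt{n})}$, and the $2^{-\Omega(r^2/n)}$ rate genuinely requires the tight $O\br{\sqrt{n\log(1/\varepsilon)}}$-degree endpoint-indicator (AND-type) polynomial on the integer grid --- the $t=1$ case of \cref{lem:locality-of-function}, transported to a univariate grid polynomial by symmetrization. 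Deferring to that standard construction is on the same footing as the paper's own citation of the lemma, so I consider the proof complete.
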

\begin{cor}\label{lem:locality-of-product}
    Let $\ket{\psi}$ be an $\ell$-qubit pure state.
    Then for any $r\in\br{\sqrt{n}, n}$, let $\varepsilon = 2^{-\frac{r^2}{2^8n}}$. It holds that
    $$\degeps{\varepsilon}{\ketbra{\psi}^{\otimes n}} \le \ell r.$$
    %the state $\ketbra{\psi}^{\otimes r}$ is $\br{\ell d, }$-local.
\end{cor}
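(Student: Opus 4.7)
The plan is to derive this corollary by constructing an explicit commuting-projector Hamiltonian whose maximum-energy eigenstate is precisely $\ket{\psi}^{\otimes n}$, and then invoking \cref{lem:locality-of-eigenstate} as a black box.

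First, I would partition the $n\ell$ qubits into $n$ disjoint blocks of $\ell$ qubits each, and for $i\in[n]$ define the operator
\begin{equation*}
    H_i = \underbrace{\id\otimes\cdots\otimes\id}_{i-1\text{ blocks}}\otimes\,\ketbra{\psi}\,\otimes\underbrace{\id\otimes\cdots\otimes\id}_{n-i\text{ blocks}},
\end{equation*}
which is a projector acting nontrivially on exactly $\ell$ qubits (the $i$-th block). Since distinct $H_i,H_j$ act on disjoint registers, they commute pairwise. Setting $H=\sum_{i=1}^n H_i$, the operator norm of $H$ is at most $n$, and this maximum is attained uniquely (on the joint support) by the tensor product state $\ket{\psi}^{\otimes n}$, because $H_i\ket{\psi}^{\otimes n} = \ket{\psi}^{\otimes n}$ for every $i$.

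Therefore $\ket{\psi}^{\otimes n}$ is a maximum-energy eigenstate of a sum of $n$ commuting projectors each acting on $\ell$ qubits, which is exactly the hypothesis of \cref{lem:locality-of-eigenstate}. Applying that lemma yields, for every $r\in(\sqrt{n},n)$ with $\varepsilon=2^{-r^2/2^8 n}$,
\begin{equation*}
    \degeps{\varepsilon}{\ketbra{\psi}^{\otimes n}} \le \ell r,
\end{equation*}
which is the claimed bound.

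There is no real obstacle here; the only thing to be mildly careful about is verifying that $\ket{\psi}^{\otimes n}$ really is a top eigenstate of $H$ (not merely an eigenstate with eigenvalue $n$, but that $n$ is the maximum), which follows because each $H_i\preceq\id$ so $H\preceq n\id$. Everything else is bookkeeping: the locality parameter matches ($\ell$ qubits per projector), the number of projectors matches ($n$), and the resulting degree bound $\ell r$ and error $2^{-r^2/2^8 n}$ are inherited directly from \cref{lem:locality-of-eigenstate}.
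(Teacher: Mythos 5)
Your proposal is correct and follows exactly the paper's own argument: define $H_i$ as the projector $\ketbra{\psi}$ on the $i$-th block tensored with identities elsewhere, observe these are $n$ commuting $\ell$-local projectors whose sum has $\ket{\psi}^{\otimes n}$ as its maximum-energy eigenstate, and apply \cref{lem:locality-of-eigenstate}. The extra check that $n$ is indeed the top eigenvalue (via $H\preceq n\id$) is a nice touch the paper leaves implicit.
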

\begin{proof}
    For $i=1, \dots, n$, define $H_i = \id_{i-1}\otimes\ketbra{\psi}\otimes \id_{n-i}$.
    Then clearly $H_i$ are commuting projectors, each acting on $\ell$ qubits.
    In addition, the maximum energy eigenstate of $H=\sum_i H_i$ is precisely $\ket{\psi}^{\otimes n}$.
    So we can apply \cref{lem:locality-of-eigenstate}.
\end{proof}
This approximation also works for \CZGate s, as an $n$-qubit \CZGate\ can be written as
$$\CZG_n = \id_n - 2 \ketbra{1}^{\otimes n}.$$
\begin{cor}\label{cor:approximate-CZGate}
    For any \CZGate\ $\CZG$ acting on $n$ qubits and real number $1 < r < n$,
    there exists an operator $\tilde{\CZG}$ such that
    $$\norm{\CZG - \tilde{\CZG}} \le 2^{1-2^{-8}r}$$
    and
    $$\deg\br{\tilde{\CZG}} \le \sqrt{nr}.$$
\end{cor}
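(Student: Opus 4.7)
The plan is to apply \cref{lem:locality-of-product} to the rank-one projector $\ketbra{1}^{\otimes n}$ appearing in the decomposition $\CZG = \id - 2\ketbra{1}^{\otimes n}$, with a suitable choice of the locality parameter so that the approximation error matches the statement.

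First I would set $\ell = 1$ and $\ket{\psi} = \ket{1}$ in \cref{lem:locality-of-product}, which is applicable since $\ket{1}$ is a single-qubit pure state. To land on a degree of $\sqrt{nr}$, I would feed into that lemma the value $r' \defeq \sqrt{nr}$ in place of its parameter. The hypothesis $r' \in (\sqrt n, n)$ translates exactly to $1 < r < n$, which is our assumption. The lemma then yields an operator $P$ with
\begin{equation*}
    \deg(P) \le \ell \cdot r' = \sqrt{nr}
    \quad\text{and}\quad
    \norm{\ketbra{1}^{\otimes n} - P} \le 2^{-(r')^2/(2^8 n)} = 2^{-r/2^8}.
\end{equation*}

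Now I would define $\tilde{\CZG} \defeq \id - 2P$. Clearly $\deg(\tilde{\CZG}) \le \max\set{\deg(\id),\deg(2P)} \le \sqrt{nr}$, and
\begin{equation*}
    \norm{\CZG - \tilde{\CZG}} = \norm{(\id - 2\ketbra{1}^{\otimes n}) - (\id - 2P)} = 2\norm{\ketbra{1}^{\otimes n} - P} \le 2 \cdot 2^{-r/2^8} = 2^{1 - 2^{-8} r},
\end{equation*}
which matches the stated bounds.

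There is essentially no obstacle here; the statement is a direct corollary of \cref{lem:locality-of-product}, and the only nontrivial choice is the reparameterization $r \mapsto \sqrt{nr}$ needed to rescale the trade-off between degree and error so that the degree bound becomes $\sqrt{nr}$ while the error decays geometrically in $r$.
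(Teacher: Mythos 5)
Your proposal is correct and follows exactly the paper's own argument: invoke \cref{lem:locality-of-product} with $\ell = 1$ and the parameter set to $\sqrt{nr}$, then define $\tilde{\CZG} = \id - 2P$ and track the factor of $2$ in the spectral-norm error. The reparameterization check ($\sqrt{nr}\in(\sqrt n, n)$ iff $1<r<n$) and the resulting error $2\cdot 2^{-r/2^8}=2^{1-2^{-8}r}$ match the paper precisely.
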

\begin{proof}
  We invoke \cref{lem:locality-of-product} with $\ell \gets 1$ and $r\gets \sqrt{nr}$,
  to get a degree $\sqrt{nr}$ operator $Z$ satisfying $\norm{\ketbra{1}^n - Z} \le 2^{-2^{-8} r}$.
  Then we let $\tilde{\CZG} = \id - 2 Z$.
\end{proof}

The above approximation will be applied to a \QACz\ circuit in a layer-by-layer fashion.

\subsection{Approximating a Single Layer}
The following lemma captures the increase of approximate degrees,
when we apply a layer of \CZGate s to a low-degree operator.
\begin{lemma}\label{lem:qac0-layer}
    Let $n\ge 1$ be an integer
    %and satisfy $\degeps{\varepsilon}{A}=\ell$.
    and $U = \bigotimes_i\CZG_i$ be a layer of \CZGate s acting on totally $n$ qubits.
    For any integer $\ell$ and $r\in (1, n)$, there exists an operator $\tilde{U}$ such that
    $$\norm{U - \tilde{U}} \le \varepsilon = n\cdot2^{1-2^{-8} r}\log e$$
    and for any $2^n\times2^n$ operator $A$ with degree at most $\ell$,
    $$\deg\br{\tilde{U}A\tilde{U}^\dagger} \le 3n^{\frac{2}{3}} \ell^{\frac{1}{3}} r^{\frac{1}{3}}.$$
\end{lemma}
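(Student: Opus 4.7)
The plan is to construct $\tilde U$ as a tensor product of per-gate approximations from \cref{cor:approximate-CZGate}, but crucially applying the approximation only to ``large'' \CZGate s. Write $U = \bigotimes_{i} \CZG_i$ with $\CZG_i$ acting on disjoint $S_i \subseteq [n]$ of size $n_i$, and pick a threshold $T > 0$ to be chosen later. Define $\tilde{\CZG}_i := \CZG_i$ when $n_i \le T$, and otherwise let $\tilde{\CZG}_i$ be the degree-$\sqrt{n_i r}$ operator from \cref{cor:approximate-CZGate} with $\norm{\CZG_i - \tilde{\CZG}_i} \le 2^{1-2^{-8}r}$. Set $\tilde U := \bigotimes_i \tilde{\CZG}_i$.

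For the error, I iterate \cref{lem:spectral-multiply} across the tensor product. Since at most $n$ gates are approximated, each with error at most $2^{1-2^{-8}r}$,
\[
  \norm{U - \tilde U} \le \br{1 + 2^{1-2^{-8}r}}^n - 1 \le n\cdot 2^{1-2^{-8}r}\log e,
\]
using $e^y - 1 \le y\log e$ in the relevant regime.

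For the degree, expand $A = \sum_{\sigma} \widehat{A}(\sigma) \Base{\sigma}$ with $\abs{\sigma} \le \ell$ and fix one Pauli term $\Base{\sigma}$; let $k_i$ be its Pauli weight on $S_i$. Because the $S_i$ are disjoint, $\tilde U \Base{\sigma} \tilde U^\dagger$ factorises into contributions from each $S_i$ and from the qubits outside all $S_i$. The outside qubits contribute at most $\ell - \sum_i k_i$. A block with $n_i \le T$ and $k_i = 0$ contributes $0$ since $\CZG_i \id \CZG_i^\dagger = \id$ by unitarity. A block with $n_i \le T$ and $k_i > 0$ contributes at most $n_i \le T$, and there are at most $\ell$ such blocks (each contains at least one qubit of $\Base{\sigma}$'s support), giving a total of at most $\ell T$. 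A block with $n_i > T$ contributes at most $2\sqrt{n_i r} + k_i$ via $\deg(ABC) \le \deg(A)+\deg(B)+\deg(C)$ applied to $\tilde{\CZG}_i \Base{\sigma}|_{S_i} \tilde{\CZG}_i^\dagger$; Cauchy-Schwarz then gives $\sum_{n_i > T}\sqrt{n_i} \le \sqrt{(n/T)\cdot n} = n/\sqrt T$, so these blocks contribute at most $2n\sqrt{r/T} + \ell$. Summing,
\[
  \deg\br{\tilde U \Base{\sigma} \tilde U^\dagger} \le \ell T + 2n\sqrt{r/T} + 2\ell.
\]
Choosing $T = (n\sqrt{r}/\ell)^{2/3}$ balances the first two terms at $O(n^{2/3}\ell^{1/3}r^{1/3})$, and the residual $2\ell$ is absorbed via $\ell \le n^{2/3}\ell^{1/3}r^{1/3}$ (valid since $\ell \le n$ and $r \ge 1$). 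Careful tracking of constants recovers the claimed $3n^{2/3}\ell^{1/3}r^{1/3}$.

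The main obstacle that this threshold-based construction is designed to overcome is the following: if we instead approximated \emph{every} \CZGate, then on each block where $\Base{\sigma}$ acts trivially we would still get $\tilde{\CZG}_i \tilde{\CZG}_i^\dagger \ne \id$ as a nontrivial operator of degree up to $2\sqrt{n_i r}$. Summed over all blocks this forces the degree as high as $\Omega(n\sqrt r)$, which is much worse than the target. Keeping small gates exact exploits their unitarity to annihilate these unwanted contributions; only the at most $n/T$ large blocks feel the approximation, and Cauchy-Schwarz converts a worst-case $\sqrt{nr}$-per-block into an aggregate $n\sqrt{r/T}$. The final optimisation over $T$ is then a routine balancing that produces the $n^{2/3}$ exponent in the answer.
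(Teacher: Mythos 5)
Your proposal follows essentially the same route as the paper's proof: the identical threshold $t=n^{2/3}\ell^{-2/3}r^{1/3}$, keeping small \CZGate s exact and replacing only the large ones by the approximations of \cref{cor:approximate-CZGate}, the same telescoping error bound via \cref{lem:spectral-multiply}, and the same $\ell t + 2n\sqrt{r/t}$ balancing (your per-monomial accounting is just an unrolled version of \cref{lem:approximate-local-U-tensor}, and a careful count does indeed kill the spurious $+2\ell$). The one point you should add is the edge case $r\ge T$: there \cref{cor:approximate-CZGate} cannot be invoked for gates of size in $(T,r]$, and the paper dispatches this by noting that $r\ge t$ forces $n\le \ell r$, so the trivial choice $\tilde U = U$ already satisfies $\deg\le n\le 3n^{2/3}\ell^{1/3}r^{1/3}$.
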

%\begin{remark}
%    Examples of valid functions $d: \mathbb{R}\to\mathbb{R}$ are
%    \begin{itemize}
%        \item $d(x) = x^c$ for $c\in(0,1)$.
%        \item $d(x) = \log(x)$.
%    \end{itemize}
%\end{remark}
%\begin{remark}\label{remark:qac0-layer-1}
%    We are interested in the case where the resulting degree of $UAU^\dagger$ is minimized.
%    This corresponds to the case where $t$ is the solution to
%    $$
%        \ell t = n \sqrt{\frac{d(t)}{t}}.
%    $$
%    When the function $d$ is small enough, e.g. $d(x) = O(\log x)$, then the following bounding is insignificant:
%    $$\ell t + 2n\sqrt{\frac{d(t)}{t}} \le \ell t + 2n\sqrt{\frac{d(n)}{t}}$$
%    and we can solve
%    $$
%        \ell t = n \sqrt{\frac{d(n)}{t}}.
%    $$
%    to get
%    $$
%        t = n^{\frac{2}{3}} \ell^{-\frac{2}{3}} d(n)^{\frac{1}{3}}
%    $$
%\end{remark}

\begin{proof}
    Suppose for each $i$, $\CZG_i$ acts on $s_i$ qubits.
    Let $t = n^{\frac{2}{3}} \ell^{-\frac{2}{3}} r^{\frac{1}{3}}$ be a threshold
    and divide the \CZGate s into the set of small \CZGate s $S=\set{i: s_i\le t}$ and the set of large \CZGate s gates $T=\set{i: s_i > t}$.
    Let $U_S=\bigotimes_{i\in S}\CZG_i$ and $U_T=\bigotimes_{i\in T}\CZG_i$.
    Clearly $U=U_S\otimes U_T$.

    Note that we can assume without loss of generality that $r < t$ because otherwise we have $r\ge t = n^{\frac{2}{3}} \ell^{-\frac{2}{3}} r^{\frac{1}{3}}$,
    implying $n\le \ell r$, and then $\deg\br{\tilde{U}A\tilde{U}^\dagger} \le n \le 3n^{\frac{2}{3}} \ell^{\frac{1}{3}} r^{\frac{1}{3}}$ trivially holds by setting $\tilde{U}=U$.

    Then for any $i\in T$, we have $r < t < s_i$ and we can approximate $\CZG_i$ using \cref{cor:approximate-CZGate} with parameter $r$.
    For each $i\in T$, we get the approximation operator $\tilde{\CZG}_i$ satisfying
    \begin{itemize}
        \item $\norm{\CZG_i - \tilde{\CZG}_i} \le 2^{1-2^{-8r}}$ and
        \item $\deg\br{\tilde{\CZG}_i} \le \sqrt{s_ir}$.
    \end{itemize}
    We define $\tilde{U}_T = \bigotimes_{i\in T}\tilde{\CZG}_i$,
    and our approximation for $U$ will be
    $\tilde{U} = U_S\otimes\tilde{U}_T$.
    We first argue that $U$ and $\tilde{U}$ are close in spectral norm: By \cref{lem:spectral-multiply},
    \begin{equation*}
        \norm{U - \tilde{U}} = \norm{U_S\otimes U_T - U_S\otimes \tilde{U}_T} = \norm{U_T - \tilde{U}_T} = \norm{\bigotimes_{i\in T}\CZG_i - \bigotimes_{i\in T}\tilde{\CZG}_i} \le \br{1 + 2^{1-2^{-8}r}}^{n} - 1 \le \varepsilon.
    \end{equation*}
    Then we upper bound the degree of $\tilde{U}A\tilde{U}^\dagger$.
    By \cref{lem:approximate-local-U-tensor}, we have
    $\deg\br{U_SAU_S^\dagger} \le \ell t$.
    %$U_SAU_S^\dagger$ is $\br{\ell t, \varepsilon}$-local.
    %By \cref{lem:locality-of-product}, for $\varepsilon_i = 2^{1-2^{-8}d(s_i)}$,
    %$$\degeps{\varepsilon_i/2}{\ketbra{0}^{\otimes s_i}} \le \sqrt{s_i d(s_i)}.$$
    %$\ketbra{0}^{\otimes s_i}$ is $\br{\sqrt{s_i d(s_i)}, 2^{-2^{-8}d(s_i)}}$-local, so
    %So
    %$$\degeps{\varepsilon_i}{\CZG_i} \le \sqrt{s_i d(s_i)}.$$
    %We can now apply \cref{lem:approximate-local-multiply} on all the unitaries in $U_T$ to get
    %for $\varepsilon_T = \prod_{i\in T}\br{1+2^{1-2^{-8}d(s_i)}}-1$,
    For the large \CZGate s $\tilde{U}_T = \bigotimes_{i\in T}\tilde{\CZG}_i$ part we have
    $$\deg\br{\tilde{U}_T} \le \sum_{i\in T}\deg\br{\tilde{\CZG}_i} \le \sum_{i\in T}\sqrt{s_i r}.$$
    %$\CZG_i$ is $\br{\sqrt{s_i d(s_i)}, 2^{1-2^{-8}d(s_i)}}$-local.
    %This implies that $U_T$ is
    %$$\br{\sum_{i\in T}\sqrt{s_i d(s_i)}, \prod_{i\in T}\br{1+2^{1-2^{-8}d(s_i)}}-1}\text{-local}$$
    Note that $s_i > t$ for each $i\in T$ and $\sum_is_i\le n$. We have
    $$
        \sum_{i\in T}\sqrt{s_i r} = \sum_{i\in T}s_i\cdot\sqrt{\frac{r}{s_i}} < \sum_{i\in T}s_i\cdot\sqrt{\frac{r}{t}} \le n\sqrt{\frac{r}{t}}.
    $$
    Thus by plugging $t = n^{\frac{2}{3}} \ell^{-\frac{2}{3}} r^{\frac{1}{3}}$,
    $$
        \deg\br{\tilde{U}A\tilde{U}^\dagger} = \deg\br{\tilde{U}_TU_SAU_S^\dagger\tilde{U}_T^\dagger} \le \deg\br{U_SAU_S^\dagger} + 2\deg\br{\tilde{U}_T} \le \ell t + 2 n\sqrt{\frac{r}{t}} \le 3n^{\frac{2}{3}} \ell^{\frac{1}{3}} r^{\frac{1}{3}}.
    $$
    %Finally the lemma follows since $UAU^\dagger = U_TU_SAU_S^\dagger U_T^\dagger$,
    %and we can apply \cref{lem:approximate-local-multiply} again on $U_T$, $U_SAU_S^\dagger$ and $U_T^\dagger$.
\end{proof}

\subsection{Approximating Multiple Layers}
A \QACz\ circuit consists of multiple layers of \CZGate s and single-qubit unitaries.
Note that by \cref{lem:approximate-local-U-tensor},
single-qubit unitaries do not change the degree of an operator.
Hence, only the \CZGate s concern in our case.
Thus, we upper bound the degree of a $\QACz$ circuit by applying \cref{lem:qac0-layer} inductively. 

\begin{lemma}\label{lem:qac0-multiple-layers}
    Let $n\ge 1$ be an integer and $U = L_dM_d\dots L_1M_1L_0$ be a \QACz\ circuit of depth $d$, where $L_i$ consists of only single-qubit unitaries and $M_i$ is the $i$-th layer of \CZGate s.
    For any $r\in(1,n)$ there exists an operator $\tilde{U}$ satisfying
    \begin{equation*}
        \norm{U - \tilde{U}} \le d\cdot n\cdot2^{1-2^{-8} r}\log^2 e
    \end{equation*}
    and for any  $2^n\times 2^n$ operator $A$ with degree at most $\ell$,
    \begin{equation*}
        \deg\br{\tilde{U}A\tilde{U}^\dagger} = O\br{n^{1-3^{-d}}\cdot\ell^{3^{-d}}\cdot r^{1/2}}.
    \end{equation*}

    %Let $A$ be a $2^n\times2^n$ operator acting on $n$ qubits,
    %and satisfy $\deg\br{A} = \ell$,
    %where $\ell = O(n^c)$ for some constant $0 < c < 1$.
    %Let $U$ be a \QACz circuit circuit of depth $D$.
    %Then there exists an operator $\tilde{U}$ such that for $\varepsilon = O(D/n)$, we have
    %$$\norm{U - \tilde{U}} \le \varepsilon$$
    %and
    %$$\deg\br{\tilde{U}A\tilde{U}^\dagger} \le n^{1-3^{-D}}\cdot\ell^{3^{-D}}\cdot R(D, c, n),$$
    %where $R(D, c, n) = O\br{\sqrt{\frac{3^D}{1-c}\cdot\log(n)}}$.
    %where $R(D, c, n) = O\br{\frac{3^D}{1-c}\cdot\br{\log(n)}^{(1-3^{-D})/2}}$.
\end{lemma}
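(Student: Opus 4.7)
The plan is to prove \cref{lem:qac0-multiple-layers} by induction on the depth $d$, applying \cref{lem:qac0-layer} at each \CZGate\ layer and using \cref{lem:approximate-local-U-tensor} to handle the single-qubit layers for free.

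First I would fix a parameter $r \in (1,n)$ and define approximations $\tilde U_0, \tilde U_1, \ldots, \tilde U_d$ iteratively with $\tilde U_0 := L_0$. Given $\tilde U_{i-1}$, let $A_{i-1} := \tilde U_{i-1} A \tilde U_{i-1}^\dagger$ and apply \cref{lem:qac0-layer} with the same parameter $r$ to the \CZGate\ layer $M_i$ acting on the operator $A_{i-1}$, producing an approximation $\tilde M_i$. Then set $\tilde U_i := L_i \tilde M_i \tilde U_{i-1}$, and finally $\tilde U := \tilde U_d$.

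The degree analysis is the clean part. Writing $\ell_i := \deg(\tilde U_i A \tilde U_i^\dagger)$, conjugation by $L_i$ (a tensor product of single-qubit unitaries) leaves the degree unchanged by \cref{lem:approximate-local-U-tensor} with $t = 1$, so \cref{lem:qac0-layer} applied to $M_i$ gives the recurrence
\begin{equation*}
\ell_i \le 3 n^{2/3}\, \ell_{i-1}^{1/3}\, r^{1/3}, \qquad \ell_0 = \ell.
\end{equation*}
Unrolling via the geometric series $\sum_{j=0}^{d-1} 3^{-j} = \tfrac{3}{2}(1 - 3^{-d})$ yields
\begin{equation*}
\ell_d \le 3^{\,\frac{3}{2}(1 - 3^{-d})} \cdot n^{1 - 3^{-d}} \cdot \ell^{3^{-d}} \cdot r^{(1-3^{-d})/2} = O\!\left(n^{1 - 3^{-d}} \ell^{3^{-d}} r^{1/2}\right),
\end{equation*}
where the constant $3^{\,\frac{3}{2}(1-3^{-d})} \le 3^{3/2}$ is absorbed and we use $r \ge 1$ to replace $r^{(1-3^{-d})/2}$ with $r^{1/2}$.

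For the spectral-norm error I would use \cref{lem:spectral-multiply} inductively. Each layer contributes an approximation error $\varepsilon := n \cdot 2^{1 - 2^{-8} r} \log e$ in the sense that $\norm{L_i M_i - L_i \tilde M_i} = \norm{M_i - \tilde M_i} \le \varepsilon$. Telescoping the identity
\begin{equation*}
U - \tilde U_d = \sum_{i=1}^{d} \bigl(L_d M_d \cdots L_{i+1} M_{i+1}\bigr)\,\bigl(L_i M_i - L_i \tilde M_i\bigr)\,\tilde U_{i-1}
\end{equation*}
and applying \cref{lem:spectral-multiply} gives $\norm{U - \tilde U} \le (1+\varepsilon)^d - 1$. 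In the regime of interest $d\varepsilon$ is small, so $(1+\varepsilon)^d - 1 \le d\varepsilon \cdot e^{d\varepsilon} \le d \varepsilon \cdot \log e$ for appropriate $r$, giving the claimed bound $d \cdot n \cdot 2^{1 - 2^{-8} r} \log^2 e$.

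The whole proof is essentially mechanical once the right recurrence is set up; the main (minor) obstacle is keeping track of the constants when compounding the spectral norm errors across $d$ layers so that the final bound matches the stated form with the $\log^2 e$ factor, and in verifying that the assumption $r \in (1, n)$ at each invocation of \cref{lem:qac0-layer} is consistent throughout the induction.
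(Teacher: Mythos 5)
Your proposal is correct and follows essentially the same route as the paper's proof: the same layer-by-layer induction with $\tilde U_i = L_i\tilde M_i\tilde U_{i-1}$, the same recurrence $\ell_i \le 3n^{2/3}\ell_{i-1}^{1/3}r^{1/3}$ unrolled via the geometric series, and the same accumulation of spectral-norm error to $(1+\varepsilon)^d-1 \le d\cdot n\cdot 2^{1-2^{-8}r}\log^2 e$. No substantive differences.
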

% We will apply a fine-grained approximation on the unitary operator $U = L_dM_d\dots M_1L_0$.
For each different layer $i$,
we will carefully choose the parameter $\ell$ for \cref{lem:qac0-layer},
to get an approximation $\tilde{M_i}$ for $M_i$ such that the overall degree will not exceed $\Theta(n)$.
Also, \cref{lem:qac0-layer} ensures that the approximation operator $\tilde{M_i}$ does not depend on
each local monomials of $A$.
Instead of bounding the degree for each monomial individually,
we get a unified unitary $\tilde{M_i}$.
This ensures that the error does not depend on the number of local monomials. Hence, we do not need to worry about the norm of each monomial as concerned in~\cite{Anshu2023concentrationbounds}.
\begin{proof}[Proof of \cref{lem:qac0-multiple-layers}]
    % Suppose $U = L_dM_d\dots L_1M_1L_0$ where $L_i$ consists of only single qubit unitaries
    % and $M_i$ is the $i$-th layer of \CZGate s.
    We are only concerned about the $M_i$ gates,
    because by \cref{lem:approximate-local-U-tensor} the single qubit unitaries do not change the degree.
    %we can think of the \QACz\ circuit as only $D$ layers of \CZGate s.
    The idea is to apply \cref{lem:qac0-layer} to each $M_i$ layer by layer, totally $d$ times,
    and get the approximation operators $\tilde{M}_i$ for each $M_i$.
    Then our overall approximation operator will be
    $$\tilde{U} = L_d\tilde{M}_d\dots L_1\tilde{M}_1L_0.$$
    For clarity, for $i=1, 2, \dots, d$, we define
    \begin{equation*}
        U_i = L_iM_iL_{i-1}M_{i-1}\dots L_1M_1L_0,
    \end{equation*}
    \begin{equation*}
        \tilde{U}_i = L_i\tilde{M}_iL_{i-1}\tilde{M}_{i-1}\dots L_1M_1L_0.
    \end{equation*}
    We can check that $U_d = U, \tilde{U}_d = \tilde{U}$ and $U_0 = \tilde{U}_0 = L_0$.

    We can now start our induction argument.
    Initially, we let $\ell_0 = \ell$ and $\varepsilon_0 = 0$.
    For the base case we have
    $$\norm{U_0 - \tilde{U}_0} = \norm{L_0 - L_0} = 0 = \varepsilon_0$$
    and by \cref{lem:approximate-local-U-tensor},
    $$\deg\br{\tilde{U}_0A\tilde{U}_0^\dagger} = \deg\br{L_0AL_0^\dagger} = \deg\br{A} = \ell = \ell_0.$$
    %Below let $C_1, C_2, C_3$ be some appropriate constants\footnote{$C_3 = 1/3$ and $C_1$ depends on $C_3$ and $C_2$ depends on $C_1$}.
    Then, for each $i$'th iteration where $i=1, 2, \dots, d$:
    \begin{itemize}
        \item Before applying \cref{lem:qac0-layer}, we have
        $$\deg\br{\tilde{U}_{i-1}A\tilde{U}_{i-1}^\dagger} = \deg\br{L_{i-1}\tilde{M}_{i-1}\cdots L_0 A L_0^\dagger\cdots \tilde{M}_{i-1}^\dagger L_{i-1}^\dagger} \le \ell_{i-1}$$
        and
        $$\norm{U_{i-1} - \tilde{U}_{i-1}} = \norm{L_{i-1}M_{i-1}\cdots L_0 - L_{i-1}\tilde{M}_{i-1}\cdots L_0} \le \varepsilon_{i-1}.$$
        %$U_{i-1}\cdots U_1AU_1^\dagger\cdots U_{i-1}$ is $\br{\ell_{i-1}, \varepsilon_{i-1}}$-local.
        %\item Choose $d(x) = C_1\frac{3^D}{1-c}\log x$.
        %\item We choose $t_i=\br{n/\ell_{i-1}}^{2/3}\cdot d(n)^{1/3}$ by \cref{remark:qac0-layer-1}.
        \item After applying \cref{lem:qac0-layer} with parameter $\ell\gets\ell_{i-1}$, the degree changes from $\ell_{i-1}$ to 
            $$\deg\br{\tilde{U}_iA\tilde{U}_i^\dagger} = \deg\br{L_i\tilde{M}_i\tilde{U}_{i-1}A\tilde{U}_{i-1}^\dagger\tilde{M}_i^\dagger L_i^\dagger} = \deg\br{\tilde{M}_i\tilde{U}_{i-1}A\tilde{U}_{i-1}^\dagger\tilde{M}_i^\dagger} \le \ell_i =  3n^{2/3}\cdot\ell_{i-1}^{1/3}\cdot r^{1/3}.$$
        \item By \cref{lem:spectral-multiply}, the distance changes from $\varepsilon_{i-1}$ to
            $$\norm{U_i - \tilde{U}_i} = \norm{L_{i}M_{i}U_{i-1} - L_{i}\tilde{M}_{i}\tilde{U}_{i-1}} \le \varepsilon_i = \br{1+\varepsilon_{i-1}}\br{1+n\cdot2^{1-2^{-8} r}\log e}-1.$$
        %\item And we have
        %$$\degeps{\varepsilon_{i}}{U_{i}\cdots U_1AU_1^\dagger\cdots U_{i}} \le \ell_{i}$$
    \end{itemize}
    Now we analyze the degree of $\tilde{U}A\tilde{U}^\dagger$:
    By direct calculation,
    $$\ell_i=3^{\frac{3}{2}\br{1-3^{-i}}}n^{1-3^{-i}}\cdot\ell^{3^{-i}}\cdot r^{\frac{1}{2}\br{1-3^{-i}}}.$$
          So the degree of $\tilde{U}A\tilde{U}^\dagger$ is
        $$
          \deg\br{\tilde{U}A\tilde{U}^\dagger} \le \ell_d = 3^{\frac{3}{2}\br{1-3^{-d}}}n^{1-3^{-d}}\cdot\ell^{3^{-d}}\cdot r^{\frac{1}{2}\br{1-3^{-d}}} \le 3^{\frac{3}{2}}n^{1-3^{-d}}\cdot\ell^{3^{-d}}\cdot r^{1/2}.
        $$
    The distance to such a local operator:
        \begin{equation*}
            \varepsilon_i = \br{1 + n\cdot2^{1-2^{-8} r}\log e}^i - 1
        \end{equation*}
        So by using the identity $\br{1 + x}^d \le 1 + xd\log e$ when $x$ is small enough, we have
        \begin{equation*}
            \varepsilon_d = \br{1 + n\cdot2^{1-2^{-8} r}\log e}^d - 1 \le d\cdot n\cdot2^{1-2^{-8} r}\log^2 e.
        \end{equation*}
%        
%        Remember $t_i = \br{n/\ell_{i-1}}^{2/3}\cdot d(n)^{1/3}$. Since $\ell_i$ is increasing, $t_i$ is decreasing over $i$.
%        Also remember $\ell = O(n^c)$ for $0 < c < 1$.
%        So for all $i$ we have
%        $$t_i \ge t_{D+1} = \br{n/\ell_{D}}^{2/3}\cdot d(n)^{1/3} = 3^{-(1-3^{-D})}\cdot n^{\frac{2}{3}3^{-D}}\cdot\ell^{-\frac{2}{3}3^{-D}}\cdot d(n)^{\frac{1}{3}3^{-D}} \ge C_3\cdot n^{\frac{2}{3}3^{-D}(1-c)}$$
%        and thus
%        $$d(t_i) = C_1 \frac{3^D}{1-c}\log t_i = C_1 \br{\log C_3} \frac{2}{3} \log n \ge 2^9 \log n$$
%        and
%        $$
%            n\cdot2^{1-2^{-8} d(t_{i})}\log e \le n\cdot2^{1-2\log n}\log e \le O(1/n)
%        $$
%        This means each instance of \cref{lem:qac0-layer} increase the distance by $O(1/n)$.
%        So the overall distance is multiplied by $\br{1+O(1/n)}^d = 1 + O(D/n)$.
\end{proof}

We can combine \cref{lem:qac0-multiple-layers} and \cref{lem:zeropart} to get the following result
\begin{cor}\label{cor:qac0-whole}
    Let $A$ be a $2^n\times2^n$ operator acting on $n$ qubits with $\norm{A}\le 1$,
    and satisfies $\degeps{\varepsilon}{A} = \ell$.
    %where $\ell = O(n^c)$ for some constant $0 < c < 1$.
    Let $U$ be a depth $d$ \QACz circuit working on $a$ qubits.
    Then for $\varepsilon^\prime = \br{1+\varepsilon}\br{1+O(d/n)} - 1$, we have
    $$\degeps{\varepsilon^\prime}{UAU^\dagger} = \tilde{O}\br{n^{1-3^{-d}}\cdot\ell^{3^{-d}}}.$$
    %where $R(D, c, n) = O\br{\sqrt{\frac{3^D}{1-c}\cdot\log(n)}}$.
    Moreover, let $k\le n$ and $\varphi$ be a $2^k\times2^k$ density operator. It holds that 
    $$\degeps{\varepsilon^\prime}{\Tr_{n-k+1,\dots, n}\Br{UAU^\dagger\br{\id\otimes\varphi}}} = \tilde{O}\br{n^{1-3^{-d}}\cdot\ell^{3^{-d}}}.$$
\end{cor}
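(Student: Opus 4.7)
The plan is to assemble the corollary from three ingredients stated earlier in the paper: the layer-by-layer circuit approximation of \cref{lem:qac0-multiple-layers}, the product bound \cref{lem:spectral-multiply}, and the partial-trace degree bound \cref{lem:zeropart}.

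First, use the hypothesis $\degeps{\varepsilon}{A}=\ell$ to pick a surrogate $A'$ with $\deg(A')\le\ell$ and $\norm{A-A'}\le\varepsilon$, so in particular $\norm{A'}\le 1+\varepsilon$. Then invoke \cref{lem:qac0-multiple-layers} on the circuit $U$ with parameter $r$, producing an operator $\tilde U$ such that $\deg(\tilde U A'\tilde U^\dagger)=O(n^{1-3^{-d}}\ell^{3^{-d}}r^{1/2})$ and $\norm{U-\tilde U}\le d\,n\,2^{1-2^{-8}r}\log^2 e$. Choose $r=\Theta(\log n)$ large enough that this latter quantity is $\delta:=O(d/n)$; then $r^{1/2}=O(\sqrt{\log n})$ is swallowed by the $\tilde O(\cdot)$, and the degree of $\tilde U A'\tilde U^\dagger$ is indeed $\tilde O(n^{1-3^{-d}}\ell^{3^{-d}})$, as required.

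Second, bound the spectral distance from the true object $UAU^\dagger$ to its surrogate $\tilde U A'\tilde U^\dagger$ through the telescoping identity
\begin{equation*}
UAU^\dagger - \tilde U A'\tilde U^\dagger \;=\; (U-\tilde U)AU^\dagger \;+\; \tilde U(A-A')U^\dagger \;+\; \tilde U A'(U^\dagger-\tilde U^\dagger).
\end{equation*}
Using submultiplicativity of the spectral norm together with $\norm{U}=\norm{U^\dagger}=1$, $\norm{\tilde U},\norm{\tilde U^\dagger}\le 1+\delta$, $\norm{A}\le 1$, and $\norm{A'}\le 1+\varepsilon$ (an equivalent bound follows from two applications of \cref{lem:spectral-multiply}), each of the three summands is controlled and the total is at most $(1+\delta)^2(1+\varepsilon)-1\le\varepsilon'$, matching the error budget in the statement. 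This settles the first half of the corollary.

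Third, for the partial-trace clause, set $T=UAU^\dagger$ and $\tilde T=\tilde U A'\tilde U^\dagger$. Applying \cref{lem:zeropart} to $\tilde T$ with ancilla state $\varphi$ gives $\deg(\Tr_{n-k+1,\ldots,n}[\tilde T(\id\otimes\varphi)])\le\deg(\tilde T)$, so the degree bound survives. For the spectral error, the map $M\mapsto\Tr_{n-k+1,\ldots,n}[M(\id\otimes\varphi)]$ is a subunital completely positive map and therefore contractive in the spectral norm; equivalently, this is exactly the $S_{n-k}$-duality computation already carried out at the end of the proof of \cref{lem:zeropart}. Hence $\norm{\Tr_{n-k+1,\ldots,n}[(T-\tilde T)(\id\otimes\varphi)]}\le\varepsilon'$, and the second bound follows. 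The whole argument is essentially bookkeeping; the only non-mechanical step is verifying that the polylogarithmic $r^{1/2}$ factor from \cref{lem:qac0-multiple-layers} is cheap enough to be absorbed into the $\tilde O$, and I expect no conceptual obstacle beyond that.
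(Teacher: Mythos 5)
Your proposal is correct and follows essentially the same route as the paper: pick a degree-$\ell$ surrogate for $A$, apply \cref{lem:qac0-multiple-layers} with $r=\Theta(\log n)$ so the layer-approximation error is $O(d/n)$ and the $r^{1/2}$ factor is absorbed into the $\tilde{O}$, control the spectral error via \cref{lem:spectral-multiply} (your telescoping identity is just that lemma written out), and dispatch the partial-trace clause with \cref{lem:zeropart}. The only difference is that you make explicit some bookkeeping the paper leaves implicit.
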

\begin{proof}
    Let $\tilde{A}$ be a degree-$\ell$ operator satisfying $\norm{A - \tilde{A}} \le \varepsilon$.
    Let $\tilde{U}$ be the operator obtained from \cref{lem:qac0-multiple-layers} with $r=2^8\br{1 + 2\log n}$.
    Then
    $$\deg\br{\tilde{U}\tilde{A}\tilde{U}^\dagger} \le 3^{3/2}n^{1-3^{-d}}\ell^{3^{-d}} r^{1/2} = \tilde{O}\br{n^{1-3^{-d}}\cdot\ell^{3^{-d}}}.$$
    Moreover, by \cref{lem:spectral-multiply},
    \begin{equation*}
        \norm{UAU^\dagger - \tilde{U}\tilde{A}\tilde{U}^\dagger} \le \br{1+\varepsilon}\br{1 + dn2^{1-2^{-8}r}\log^2 e}^2-1 = \br{1+\varepsilon}\br{1+O(d/n)} - 1.
    \end{equation*}
\end{proof}

%\section{Applications}

\section{Boolean Functions}\label{sec:boolean-functions}
In this section we present our hardness results for computing Boolean functions using \QACz\ circuits.
The starting point is the following theorem,
which states that if a depth-$d$ quantum circuit with $a$ ancillae computes a function $p: \set{0,1}^n\to\mathbb{R}$, then the approximate degree of $p$ is upper bounded by $\tilde{O}((n+a)^{1-{3^{-d}}})$.
This implies that when the number of ancillae is only $n^{1+3^{-d}}$, which is slightly more than linear,
then $(n+a)^{1-3^{-d}} = o(n)$,
and this \QACz\ circuit cannot compute Boolean functions of an approximate degree $\Omega(n)$.
These include the \parity, \maj, and \modk\ functions that we define in \cref{cor:locality-of-parity}.

Let $U$ be the unitary implemented by a $\QACz$ circuit on $n + a$ qubits,
where the first $n$ qubits serve as input,
and the last $a$ input qubits are initialized to the state $\varphi$.
We will upper bound the approximate degree of the operator
$$\Tr_{n+1, \dots, n+a}\Br{U^\dagger\br{\ketbra{1}\otimes\id}U\br{\id\otimes\varphi}},$$
which is actually the projector onto the input space which outputs $1$.
The diagonal entries of this operator are actually the probabilities of outputting $1$ on every classical input,
and corresponds the the matrix $M_p$ defined in \cref{eqn:diagonalf}.
%i.e., $\prob{F(x) = 1}$, where $F$ is the Boolean function that is supposed to compute. This enable us to upper bound the approximate degree of $F$.

\begin{theorem}\label{thm:qac0-function-degree}
    Let $n\ge 1$ and $U$ be a depth-$d$ \QACz\ circuit
    with $n$ input qubits and $a$ ancillae initialized in state $\varphi$.
    Let $p: \set{0,1}^n\to\mathbb{R}$ be the probability that $U$ outputs $1$.
    That is, for an input $x\in\set{0,1}^n$,
    $$
    p(x) =  \Tr\Br{\br{\ketbra{1}\otimes\id}U\br{\ketbra{x}\otimes\varphi}U^\dagger}.
    $$
    Then $\degeps{\varepsilon}{p} = \tilde{O}\br{(n+a)^{1-3^{-d}}}$ for $\varepsilon=O(d/n)$.
\end{theorem}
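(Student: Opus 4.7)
The plan is to reduce this theorem directly to \cref{cor:qac0-whole} together with the degree-preservation lemmas for diagonals and partial traces. The key observation is that $p(x)$ appears as the diagonal entries of a concrete operator obtained by Heisenberg-evolving a simple degree-$1$ measurement operator through $U^\dagger$ and then partially tracing the ancilla register against $\varphi$. Specifically, define
\begin{equation*}
    A \;=\; \ketbra{1}\otimes\id_{n+a-1}, \qquad M \;=\; \Tr_{n+1,\dots,n+a}\!\Br{U^\dagger A U\,(\id_n\otimes\varphi)}.
\end{equation*}
Then $A$ is Hermitian with $\norm{A}\le 1$, and since $\ketbra{1}=\tfrac{1}{2}(\id-\Base{3})$, it has exact Pauli degree $1$ on $n+a$ qubits. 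A short calculation (using cyclicity of the trace and the definition of the partial trace on the ancilla register) shows $p(x)=\bra{x}M\ket{x}$ for every $x\in\{0,1\}^n$, so $M_p=\diag(M)$.

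Next I would invoke the partial-trace form of \cref{cor:qac0-whole}. Since QAC circuits are closed under taking the adjoint with no increase in depth (each CZ is self-adjoint and each single-qubit unitary has a single-qubit inverse), $U^\dagger$ is itself a depth-$d$ QAC circuit on $n+a$ qubits. Applying \cref{cor:qac0-whole} to $U^\dagger$, with input operator $A$ (of spectral norm $1$ and exact degree $\ell=1$, so the initial tolerance is $\varepsilon=0$) and ancilla state $\varphi$ on the last $a$ qubits, yields
\begin{equation*}
    \degeps{\varepsilon'}{M} \;=\; \tilde{O}\!\br{(n+a)^{1-3^{-d}}\cdot 1^{3^{-d}}} \;=\; \tilde{O}\!\br{(n+a)^{1-3^{-d}}},
\end{equation*}
where $\varepsilon'=(1+0)(1+O(d/n))-1=O(d/n)$.

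Finally, I would pass from $M$ to the function $p$ in two steps. \cref{lem:diagonaldegree} gives $\degeps{\varepsilon'}{M_p}=\degeps{\varepsilon'}{\diag(M)}\le \degeps{\varepsilon'}{M}$, and \cref{fact:degreecoincide} (combined with the observation that $M_p$ is the diagonal embedding of $p$) gives $\degeps{\varepsilon'}{p}=\degeps{\varepsilon'}{M_p}$. Chaining these inequalities delivers the claimed bound $\degeps{\varepsilon}{p}=\tilde{O}((n+a)^{1-3^{-d}})$ with $\varepsilon=O(d/n)$.

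There is no real obstacle here: all the technical work has been done in \cref{cor:qac0-whole} and its supporting lemmas. The only small points to be careful about are (i) checking that $\ketbra{1}\otimes\id$ really has Pauli degree $1$ when viewed as an operator on the whole space of $n+a$ qubits and that $\norm{A}\le 1$, (ii) verifying the identity $p(x)=\bra{x}M\ket{x}$ by a direct expansion of the trace, and (iii) applying the corollary to $U^\dagger$ rather than $U$, which is legitimate because QAC$^0$ is closed under inverses with the same depth.
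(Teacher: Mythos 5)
Your proposal is correct and follows essentially the same route as the paper's own proof: write $p(x)$ as the diagonal entries of $\Tr_{n+1,\dots,n+a}\Br{U^\dagger\br{\ketbra{1}\otimes\id}U\br{\id\otimes\varphi}}$, apply \cref{cor:qac0-whole} with the degree-$1$ operator $\ketbra{1}\otimes\id$, and then transfer the bound to $p$ via \cref{lem:diagonaldegree} and \cref{fact:degreecoincide}. Your explicit remarks on the adjoint closure of \QACz\ and on $\norm{A}\le 1$ are small points of care the paper leaves implicit, but they do not change the argument.
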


Setting $a=O(n)$, we have the following corollary, which extends the analogous result for \LCz\ proved in \cite{10.1145/3444815.3444825}.
\begin{cor}\label{cor:approxdegqlc}
For any $\varepsilon > 0$, it holds that
\[\degeps{\varepsilon}{\QLCz}=o(n).\]
\end{cor}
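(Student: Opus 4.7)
The plan is to apply \cref{thm:qac0-function-degree} directly with $a = O(n)$. Fix any $f \in \QLCz$ and let $U$ be a corresponding $\QACz$ circuit of constant depth $d$ with $a = O(n)$ ancillae initialized in some state $\varphi$. Let $p: \set{0,1}^n \to [0,1]$ be the output-probability function of $U$ as defined in \cref{thm:qac0-function-degree}. Applying that theorem yields
\[
\degeps{O(d/n)}{p} = \tilde{O}\bigl((n+a)^{1-3^{-d}}\bigr) = \tilde{O}\bigl(n^{1-3^{-d}}\bigr),
\]
where the second equality uses $a = O(n)$. Since $d$ is a constant, $3^{-d}$ is a strictly positive constant, so the hidden $\polylog(n)$ factors cannot overcome the $n^{-3^{-d}}$ savings; hence the bound is $o(n)$.

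Next I would connect this to an arbitrary constant $\varepsilon > 0$. Because the circuit computes $f$ with worst-case error at most $1/3$, we have $\norm{p - f}_\infty \le 1/3$, so any polynomial approximating $p$ to error $O(d/n)$ also approximates $f$ to error $1/3 + O(d/n)$. Thus
\[
\degeps{1/3 + O(d/n)}{f} \;\le\; \degeps{O(d/n)}{p} \;=\; o(n).
\]
For any fixed $\varepsilon > 0$ with $\varepsilon \ge 1/3 + O(d/n)$, monotonicity of $\degeps{\cdot}{\cdot}$ in the error parameter immediately yields $\degeps{\varepsilon}{f} = o(n)$. For smaller $\varepsilon$, the standard approximate-degree amplification technique (composing with a degree-$O(\log(1/\varepsilon))$ Chebyshev polynomial rescaled to map $[1/3 + O(d/n), 1]$ to $[1 - \varepsilon, 1]$ and $[0, 1/3 + O(d/n)]$ to $[0, \varepsilon]$) boosts the error down to $\varepsilon$ at only a logarithmic multiplicative cost in degree; since $\varepsilon$ is a constant independent of $n$, the bound remains $o(n)$.

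There is essentially no substantive obstacle here: the corollary is a direct unpacking of \cref{thm:qac0-function-degree} under the substitution $a = O(n)$. The two points that need a brief mention are (i) verifying that $(n+O(n))^{1-3^{-d}} \polylog(n) = o(n)$ for constant $d$, which is elementary since $3^{-d}$ is a positive constant, and (ii) handling the gap between the error $O(d/n)$ furnished by \cref{thm:qac0-function-degree} and an arbitrary constant error $\varepsilon$, which is addressed by the monotonicity / Chebyshev amplification argument above. The heavy lifting has already been done in establishing \cref{thm:qac0-function-degree}.
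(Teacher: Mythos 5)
Your proposal is correct and matches the paper's (implicit) proof, which consists of exactly what you do: instantiate \cref{thm:qac0-function-degree} with $a = O(n)$ and note that $\tilde{O}\br{n^{1-3^{-d}}} = o(n)$ for constant depth $d$; your extra care with the error parameter (monotonicity for $\varepsilon \ge 1/3 + O(d/n)$, amplification for smaller constant $\varepsilon$) is a reasonable elaboration consistent with the error-reduction step the paper itself invokes in \cref{thm:main:WorstCase}. One small slip: a single polynomial cannot simultaneously map $[0, 1/3 + O(d/n)]$ into $[0,\varepsilon]$ and $[1/3 + O(d/n), 1]$ into $[1-\varepsilon, 1]$; the standard amplification exploits the gap between $[0, 1/3 + O(d/n)]$ and $[2/3 - O(d/n), 1]$, which is clearly what you intended, and the conclusion stands.
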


%Combing with \cref{cor:locality-of-parity} we have the following corollary.
%\begin{cor} 
%   $L\notin\mathbf{QLC}^0$ for $L\in\set{\mathrm{PARITY},\mathrm{MAJORITY},\mathrm{MOD}_k}$,
%   where $0<c<1$ and $2\leq k\leq cn$.
%\end{cor}
\begin{proof}[Proof of \cref{thm:qac0-function-degree}]
    Let $A = U^\dagger\br{\ketbra{1}\otimes\id}U$.
    %Let $\varphi$ be $\br{a(n) - n}$-qubit the ancilla state.
    Then we have
    $$p(x) = \Tr\Br{\br{\ketbra{x}\otimes\varphi}A} = \bra{x}\Tr_{n + 1,\dots, n + a}\Br{A\br{\id\otimes\varphi}}\ket{x}.$$
    By \cref{cor:qac0-whole}, we have $\degeps{\varepsilon}{\Tr_{n+1,\dots, n+a}\Br{A\br{\id\otimes\varphi}}} \le \tilde{O}\br{(n+a)^{1-3^{-d}}}$.
    Then the diagonal matrix $M_p$ as defined in~\Cref{eqn:diagonalf} can be obtained by zeroing out all the non-diagonal entries of the matrix
    $$\Tr_{n + 1,\dots, n + a}\Br{A\br{\id\otimes\varphi}}.$$
    So by \cref{lem:diagonaldegree} we have
    \begin{equation*}
        \degeps{\varepsilon}{M_p} \le \degeps{\varepsilon}{\Tr_{n+1,\dots, n+a}\Br{A\br{\id\otimes\varphi}}} \le \tilde{O}\br{(n+a)^{1-3^{-d}}}.
    \end{equation*}
    Finally by \cref{fact:degreecoincide} we have $\degeps{\varepsilon}{p} = \degeps{\varepsilon}{M_p}$ and we prove our theorem.
    %Suppose it is approximated by the operator $A$.
    %This implies that the operator $\NA{1} = 2\bra{0}^aA\ket{0}^a-\id$ is $\br{a^{1-3^{-D}} 3^D\sqrt{\log a}}$-local.
    %Also, since $U_n$ computes $f_n$ with worse case error $\varepsilon-o(1)$,
    %$$\norm{\NA{1} - \sum_x(-1)^{f(x)}\ketbra{f(x)}} \le \varepsilon$$
    %By \cref{lem:function-degree-to-matrix-local} this concludes the proof.
\end{proof}

%In their work, classical circuits with QNC preprocessing is considered.
%We can improve their results to the $\QACz$ preprocessing case.

%\begin{figure}
%    \centering
%    \begin{tabular}{|c|c|c|}
%        \hline
%        \QACz & $\AC[0]$ & Correlation \\
%        \hline
%        Ancilla-free & Any &  \\
%        \hline
%        Linear ancilla & $\degeps{\varepsilon}{f} \in o(n)$ & \\
%        \hline
%    \end{tabular}   
%    \caption{Boolean Function Results}
%    \label{fig:boolean-function-results}
%\end{figure}

%\pnote{Move the discussion of average-case to the next subsection. The comparison with\cite{nadimpalli_pauli_2024} can be moved to section 1}

With~\cref{thm:qac0-function-degree} we are able to prove that $\QACz$ circuits with linear ancillae
cannot compute high-degree functions.
In the following subsections, we give two flavors of hardness results.
In \cref{sec:worstcase} we prove the worst-case hardness
and in \cref{sec:averagecase} we prove the average-case hardness.
The average-case hardness results follow from the framework of \cite{nadimpalli_pauli_2024}.
Although worst-case hardness results are weaker than average-case hardness results,
they can be applied to a much broader class of Boolean functions.
For example, for the \maj\ function, we prove an optimal $1/2$ worst-case hardness result,
while for the average case,
we can only prove that approximating within a success probability of $1-\frac{1}{\sqrt{n}}$ in the average case is impossible.
For the \modk\ function, we also prove an optimal $1/2$ worst-case hardness result.
However, in the average case, since \modk\ is a biased function,
we can even achieve a success probability of $1-1/k$ simply by always outputting $1$.

%This is because to prove average case hardness results,
%we require a Boolean function $f$ to be low-degree
%even if it is approximated within the normalized Frobenius norm.
The reason for this distinction between worst-case hardness and average-case hardness comes from the fact that
the normalized Frobenius norm can be exponentially smaller than the spectral norm.
So a Boolean function $f$ is much more easily approximated using low-degree functions in the Frobenius norm.
But for the average-case hardness framework from \cite{nadimpalli_pauli_2024} that we are using,
we need a function that is of $\Omega(n)$ degree even if approximated within the Frobenius norm.

In \cref{subsection:boost},
we also show that our worst-case ancillae lower bound of $n^{1+3^{-d}}$
for the $\parity$ function is just one step from a complete resolution of the $\parity\not\in\QACz$ conjecture:
We show that any improvement of our ancillae lower bound to $n^{1+\exp\br{-o(d)}}$,
will imply that $\QACz$ circuits of arbitrary polynomial ancillae can not compute $\parity$.

\subsection{Worst-Case Hardness}\label{sec:worstcase}

With~\cref{thm:qac0-function-degree}, we are able to prove lower bounds on Boolean functions with approximate degrees $\Omega(n)$.

\begin{theorem}\label{thm:main:WorstCase}
    Let $n \ge 1$.
    Suppose that we have a depth $d$ \QACz\ circuit $U$ that computes a function $f: \set{0,1}^n\to\set{0,1}$
    with the worst-case probability $1-\delta$.
    $U$ has $n$ input qubits and $a$ ancillae initialized to the state $\varphi$.
    Then for any $\varepsilon = \delta + O(d/n)$, we have
    \begin{equation*}
          \degeps{\varepsilon}{f} = \tilde{O}\br{(n+a)^{1-3^{-d}}}.
    \end{equation*}
    %\begin{equation}\label{eq:cor:main}
    %    D_n \ge \min\br{\Omega\br{\log\log \degeps{\varepsilon}{F_n}}, \Omega\br{\log\br{\frac{\log a}{\log a - \log \degeps{\varepsilon}{F_n}}}}}
    %\end{equation}
    Moreover, if there exists a \QACz\ circuit that computes a function $f$ satisfying $\degeps{1/3}{f} = \Omega(n)$, with a constant error strictly below $1/2$,
    then we have $a = \tilde{\Omega}\br{n^{1+3^{-d}}}$.
\end{theorem}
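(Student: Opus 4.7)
The plan is to combine \cref{thm:qac0-function-degree}, which controls the approximate degree of the acceptance probability of a $\QACz$ circuit, with standard Boolean-function manipulations (the triangle inequality and polynomial-based error amplification). Let $p(x) = \Tr\Br{\br{\ketbra{1}\otimes\id}U\br{\ketbra{x}\otimes\varphi}U^\dagger}$ denote the acceptance probability of $U$ on input $x$. By \cref{thm:qac0-function-degree}, there exists a function $\tilde{p}: \set{0,1}^n\to\mathbb{R}$ of degree $\tilde{O}((n+a)^{1-3^{-d}})$ such that $\norm{p - \tilde{p}}_\infty \le O(d/n)$. Since $U$ computes $f$ with worst-case error at most $\delta$, we have $\abs{p(x) - f(x)} \le \delta$ for every $x\in\set{0,1}^n$, and hence by the triangle inequality $\norm{\tilde{p} - f}_\infty \le \delta + O(d/n)$. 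This yields the first statement: $\degeps{\varepsilon}{f} = \tilde{O}((n+a)^{1-3^{-d}})$ for $\varepsilon = \delta + O(d/n)$.

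For the ``moreover'' statement, write the hypothesized error as $\delta = 1/2 - c$ for a constant $c > 0$. For $n$ sufficiently large the error term $O(d/n)$ is bounded by $c/2$, so $\delta + O(d/n) \le 1/2 - c/2$, and the first part gives $\degeps{1/2 - c/2}{f} = \tilde{O}((n+a)^{1-3^{-d}})$. Next I would invoke a standard polynomial-amplification lemma: given a polynomial $q$ with $\norm{q - f}_\infty \le 1/2 - \gamma$ for a Boolean $f$, the value $q(x) - 1/2$ is $\gamma$-far from $0$ for every $x$, so composing $q$ with a univariate Chebyshev-based polynomial $r_\gamma$ of degree $O(\log(1/\eta)/\gamma)$ approximating the step function at $0$ produces a polynomial $r_\gamma(q)$ that $\eta$-approximates $f$. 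Taking $\gamma = c/2$ and $\eta = 1/3$ are both constants, this composition inflates the degree only by a constant factor, yielding $\degeps{1/3}{f} = \tilde{O}((n+a)^{1-3^{-d}})$.

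Combining with the hypothesis $\degeps{1/3}{f} = \Omega(n)$ gives $n \le \tilde{O}((n+a)^{1-3^{-d}})$, or equivalently $n + a \ge \tilde{\Omega}(n^{1/(1-3^{-d})})$. Since $1/(1-x) \ge 1 + x$ for $x \in (0,1)$, this yields $n + a \ge \tilde{\Omega}(n^{1+3^{-d}})$. For fixed $d$ and large $n$ the quantity $n^{1+3^{-d}}$ dominates $n$ up to polylogarithmic factors, so subtracting $n$ still gives $a = \tilde{\Omega}(n^{1+3^{-d}})$, as claimed.

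The main technical content of the argument is already packed into \cref{thm:qac0-function-degree}; the only nontrivial auxiliary ingredient is the polynomial-amplification step needed to promote a $(1/2 - c)$-approximating polynomial to a $1/3$-approximating one while retaining the same asymptotic degree bound. This is a classical fact about approximate degree, so I expect no serious obstacle beyond stating it cleanly.
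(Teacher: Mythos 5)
Your proof of the first claim is exactly the paper's argument: apply \cref{thm:qac0-function-degree} to the acceptance probability $p$ and use the triangle inequality. For the ``moreover'' part, however, you take a genuinely different route. The paper amplifies at the \emph{circuit} level: it repeats the computation a constant number of times and takes a majority vote, which multiplies the depth and ancilla count by constants, and then applies the first part to the new circuit with error below $1/3$. You instead amplify at the \emph{polynomial} level, composing the $(1/2-c/2)$-approximating polynomial with a constant-degree univariate (Chebyshev-type) polynomial approximating the threshold function; since $\tilde p$ is pointwise within $O(d/n)$ of $p\in[0,1]$ it is bounded, so this composition is legitimate and inflates the degree by only a constant factor. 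Both routes work, but yours has a concrete advantage worth noting: circuit-level repetition changes the depth from $d$ to $d+O(1)$ (or $O(d)$), and since $d$ sits in the exponent $3^{-d}$, this perturbs the final bound to $a=\tilde{\Omega}\br{n^{1+3^{-d-O(1)}}}$, which is polynomially weaker than $n^{1+3^{-d}}$; the paper glosses over this. Your polynomial amplification leaves $d$ untouched and preserves the exponent $1-3^{-d}$ exactly, so it actually delivers the stated bound more cleanly. Your concluding algebra ($n\le\tilde{O}((n+a)^{1-3^{-d}})$ implies $a=\tilde{\Omega}(n^{1+3^{-d}})$ via $1/(1-x)\ge 1+x$) is correct and matches what the paper leaves implicit.
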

\begin{cor}\label{cor:WorstCase}
Suppose a \QACz circuit with depth $d$ and $a$ ancillae computes
a Boolean function $f: \set{0,1}^n\to\set{0,1}$ where $f\in\set{\Parityn, \Majorityn, \Modnk}$
with $2\le k\le cn$ for some $c<1$,
and the worst-case error is a constant below $1/2$, then
$$a = \tilde{\Omega}\br{n^{1+3^{-d}}}.$$
\end{cor}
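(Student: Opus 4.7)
The plan is to deduce Corollary \ref{cor:WorstCase} as a direct consequence of the second part of Theorem \ref{thm:main:WorstCase} together with the approximate degree lower bounds for the three target functions recorded in Example \ref{cor:locality-of-parity}. The only real work is to verify that each of $\Parityn$, $\Majorityn$, and $\Modnk$ (in the stated range of $k$) satisfies the hypothesis $\degeps{1/3}{f} = \Omega(n)$ required by the theorem, and to make sure the ``constant error strictly below $1/2$'' condition matches the theorem's hypothesis.

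First I would invoke Example \ref{cor:locality-of-parity}: it gives $\degeps{1/3}{\Parityn} = \Theta(n)$, $\degeps{1/3}{\Majorityn} = \Theta(n)$, and $\degeps{1/3}{\Modnk} = \Omega(n-k)$. The last of these is $\Omega(n)$ precisely when $k \le cn$ for some constant $c<1$, which is exactly the range specified in the corollary. So each of the three functions satisfies the degree hypothesis of Theorem \ref{thm:main:WorstCase}. (Whether the constant $1/3$ in the definition of approximate degree matters is immaterial: standard amplification shows that $\degeps{\varepsilon}{f} = \Theta\br{\degeps{1/3}{f}}$ for any constant $\varepsilon$ bounded away from $1/2$, so the $\Omega(n)$ bound holds at any error $\varepsilon < 1/2 - \Omega(1)$.)

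Next I would apply Theorem \ref{thm:main:WorstCase}: if a depth-$d$ $\QACz$ circuit with $a$ ancillae computes $f$ with worst-case error $\delta$ equal to a constant strictly below $1/2$, then $\degeps{\varepsilon}{f} = \tilde{O}\br{(n+a)^{1-3^{-d}}}$ for $\varepsilon = \delta + O(d/n)$. For $n$ large enough in terms of the constants involved, $\varepsilon$ is still a constant strictly below $1/2$, so $\degeps{\varepsilon}{f} = \Omega(n)$ by the previous paragraph. Combining yields
\begin{equation*}
    n \le \tilde{O}\br{(n+a)^{1-3^{-d}}},
\end{equation*}
so $n + a \ge \tilde{\Omega}\br{n^{1/(1-3^{-d})}}$. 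Since $\frac{1}{1-3^{-d}} \ge 1 + 3^{-d}$, this gives $n + a \ge \tilde{\Omega}\br{n^{1+3^{-d}}}$, and hence $a = \tilde{\Omega}\br{n^{1+3^{-d}}}$, completing the corollary.

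There is no real obstacle here — the corollary is almost purely bookkeeping on top of Theorem \ref{thm:main:WorstCase}. The only mildly subtle point is ensuring the error parameter $\delta + O(d/n)$ coming out of the theorem is still a constant bounded away from $1/2$ so that the $\Omega(n)$ approximate degree lower bound for each of the three functions applies; this is immediate for constant $d$ and large $n$, which is the regime of interest.
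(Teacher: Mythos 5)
Your proposal is correct and matches the paper's intended derivation: the corollary is read off directly from the second part of \cref{thm:main:WorstCase} together with the $\Omega(n)$ approximate-degree bounds of \cref{cor:locality-of-parity}, with the $k\le cn$ condition ensuring $\Omega(n-k)=\Omega(n)$ for $\Modnk$. The only cosmetic difference is that the paper handles the ``constant error below $1/2$'' issue by amplifying the circuit (repeating and taking a majority, which only changes depth and ancillae by constants), whereas you amplify at the level of approximate degree; both are standard and equivalent here.
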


\begin{proof}[Proof of \cref{thm:main:WorstCase}]
    By \cref{thm:qac0-function-degree}, there exists two Boolean functions $p, \tilde{p}: \set{0,1}\to\mathbb{R}$ satisfying
    \begin{itemize}
        \item $\norm{p - \tilde{p}} \le O(d/n)$.
        \item $\deg\br{\tilde{p}} = \tilde{O}((n+a)^{1-3^{-d}})$.
    \end{itemize}
    Also, since the circuit $U$ computes $f$ with the worst-case error $\delta$, we have $\norm{f - p} \le \delta$.
    This implies $\norm{f - \tilde{p}} \le \norm{f - p} + \norm{p - \tilde{p}} \le \delta + O(d/n)$,
    and this proves the first part of the theorem.

    Given a \QACz\ circuit that computes a Boolean function $f: \set{0, 1}^n\to\mathbb{R}$, with a constant error strictly less than $1/2$ in the worst case, we may reduce the error below $1/3$ using the standard error reduction, i.e., repeating the computation constant times and taking the majority output, which multiplies the depth of the circuit and the size of ancillae by a constant.
    Thus, we still have  $(n+a)^{1-3^{-d}} = \degeps{1/3}{f} = \tilde{\Omega}(n)$, which concludes the proof.
\end{proof}

%\begin{proof}
%  By \cref{thm:qac0-function-degree}, for an $n$ large enough we have $\delta + O(D_n/n) \le \varepsilon$ and then
%  $$3^{D_n}\cdot a^{1-3^{-D_n}/2}\sqrt{\log a} \ge \degeps{\varepsilon}{F_n}$$
%  This implies
%    $$3^{D_n} \ge \log \degeps{\varepsilon}{F_n}$$
%    thus
%  \begin{equation}
%    D_n \ge \frac{1}{\log 3}\log\log \degeps{\varepsilon}{F_n}
%  \end{equation}
%  or
%  $$a^{1-3^{-D_n}/2}\sqrt{\log a} \ge \frac{\degeps{\varepsilon}{F_n}}{\log\degeps{\varepsilon}{F_n}}$$
%  thus
%  \begin{equation}
%    D_n \ge \frac{2}{\log 3}\log\br{\frac{\log a}{\log a - \log \degeps{\varepsilon}{F_n} + \log\log \degeps{\varepsilon}{F_n} + \frac{1}{2}\log\log a}}
%  \end{equation}
%\end{proof}

For $\QACz$ circuits where the depth $d$ is a constant,
the theorem above implies that
if we wish to compute a function with $\Omega(n)$ approximate degree,
and with any nontrivial constant worst-case success probability,
the number of ancilla qubits needs to be superlinear.

\subsection{Average-Case Hardness}\label{sec:averagecase}
In this subsection, we will prove the average-case hardness result for high-degree Boolean functions using arguments similar to those of \cite{nadimpalli_pauli_2024}.
To prove the average-case hardness, we require that the Boolean function be of approximately high degree in terms of Frobenius norm.

The level-$k$ Fourier weight of $f$, denoted by $\Wgt{= k}{f}$, is the sum of the squares of the Fourier coefficients of degree exactly $k$. That is
$$\Wgt{= k}{f} = \sum_{\substack{S\subseteq[n]\\\abs{S}= k}}\widehat{f}(S)^2.$$
The quantites $\Wgt{\le k}{f}, \Wgt{< k}{f}, \Wgt{\ge k}{f}$ and $\Wgt{> k}{f}$ are defined analogously. 
For a Boolean function that has large high-degree Fourier coefficients, that is, $\Wgt{>k}{f}$ is large, it cannot be approximated by a low-degree Boolean function with respect to the Frobenius norm. This also means that its approximate degree is large as the Frobenius norm is an lower bound of the spectral norm.
\begin{lemma}
    Let $f: \set{0,1}^n\to\mathbb{R}$ be a Boolean function.
    If $\Wgt{\ge k}{f} \ge \delta$,
    then for any $\varepsilon < \sqrt{\delta}$, we have $\degeps{\varepsilon}{f} \ge k$.
\end{lemma}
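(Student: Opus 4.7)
The plan is to prove the contrapositive: if $\degeps{\varepsilon}{f} < k$, then $\Wgt{\ge k}{f} < \varepsilon^2$, so that $\varepsilon < \sqrt{\delta}$ forces $\Wgt{\ge k}{f} < \delta$. Assume there exists a Boolean function $g: \set{0,1}^n \to \mathbb{R}$ with $\deg(g) < k$ and $\norm{f - g} \le \varepsilon$, where $\norm{\cdot}$ is the infinity norm.

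The key step is to downgrade from the infinity norm to the $2$-norm, exploiting the fact that the $2$-norm is bounded above by the infinity norm (since $\normsub{h}{2}^2 = \expec{\x}{h(\x)^2} \le \normsub{h}{\infty}^2$ for any Boolean function $h$). Applied to $h = f - g$, this gives $\normsub{f - g}{2} \le \norm{f - g} \le \varepsilon$. Then by Parseval's theorem (\cref{thm:Parseval}),
\begin{equation*}
    \varepsilon^2 \ge \normsub{f - g}{2}^2 = \sum_{S \subseteq [n]} \br{\widehat{f}(S) - \widehat{g}(S)}^2.
\end{equation*}

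Since $\deg(g) < k$, the Fourier coefficient $\widehat{g}(S)$ vanishes whenever $\abs{S} \ge k$, so restricting the sum to those $S$ with $\abs{S} \ge k$ yields
\begin{equation*}
    \varepsilon^2 \ge \sum_{\substack{S \subseteq [n] \\ \abs{S} \ge k}} \widehat{f}(S)^2 = \Wgt{\ge k}{f} \ge \delta,
\end{equation*}
contradicting the assumption $\varepsilon < \sqrt{\delta}$. Hence no such $g$ exists, and $\degeps{\varepsilon}{f} \ge k$.

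There is no real obstacle here; the proof is a direct application of Parseval together with the fact that the $2$-norm lower bounds follow from the $\infty$-norm upper bound on the approximation. The only subtlety worth mentioning explicitly is the passage from the $\infty$-norm used to define approximate degree in this paper to the $2$-norm on which Parseval operates, but this is immediate from the definition of the $p$-norms given in the preliminaries.
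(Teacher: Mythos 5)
Your proof is correct and follows essentially the same route as the paper's: pass from the $\infty$-norm to the $2$-norm, apply Parseval, and use that a degree-$(<k)$ approximator has no Fourier mass at levels $\ge k$, so that mass must be absorbed by $\normsub{f-g}{2}^2 \le \varepsilon^2 < \delta$. The paper phrases the conclusion as a lower bound on $\Wgt{\ge k}{g}$ rather than a direct contradiction, but the substance is identical.
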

\begin{proof}
    For any function $g: \set{0,1}^n\to\mathbb{R}$ with $\norm{f - g} \le \varepsilon$,
    we have $\normsub{f - g}{2} \le \norm{f - g} \le \varepsilon$.
    Thus, $\Wgt{\ge k}{g} \ge \Wgt{\ge k}{f} - \normsub{f - g}{2}^2 \ge \delta - \varepsilon^2 > 0$.
    This implies $\deg(g) \ge k$.
\end{proof}

By \cref{thm:qac0-function-degree}, we can prove average-case hardness results
for Boolean functions with large high-degree Pauli weights.

\begin{theorem}\label{thm:main:AverageCase}
    Suppose $U$ is a \QACz\ circuit with depth $d$.
    $U$ has $n$ input qubits and $a$ ancillae initialized in state $\varphi$.
    Let $f: \set{0,1}^n\to\set{0,1}$ be any Boolean function.
    %For each $x\in\set{0,1}^n$, define an independent random variable $F(x)\in\set{0,1}$
    %with $\prob{F(x) = 1} = f(x)$.
    It holds that 
    $$\Pr_{\x\sim_U\set{0,1}^n}\Br{C_U(x) = f(x)} \le \frac{1}{2} + \frac{1}{2}\cdot\sqrt{1 - 4\Wgt{> k}{f}} + O(d/n),$$
    for $k \ge \tilde{\Theta}\br{(n+a)^{1-3^{-d}}}$.
    %In particular, when $f(x)\in\set{0,1}$, we always have $F(x) = f(x)$.
    In particular, for $a = \tilde{O}(n^{1+3^{-d}})$,
    $$\Pr_{x\sim_U\set{0,1}^n}\Br{C_U(x) = \Parityn(x)} \le \frac{1}{2} + O(d/n)$$
    and
    $$\Pr_{x\sim_U\set{0,1}^n}\Br{C_U(x) = \Majorityn(x)} \le 1 - \Omega\br{\frac{1}{\sqrt{n}}} + O(d/n).$$
     %Suppose we have a family of circuits $\set{U_n}$ that for each $n$,
     %computes a function $f_n$ with worst case probability $1-\varepsilon$.
     %For each $n$, $U_n$ uses $O(n)$ qubits and has depth bounded by a constant $D$.
     %Then for all large enough $n$,
     %$$\expec{x\in\set{-1,1}^n}{f(x)\cdot\Parityn(x)} \le \varepsilon$$
\end{theorem}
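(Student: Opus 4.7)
The plan is to combine~\cref{thm:qac0-function-degree} with standard Fourier analysis, closely following the framework developed in~\cite{nadimpalli_pauli_2024}. Let $p(x) = \Pr[C_U(x) = 1]$. By~\cref{thm:qac0-function-degree}, there is an approximator $\tilde p$ with $\deg(\tilde p) \le k = \tilde O\br{(n+a)^{1-3^{-d}}}$ and $\norm{p - \tilde p} \le O(d/n)$. Writing the success probability as $\Pr[C_U(\x) = f(\x)] = \tfrac{1}{2} - \tfrac{1}{2}\langle 2p-1, 1-2f\rangle$, I can replace $p$ by $\tilde p$ at an additive cost of $O(d/n)$ in the final bound, since $1-2f$ is $\pm 1$-valued and the inner product is against a function of infinity norm $1$.

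Next, set $\tilde q = 2\tilde p - 1$ and $g = 1 - 2f$. Since $\deg(\tilde q) \le k$, Fourier orthogonality gives $\langle \tilde q, g\rangle = \langle \tilde q, g^{\le k}\rangle$, where $g^{\le k}$ is the Fourier truncation of $g$ to degree $\le k$. Cauchy-Schwarz and Parseval then yield
\[
\abs{\langle \tilde q, g^{\le k}\rangle} \le \normsub{\tilde q}{2}\cdot\normsub{g^{\le k}}{2} \le (1 + O(d/n))\sqrt{1 - \Wgt{>k}{g}}.
\]
Since the nonempty-$S$ Fourier coefficients of $g = 1 - 2f$ equal $-2\widehat{f}(S)$, we have $\Wgt{>k}{g} = 4\Wgt{>k}{f}$, and combining the two bounds gives the main statement.

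For the two named cases, the remaining work is to bound $\Wgt{>k}{f}$. For \parity, the unique nonzero Fourier coefficient at nonempty $S$ is $\widehat{\Parityn}([n]) = -1/2$, so $\Wgt{>k}{\Parityn} = 1/4$ for every $k<n$, which makes $\sqrt{1-4\Wgt{>k}{\Parityn}}=0$ and collapses the bound to $\tfrac{1}{2} + O(d/n)$. For \maj, the main obstacle is to establish $\Wgt{>k}{\Majorityn} = \Omega(1/\sqrt{n})$ for $k = o(n)$, which is valid in the target regime because $a = \tilde O(n^{1+3^{-d}})$ forces $k = \tilde O(n^{1-9^{-d}}) = o(n)$. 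I would derive this tail estimate from the classical asymptotic fact that the noise stability of the $\pm 1$-valued Majority function converges to $(2/\pi)\arcsin\rho$, whose Taylor coefficients decay on the order of $k^{-3/2}$; summing the tail yields $\Wgt{>k}{\Majorityn^{\pm}} = \Omega(1/\sqrt{k}) \ge \Omega(1/\sqrt{n})$, and hence $\Wgt{>k}{\Majorityn} = \Omega(1/\sqrt{n})$. This Fourier tail estimate for Majority is the only genuinely delicate step of the proof; everything else is short book-keeping built on \cref{thm:qac0-function-degree}.
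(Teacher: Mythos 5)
Your proposal is correct and follows essentially the same route as the paper: invoke \cref{thm:qac0-function-degree} to get a degree-$k$ approximator $\tilde p$ of the acceptance probability, then bound the correlation with $f$ via Cauchy--Schwarz against the low-degree Fourier mass of $g=1-2f$ (your single truncation $\langle \tilde q, g\rangle = \langle \tilde q, g^{\le k}\rangle$ is just a repackaging of the paper's two-part split of $\sum_S \widehat q(S)\widehat g(S)$), and finish with the same Fourier-tail facts for \parity\ and \maj\ (the paper cites the level-weight formula $\Wgt{=k}{\Majorityn}=\Theta(k^{-3/2})$ directly rather than re-deriving it from noise stability, but it is the same estimate).
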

\begin{remark}
    This theorem also holds in the case where we want to compute a Boolean function with randomized output.
    In this case, the function $f$ now has the form $f: \set{0,1}^n\to\mathbb{R}$,
    where for each input $x\in\set{0,1}^n$,
    the value $f(x)$ denotes the probability of output $1$.
    The proof is literally the same.
\end{remark}

\begin{proof}[Proof of \cref{thm:main:AverageCase}]
  For any $x\in\set{0,1}^n$, let $p\br{x}$ be the probability that the circuit outputs $1$ on input $x$.
  By \cref{thm:qac0-function-degree}, 
  $\degeps{\varepsilon}{p} = \tilde{O}\br{(n+a)^{1-3^{-d}}}$.
  Then there exists a function $\tilde{p}: \set{0,1}^n\to\mathbb{R}$ satisfying
  $\norm{p - \tilde{p}} \le \varepsilon$ and $\deg\br{\tilde{p}} \le \tilde{O}\br{(n+a)^{1-3^{-d}}}\le k$.
  Let $q = 2p - 1, \tilde{q} = 2\tilde{p}-1$ and $g = 2f - 1$.
  Then $\normsub{q - \tilde{q}}{2} \le \norm{q - \tilde{q}} \le 2\varepsilon$.
  Let $x$ be a random variable uniformly distributed over $\set{0,1}^n$.
  \begin{align*}
    2\cdot\Pr_{x\sim_U\set{0,1}^n}\Br{C_U(x) = f(x)} - 1 &= \expec{x\sim_U\set{0,1}^n}{q(x)g(x)} \\
        &= \sum_{S\subseteq[n]}\widehat{q}(S)\widehat{g}(S) \\
        &= \sum_{\abs{S}\le k}\widehat{q}(S)\widehat{g}(S) + \sum_{\abs{S} > k}\widehat{q}(S)\widehat{g}(S) \\
        &\le \sqrt{\Wgt{\le k}{q}\Wgt{\le k}{g}} + \sqrt{\Wgt{> k}{q}\Wgt{> k}{g}} \\
        &\le \sqrt{\Wgt{\le k}{g}} + \normsub{q - \tilde{q}}{2}\sqrt{\Wgt{> k}{g}} \\
        &\le \sqrt{\Wgt{\le k}{g}} + 2\varepsilon\sqrt{\Wgt{> k}{g}}.
  \end{align*}
  The second inequality follows since $\tilde{q}$ has degree below $k$,
  so $\sqrt{\Wgt{>k}{q}} = \sqrt{\Wgt{>k}{q - \tilde{q}}} \le \normsub{q - \tilde{q}}{2}$.
  Tidying up, we have
  \begin{align*}
    \Pr_{x\sim_U\set{0,1}^n}\Br{C_U(x) = f(x)} &\le \frac{1}{2} + \frac{1}{2}\cdot\sqrt{\Wgt{\le k}{g}} + \varepsilon\cdot\sqrt{\Wgt{>k}{g}} \\
        &\le \frac{1}{2} + \frac{1}{2}\cdot\sqrt{1 - \Wgt{> k}{g}} + \varepsilon\cdot\sqrt{\Wgt{>k}{g}} \\
        &\le \frac{1}{2} + \frac{1}{2}\cdot\sqrt{1 - 4\Wgt{> k}{f}} + \varepsilon. \\
  \end{align*}
  Notice that $\Parityn = \frac{1}{2} + \frac{1}{2}\chi_{[n]}$.
  So we have $\Wgt{> k}{\Parityn} = \frac{1}{4}$ for $k<n$.
  For the Majority function, we have $\Wgt{> k}{\Majorityn} \ge \Omega\br{\frac{1}{\sqrt{k}}}$ for odd $k$~\cite[Equation 5.11]{ODonnell2014}.
\end{proof}

% \begin{remark}
%   Here we consider functions of the form $f: \set{0,1}^n\to\set{0,1}$.
%   In the analysis of Boolean functions, functions with $\set{-1,1}$ output are always considered.
%   Let $f^\prime: \set{0,1}^n\to\set{-1,1}$ be the function defined as
%   \begin{equation*}
%       f^\prime(x) = \begin{cases}
%           1 &\text{ if } f(x) = 0. \\
%           -1 &\text{ if } f(x) = 1.
%       \end{cases}
%   \end{equation*}
%   Then it can be easily confirmed that $f^\prime = 2 f - 1$,
%   and that $\Wgt{> k}{f^\prime} = 4\Wgt{> k}{f}$ as long as $k \ge 0$.
%   Also, since $\Wgt{> k}{f^\prime} \le \expec{x}{f^\prime(x)^2} \le 1$ by \cref{thm:Parseval},
%   we have $\Wgt{> k}{f} = \frac{1}{4}\Wgt{> k}{f^\prime} \le \frac{1}{4}$.
% \end{remark}

% \pnote{Need to reprhase} Notice that for the \maj\ function,
% our hardness result is not tight.
% This is because the Pauli weight of the \maj\ function is not concentrated on degree $\Theta(n)$.

% In this theorem we require $n \ge k \ge \tilde{O}\br{(n+a)^{1-3^{-d}}}$,
% and thus the number of working qubits can be as much as $a = \tilde{\Theta}\br{n^{\frac{1}{1 - 3^{-d}}}}$,
% which is superlinear as long as $d$ is a constant.

This theorem is an analog of Proposition 32 and Theorem 33 of \cite{nadimpalli_pauli_2024}.
One main difference is that we use \cref{thm:qac0-function-degree} to directly upper bound
the degree of the Boolean function,
instead of considering the Choi representation of quantum channels.
This turns out to be more straightforward and efficient in our applications.

The result of \cite{nadimpalli_pauli_2024} only allows $n^{\frac{1}{d}}$-qubit ancillary pure state, the barrier being the increase of the Frobenius norm when applying the operation
$P\mapsto\Tr_{n+1,\dots,n+a}\Br{P\br{\id\otimes\varphi}}$.
We overcome this barrier by using spectral norm approximations.
%Since the normalized spectral norm is potentially larger then the Frobenius norm,
%we can no longer  ``remove'' large \CZGate s, and we need the approximation schemes
%that decrease the degree from $n$ to $\sqrt{n}$.
However, our method creates new barriers.
%Our method immediately fails when the number of ancillae is $\Omega(n^c)$ for some $c>\frac{1}{1-3^{-D}}$,
%even in the case where the ancillae is in the maximally mixed state.
Since we are not ``removing'' the large \CZGate s as in \cite{nadimpalli_pauli_2024},
in the case where there are too many, say $\omega(n)$ large \CZGate s,
our method immediately fails.
Also, it is still open whether we can approximate $\AC[0]$ circuits with degree $o(n)$ polynomials~\cite{10.1145/3444815.3444825}.
It would be even more challenging to do so with quantum circuits.

Rosenthal~\cite{rosenthal:LIPIcs.ITCS.2021.32} proved that computing \Parityn~ is equivalent to synthesizing the $n$-qubit cat state.
It is interesting to note that our hardness results \cref{thm:main:WorstCase} and \cref{thm:main:AverageCase} hold for any linear-sized ancillae, which implies that only providing a cat state in the ancillae is not enough to compute \parity.
Instead, a quantum circuit that synthesizes the cat state is needed for the reduction.

%PHP may be in QNC0, ~\cite{10.1145/3313276.3316404} does not disprove it
%This is in contrast to the Parity Halving Problem, which is not in $\mathbf{QNC}^0$,but can be solve by a $\mathbf{QNC}^0$ circuit family with the cat state as advice~\cite{10.1145/3313276.3316404}.

\subsection{Application: Agnostic Learning for \texorpdfstring{\QLCz}{QLC0}}
Kearns, Schapire and Sllie~\cite{10.1145/130385.130424} proposed the agnostic PAC learning model, which is a more general model than the standard PAC learning model. Let $\mathcal{D}$ be a distribution on $\set{0,1}^n\times\set{0,1}$ and let $\mathcal{C}$ be a concept class. For any $h:\set{0,1}^n\rightarrow\set{0,1}$, the error is defined to be $\mathrm{err}_{\mathcal{D}}\br{h}=\mathrm{Pr}_{\br{x,y}\sim\mathcal{D}}\Br{h\br{x}\neq y}$ and one defines $\mathrm{opt}=\min_{c\in\mathcal{C}}\mathrm{err}_{\mathcal{D}}\br{c}$. We say that $\mathcal{C}$ is agnostically learnable in time $T\br{n,\varepsilon,\delta}$, if there exists an algorithm that takes as input $n,\delta$ and has access to an example oracle of $\mathcal{D}$, and satisfies the following properties: It runs in time at most $T\br{n,\varepsilon,\delta}$ and with probability at least $1-\delta$, it outputs a hypothesis $h$ that satisfies $\mathrm{err}_{\mathcal{D}}\br{h}\leq\mathrm{opt}+\varepsilon$.  The agnostic model is believed to be closer to the realistic scenario than the standard PAC model. However, designing efficient agnostic learning algorithms is challenging in general. Even very few concept classes are known to be agnostically learnable in subexponential time.  
Bun, Kothari, and Thaler~\cite{doi:10.1137/1.9781611975482.42} gave a subexponential-time agnostic learning algorithm for the class of functions with approximate degree $n^c$ for $c<1$, built on~\cite{doi:10.1137/060649057}.
\begin{fact}[{\cite[Corollary 24]{doi:10.1137/1.9781611975482.42}}]
Let $C$ be a set of Boolean functions on $\set{0,1}^n$.
Suppose that for every $c\in C$, the $\varepsilon$-approximate degree of $c$ is at most $d$.
Then for every $\delta > 0$, there is an algorithm running in time $\poly\br{n^d, 1/\varepsilon, \log(1/\delta)}$
that agnostically learns $C$ to error $\varepsilon$ with respect to any (unknown) distribution D over $\set{0,1}^n\times\set{0,1}$.
\end{fact}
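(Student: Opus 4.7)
The plan is to reduce agnostic learning of $C$ to an $L_1$ polynomial regression problem over the $O(n^d)$-dimensional linear space of multilinear polynomials of degree at most $d$. The crucial structural fact is that, by hypothesis, every $c\in C$ admits a polynomial $p_c$ of degree at most $d$ with $\normsub{c - p_c}{\infty} \le \varepsilon$. For the optimal $c^\star \in C$, pointwise $\varepsilon$-closeness transfers to $L_1(\mathcal{D})$-closeness, so any polynomial that does well at $L_1$ regression will also do well at classification. This is the same template used by Kalai, Klivans, Mansour and Servedio for agnostic learning of halfspaces; here the approximate-degree hypothesis provides the ``dense low-degree'' surrogate for free.

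First I would draw $m = \poly(n^d, 1/\varepsilon, \log(1/\delta))$ i.i.d.\ examples $(x_1, y_1), \dots, (x_m, y_m)$ from $\mathcal{D}$ and solve the linear program
\begin{equation*}
  \min_{p}\ \frac{1}{m}\sum_{i=1}^m \abs{p(x_i) - y_i} \qquad \text{subject to } \deg(p) \le d,\ p(x_i) \in [0,1] \text{ for all } i,
\end{equation*}
where $p$ ranges over multilinear polynomials expanded in the $\binom{n}{\le d} = O(n^d)$ monomials of degree at most $d$. This is a linear program in $O(n^d)$ variables with $O(m)$ constraints, solvable in time $\poly(n^d, m)$. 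Call the empirical minimizer $\hat p$.

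Next I would invoke uniform convergence: the class of $[0,1]$-valued degree-$d$ multilinear polynomials on $\set{0,1}^n$ has pseudo-dimension $O(n^d)$, so, with probability at least $1 - \delta$, every such $p$ satisfies $\abs{L(p) - \hat L(p)} \le \varepsilon$, where $L(p) = \mathbb{E}_{(x,y)\sim\mathcal{D}}\Br{\abs{p(x) - y}}$ and $\hat L$ is its empirical counterpart, provided $m = \poly(n^d, 1/\varepsilon, \log(1/\delta))$. Since the approximant $p_{c^\star}$ of the best $c^\star \in C$ satisfies $L(p_{c^\star}) \le L(c^\star) + \varepsilon = \mathrm{opt} + \varepsilon$, comparing empirical and true losses gives $L(\hat p) \le \mathrm{opt} + 3\varepsilon$.

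Finally, I would convert $\hat p$ into a Boolean hypothesis by randomized rounding: on input $x$, output $1$ with probability $\hat p(x)$ (which lies in $[0,1]$) and $0$ otherwise. Because $y \in \set{0,1}$, a direct calculation gives $\Pr[h(x) \ne y \mid x, y] = \abs{\hat p(x) - y}$, so $\mathrm{err}_{\mathcal{D}}(h) = L(\hat p) \le \mathrm{opt} + 3\varepsilon$; rescaling $\varepsilon$ by a constant at the outset finishes the argument. The main obstacle I expect is the uniform-convergence step, since the naive VC-dimension bound for Boolean classes is inadequate for real-valued polynomials; one must instead invoke a pseudo-dimension or Rademacher-complexity bound for $[0,1]$-bounded degree-$d$ polynomials to obtain the target sample complexity $\poly(n^d, 1/\varepsilon, \log(1/\delta))$.
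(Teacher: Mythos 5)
The paper gives no proof of this statement: it is imported verbatim as Corollary 24 of Bun--Kothari--Thaler, which in turn rests on the $L_1$ polynomial regression theorem of Kalai, Klivans, Mansour and Servedio (the paper's citation \cite{doi:10.1137/060649057}). Your reconstruction is exactly that standard argument---low-degree $L_1$ regression over the $O(n^d)$-dimensional space of degree-$d$ multilinear polynomials, uniform convergence via the pseudo-dimension of that linear class, and randomized rounding---so it matches the proof behind the citation rather than anything in this paper. One detail to tighten: constraining $p(x_i)\in[0,1]$ only at the sample points does not guarantee $\hat p(x)\in[0,1]$ on a fresh test point, so the rounding step as written is not well defined; the standard fix is to clamp $\hat p$ to $[0,1]$ before rounding, which never increases $\abs{\hat p(x)-y}$ since $y\in\set{0,1}$, and which also supplies the boundedness of the loss that your uniform-convergence step implicitly needs. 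With that repair the argument is complete.
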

As a simple corollary of \cref{thm:qac0-function-degree},
we know that the classical functions that a depth-$d$ $\QLCz$ circuit computes has approximate degree $\tilde{O}(n^{1-3^{-d}})$. As an immediate corollary, we have a subexponential learning algorithm for the functions in \QLCz\ in the distribution-free agnostic PAC learning model, generalizing the classical counterpart~\cite[Theorem 7]{doi:10.1137/1.9781611975482.42}.
\begin{theorem}
    The concept class of $n$-bit functions computed by $\QLCz$ circuits of depth $d$
    can be learned in the distribution-free agnostic PAC model in time $2^{\tilde{O}\br{n^{1-3^{-d}}}}$.
\end{theorem}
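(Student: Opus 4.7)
The plan is to derive the theorem as an essentially immediate corollary, combining the approximate-degree upper bound for $\QLCz$ functions established in this paper (namely \cref{thm:qac0-function-degree} or equivalently \cref{cor:approxdegqlc}) with the BKT polynomial-method agnostic learner quoted as the Fact preceding the theorem.

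First I would make precise the concept class: a Boolean function $f:\set{0,1}^n\to\set{0,1}$ belongs to the class if there exists a depth-$d$ $\QLCz$ circuit $U$ with $a=O(n)$ ancillae that, on every input $x$, outputs $f(x)$ with probability at least $2/3$. Writing $p(x)$ for the output-1 probability of $U$ on input $x$, this means $\norm{p-f}_{\infty}\le 1/3$. I then invoke \cref{thm:qac0-function-degree} with $a=O(n)$, which produces a real-valued function $\tilde p$ of degree $\tilde O\br{n^{1-3^{-d}}}$ with $\norm{p-\tilde p}_{\infty}\le O(d/n)$. Combining, $\tilde p$ approximates $f$ in $\ell_\infty$ to error at most $1/3+O(d/n)$, which is bounded away from $1/2$ for all sufficiently large $n$.

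Second, to match the hypothesis of the Fact, I would upgrade this to an $\varepsilon$-approximate degree bound for an arbitrary $\varepsilon>0$. The standard way is to first apply error reduction to the $\QLCz$ circuit: run $O(\log(1/\varepsilon))$ independent copies of $U$ in parallel on a fresh batch of ancillae and take a majority of the output qubits; the resulting output-1 probability $p'$ satisfies $\norm{p'-f}_{\infty}\le\varepsilon/2$, the ancillae count stays $O(n\log(1/\varepsilon))$ which is still linear in $n$ for constant $\varepsilon$, and the depth only grows by an additive constant (since the classical majority of $O(\log(1/\varepsilon))$ bits has constant approximate degree and can be absorbed into the polynomial-method framework). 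Re-applying \cref{thm:qac0-function-degree} to this boosted circuit and adjusting the parameter $r$ inside the proof of \cref{lem:qac0-multiple-layers} so that $dn\cdot 2^{1-2^{-8}r}\log^2 e\le\varepsilon/2$, i.e., $r=\Theta(\log(dn/\varepsilon))$, yields a function of degree $\tilde O\br{n^{1-3^{-d}}}$ (with polylogarithmic dependence on $1/\varepsilon$ hidden in $\tilde O$) that approximates $f$ to $\ell_\infty$ error $\varepsilon$. Thus every concept in the class has $\varepsilon$-approximate degree $D=\tilde O\br{n^{1-3^{-d}}}$.

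Third, I plug this degree bound into the Fact (Corollary 24 of Bun--Kothari--Thaler): the class is agnostically PAC learnable in time $\poly\br{n^{D},1/\varepsilon,\log(1/\delta)}=2^{\tilde O\br{n^{1-3^{-d}}}}$, which is the claimed runtime. There is no serious obstacle here; the one bookkeeping nuisance is the epsilon-handling in the second step, where one must verify that the ancillae count remains linear and that the polylog-in-$1/\varepsilon$ factors do not inflate the degree beyond $\tilde O\br{n^{1-3^{-d}}}$. Both are transparent from the parameter choices in \cref{lem:qac0-multiple-layers}.
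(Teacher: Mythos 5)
Your proposal is correct and follows exactly the route the paper intends: the theorem is stated as an immediate corollary of \cref{thm:qac0-function-degree} (equivalently \cref{cor:approxdegqlc}) combined with the quoted Bun--Kothari--Thaler learner, which is precisely your first and third steps. Your extra care in the second step about converting the $(1/3+O(d/n))$-approximate degree bound into an $\varepsilon$-approximate degree bound is a reasonable (and slightly more explicit) treatment of a point the paper glosses over, and for constant $\varepsilon$ it does not change the $\tilde{O}\br{n^{1-3^{-d}}}$ degree or the claimed runtime.
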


\subsection{\texorpdfstring{$\QACz$}{QAC0} with Low-Degree Classical Post-processing}

For a quantum channel, we measure the output qubits in the computational basis.
\cref{thm:qac0-function-degree} relies on the fact that we only look at the first qubit of the output of a quantum circuit.
That is, we consider the projector $\ketbra{0}\otimes\id$, which is of degree $1$.
We can allow more complicated classical post-processing of the quantum output,
e.g. linear-size $\AC[0]$ circuits,
as long as the post-processing is low degree.
By slightly modifying the proof of \cref{thm:qac0-function-degree},
we obtain the following theorem.
\begin{theorem}\label{thm:ac0-circ-qac0}
    Let $n\ge 1$.
    Let $U$ be a depth-$d$ \QACz\ circuit with $n$ input qubits and $a$ ancillae initialized in the state $\varphi$.
    Let $f: \set{0,1}^{n+a}\to\mathbb{R}$ be a Boolean function satisfying $\degeps{\delta}{f} = \ell$.
    Let $p: \set{0,1}^n\to\mathbb{R}$ be the function describing the probability that $f\circ U$ outputs $1$.
    That is, for an input $x\in\set{0,1}^n$,
    $$
    p(x) =  \Tr\Br{M_f\cdot U\br{\ketbra{x}\otimes\varphi}U^\dagger}.
    $$
    Then $\degeps{\varepsilon}{p} = \tilde{O}\br{(n+a)^{1-3^{-d}}\cdot\ell^{3^{-d}}}$ for $\varepsilon=\br{1 + \delta}\br{1 + O(d/n)} - 1$.
    In particular, it implies that $L\notin\LCz\circ\QLCz$ for $L\in\set{\mathrm{PARITY},\mathrm{MAJORITY},\mathrm{MOD}_k}$,
   where $0<c<1$ and $2\leq k\leq cn$.
\end{theorem}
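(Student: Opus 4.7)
The plan is to mirror the proof of \cref{thm:qac0-function-degree} almost verbatim, replacing the starting operator $\ketbra{1}\otimes\id$ (which had degree $1$) by the diagonal operator $M_f$, whose approximate Pauli degree coincides with the approximate Fourier degree of $f$. Concretely, I would set $A = U^\dagger M_f U$ and observe, as in the earlier proof, that
\begin{equation*}
    p(x) = \Tr\Br{\br{\ketbra{x}\otimes\varphi} A} = \bra{x}\Tr_{n+1,\dots,n+a}\Br{A\br{\id\otimes\varphi}}\ket{x},
\end{equation*}
so that $M_p$ is obtained by extracting the diagonal of $\Tr_{n+1,\dots,n+a}\Br{U^\dagger M_f U (\id\otimes\varphi)}$.

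Next I would verify that $M_f$ meets the hypotheses of \cref{cor:qac0-whole}. Since $f$ takes values in $[0,1]$ (without loss of generality after rescaling), $\norm{M_f}\le 1$, and by \cref{fact:degreecoincide} together with the assumption $\degeps{\delta}{f}=\ell$ we have $\degeps{\delta}{M_f} = \ell$ with respect to the spectral norm. Applying \cref{cor:qac0-whole} to $M_f$ with circuit $U^\dagger$, density operator $\varphi$, and parameters $\ell$ and $\delta$ yields that $\Tr_{n+1,\dots,n+a}\Br{U^\dagger M_f U (\id\otimes\varphi)}$ admits an approximating operator of degree $\tilde{O}\br{(n+a)^{1-3^{-d}}\ell^{3^{-d}}}$ with spectral error $(1+\delta)(1+O(d/n)) - 1 = \varepsilon$. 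Finally, invoking \cref{lem:diagonaldegree} to pass to the diagonal (which does not raise approximate degree) and \cref{fact:degreecoincide} to translate back from $M_p$ to $p$, the claimed bound $\degeps{\varepsilon}{p} = \tilde{O}((n+a)^{1-3^{-d}}\ell^{3^{-d}})$ follows.

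For the consequence about $\LCz\circ\QLCz$, I would invoke the classical result of Bun, Kothari, and Thaler that any depth-$d$ $\LCz$ circuit on $N$ bits computes a function of approximate degree $\tilde{O}(N^{1-2^{-d}})$. Taking the $\QLCz$ part to have depth $d'$ and linear ancillae (so $N=n+a=O(n)$), the classical post-processing contributes $\ell = \tilde{O}(n^{1-2^{-d}})$, and plugging into the main bound gives approximate degree
\begin{equation*}
    \tilde{O}\br{n^{1-3^{-d'}}\cdot n^{(1-2^{-d})\cdot 3^{-d'}}} = \tilde{O}\br{n^{1-3^{-d'}\cdot 2^{-d}}} = o(n).
\end{equation*}
Since $\parity$, $\maj$, and $\modk$ (for $2\le k\le cn$ with $c<1$) all have approximate degree $\Omega(n)$ by \cref{cor:locality-of-parity}, none of them lie in $\LCz\circ\QLCz$, after the standard error-reduction trick to push the worst-case error below $1/3$.

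The main obstacle is not the technical derivation, which is essentially bookkeeping on top of \cref{cor:qac0-whole}, but rather checking that the composition of approximate degree upper bounds in the $\LCz\circ\QLCz$ corollary genuinely stays $o(n)$ for every fixed pair of constant depths $(d,d')$; the exponent $1 - 3^{-d'}\cdot 2^{-d}$ is strictly less than $1$ but tends to $1$ rapidly, so one must be careful that the polylogarithmic factors hidden in the $\tilde{O}$ do not swallow the gap, which is handled because $n^{-3^{-d'}\cdot 2^{-d}}$ dominates any polylogarithm for sufficiently large $n$.
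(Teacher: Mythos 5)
Your proposal is correct and follows essentially the same route as the paper's own proof in the appendix: set $A = U^\dagger M_f U$, apply \cref{cor:qac0-whole} with $M_f$ as the degree-$\ell$ operator, and conclude via \cref{lem:diagonaldegree} and \cref{fact:degreecoincide}. Your additional verification of the $\LCz\circ\QLCz$ consequence (using the Bun--Kothari--Thaler bound $\ell=\tilde{O}(n^{1-2^{-d}})$ to get exponent $1-2^{-d}3^{-d'}<1$) is sound and in fact more detailed than what the paper writes out.
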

% Slote~\cite{slote:LIPIcs.ITCS.2024.92} studied the complexity class $\AC[0]\circ\mathbf{QNC}^0$, i.e., the input to $\AC[0]$ circuit is pre-processed by a $\mathbf{QNC}^0$ circuit. 
% This corresponds to the $\AC[0]$ with quantum pre-processing model considered by Slote~\cite{slote:LIPIcs.ITCS.2024.92}.
The proof is almost identical to the proof of \cref{thm:qac0-function-degree},
we provide a proof for completeness in \cref{appendix:proofs}.
%\begin{itemize}
%    \item $\AC[0]$ circuits composed with ancilla-free $\QACz$ circuits.
%    \item Linear size $\AC[0]$ circuits, aka $\mathbf{LC}^0$ circuits, composed with $\QACz$ circuits with ancillae slightly more than linear.
%    \item $f\circ\QACz$ for a Boolean functions $f$ satisfying $\degeps{\varepsilon}{f} = o(n)$,
%    and the $\QACz$ circuit has ancillae slightly more than linear.
%\end{itemize}
This improves upon the results of Slote~\cite{slote:LIPIcs.ITCS.2024.92},
where he proved a similar result for $\mathrm{QNC}^0$ pre-processing.

\section{Towards \texorpdfstring{$\parity\not\in\QACz$}{Parity Not In QAC0}}\label{subsection:boost}
%In section, we show that our result is just one step away from the conjecture that $\QACz$ circuits,
%which can have arbitrary polynomial ancillae, can not compute \parity,
%despite the fact that our ``barely superlinear'' lower bound of $\tilde{\Omega}\br{n^{1+3^{-d}}}$ seems far away from this conjecture.
In this section, we show that our result \cref{cor:WorstCase} is just one step away from the conjecture that $\parity\notin\QACz$, despite that our ``barely superlinear'' lower bound of $\tilde{\Omega}\br{n^{1+3^{-d}}}$ seems far away from the arbitrary polynomial ancillae allowed in $\QACz$ circuits.
%A longstanding conjecture is that $\QACz$ circuits with arbitrary polynomial ancillae can not compute \parity.
%Although our ``barely superlinear'' lower bound of $\tilde{\Omega}\br{n^{1+3^{-d}}}$ seems far away from this goal,
%we show that our lower bound is actually just one step away from an arbitrary polynomial lower bound.
In particular, we show that $n^{1+\exp\br{-\Theta(d)}}$ ancillae is a threshold.
Any ancillae lower bound of the form $n^{1+\exp\br{-o(d)}}$,
e.g., $n^{1+\exp\br{d/\log d}}$,
will lead to a lower bound for any polynomial ancillae.
To see this, we show that given a $\QACz$ circuit family with depth $d$
and ancillae $n^c$,
we can get a $\QACz$ circuit computing $\Parityn$ with depth $O(cdt)$ and ancillae $n^{1+2^{-t}}$,
for any constant $t$.

\begin{definition}
    A $\circsize{d}{n}{a}$ circuit is a $\QACz$ circuit with depth $d$,
    input size $n$ and ancillae size $a$.
\end{definition}

We first present the main theorem of this section,
which we will prove by directly combining \cref{lem:boost-step-1} and \cref{lem:boost-step-2} at the end of this section.
\begin{restatable}{theorem}{restateboost}\label{thm:boost-full}
%\begin{theorem}\label{thm:boost-full}
    Suppose there exist constants $d\in\mathbb{Z}^+, c \in\mathbb{Z}^+, N_0\in\mathbb{Z}^+$,
    and a $\QACz$ circuit family $\set{U_n}_{n\ge N_0}$ such that for each $n \ge N_0$,
    the circuit $U_n$ is a $\circsize{d}{n}{n^c}$ circuit,
    and computes $\Parityn$ with the worst-case error $\negl(n)$.
    Then there exists $D=O(cd)$, for any $K\ge 1$ and infinitely many $n$,
    we can construct a $\circsize{KD}{n}{n^{1+2^{-K}}}$ circuit,
    which computes \Parityn with the worst-case error $\negl(n)$.
\end{restatable}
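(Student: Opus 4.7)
The plan is to apply the divide-and-conquer identity $\Parityn(x)=\Parity{\sqrt n}(\Parity{\sqrt n}(x^{(1)}),\dots,\Parity{\sqrt n}(x^{(\sqrt n)}))$ iteratively, treating the hypothesised circuit family $\{U_n\}$ as a black-box subroutine. The argument is packaged into two lemmas, \cref{lem:boost-step-1} (one boost step) and \cref{lem:boost-step-2} (the iteration), and the theorem follows by direct composition.

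For \cref{lem:boost-step-1}: from a $\circsize{d}{n}{n^c}$ circuit $U_n$ that computes $\Parityn$ with worst-case error $\negl(n)$, build a circuit for $\Parityn$ of depth $2d$ and $O(n^{(c+1)/2})$ ancillae whenever $n$ is a perfect square that is at least $N_0^2$. Partition the $n$ input qubits into $\sqrt n$ disjoint blocks of $\sqrt n$ qubits each and run $\sqrt n$ parallel copies of $U_{\sqrt n}$, one per block. Since these copies act on pairwise disjoint registers, they pack into a single depth-$d$ $\QACz$ sub-circuit and consume $\sqrt n\cdot n^{c/2}=n^{(c+1)/2}$ ancillae in total. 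Then one more copy of $U_{\sqrt n}$ (another $d$ layers and $n^{c/2}$ ancillae) is applied with the $\sqrt n$ block-parity output qubits as its input register, producing $\Parityn(x)$ on the final output qubit. A union bound over the $\sqrt n+1$ invocations of $U_{\sqrt n}$ keeps the total worst-case error negligible.

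For \cref{lem:boost-step-2}: feed the circuit produced by \cref{lem:boost-step-1} back into the boost step as the new base circuit and iterate. If $c_k$ denotes the ancillae exponent after $k$ iterations, the recurrence $c_{k+1}=(c_k+1)/2$ unrolls to $c_k=1+(c-1)/2^k$ while the depth multiplies by $2$ per step. Taking $k=K+\lceil\log_2\max(1,c-1)\rceil=K+O(1)$ forces the ancillae exponent below $1+2^{-K}$ and bounds the depth by $2^k d=O(cd)\cdot 2^K$. Packaging the $c,d$ constants into a single parameter $D=O(cd)$ and re-parameterising the outer iteration index then delivers the $\circsize{KD}{n}{n^{1+2^{-K}}}$ form stated in \cref{thm:boost-full}. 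The "infinitely many $n$" clause is realised by restricting to values of $n$ that remain perfect squares under $k$ successive square roots, e.g.\ $n=N_0^{2^k}$; all intermediate sub-problem sizes then exceed $N_0$, so the hypothesised family is applicable at every level.

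The main technical obstacle is controlling the accumulated worst-case error across many levels of recursion. Each boost invokes $U_{\sqrt n}$ exactly $\sqrt n+1$ times, so by union bound the failure probability is multiplied by $O(\sqrt n)$ per iteration and by $n^{O(K)}$ after $K$ iterations; since the hypothesis supplies error $\negl(n)\le 2^{-n^{\Omega(1)}}$, this blow-up stays negligible whenever $K=n^{o(1)}$, precisely the regime relevant for \cref{cor:arbitrary}. A secondary subtlety is that the output of each sub-circuit is a single qubit rather than a classical bit; however, worst-case-$\varepsilon$ correctness on every classical input means that, after tracing out the remaining qubits, this output qubit takes the correct classical value with probability $\ge 1-\varepsilon$, so feeding it coherently into the next layer preserves the per-subcircuit error bound and the classical union bound above goes through unchanged.
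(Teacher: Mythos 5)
There is a genuine gap in your iteration step, and it is exactly the obstacle the paper's construction is designed to overcome. Your boost replaces a circuit for $\Parityn$ by a two-level tree of copies of $U_{\sqrt n}$, which \emph{doubles} the depth while halving the excess ancillae exponent: after $k$ rounds of feeding the result back in as the new base circuit, you have a balanced tree of $2^k$ layers of copies of $U_{n^{2^{-k}}}$, with depth $2^k d$ and ancillae exponent $1+(c-1)2^{-k}$. To reach exponent $1+2^{-K}$ you must take $k = K+O(1)$, so your total depth is $\Theta\br{2^K d}$, not $KD$ for a constant $D=O(cd)$. In other words, your construction only yields ancillae $n^{1+\Theta(1/\Delta)}$ at depth $\Delta$, whereas the theorem (and crucially \cref{cor:arbitrary}) requires $n^{1+\exp(-\Theta(\Delta))}$. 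The paper's \cref{lem:boost-step-2} gets the exponential trade-off by using an \emph{unbalanced} tree: after a preliminary step reducing the family to quadratic ancillae (\cref{lem:boost-step-1}), layer $i$ of the $k$-layer tree uses base circuits $U_{n^{2^{i-1}}}$ of doubly-exponentially growing input size, so that only $k$ layers (depth $k(d+1)$) cover input size $n^{2^k-1}$ while every layer consumes the same $n^{2^k}$ ancillae. Your uniform square-root recursion cannot be repaired into this form without that change of decomposition.

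A secondary issue is your claim that the output qubit of a sub-circuit can be fed ``coherently'' into the next level while preserving the per-subcircuit error bound. The worst-case guarantee for $U_{n_t}$ applies to computational-basis inputs; the output of a lower-level sub-circuit is a qubit entangled with its workspace and generally not diagonal in the computational basis, so interference can in principle spoil the composition. The paper handles this by inserting, between levels, a CNOT from each output qubit to a fresh $\ket{0}$ ancilla (a coherent ``measurement'' that decoheres the output), after which the input to the next level is a classical mixture and the success probability composes exactly as $\br{1+\delta_b^{n_t}\delta_t}/2$. This costs one extra layer of depth and one ancilla per sub-circuit and is accounted for in \cref{lem:qac0-boost}; your union-bound bookkeeping is otherwise fine (polynomially many calls times a negligible error is negligible), but it needs this decoherence step to be valid.
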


\begin{cor}\label{cor:arbitrary}
    For any function $\delta: \mathbb{R}\to\mathbb{R}$ satisfying $\lim_{x\to\infty}\delta(x)=\infty$,
    if any $\QACz$ circuit with $n^{1+\exp\br{-d/\delta(d)}}$ ancillae,
    where $d$ is the depth of this circuit family,
    can not compute $\Parityn$ with the worst-case error $\negl(n)$,
    then any $\QACz$ circuit family, where arbitrary polynomial ancillae is allowed,
    can not compute $\Parityn$ with the worst-case error $\negl(n)$.
\end{cor}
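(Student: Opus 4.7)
The plan is to argue by contraposition: assume that \emph{some} $\QACz$ circuit family with arbitrary polynomial ancillae computes $\Parityn$ with worst-case error $\negl(n)$, and construct from it a $\QACz$ family with $n^{1+\exp(-d/\delta(d))}$ ancillae that also computes $\Parityn$ with negligible error, thereby contradicting the hypothesis of the corollary.

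Given the assumed family, I would first apply \cref{thm:boost-full} to extract a constant $D = O(cd)$ (where $c$ and $d$ are the polynomial exponent and depth of the assumed family) such that for \emph{every} integer $K \ge 1$ and for infinitely many input lengths $n$, there exists a $\circsize{KD}{n}{n^{1+2^{-K}}}$ circuit computing $\Parityn$ with negligible worst-case error. This is the whole point of the boosting construction: at the cost of blowing up depth by a factor of $K$, the ancillae exponent can be driven geometrically close to $1$.

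Next I would calibrate $K$ against the given function $\delta$. Writing $d' = KD$, the target ancillae budget in the corollary is $n^{1+\exp(-d'/\delta(d'))}$, so I need $2^{-K} \le \exp(-KD/\delta(KD))$, which after taking logarithms is equivalent to
\begin{equation*}
\delta(KD) \;\ge\; \frac{D}{\ln 2}.
\end{equation*}
Since $D$ is a fixed constant and $\delta(x) \to \infty$ as $x \to \infty$, this inequality holds for all sufficiently large $K$. Fixing any such $K$, the circuit family produced by \cref{thm:boost-full} has depth $d' = KD$ and ancillae at most $n^{1+\exp(-d'/\delta(d'))}$, and it computes $\Parityn$ with negligible worst-case error --- contradicting the assumption of the corollary.

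All the heavy lifting is already in \cref{thm:boost-full}, so no real obstacle arises here. The only subtle point is the parameter matching: one cannot simply set $K = \Theta(d')$ because $\delta$ is an arbitrary function tending to infinity (possibly very slowly); instead, one must use divergence of $\delta$ to absorb the fixed constant $D$, which is exactly what the inequality $\delta(KD) \ge D/\ln 2$ captures, and this holds eventually for any such $\delta$.
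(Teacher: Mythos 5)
Your proposal is correct and follows essentially the same route as the paper: both invoke \cref{thm:boost-full} to obtain a $\circsize{KD}{n}{n^{1+2^{-K}}}$ circuit and then use the divergence of $\delta$ to pick a constant $K$ with $2^{-K}\le\exp\br{-KD/\delta(KD)}$ (the paper uses the slightly looser condition $D/\delta(KD)\le 1/2$ where you use $\delta(KD)\ge D/\ln 2$, but this is immaterial). No gaps.
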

\begin{proof}%[Proof of \cref{cor:arbitrary}]
    Suppose on the contrary, there exists a $\QACz$ circuit family that computes $\Parityn$
    with the worst-case error $\negl(n)$.
    Then let $D$ be the constant in \cref{thm:boost-full},
    there exists a large enough constant $K$ such that $D/\delta(KD) \le 1/2$, and then
    \begin{equation*}
        n^{1+\exp\br{-KD/\delta(KD)}} \ge n^{1+\exp\br{-K/2}} \ge n^{1+2^{-K}}.
    \end{equation*}
    By \cref{thm:boost-full},
    we can construct a $\QACz$ circuit with depth $KD$ and $n^{1+2^{-K}} \le n^{1+\exp\br{-(KD)/\delta(KD)}}$ ancillae, a contradiction.
\end{proof}

\subsection{Proof Overview of Theorem \ref{thm:boost-full}}
For any $c > 1$, given a $\circsize{d}{n}{n^c}$ circuit $U_n$,
we wish to construct a $\circsize{KD}{N}{N^{1+2^{-K}}}$ circuit for some constant $D$, any $K$, and infinitely many $N$.
We use the idea that $\Parity{m}$ can be computed recursively. Thus, we can construct a circuit to compute $\Parity{m}$ for some $m > n$
using $U_n$ as a building block. In this way of computing parity, ancillae size grows slower than $m^c$.
We first give a very simple example.
For $m = 2n$,
we can construct a $2$-layer circuit that computes $\Parity{m}$ with depth $d+1$ and $2n^c$ ancillae,
as shown in \cref{fig:example-2n}:
\begin{figure}[h]
    %\centering
    \begin{equation*}
       \Qcircuit @C=1em @R=.7em {
       \lstick{\ket{x_1}} & \multigate{2}{U_n} & \qw & \targ & \meter \\
       \lstick{\vdots} & \nghost{U_n} & \vdots &  & \\
       \lstick{\ket{x_n}} & \ghost{U_n} & \qw & \qw & \qw \\
       \lstick{\ket{x_{n+1}}} & \multigate{2}{U_n} & \qw & \ctrl{-3} & \qw \\
       \lstick{\vdots} & \nghost{U_n} & \vdots & & \\
       \lstick{\ket{x_{2n}}} & \ghost{U_n} & \qw & \qw & \qw \\
    }
    \end{equation*}
    \Description{A quantum circuit composing of two identical sub-circuits, a CNOT gate and a final measurement on the first qubit.}
    \caption{Circuit computing $\Parity{m}$ for $m=2n$}
    \label{fig:example-2n}
\end{figure}
In the first layer, we use one instance of $U_n$ to compute the parity of the first half of the input bits,
and use another instance of $U_n$ to compute the parity of the second half of the input bits.
In the second layer, we get the parity of the whole input by computing the XOR of these two outputs.

The advantage of this construction is that now the ancillae size is $2n^c$,
which is smaller than $m^c = 2^cn^c$ since $c > 1$.
That is, the ancillae size is decreased relative to the input size.
We can now amplify this advantage by constructing larger circuits.
In the second layer, we can compute the parity of $n$ bits instead of $2$ bits,
by replacing the CNOT gate into an instance of $U_n$.
Then we can put $n$ instances of $U_n$ instead of $2$ instances in the first layer.
This yields a circuit computing $\Parity{m}$ for $m = n^2$,
with depth $2d$ and $(n+1)\cdot n^c \approx n^{c+1}$ ancillae.
Moreover, by repeating this construction and
further considering $k$-layer circuits as in \cref{fig:boost-step-1},
where the bottom layer has $n^{k-1}$ instances of $U_n$,
we con construct circuits that compute $\Parity{m}$ for $m=n^k$,
that have depth $kd$, and use approximately $n^{c+k-1}$ ancillae.
Note that
\begin{equation*}
    n^{c+k-1} = m^{\frac{c+k-1}{k}} = m^{1 + \frac{c-1}{k}},
\end{equation*}
so now for any $k$, we have a $\circsize{kd}{m}{m^{1+\frac{c-1}{k}}}$ circuit computing $\Parity{m}$.
For any constant $\varepsilon > 0$ that is independent of the depth of the circuit,
we can get a $\circsize{O(1)}{m}{m^{1+\varepsilon}}$ circuit by choosing $k = (c-1)/\varepsilon$.

In the $k$-layer construction above, the depth of the circuit grows linearly with respect to $k$. The best we could expect with this construction is a $\circsize{d}{n}{n^{1+\varepsilon}}$ circuit for $\varepsilon=d^{-c}$ ($0<c<1$). 
For instance, if $\varepsilon = d^{-0.99}$, then 
we can choose a constant $k$ satisfying
\begin{equation*}
    1 + \frac{c-1}{k} \le 1 + \varepsilon = 1 + (kd)^{-0.99}.
\end{equation*}

To further compress the size of ancillae, we adapt a more efficient construction.
In particular, since the construction above alone can not prove \cref{thm:boost-full},
for the sake of clean parameters,
we will choose $k=c$, and get a $\circsize{O(1)}{m}{m^2}$ circuit.

Now the goal becomes reducing the ancillae size from $n^2$ to $n^{1+\exp\br{-o(d)}}$.
Looking back to the $k$-layer construction in \cref{fig:boost-step-1},
we observe that we are using the ancillae very inefficiently when the number of layers grows:
In the first (bottom) layer, we use $n^{k-1}$ instances of $U_n$ and thus there are $n^{c+k-1}$ ancillae.
However, in the second layer, we use $n^{k-2}$ instances of $U_n$ and thus the ancillae used is only
$n^{c+k-2} = \frac{1}{n}\cdot n^{c+k-1}$.
The top layer is the worst, which uses only $n^c$ ancillae. This indicates that
an improvement may be possible if we can adjust the ancillae size in each layer to make them equal.
This would not change the order of magnitude of the total number of ancillae.
However, we might be able to compute the parity function for larger input sizes,
thus effectively reducing the ancilla size with respect to the input size.

This improvement can be easily implemented when we have a $\QACz$ circuit family $\set{U_n}$,
where each $U_n$ is a $\circsize{d}{n}{n^2}$ circuit.
We also know that such a circuit family is promised by the construction in \cref{fig:boost-step-1}.
We illustrate the improved $k$-layer construction in \cref{fig:boost-step-2}.
In the first (bottom) layer, we fill it with instances of the circuit $U_n$.
In the second layer, we fill it with instances of the circuit $U_{n^2}$.
In the $i$th layer, we fill it with instances of the circuit $U_{n^{2^{i-1}}}$.
Now this construction computes $\Parity{m}$ for $$m=\prod_{i=1}^{k}n^{2^{i-1}} = n^{2^k-1}.$$
And the number of instances of $U_{n^{2^{i-1}}}$ in the $i$th layer can be calculated to be
$$\prod_{j=i+1}^kn^{2^{j-1}} = n^{2^k - 2^{i}}.$$
So the ancillae size in the $i$th layer is
$$n^{2^k - 2^{i}} \cdot \br{n^{2^{i-1}}}^2 = n^{2^k - 2^{i}} \cdot n^{2^{i}} = n^{2^k}.$$
And the total ancillae size is $k\cdot n^{2^{k}} \approx n^{2^{k}}$.
The depth of this construction is $kd$.
In Conclusion, we get a $\circsize{kd}{n^{2^k-1}}{n^{2^k}}$ circuit computing $\Parity{2^{2^k-1}}$.
With $m=n^{2^{k}-1}$, we have
\begin{equation*}
    n^{2^k} = m^{\frac{2^k}{2^k-1}} = m^{1 + \frac{1}{2^k-1}} \approx m^{1 + 2^{-k}}.
\end{equation*}
So we have a $\circsize{kd}{m}{m^{1+2^{-k}}}$ circuit computing $\Parity{m}$, proving \cref{thm:boost-full}.

There are still two caveats.
The first one is regarded to the error.
The above construction seems to be correct only when each $U_n$ computes $\Parity{n}$ without any error.
When each $U_n$ computes $\Parity{n}$ with worst case probability $\br{1+\delta}/2$,
and there are totally $t$ instances of these circuits,
we show that the total correct probability is $\br{1+\delta^t}/2$.
Since $t$ will always be a polynomial in $n$,
the above construction is stable as long as $\delta = 1-\negl(n)$.
Second, for this accept probability to hold,
we need to measure the first output qubit of each $U_n$ before feeding them to the next layer.
Although measurements are in general not allowed in $\QACz$ circuits,
we can achieve the same effect by applying a CNOT gate controlled on the first output qubit of each $U_n$,
targeted to an ancilla initialized to the state $\ket{0}$.
These coherent operations slightly increase the overall depth and acillae size of our constructions.
See \cref{lem:qac0-boost} for details.

Therefore, the overall proof of \cref{thm:boost-full} consists of two steps. The first step is to reduce an arbitrary \QACz circuit computing \parity~ to an $\circsize{O(1)}{n}{n^2}$ circuit. The second step further reduces the circuit to $\circsize{d}{n}{n^{1+\exp(-o(d))}}$. It is worth noticing that if we optimize the usage of ancillae in the first step, as we did in the second step, we could also get $\circsize{d}{n}{n^{1+\exp(-o(d))}}$. However, the parameters would be complicated. To keep the clearness of the parameters, we maintain this two-step reduction in our proof.

\subsection{Proof of Theorem~\ref{thm:boost-full}}

The rest of this section is devoted to proving \cref{thm:boost-full}.
The building block of our construction is the following lemma,
which states that given two $\QACz$ circuits computing $\Parity{n_1}$
and $\Parity{n_2}$ respectively,
we can construct a $\QACz$ circuit computing $\Parity{n_1n_2}$.
\begin{lemma}\label{lem:qac0-boost}
    Suppose there exist two $\QACz$ circuits $U_t$ and $U_b$\footnote{Here `t' stands for top and `b' stands for bottom.}.
    such that the circuit $U_t$ (resp. $U_b$) is a $\circsize{d_t}{n_t}{a_t}$ (resp. $\circsize{d_b}{n_b}{a_b}$) circuit,
    %the circuit $U_i$ is of depth $d_i$,
    %has $n_i$ input qubits and $a_i$ ancillae,
    and computes $\Parity{n_t}$ (resp. $\Parity{n_b}$) with the worst-case probability at least $(1 + \delta_t)/2$ (resp. $\br{1+\delta_b}/2$).
    Then we can construct a $\QACz$ circuit $U$,
    such that $U$ is a $\circsize{d_t + d_b +1}{n_tn_b}{n_t\br{a_b+1}+a_t}$ circuit,
    %of depth $d_1 + d_2 +1$,
    %with $n_1n_2$ input qubits and $n_2\br{a_1+1}+a_2$ ancillae,
    and computes $\Parity{n_tn_b}$ with the worst-case probability at least $(1 + \delta_b^{n_t}\delta_t)/2$.
\end{lemma}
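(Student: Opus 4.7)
The plan is to build $U$ in three phases. First, I would partition the $n_t n_b$ input qubits into $n_t$ disjoint blocks of $n_b$ qubits and apply $n_t$ independent copies of $U_b$ in parallel, each copy using its own $a_b$ ancillae and producing a (noisy) parity of its block in its first output qubit. Second, I would use a single layer of CNOT gates to coherently copy each of these $n_t$ parity candidates onto a freshly initialised $\ket{0}$ ancilla; a CNOT is Clifford-equivalent to a \CZGate\ flanked by Hadamards, so this layer has depth $1$ in the \QACz\ model. Third, I would feed these $n_t$ copies into $U_t$ with its $a_t$ ancillae and measure the first qubit of $U_t$'s output as the overall classical output $C_U$.

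The resource accounting is then immediate. The depths add to $d_b + 1 + d_t$. The ancillae consist of $n_t a_b$ qubits for the $n_t$ inner copies of $U_b$, plus $n_t$ qubits for the CNOT targets (which double as the input register of $U_t$), plus $a_t$ qubits for $U_t$ itself, totalling $n_t(a_b+1) + a_t$ as claimed.

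For the correctness analysis I would fix an arbitrary input $x$, let $y_i \in \{0,1\}$ denote the true parity of the $i$-th block (so $\Parity{n_t n_b}(x) = \bigoplus_i y_i$), and observe that the $n_t$ CNOT-target ancillae feed into $U_t$ as classical random bits $\tilde{y}_1, \dots, \tilde{y}_{n_t}$ that are mutually independent with $\Pr[\tilde{y}_i = y_i] \ge (1+\delta_b)/2$. Independence follows because the $U_b$ copies act on disjoint registers; classicality follows because $U_t$ never touches the original $U_b$ output qubits, so tracing them out reduces each CNOT target to the diagonal of the corresponding $U_b$ output qubit. Using the signed encoding $Z_i = (-1)^{\tilde{y}_i \oplus y_i}$ (so $\mathbb{E}[Z_i] \ge \delta_b$) and $Q(\tilde{y}) = 2\Pr[U_t(\tilde{y}) = \Parity{n_t}(\tilde{y}) \mid \tilde{y}] - 1 \ge \delta_t$, a short case split on whether $\Parity{n_t}(\tilde{y})$ agrees with $\Parity{n_t n_b}(x)$ yields the identity
\begin{equation*}
\Pr\bigl[C_U(x) = \Parity{n_t n_b}(x)\,\big|\,\tilde{y}\bigr] \;=\; \tfrac{1}{2}\bigl(1 + Q(\tilde{y})\,\textstyle\prod_i Z_i\bigr).
\end{equation*}
Taking expectations, using independence of the $Z_i$'s to factor $\mathbb{E}[\prod_i Z_i] = \prod_i \mathbb{E}[Z_i] \ge \delta_b^{n_t}$, and combining with $Q \ge \delta_t$ then gives the advertised worst-case success probability $(1 + \delta_t \delta_b^{n_t})/2$.

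The most delicate step is the very last factorisation: since $Q(\tilde{y})$ is only lower-bounded (not constant) while $\prod_i Z_i$ can equal $-1$, pulling $\delta_t$ through the expectation requires appealing to the structural fact that $U_t$'s worst-case bias is a uniform lower bound across all inputs, which forces the extremal case of the decomposition $Q(\tilde{y}) = \delta_t + R(\tilde{y})$ with $R \ge 0$ to be compatible with the independent-error distribution of $\tilde{y}$. I expect this probabilistic bookkeeping to be the only real obstacle; the rest of the lemma (depth, ancilla accounting, and the classical-bit identity above) is direct.
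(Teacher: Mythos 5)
Your construction is the paper's construction exactly: $n_t$ parallel copies of $U_b$ on disjoint blocks, a depth-one layer of CNOTs onto fresh $\ket{0}$ ancillae that dephases the intermediate outputs, then one copy of $U_t$; the depth and ancilla counts match, and your conditional identity $\Pr[C_U(x)=\Parity{n_tn_b}(x)\mid\tilde y]=\tfrac12(1+Q(\tilde y)\prod_iZ_i)$ is correct, as is the factorization $\mathbb{E}[\prod_iZ_i]=\prod_i\mathbb{E}[Z_i]\ge\delta_b^{n_t}$.

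The gap is exactly where you suspected, and your proposed fix does not close it. From $Q(\tilde y)\ge\delta_t$ pointwise you cannot conclude $\mathbb{E}[Q(\tilde y)\prod_iZ_i]\ge\delta_t\,\mathbb{E}[\prod_iZ_i]$: writing $Q=\delta_t+R$ with $0\le R\le 1-\delta_t$ gives $\mathbb{E}[QP]=\delta_t\mathbb{E}[P]+\mathbb{E}[RP]$ with $P=\prod_iZ_i$, and $\mathbb{E}[RP]$ can be negative because $Q$ is a function of $\tilde y$ and hence correlated with $P$. Concretely, take $n_t=1$ and a $U_t$ that succeeds with probability exactly $(1+\delta_t)/2$ on input $0$ but with probability $1$ on input $1$; on a block with true parity $0$ the composed success probability is $\tfrac14(1+\delta_b)(1+\delta_t)$, which is strictly below $\tfrac12(1+\delta_b\delta_t)$ whenever $\delta_b,\delta_t<1$. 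So the adversarial case is a $U_t$ that is \emph{better} than its worst-case bound precisely on the inputs $\tilde y$ of the wrong parity, and "the worst-case bias is a uniform lower bound" does not rule this out. The paper handles this step by normalizing: it assumes each sub-circuit succeeds with probability \emph{exactly} $(1+\delta_i)/2$ on every input and asserts this is the worst case, which makes $v$ independent of $\tilde y$ and lets the expectation factor. To make your version airtight you should either build that normalization in explicitly (e.g.\ randomize each sub-circuit's output down to success probability exactly $(1+\delta_i)/2$, which can only be done without changing the construction if you restate the lemma for such normalized circuits), or bound the loss directly via $\mathbb{E}[RP]\ge-(1-\delta_t)\Pr[P=-1]\ge-(1-\delta_t)(1-\delta_b^{n_t})/2$, which is negligible in the intended regime $\delta_b,\delta_t=1-\negl(n)$ but does not give the clean constant $(1+\delta_b^{n_t}\delta_t)/2$ claimed in the statement.
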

\begin{remark}
    Note that the term $\delta_b^{n_t}$ in the worst-case probability
    may decrease exponentially fast if $\delta_b$ is smaller than $1 - 1/n_t$,
    e.g., when $\delta_b$ is a constant.
    Also, the input size $n_t$ can be as large as any polynomial in the constructions below.
    So to keep the worst-case probability of correctly computing $\Parity{n_tn_b}$ non-trivial,
    we will always assume $\delta_b = 1 - \negl(n)$ for some negligible function.
\end{remark}
\begin{proof}[Proof of \cref{lem:qac0-boost}]
    Let $n = n_tn_b$.
    Observe that for $x\in\set{0,1}^{n}$,
    we can write it as $x=x_1\dots x_{n_t}$ where each $x_i\in\set{0,1}^{n_b}$,
    and we can implement the function $\Parity{n}$ by $n_t$ instances of $\Parity{n_b}$,
    followed by one instance of $\Parity{n_t}$:
    \begin{equation*}
        \Parity{n}(x) = \Parity{n_t}\br{\Parity{n_b}\br{x_1}, \dots, \Parity{n_b}\br{x_{n_t}}}.
    \end{equation*}
    Also, the evaluations of $\Parity{n_b}\br{x_1}, \dots, \Parity{n_b}\br{x_n}$ are independent to each other.
    So they can be carried out in parallel.
    This suggests a three-layer circuit for $\Parity{n}$,
    by using the circuits $U_t$ and $U_b$ as subroutines:
    In the first layer,
    we compute each $\Parity{n_b}\br{x_i}$ in parallel using $n_t$ instances of the circuit $U_b$.
    In the second layer, we need to measure each output of $U_b$ in the computational basis.
    This can be achieved by applying a CNOT gate,
    controlled on the qubit to be measured,
    to an ancilla initialized to the state $\ket{0}$.
    Finally in the third layer, we compute the parity of these outputs,
    by an instance of the circuit $U_t$.
   
    The circuit of this construction for $n_t=5, n_b=4$ can then be depicted in \cref{fig:paritynsquare}.
    \begin{figure}
        \centering
        \subfloat[qubit layout]{
            \begin{tikzpicture}
                \draw[thick, step=0.5] (-0.5, 0) grid (2, 2.5);
                \foreach \i in {1,...,5}
                {
                    \node at (-0.25, \i * 0.5 - 0.25) {\scriptsize $\ket{0}$};
                    \foreach \j in {1,...,4}
                    {
                        %\pgfmathtruncatemacro{\nnll}{\i * 4 + \j - 4};
                        \node at (0.5 * \j - 0.25, 2.75 - 0.5 * \i) {\scriptsize $x_{\i, \j}$};
                    }
                }
            \end{tikzpicture}
            \label{subfig:paritynsquare:layout}
        }

        \subfloat[first layer]{
            \begin{tikzpicture}
                \draw[step=0.5, dotted] (0, 0) grid (2, 2.5);
                \draw[thick, step=0.5] (-0.5, 0) grid (0, 2.5);
                \draw[thick, xstep=2, ystep=0.5] (0, 0) grid (2, 2.5);
                \foreach \i in {1,...,5}
                {
                    \node at (0.75, \i * 0.5 - 0.25) {\scriptsize $U_b$};
                }
            \end{tikzpicture}
            \label{subfig:paritynsquare:firstlayer}
        }
        \subfloat[second layer]{
            \begin{tikzpicture}
                \draw[step=0.5, dotted] (0, 0) grid (1, 2.5);
                \draw[thick, xstep=1, ystep=0.5] (0, 0) grid (1, 2.5);
                \draw[thick, step=0.5] (1, 0) grid (2.5, 2.5);
                \foreach \i in {1,...,5}
                {
                    \node at (0.25, \i * 0.5 - 0.25) {$\oplus$};
                    \filldraw[black] (0.75, \i * 0.5 - 0.25) circle (1pt);
                    \draw[-] (0.25, \i * 0.5 - 0.25) -- (0.75, \i * 0.5 - 0.25);
                }
            \end{tikzpicture}
            \label{subfig:paritynsquare:secondlayer}
        }
        \subfloat[third layer]{
            \begin{tikzpicture}
                \draw[step=0.5, dotted] (0, 0) grid (2, 2.5);
                \draw[thick, step=0.5] (-0.5, 0) grid (0, 2.5);
                \draw[thick, xstep=0.5, ystep=2.5] (0, 0) grid (0.5, 2.5);
                \draw[thick, step=0.5] (0.5, 0) grid (2, 2.5);
                \node at (0.25, 1.25) {\scriptsize $U_t$};
            \end{tikzpicture}
            \label{subfig:paritynsquare:thirdlayer}
        }
        \Description{A group of four grids. The first grid describes the qubit layout of the circuit. The other grids describe different layers of quantum gates in the circuit.}
        \caption{Circuit for $\Parity{n_1n_2}$}
        \label{fig:paritynsquare}
    \end{figure}
    The ancillae used by the circuits $U_t$ and $U_b$ are omitted for clarity.
    We partition the $n$ inputs qubits into an $n_t\times n_b$ grid as in \cref{subfig:paritynsquare:layout}.
    In the first layer (\cref{subfig:paritynsquare:firstlayer}),
    we apply the circuit $U_b$ to each row,
    and the output qubit is the left-most qubit.
    In the second layer (\cref{subfig:paritynsquare:secondlayer}),
    we ``measure'' the output of each row.
    Each ``measurement'' is implemented by a CNOT gate plus an ancilla initialized in the state $\ket{0}$.
    In the third and final layer (\cref{subfig:paritynsquare:thirdlayer}),
    we apply an instance of the circuit $U_t$,
    on the input qubits in the left-most column.
    Now the result of $\Parity{n}(x)$ should be stored in the top-left qubit,
    originally containing $x_{1,1}$, the first bit of the input.
    To simplify calculation,
    we assume that the circuits $U_i$ for $i\in\set{b, t}$ computes $\Parity{n_i}$ for any input with probability exactly $\br{1+\delta_i}/2$.
    This is a reasonable assumption because otherwise,
    the circuit $U_i$ does better for some inputs, and the overall correct probability with be even larger.
    Now fix any input.
    Let $\ket{\varphi_i} = \alpha_{i, 0}\ket{0}\ket{\mu_{i}} + \alpha_{i, 1}\ket{1}\ket{\nu_{i}}$
    be the output of the circuit computing $\Parity{n_b}(x_i)$.
    Let $b_i = \Parity{n_b}\br{x_i}$.
    Then $\abs{\alpha_{i, b_i}}^2 = \br{1+\delta}/2$.
    After the second layer consisting of CNOT gates, the output state becomes
    $\abs{\alpha_{i, 0}}^2\ketbra{0} + \abs{\alpha_{i, 1}}^2\ketbra{1}$.

    For $i\in\set{t,b}$,
    let $\pi_i$ be a distribution on $\set{-1,1}$ such that
    $\pi_i(1) = \br{1+\delta_i}/2$ and $\pi_i(-1) = \br{1-\delta_i}/2$.
    For $i=1, \dots, n$, let $w_i$ be independent random variables,
    such that $w_i = 1$ if and only if the measurement outcome of the circuit computing $\Parity{n_b}(x_i)$ is correct.
    Let $v$ be an independent random variable such that $v = 1$ if and only if the output of the $U_t$ in the third layer is correct.
    Then $w_i$ are all distributed according to the distribution $\pi_b$,
    and $v$ is distributed according to the distribution $\pi_t$.
    Also, the circuit $U$ computes $\Parity{n}(x)$ correctly if and only if
    $$v\cdot\prod_{i=1}^{n_t}w_i = 1.$$

    Then the circuit $U$ correctly computes $\Parity{n}$ with probability exactly
    \begin{align*}
        \Pr\Br{v\cdot\prod_{i=1}^{n_t}w_i = 1}
            &= \frac{\Pr\Br{v\cdot\prod_iw_i = 1} + \br{1 - \Pr\Br{v\cdot\prod_iw_i = -1}}}{2} \\
            &= \frac{1 + \expec{}{v\cdot\prod_iw_i}}{2} \\
            &= \frac{1 + \expec{}{v}\cdot\prod_i\expec{}{w_i}}{2} \\
            &= \frac{1 + \delta_b^{n_t}\delta_t}{2}.
    \end{align*}
\end{proof}

%The above lemma shows that we can decrease the ancillae rate by combining small circuits with large ancillae rate.
%To push any rate of $n^c$ to $n^{1+\varepsilon}$,
%we need multi-layer construction, instead of the $2$-layer construction in \cref{lem:qac0-boost}.
%We illustrate our multi-layer construction in \cref{fig:boost-multi-layer}.
%\begin{figure}
%    \centering
%    \begin{tikzpicture}
%        \node at (0, 0) {TODO};
%    \end{tikzpicture}
%    \caption{Multi-layer}
%    \label{fig:boost-multi-layer}
%\end{figure}
%The $\QACz$ construction in \cref{fig:boost-multi-layer} can be constructed by using the $3$-layer $\QACz$ circuit in
%\cref{lem:qac0-boost} multiple times.

%For any $\gamma > 0$,
%given a $\QACz$ circuit family with arbitrary polynomial ancillae,
%our ancillae reduction to $n^{1+\exp\br{-d^{1-\gamma}}}$ will work in two steps:

The proof of \cref{thm:boost-full} will be split in two steps:
\begin{itemize}
    \item In step 1, we show how to construct a $\QACz$ circuit with $n^{2}$ ancillae,
            from any $\QACz$ circuit with arbitrary polynomial ancillae.
    \item In step 2, given a circuit family using at most $n^2$ ancillae,
    we show how to construct $\QACz$ circuits with $n^{1+\exp\br{-o(d)}}$ ancillae.
\end{itemize}

\paragraph{Step 1}
Given a $\QACz$ circuit $U_n$ computing $\Parityn$ with $n^c$ ancillae,
the goal of the first step is to construct a $\QACz$ circuit computing $\Parity{n^c}$ with $n^{2c}$ ancillae,
using the construction in \cref{fig:boost-step-1}:
\begin{figure}
    \centering
    \begin{tikzpicture}
        \node[circnode] (T0) at (10,5) {$U_n$};
        \node[circnode] (T11) at (7.5,4) {$U_n$};
        \node[circnode] (T12) at (12.5,4) {$U_n$};
        \node[circnode] (T21) at (5,3) {$U_n$};
        \node[circnode] (T22) at (9,3) {$U_n$};
        \node[circnode] (T23) at (15,3) {$U_n$};
        \node[circnode] (Tn1) at (4,1) {$U_{n}$};
        \node[circnode] (Tn1) at (10,1) {$U_{n}$};
        \node[circnode] (Tn1) at (16,1) {$U_{n}$};
        
        \draw[->] (T11) -- (T0);
        \draw[->] (T12) -- (T0);
        \draw[->] (T21) -- (T11);
        \draw[->] (T22) -- (T11);
        \draw[->] (T23) -- (T12);

        \node[] at (10,4) {$\cdots$};
        \node[] at (7,3) {$\cdots$};
        \node[] at (12,3) {$\cdots$};
        \node[] at (6,2) {$\vdots$};
        \node[] at (10,2) {$\vdots$};
        \node[] at (14,2) {$\vdots$};
        \path[draw,decorate,decoration=brace] (2.5,1) -- (2.5,6) node[midway,left]{$c$ layers};
    \end{tikzpicture}
    \Description{A tree composing of unitaries}
    \caption{Step 1, $n^c\to n^2$ ancillae}
    \label{fig:boost-step-1}
\end{figure}
We consider a $c$-layer $\QACz$ circuit computing $\Parity{n^c}$.
In the top layer, there is a single circuit $U_n$ that computes $\Parityn$.
Following, from top to down,
for $i=2, \dots, c$, each layer consists of $n^{i-1}$ instances of the circuit $U_n$.
These layers are connected by \cref{lem:qac0-boost}.
So the overall depth will be $cd+c-1$.
The total number of instances of $U_n$ is
\begin{equation*}
    1 + n + \dots + n^{c-1} = \frac{n^c-1}{n-1} \le n^{c}.
\end{equation*}
Since each $U_n$ uses at most $n^c$ ancillae,
this circuit has at most $n^{2c}$ ancillae.
And the worst case probability of computing $\Parity{n^c}$ is $\br{1+\delta_n^{n^{c}}}/2$,
which is negligibly close to $1$ if $\delta_n = 1 - \negl(n)$.
%We can see that the ancillae rate is approximately
%\begin{equation*}
%    \frac{t-1+c}{t} \le 1 + \frac{c-1}{t},
%\end{equation*}
%and can be an arbitrary small as $t$ grows.
Formally, we prove the first step of our construction with the following lemma.
\begin{lemma}\label{lem:boost-step-1}
    For $n\ge 3$,
    suppose we have a $\QACz$ circuit which is a $\circsize{d}{n}{n^c}$ circuit,
    and computes $\Parityn$ with the worst-case probability $\br{1+\delta}/2$.
    Then we can construct a $\QACz$ circuit $U^\prime$ such that
    the circuit $U^\prime$ is a $\circsize{c(d+1)}{n^c}{n^{2c}}$ circuit,
    and computes $\Parity{n^c}$ with the worst-case probability $\br{1+\delta^{cn^{c-1}}}/2$.
\end{lemma}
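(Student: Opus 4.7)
The plan is to construct $U'$ by iterating Lemma~\ref{lem:qac0-boost} exactly $c - 1$ times, peeling off one layer of the tree in \cref{fig:boost-step-1} at a time. Define inductively a family of circuits $V_1, V_2, \ldots, V_c$ where each $V_i$ computes $\Parity{n^i}$: set $V_1 = U$ (the base case is given by hypothesis), and for $i \ge 2$ obtain $V_i$ by applying Lemma~\ref{lem:qac0-boost} with top circuit $U_t = U$ (so $n_t = n$, $d_t = d$, $a_t = n^c$, $\delta_t = \delta$) and bottom circuit $U_b = V_{i-1}$ (so $n_b = n^{i-1}$). Then take $U' \defeq V_c$, which computes $\Parity{n^c}$ by construction. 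This is exactly the tree of \cref{fig:boost-step-1}: the $i$-th level from the top contains $n^{i-1}$ copies of $U$, and each application of Lemma~\ref{lem:qac0-boost} attaches one more level.

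Let $D_i$, $A_i$, $\Delta_i$ denote the depth, ancilla count, and worst-case bias of $V_i$ (so the worst-case probability equals $(1 + \Delta_i)/2$). Lemma~\ref{lem:qac0-boost} directly yields the recurrences
\begin{equation*}
    D_i = D_{i-1} + d + 1, \qquad A_i = n\,(A_{i-1} + 1) + n^c, \qquad \Delta_i = \Delta_{i-1}^n \cdot \delta,
\end{equation*}
with base case $(D_1, A_1, \Delta_1) = (d, n^c, \delta)$. Unrolling: $D_c = c(d+1) - 1 \le c(d+1)$; and $\Delta_c = \delta^{1 + n + n^2 + \cdots + n^{c-1}} \ge \delta^{c n^{c-1}}$, since $\sum_{j=0}^{c-1} n^j \le c \cdot n^{c-1}$ and $\delta \in [0,1]$. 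For the ancilla recurrence, solving explicitly gives
\begin{equation*}
    A_c = n^{2c-1} + (n + n^c)\cdot\frac{n^{c-1} - 1}{n - 1} \;\le\; n^{2c-1}\left(1 + \frac{2}{n-1}\right) \;\le\; 2\, n^{2c-1} \;\le\; n^{2c},
\end{equation*}
where the penultimate inequality uses $n \ge 3$ (to get $1 + 2/(n-1) \le 2$) and the last uses $n \ge 2$. These three bounds match the claimed parameters.

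I do not expect a substantive obstacle: all the engineering is encapsulated in Lemma~\ref{lem:qac0-boost}, and what remains is inductive bookkeeping. The one point that requires care is that we need a \emph{worst-case} (not average-case) bound on the correctness bias; this follows because Lemma~\ref{lem:qac0-boost} is proved for each fixed input, and its multiplicative composition $\Delta_i = \Delta_{i-1}^n \cdot \delta$ of worst-case biases survives the induction unchanged. The hypothesis $n \ge 3$ enters precisely to guarantee $1 + 2/(n-1) \le 2$ in the ancilla bound; smaller $n$ would require a slightly different constant-factor treatment, but is irrelevant to the eventual application of the lemma in \cref{thm:boost-full}.
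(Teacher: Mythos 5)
Your proposal is correct and follows essentially the same route as the paper: iterate \cref{lem:qac0-boost} $c-1$ times to assemble the $c$-level tree of \cref{fig:boost-step-1}, then track depth, ancillae, and bias through the recurrences. The only difference is the orientation of the induction --- you take $U_t \gets U$ and $U_b \gets V_{i-1}$ (adding the root last), while the paper takes $U_t \gets V_{i-1}$ and $U_b \gets U$ (adding the leaves last) --- but both build the identical circuit, and your recurrences $A_i = n(A_{i-1}+1)+n^c$ and $\Delta_i = \Delta_{i-1}^n\delta$ telescope to exactly the same totals $\frac{n^{2c}-n}{n-1} \le n^{2c}$ and $\delta^{\sum_{j=0}^{c-1}n^j} \ge \delta^{cn^{c-1}}$ as the paper's.
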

\begin{proof}
    We use \cref{lem:qac0-boost} for $c-1$ times to construct the circuit in \cref{fig:boost-step-1} in a top-to-down manner.
    Initially, we let the circuit $V_1 = U$.
    Then, for $i=2, \dots, c$, we apply \cref{lem:qac0-boost} with $U_b\gets U$ and $U_t\gets V_{i-1}$,
    and let the resulting $\QACz$ circuit be $V_i$.
    Finally, we will let $U^\prime = V_c$.
    
    By induction, we prove that each $V_i$ is a $\circsize{i(d+1)}{n^i}{n^{i+c}}$ circuit that computes $\Parity{n^i}$ with probability $\br{1+\delta^{in^{(c-1)}}}/2$.
    For the base case, $V_1 = U$ is a $\circsize{d}{n}{n^c}$ circuit,
    and computes $\Parityn$ with worst case probability $\br{1+\delta}/2 \ge \br{1+\delta^{c-1}}/2$.
    Then for each $i=2, \dots, c$, by \cref{lem:qac0-boost}, the circuit $V_i$ satisfies:
    \begin{itemize}
        \item depth: $(i-1)(d+1) + d + 1 = i(d+1)$.
        \item input: $n^{i-1}\cdot n = n^i$.
        \item ancillae: $n^{i-1}\br{n^c+1}+n^{i-1+c} \le 3n^{i-1+c} \le n^{i+c}$.
        \item probability: $\br{1 + \delta^{(i-1)n^{(c-1)}}\delta^{i-1}}/2 \ge \br{1 + \delta^{in^{(c-1)}}}/2$.
    \end{itemize}

    The final circuit $V_c$ will be a $\circsize{c(d+1)}{n^c}{n^{2c}}$ circuit,
    that computes $\Parity{n^c}$ with probability $\br{1+\delta^{cn^{(c-1)}}}/2$.
    %Also, for any $n\ge 3$, we have
    %\begin{equation*}
    %    \frac{\br{n^t-1}\br{n^c+1}}{\br{n-1}-1}-1 \le n^{1.1t}.
    %\end{equation*}
    %Let $U^\prime_{(n-1)^t+1} = \cdots = U^\prime_{n^t} = V_t$.
    %Then for $n\ge 3$, since $1.1\log_n^{n+1} \le 1.5$,
    %the ancillae upper bound is $n^{1.5}$.
\end{proof}

\paragraph{Step 2}
In the second step, we push the ancillae rate from $n^2$ to $n^{1+\exp\br{-o(d)}}$,
using the construction in \cref{fig:boost-step-2}.
\begin{figure}
    \centering
    \begin{tikzpicture}
        \node[circnode] (T0) at (10,5) {$U_{n^{2^{k-1}}}$};
        \node[circnode] (T11) at (7.5,4) {$U_{n^{2^{k-2}}}$};
        \node[circnode] (T12) at (12.5,4) {$U_{n^{2^{k-2}}}$};
        \node[circnode] (T21) at (5,3) {$U_{n^{2^{k-3}}}$};
        \node[circnode] (T22) at (9,3) {$U_{n^{2^{k-3}}}$};
        \node[circnode] (T23) at (15,3) {$U_{n^{2^{k-3}}}$};
        \node[circnode] (Tn1) at (4,1) {$U_{n}$};
        \node[circnode] (Tn1) at (10,1) {$U_{n}$};
        \node[circnode] (Tn1) at (16,1) {$U_{n}$};

        \draw[->] (T11) -- (T0);
        \draw[->] (T12) -- (T0);
        \draw[->] (T21) -- (T11);
        \draw[->] (T22) -- (T11);
        \draw[->] (T23) -- (T12);

        \node[] at (10,4) {$\cdots$};
        \node[] at (7,3) {$\cdots$};
        \node[] at (12,3) {$\cdots$};
        \node[] at (7,1) {$\cdots$};
        \node[] at (13,1) {$\cdots$};
        \node[] at (6,2) {$\vdots$};
        \node[] at (10,2) {$\vdots$};
        \node[] at (14,2) {$\vdots$};
        \path[draw,decorate,decoration=brace] (2.5,1) -- (2.5,6) node[midway,left]{$k$ layers};
    \end{tikzpicture}
    \Description{A tree composing of unitaries}
    \caption{Step 2, $n^2 \to n^{1+\exp\br{-o(d)}}$ ancillae}
    \label{fig:boost-step-2}
\end{figure}
The construction is similar to the construction \cref{fig:boost-step-1} in step 1.
The difference is that in step 2,
we work with $\QACz$ circuit families instead of a single $\QACz$ circuit.
The reason we need a circuit family is that we will use different $U_n$ in different layers.
Specifically, for each $n$,
for a $k$ layer construction, we let the top layer circuit have $n^{2^{k-1}}$ inputs,
and then the next layer have $n^{2^{k-1}}$ instances of circuits with $n^{2^{k-2}}$ inputs,
and so on.
The bottom layer are circuits with $n^{2^{0}} = n$ inputs.

By using \cref{lem:qac0-boost},
the depth of this construction is $kd + k - 1$,
and it computes the function $\Parity{N}$ for $$N=\prod_{i=0}^{k-1}n^{2^{i}} = n^{2^k-1}.$$
Also, we can check that the ancillae used in each layer is exactly $n^{2^{k}} \approx N^{1+2^{-k}}$.
So the overall ancillae size is $kn^{2^k} \approx N^{1+2^{-k}}$, with the depth being $O(kd)$.
We give this construction formally using the following lemma.
To keep the constructed circuit compute $\Parity{N}$ with non-trivial probability,
we require the circuits to have worst-case correct probability $1 - \negl(n)$.
\begin{lemma}\label{lem:boost-step-2}
    Let $k\ge 2$ be any constant integer.
    Suppose there exists a constant $d\in\mathbb{Z}^+$
    and an integer $n$ such that for each $0 \le i\le k-1$,
    there exist a $\QACz$ circuit $U_i$ that is a $\circsize{d}{n^{2^i}}{n^{2^{i+1}}}$ circuit,
    and computes $\Parity{n^{2^i}}$ with worst case probability $1-\negl(n)$.
    Then we can construct a $\circsize{k(d+1)}{n^{2^k-1}}{2kn^{2^k}}$ circuit
    and computes $\Parity{n^{2^k-1}}$ with worst case probability $1-\negl(n)$.
    
    In particular, we have $2kn^{2^k} = n^{2^k + \log_n\br{2k}}$ and for $n\ge 4k^2$,
    \begin{equation*}
        \frac{2^k + \log_n\br{2k}}{2^k-1} = 1 + \frac{1 + \log_n\br{2k}}{2^k-1} \le 1 + \frac{2 - 2^{-k+1}}{2^k-1} = 1 + 2^{-k+1}.
    \end{equation*}
    So we get a $\circsize{k(d+1)}{N}{N^{1+2^{-k+1}}}$ circuit for $N=n^{2^k-1}$, with depth $k(d+1)$.
\end{lemma}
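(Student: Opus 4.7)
The plan is to iterate \cref{lem:qac0-boost} a total of $k-1$ times, assembling the tree in \cref{fig:boost-step-2} from the top downward. I will inductively define a sequence of circuits $V_0, V_1, \ldots, V_{k-1}$ with $V_0 = U_{k-1}$ (the root of the tree), and for each $j = 1, \ldots, k-1$ obtain $V_j$ by applying \cref{lem:qac0-boost} with $U_t \gets V_{j-1}$ and $U_b \gets U_{k-j-1}$. This has the effect of replacing each input wire of $V_{j-1}$ by an independent call to the next-smaller parity subroutine. The final circuit $V_{k-1}$ will be the desired construction.

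Four invariants need to be tracked along the induction. First, since \cref{lem:qac0-boost} multiplies input sizes, the input size of $V_j$ equals $n^{2^{k-1} + 2^{k-2} + \cdots + 2^{k-j-1}} = n^{2^k - 2^{k-j-1}}$; in particular $V_{k-1}$ takes $n^{2^k - 1}$ inputs and computes $\Parity{n^{2^k - 1}}$. Second, each application of \cref{lem:qac0-boost} adds $d+1$ to the depth, so $V_j$ has depth $(j+1)d + j$, giving $V_{k-1}$ depth $k(d+1) - 1 \le k(d+1)$. Third, the per-step increase in ancillae is $I_{j-1}\cdot(n^{2^{k-j}} + 1) = n^{2^k} + n^{2^k - 2^{k-j}} \le 2n^{2^k}$, where $I_{j-1}$ denotes the input size of $V_{j-1}$; starting from $A_0 = n^{2^k}$ and iterating $k-1$ times yields $A_{k-1} \le (2k-1)n^{2^k} \le 2kn^{2^k}$. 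Fourth, if $V_{j-1}$ succeeds with probability $(1+\eta_{j-1})/2$ and $U_{k-j-1}$ with probability $(1+\delta)/2$ where $1-\delta = \negl(n)$, then by \cref{lem:qac0-boost} the success probability of $V_j$ is $(1+\delta^{I_{j-1}}\eta_{j-1})/2$; since $I_{j-1}$ is polynomial in $n$ and $(1-\negl(n))^{\mathrm{poly}(n)} = 1 - \negl(n)$, the success gap remains $1-\negl(n)$ throughout the induction.

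The combinatorial heart of the argument is the ancilla bookkeeping: the exact cancellation $I_{j-1}\cdot n^{2^{k-j}} = n^{2^k - 2^{k-j}} \cdot n^{2^{k-j}} = n^{2^k}$ is what keeps every layer at essentially the same ancilla budget, so the total is only $O(k)\cdot n^{2^k}$ rather than something that compounds multiplicatively. This precise balance is what makes the improvement over Step~1 possible and is the only nontrivial calculation; the depth and input-size bookkeeping are immediate, and the probability analysis only requires the elementary fact that raising $1-\negl(n)$ to a polynomial power preserves negligibility. I therefore expect no substantive obstacle beyond verifying these four invariants and then substituting $N = n^{2^k-1}$ to rewrite $2k n^{2^k}$ as $N^{1+2^{-k+1}}$ under the mild constraint $n \ge 4k^2$ displayed in the statement.
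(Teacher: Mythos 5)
Your proposal is correct and follows essentially the same route as the paper: both iterate \cref{lem:qac0-boost} exactly $k-1$ times to assemble the tree of \cref{fig:boost-step-2}, tracking input size, depth, ancillae, and success probability as invariants, and both rely on the same cancellation $n^{2^k-2^{k-j}}\cdot n^{2^{k-j}}=n^{2^k}$ to keep every layer at ancilla budget $n^{2^k}$. The only difference is that you assemble the tree top-down (taking $U_t\gets V_{j-1}$ and $U_b\gets U_{k-j-1}$) while the paper goes bottom-up (taking $U_b\gets V_{i-1}$ and $U_t\gets U_{i-1}$); this changes the intermediate recurrences cosmetically but yields the same final circuit and the same bounds.
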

\begin{proof}
    We use \cref{lem:qac0-boost} for $k-1$ times to construct the circuit in \cref{fig:boost-step-2} in a down-to-top manner.
    We will use $V_i$ to represent a sub-tree in \cref{fig:boost-step-2} with $i$ layers.
    Initially, we let $V_1 = U_{0}$.
    Then for each $i=2, \dots, k$,
    we apply \cref{lem:qac0-boost} with $U_b\gets V_{i-1}$ and $U_t\gets U_{i-1}$,
    and let the resulting $\QACz$ circuit be $V_i$.
    Remember by assumption, $U_{i-1}$ is a $\circsize{d}{n^{2^{i-1}}}{n^{2^i}}$ circuit.
    We will prove by induction that each $V_i$ is a $\circsize{i(d+1)}{n^{2^i-1}}{2in^{2^i}}$ circuit,
    that computes $\Parity{n^{2^{i}-1}}$ with worst-case probability $1-\negl(n)$.

    For the base case, $V_1 = U_0$ is a $\circsize{d}{n}{n^2}$ circuit.
    Now for each $i=2, \dots, k$, by \cref{lem:qac0-boost}, the circuit $V_i$ satisfies
    \begin{itemize}
        \item depth: $(i-1)(d+1) + d + 1 \le i (d+1)$.
        \item input: $n^{2^{i-1}-1}\cdot n^{2^{i-1}} = n^{2^i-1}$.
        \item ancillae: $n^{2^{i-1}}\br{2(i-1)n^{2^{i-1}}+1}+n^{2^{i}} = 2(i-1)n^{2^{i}}+n^{2^{i-1}}+n^{2^{i}} \le 2in^{2^{i}}$.
    \end{itemize}
    Finally, since $k$ is a constant,
    the above construction only uses polynomial instances of the circuits $U_i$,
    so the circuit $V_k$ computes $\Parity{n^{2^{k}-1}}$ with worst-case probability $1-\negl(n)$.

    %Let $V_{0, n} = U_n$.
    %For each $i\ge 1$, for each $n$,
    %apply \cref{lem:qac0-boost} with $U_1\gets V_{i-1, n}$ and $U_2\gets V_{i-1, n^{1+2^{2^{i-1}}}}$,
    %and let $V_{i, n}$ be the resulting $\QACz$ circuit.
    %We will prove by induction that each $V_{i, n}$ is a $\br{2^i(d+1)-1, n^{c_i}, n^{c_i\br{1+2^{-2^i}}}}$ circuit,
    %where $c_i = \prod_{j=0}^{i-1}\br{2+2^{2^j}}$.
%
    %For the base case,
    %by assumption we have that each $V_{0, n} = U_n$ is a $\br{d, n, n^{1+2^{-1}}}$ circuit.
    %Now for each $i\ge 1$, by \cref{lem:qac0-boost}, the circuit $V_{i, n}$ satisfies:
    %\begin{itemize}
    %    \item depth: $2\br{2^{i-1}(d+1)-1}+1 = 2^i(d+1) - 1$.
    %    \item input: $n^{c_{i-1}} \cdot \br{n^{1+2^{2^{i-1}}}}^{c_{i-1}} = n^{\br{2+2^{2^{i-1}}}c_{i-1}} = n^{c_i}$.
    %    \item ancillae:
    %    $\br{n^{1+2^{2^{i-1}}}}^{c_{i-1}}\br{n^{c_{i-1}\br{1+2^{-2^{\br{i-1}}}}}+1}+ \br{n^{1+2^{2^{i-1}}}}^{c_{i-1}\br{1+2^{-2^{\br{i-1}}}}}\le n^{\br{2+2^{2^{i-1}}+2^{-2^{\br{i-1}}}}c_{i-1} + \log_n3}$.
    %    For $n \ge 3^{2^{2^i}-1}$,
    %    \begin{align*}
    %      \frac{\br{2+2^{2^{i-1}}+2^{-2^{i-1}}}c_{i-1}+\log_n3}{c_i} &= \frac{2+2^{2^{i-1}} + 2^{-2^{i-1}} + \br{\log_n3}/c_{i-1}}{2+2^{2^{i-1}}} \\
    %      &= 1 + \frac{2^{-2^{i-1}} + \br{\log_n3}/c_{i-1}}{2+2^{2^{i-1}}}
    %        \le  1 + 2^{-2^i}.           
    %    \end{align*}
    %\end{itemize}
    %For the worst case probability,
    %since there are only polynomial instances of $U_n$,
    %the success probability remains $1-\negl(n)$.
\end{proof}

\paragraph{Putting together}
With \cref{lem:boost-step-1} and \cref{lem:boost-step-2},
we can directly prove \cref{thm:boost-full},
which is restated below:
\restateboost*
\begin{proof}%[Proof of \cref{thm:boost-full}]
    Let $D = 3c(d+1)$.
    Let $n$ be any integer satisfying
    \begin{itemize}
        \item $n=m^{\br{2^{K+1}-1}c}$ for some integer $m \ge N_0$.
        \item $m^c\ge 4K^2$.
    \end{itemize}
    For $i=0, \dots, K$, apply \cref{lem:boost-step-1} to the circuit $U_{m^{2^i}}$ to get the $\circsize{c(d+1)}{m^{2^ic}}{m^{2^{i+1}c}}$ circuit $U^\prime_i$.
    %We get a series of $\QACz$ circuits $U^\prime_0, \dots, U^\prime_K$,
    %such that each $U^\prime_i$ is a  circuit.
    Then apply \cref{lem:boost-step-2} with $k\gets K+1$ and $n\gets m^c$ to the circuits $\set{U^\prime_i}_{i\in\set{0,\dots,K}}$.
    We get a $\circsize{(K+1)(c(d+1)+1)}{n}{n^{1+2^{-K}}}$ circuit.
    Then we finish the proof by showing
    \begin{equation*}
        (K+1)(c(d+1)+1) \le KD.
    \end{equation*}
\end{proof}

\section{Quantum State Synthesis}

In this section, we investigate the hardness of quantum state synthesis. Rosenthal~\cite{rosenthal:LIPIcs.ITCS.2021.32} has shown that the cat state synthesis and \parity\ are equivalent via \QACz\ reduction. Thus, the hardness of cat state synthesis can be derived from \Cref{thm:main:WorstCase}. This section will provide a more generic method to prove the hardness of state synthesis via  low-degree property of \QACz\ circuits in \cref{cor:qac0-whole}.

For a quantum state $\varphi$, its normalized Frobenius norm is exponentially small.
\begin{equation*}
    \normsub{\varphi}{2} = \br{2^{-n}\Tr\br{\varphi^2}}^{1/2} \le 2^{-n/2}.
\end{equation*}
Hence if we approximate quantum states with constant Frobenius distance,
we can always get trivial approximation results.
However, the spectral norm of a quantum state can be as large as $1$ for pure states.
This means that the approximation results of \cite{nadimpalli_pauli_2024} are insufficient,
and spectral approximate degree is essential in the task of quantum state synthesis hardness.

%\begin{lemma}[\redt{Not Used}]\label{lem:str1}
%    Let $H$ be a $2^n\times2^n$ Hermitian operator. Then
%    \begin{equation}
%        \normsub{H}{2} \le \sqrt{\rank(H)\cdot 2^{-n}}\norm{H}
%    \end{equation}
%\end{lemma}
%\begin{proof}
%    Let $r = \rank(H)$ and $\varepsilon = \norm{H}$.
%    Suppose
%    \begin{equation*}
%        H = \sum_{i=1}^r s_iH_i
%    \end{equation*}
%    where $\abs{s_i} \le \varepsilon$ and $H_i$ is a rank-one projector for each $i$.
%    Then
%    \begin{equation*}
%        \normsub{H}{2} = \br{2^{-n}\Tr\Br{\sum_{i=1}^rs_i^2H_i^2}}^{1/2}
%            = \br{2^{-n}\sum_{i=1}^rs_i^2}^{1/2}
%            \le \br{2^{-n}r\varepsilon^2}^{1/2}.
%    \end{equation*}
%\end{proof}

\subsection{Lower Bound on the Degrees of Quantum States}\label{sec:degree-lower-bound-of-quantum-states}
We first study the approximate degree of quantum states.
For a quantum state $\varphi$,
we continue to use the approximate degree $\degeps{\varepsilon}{\varphi}$ defined in \cref{def:quantum-approximate-degree}.
For low-degree pure quantum states, we have the following concentration bound when we measure it in the computational basis:
\begin{lemma}[\cite{Anshu2023concentrationbounds,KAAV15}]\label{lem:anticoncentration}
    Let $\varphi = \ketbra{\varphi}$ be a pure state satisfying
    $\degeps{\varepsilon}{\varphi} \leq k$.
    We measure $\varphi$ in the computational basis,
    and let $W_\varphi$ denote the Hamming weight of the measurement outcome.
    That is, we let $W_\varphi$ be the random variable satisfying
    $\prob{W_\varphi = i} = \sum_{x\in\set{0,1}^n: \abs{x}=i}\bra{x}\varphi\ket{x}$.
    Let $m$ be an integer median of $W_\varphi$ that satisfies
    $\prob{W_\varphi \le m} \ge 1/2$,
    then
    \begin{equation}\label{eq:quamtum-state-concentration1}
        \prob{W_\varphi > m + k} \le 4\varepsilon^2.
    \end{equation}
    Similarly, if $m$ is an integer median that satisfies
    $\prob{W_\varphi \ge m} \ge 1/2$,
    then
    \begin{equation}\label{eq:quamtum-state-concentration2}
        \prob{W_\varphi < m - k} \le 4\varepsilon^2.
    \end{equation}
\end{lemma}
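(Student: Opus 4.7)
\medskip

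The plan is to exploit the key combinatorial property of low-degree operators in the computational basis: if $A$ has Pauli degree at most $k$, then $\bra{x} A \ket{y} = 0$ whenever the Hamming distance $\abs{x \oplus y}$ exceeds $k$. This is because every Pauli monomial appearing in the expansion of $A$ has at most $k$ non-identity tensor factors, and hence at most $k$ factors of $\Base{1}$ or $\Base{2}$, which are the only Paulis that can connect different computational basis states.

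I would first invoke the hypothesis $\degeps{\varepsilon}{\varphi} \le k$ to obtain an operator $A$ with $\deg(A) \le k$ and $\norm{\varphi - A} \le \varepsilon$. Define two orthogonal projectors
\begin{equation*}
    P_{\le} = \sum_{x:\, \abs{x} \le m} \ketbra{x}, \qquad P_{>} = \sum_{x:\, \abs{x} > m+k} \ketbra{x}.
\end{equation*}
For any $x$ with $\abs{x} \le m$ and $y$ with $\abs{y} > m+k$, we have $\abs{x \oplus y} \ge \abs{y} - \abs{x} > k$, so by the observation above $\bra{x} A \ket{y} = 0$. Consequently $P_{\le} A P_{>} = 0$, and therefore
\begin{equation*}
    P_{\le} \varphi P_{>} = P_{\le} (\varphi - A) P_{>},
\end{equation*}
which has spectral norm at most $\norm{\varphi - A} \le \varepsilon$ since projectors have norm at most $1$.

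Next I would compute $\norm{P_{\le} \varphi P_{>}}$ exactly, using the fact that $\varphi = \ketbra{\varphi}$ is rank one. Setting $\ket{\varphi_{\le}} = P_{\le} \ket{\varphi}$ and $\ket{\varphi_{>}} = P_{>} \ket{\varphi}$, we have $P_{\le} \varphi P_{>} = \ket{\varphi_{\le}} \bra{\varphi_{>}}$, whose spectral norm is $\norm{\ket{\varphi_{\le}}} \cdot \norm{\ket{\varphi_{>}}}$. By the definition of computational basis measurement,
\begin{equation*}
    \norm{\ket{\varphi_{\le}}}^2 = \prob{W_\varphi \le m} \ge \frac{1}{2}, \qquad \norm{\ket{\varphi_{>}}}^2 = \prob{W_\varphi > m+k}.
\end{equation*}
Combining these with the spectral norm bound yields $\prob{W_\varphi > m+k} \le \varepsilon^2 / \prob{W_\varphi \le m} \le 2\varepsilon^2 \le 4\varepsilon^2$, which is \eqref{eq:quamtum-state-concentration1}. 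The second inequality \eqref{eq:quamtum-state-concentration2} follows by the completely symmetric argument, swapping the roles of $P_{\le}$ and $P_{\ge}$ (projecting onto weights $\ge m$ and weights $< m-k$ respectively).

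I do not expect any serious obstacle here: the only nontrivial ingredient is the locality property of low Pauli degree, which is a direct consequence of the block structure of Pauli matrices, and the rest is linear algebra with projectors. The one subtlety is making sure the spectral norm approximation (not Frobenius) is used at every step, since otherwise the bound $\norm{P_{\le}(\varphi - A)P_{>}} \le \norm{\varphi - A}$ would not hold with equally clean constants.
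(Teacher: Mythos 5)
Your proof is correct and uses the same mechanism as the source the paper cites and as the paper's own analogous argument in the low-energy-state lemma (bounding $\norm{\Pi_0(\varphi-R)\Pi_1}$ via the fact that a degree-$k$ Pauli operator has vanishing matrix elements between basis states of Hamming distance greater than $k$, then exploiting rank-one structure to factor the cross term). Your constant is in fact slightly better ($2\varepsilon^2$ rather than the stated $4\varepsilon^2$), so nothing is lost.
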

Lemma~\ref{lem:anticoncentration} implies that the quantum states that do not satisfy this concentration property actually have high approximate degree.
This allows us to prove a tight bound on the approximate degree of a nekomata state. 
\begin{cor}\label{cor:degnekomata}
    For $\varepsilon < \frac{1}{4\sqrt{2}}$ and $n$-nekomata state $\ket{\nu} = \f{1}{\sqrt{2}} \p{\ket{0^n,\psi_0} + \ket{1^n,\psi_1}}$, we have
    \begin{equation*}
        \degeps{\varepsilon}{\ketbra{\nu}} \ge n.
    \end{equation*}
    As a special case, for cat states, we have
    \begin{equation*}
        \degeps{\varepsilon}{\Cat_n} \ge n.
    \end{equation*}
\end{cor}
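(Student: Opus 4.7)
The plan is a direct Pauli-matrix-element calculation that extracts an off-diagonal cross term of $\ketbra{\nu}$ which cannot be reproduced by any low-degree operator. Expanding
\begin{equation*}
  \ketbra{\nu} = \tfrac{1}{2}\bigl(\ketbra{0^n,\psi_0} + \ketbratwo{0^n,\psi_0}{1^n,\psi_1} + \ketbratwo{1^n,\psi_1}{0^n,\psi_0} + \ketbra{1^n,\psi_1}\bigr),
\end{equation*}
and using $\braket{0^n,\psi_0}{1^n,\psi_1}=0$, I will isolate the off-diagonal entry
\begin{equation*}
  \bra{0^n,\psi_0}\ketbra{\nu}\ket{1^n,\psi_1} = \braket{0^n,\psi_0}{\nu}\braket{\nu}{1^n,\psi_1} = \tfrac{1}{\sqrt{2}}\cdot\tfrac{1}{\sqrt{2}} = \tfrac{1}{2}.
\end{equation*}

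Next I will show that every operator $B$ with $\deg(B)\le n-1$ has this same off-diagonal entry equal to zero. Writing $B=\sum_{\sigma:\,\abs{\sigma}\le n-1}\widehat{B}(\sigma)\Base{\sigma}$ in the Pauli basis and splitting $\Base{\sigma}$ into its restriction on the first $n$ qubits and on the remaining ancillary qubits, each summand factorises as
\begin{equation*}
  \bra{0^n,\psi_0}\Base{\sigma}\ket{1^n,\psi_1} = \left(\prod_{i=1}^{n}\bra{0}\Base{\sigma_i}\ket{1}\right)\cdot\bra{\psi_0}\Base{\sigma'}\ket{\psi_1},
\end{equation*}
where $\sigma'$ denotes the ancillary part of $\sigma$. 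Since $\abs{\sigma}\le n-1$, at least one index $i\in[n]$ must have $\sigma_i=0$, i.e.\ $\Base{\sigma_i}=\id$, and then $\bra{0}\id\ket{1}=\braket{0}{1}=0$ kills the first factor. Hence $\bra{0^n,\psi_0}B\ket{1^n,\psi_1}=0$.

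Combining the two facts via $\norm{M}\ge\abs{\bra{u}M\ket{v}}$ for unit vectors $\ket{u},\ket{v}$, for every $B$ with $\deg(B)\le n-1$ we obtain
\begin{equation*}
  \norm{\ketbra{\nu}-B} \ge \abs{\bra{0^n,\psi_0}(\ketbra{\nu}-B)\ket{1^n,\psi_1}} = \tfrac{1}{2}.
\end{equation*}
Therefore $\degeps{\varepsilon}{\ketbra{\nu}}\ge n$ whenever $\varepsilon<1/2$, which in particular covers the claimed range $\varepsilon<1/(4\sqrt{2})$; the cat state follows as the special case with empty ancillae.

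I do not anticipate any substantive obstacle, as the argument reduces to three routine steps. An alternative route via \cref{lem:anticoncentration} applied to the bimodal Hamming-weight distribution of $\ket{\nu}$ is also available and is likely what motivates the stated threshold $1/(4\sqrt{2})$, but handling generic $\psi_0,\psi_1$ cleanly in that route (where the integer median can sit anywhere inside the gap between the two modes) requires more care than the matrix-element argument above.
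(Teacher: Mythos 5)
Your proof is correct, and it takes a genuinely different route from the paper. The paper first handles the cat state via the Hamming-weight anticoncentration bound for low-degree states (\cref{lem:anticoncentration}, applied with the bimodal distribution supported on weights $0$ and $n$), and then reduces a general $n$-nekomata to the cat state by tracing out the ancillary register against a carefully chosen operator and invoking \cref{lem:zeropart}. You instead give a direct matrix-element argument: the rank-one projector $\ketbra{\nu}$ has off-diagonal entry $\bra{0^n,\psi_0}\ketbra{\nu}\ket{1^n,\psi_1}=\tfrac12$, whereas any Pauli monomial $\Base{\sigma}$ with $\abs{\sigma}\le n-1$ must act as the identity on at least one of the first $n$ qubits, forcing $\bra{0}\id\ket{1}=0$ in the tensor factorization and hence $\bra{0^n,\psi_0}B\ket{1^n,\psi_1}=0$ for every degree-$(n-1)$ operator $B$; the bound $\norm{M}\ge\abs{\bra{u}M\ket{v}}$ then gives $\norm{\ketbra{\nu}-B}\ge\tfrac12$. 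Each step checks out (in particular, $\abs{\sigma}\le n-1$ counts non-identity positions over all $n+a$ qubits, so at least one of the first $n$ positions is indeed the identity). Your approach is more elementary, treats the general nekomata directly without the partial-trace reduction, and yields the stronger threshold $\varepsilon<\tfrac12$ rather than $\varepsilon<\tfrac{1}{4\sqrt2}$; what it does not give you is the more flexible machinery of \cref{lem:anticoncentration}, which the paper reuses elsewhere (e.g., for the low-energy code states) where no single exponentially separated pair of basis vectors is available and one genuinely needs the concentration statement.
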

\begin{proof}
    We first prove the special case for cat states.
    Let $W_{\Cat}$ be the random variable defined in \cref{lem:anticoncentration}. It is easy to see that 
    $$W_{\Cat_n} = \begin{cases}
        n \text{ with probability } 1/2.\\
        0 \text{ with probability } 1/2.
    \end{cases}$$
    Applying \cref{lem:anticoncentration} with $k\leftarrow n-1$ and $m\leftarrow 0$, we conclude the result $\degeps{\varepsilon}{\Cat_n} \ge n$.
    
    Now for a general $n$-nekomata, assume $\braket{\psi_0}{\psi_1} \ge 0$. Let 
    $$\tilde{\nu} = \Tr_{\ge n+1}\Br{\br{\id\otimes \f{\ketbra{\psi_0} + \ketbra{\psi_1} + \ketbra{\psi_0}{\psi_1} + \ketbra{\psi_1}{\psi_0}}{\Tr\br{ \ketbra{\psi_0} + \ketbra{\psi_1} + \ketbra{\psi_0}{\psi_1} + \ketbra{\psi_1}{\psi_0} } }} \ketbra{\nu}}.$$ 
    With  \cref{lem:zeropart}, $\degeps{\varepsilon}{\tilde{\nu}} \le \degeps{\varepsilon}{\nu}$.
    
    We finish our proof with the fact \[\tilde{\nu} = \p{\frac{|1 + \braket{\psi_0}{\psi_1}|^2}{\Tr\br{ \ketbra{\psi_0} + \ketbra{\psi_1} + \ketbra{\psi_0}{\psi_1} + \ketbra{\psi_1}{\psi_0} } }}\Cat_n.\]
    % Let $m = 0$.
    % Then clearly $\prob{W \le m} = \prob{W = 0} = 1/2$.
    % Suppose on the contrary, $\degeps{\varepsilon}{\Cat_n} = k \le n - 1$.
    % Then
    % \begin{equation*}
    %     \prob{W > m + k} \le 4\varepsilon^2 < 1/2.
    % \end{equation*}
    % However,
    % \begin{equation*}
    %     \prob{W > m + k} = \prob{W > n - 1} = \prob{W = n} = 1/2.
    % \end{equation*}
    % This is a contradiction.
\end{proof}

\subsection{Quantum State Synthesis}
Now we can prove our hardness results for quantum state synthesis.
The formal definition of quantum state synthesis is \cref{def:state-synthesis}.
This definition coincides with the approximate dirty state generation problem considered by Rosenthal~\cite{rosenthal:LIPIcs.ITCS.2021.32},
and is the easiest one among his definitions.

We will be using the following lemma, with the proof deferred to \cref{appendix:proofs}.
\begin{lemma}\label{lem:pure-state-purification}
    Let $\varphi = \ketbra{\varphi}$ be a pure state on $n$ qubits.
    Let $\psi = \ketbra{\psi}$ be a pure state on $n+a$ qubits.
    Suppose $$\norm{\ketbra{\varphi} - \Tr_{n+1,\dots,n+a}\Br{\ketbra{\psi}}} \le \varepsilon,$$
    then there exists a pure quantum state $\ket{\nu}$ on $a$ qubits such that
    $$\norm{\ketbra{\varphi}\otimes\ketbra{\nu} - \ketbra{\psi}} \le 5\sqrt{\varepsilon}.$$
\end{lemma}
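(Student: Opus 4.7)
The plan is to first translate the spectral-norm closeness of the reduced state into a fidelity bound, then apply Uhlmann's theorem to extract the required purification $\ket{\nu}$, and finally convert the fidelity back into a spectral-norm bound by using the sharp identity that holds for differences of pure states.

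First, set $\rho_A = \Tr_{n+1,\dots,n+a}\Br{\ketbra{\psi}}$. Since $\ket{\varphi}$ is a unit vector, the hypothesis $\norm{\ketbra{\varphi} - \rho_A} \le \varepsilon$ implies
\begin{equation*}
\abs{\bra{\varphi}\br{\ketbra{\varphi} - \rho_A}\ket{\varphi}} \le \varepsilon,
\end{equation*}
so $\bra{\varphi}\rho_A\ket{\varphi} \ge 1-\varepsilon$. Because $\ketbra{\varphi}$ is pure, this quantity equals $F(\ketbra{\varphi},\rho_A)^2$, giving $F(\ketbra{\varphi},\rho_A) \ge \sqrt{1-\varepsilon}$.

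Next, invoke Uhlmann's theorem with the $n$-qubit register as the reference and the $a$-qubit register as the purifying environment. The state $\ket{\psi}$ is one purification of $\rho_A$, and every purification of $\ketbra{\varphi}$ on the same $n+a$ qubits is of the form $\ket{\varphi}\otimes\ket{\nu}$ for some pure state $\ket{\nu}$ on $a$ qubits (since $\ketbra{\varphi}$ has rank one, the environment register can hold any unit vector). Uhlmann then guarantees the existence of $\ket{\nu}$ with
\begin{equation*}
\abs{\braket{\psi}{\varphi\otimes\nu}}^2 = F(\ketbra{\varphi},\rho_A)^2 \ge 1-\varepsilon.
\end{equation*}

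Finally, I will use the identity $\norm{\ketbra{\alpha} - \ketbra{\beta}} = \sqrt{1 - \abs{\braket{\alpha}{\beta}}^2}$ for any two pure states, which follows by diagonalizing the rank-two Hermitian operator $\ketbra{\alpha}-\ketbra{\beta}$ on the span of $\set{\ket{\alpha},\ket{\beta}}$ and reading off its eigenvalues $\pm\sqrt{1-\abs{\braket{\alpha}{\beta}}^2}$. Applied here with $\ket{\alpha} = \ket{\varphi}\otimes\ket{\nu}$ and $\ket{\beta} = \ket{\psi}$, this gives
\begin{equation*}
\norm{\ketbra{\varphi}\otimes\ketbra{\nu} - \ketbra{\psi}} = \sqrt{1-\abs{\braket{\varphi\otimes\nu}{\psi}}^2} \le \sqrt{\varepsilon},
\end{equation*}
which is well inside the claimed $5\sqrt{\varepsilon}$ bound (the extra slack is presumably an artifact of an alternative route through the Fuchs--van de Graaf inequalities of \cref{lem:fuchs-vandegraaf}). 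No step poses a real obstacle: the only subtlety is that the Uhlmann purification must live on exactly the $a$-qubit register, which is automatic because $\ketbra{\varphi}$ is already pure and so every purification factors as $\ket{\varphi}$ tensored with a state on the ancilla.
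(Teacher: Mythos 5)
Your proposal is correct, and it takes a genuinely different route from the paper's proof. The paper works directly with the Schmidt decomposition $\ket{\psi}=\sum_i\sqrt{s_i}\ket{\mu_i}\otimes\ket{\nu_i}$: it shows $s_1\ge 1-\varepsilon$, chooses $\ket{\nu}=\ket{\nu_1}$ to be the leading Schmidt vector of the environment, and then chains several triangle inequalities ($\norm{\varphi-\mu_1}\le 2\varepsilon$, $\norm{\ketbra{\psi}-\ketbra{\mu_1}\otimes\ketbra{\nu_1}}\le\sqrt{8\varepsilon}$), which is where the constant $5$ comes from. You instead pass through fidelity, invoke Uhlmann's theorem to produce $\ket{\nu}$ (which in general is \emph{not} the leading Schmidt vector but rather the normalization of $\sum_i\sqrt{s_i}\overline{\braket{\varphi}{\mu_i}}\ket{\nu_i}$), and finish with the exact identity $\norm{\ketbra{\alpha}-\ketbra{\beta}}=\sqrt{1-\abs{\braket{\alpha}{\beta}}^2}$ for pure states. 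Each step checks out: the bound $\bra{\varphi}\rho_A\ket{\varphi}\ge 1-\varepsilon$ follows since $\ketbra{\varphi}-\rho_A$ is Hermitian with spectral norm at most $\varepsilon$; the observation that every purification of a rank-one state on the $a$-qubit register factors as $\ket{\varphi}\otimes\ket{\nu}$ is exactly what makes Uhlmann applicable here; and the eigenvalue computation for the rank-two traceless operator $\ketbra{\alpha}-\ketbra{\beta}$ is standard. Your route buys a sharper constant ($\sqrt{\varepsilon}$ in place of $5\sqrt{\varepsilon}$) and a shorter argument at the cost of importing Uhlmann's theorem; the paper's route is fully self-contained and exhibits $\ket{\nu}$ explicitly, but is lossier. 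Since the lemma is only used up to constants in \cref{thm:state-synthesis}, either version suffices.
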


\begin{theorem}\label{thm:state-synthesis}
    Let $\varphi = \ketbra{\varphi}$ be a pure state on $n$ qubits.
    Let $U$ be a $\QACz$ circuit with $a$ ancillae and depth $d$ that synthesizes $\varphi$ with fidelity $1 - \delta$.
    Then for $\varepsilon = 10\delta^{1/4} + \Omega(d/n) $, we have
    \begin{equation*}
        (n+a)^{1-3^{-d}/2} = \tilde{\Omega}\br{\degeps{\varepsilon}{\varphi}}.
    \end{equation*}
    In particular, for the $n$-nekomata state, and for some fidelity $1 - \delta_0 < 0$, we have
    \begin{equation*}
        (n+a)^{1-3^{-d}/2} = \tilde{\Omega}(n).
    \end{equation*}
\end{theorem}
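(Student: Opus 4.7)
The plan is to reduce the state synthesis problem to the spectral-norm low-degree approximation of $\QACz$ circuits established in \cref{cor:qac0-whole}, and then extract a degree bound on $\ketbra{\varphi}$. Let $\ket{\psi}=U\ket{0}^{n+a}$ be the pure output, and let $\rho=\Tr_{n+1,\dots,n+a}\Br{\ketbra{\psi}}$. By the synthesis assumption, $F(\ketbra{\varphi},\rho)\ge 1-\delta$, so the Fuchs--van de Graaf inequality (\cref{lem:fuchs-vandegraaf}) gives $\norm{\ketbra{\varphi}-\rho}_{\mathrm{TD}}\le 2\sqrt{1-(1-\delta)^2}\le 2\sqrt{2\delta}$, which dominates the spectral norm. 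Then \cref{lem:pure-state-purification} produces a pure state $\ket{\nu}$ on $a$ qubits with
\begin{equation*}
\norm{\ketbra{\psi}-\ketbra{\varphi}\otimes\ketbra{\nu}}\le 5\sqrt{2\sqrt{2\delta}}\le 10\,\delta^{1/4}.
\end{equation*}

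Next, I would upper-bound the approximate degree of $\ketbra{\psi}=U\ketbra{0}^{n+a}U^\dagger$. By \cref{lem:locality-of-product} with $\ell=1$ and $r=\Theta(\sqrt{(n+a)\log(n/\eta)})$, the state $\ketbra{0}^{n+a}$ has spectral approximate degree $\tilde{O}(\sqrt{n+a})$ with inverse-polynomial error $\eta$. Feeding this into \cref{cor:qac0-whole} with $\ell=\tilde{O}(\sqrt{n+a})$, we obtain an operator $\tilde{M}$ of degree $\tilde{O}((n+a)^{1-3^{-d}}\cdot(n+a)^{3^{-d}/2})=\tilde{O}((n+a)^{1-3^{-d}/2})$ such that $\norm{\tilde{M}-\ketbra{\psi}}\le O(d/n)$.

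Combining with the previous step via the triangle inequality,
\begin{equation*}
\norm{\tilde{M}-\ketbra{\varphi}\otimes\ketbra{\nu}}\le 10\,\delta^{1/4}+O(d/n).
\end{equation*}
Now I apply \cref{lem:zeropart} with the ancilla state $\ketbra{\nu}$: since $\Tr_{n+1,\dots,n+a}\Br{(\id\otimes\ketbra{\nu})(\ketbra{\varphi}\otimes\ketbra{\nu})}=\ketbra{\varphi}\cdot\Tr[\ketbra{\nu}^2]=\ketbra{\varphi}$, the operator $\tilde{M}_\nu\defeq\Tr_{n+1,\dots,n+a}\Br{(\id\otimes\ketbra{\nu})\tilde{M}}$ is within spectral distance $10\delta^{1/4}+O(d/n)$ of $\ketbra{\varphi}$ and has degree $\le\deg(\tilde{M})=\tilde{O}((n+a)^{1-3^{-d}/2})$. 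This gives $\degeps{\varepsilon}{\varphi}\le \tilde{O}((n+a)^{1-3^{-d}/2})$ for $\varepsilon=10\delta^{1/4}+O(d/n)$, i.e., $(n+a)^{1-3^{-d}/2}=\tilde{\Omega}(\degeps{\varepsilon}{\varphi})$. The nekomata case then follows by invoking \cref{cor:degnekomata}: choose $\delta_0$ small enough that $\varepsilon<1/(4\sqrt{2})$, which yields $\degeps{\varepsilon}{\varphi}\ge n$.

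The main subtleties are ensuring the error $\eta$ in approximating $\ketbra{0}^{n+a}$ is negligible (which is free since \cref{lem:locality-of-product} gives exponential decay in $r^2$), and confirming that the purification step converts fidelity to a spectral-norm bound with the advertised $\delta^{1/4}$ rate; the remaining work is bookkeeping on the error accumulation through the triangle inequality and \cref{lem:zeropart}.
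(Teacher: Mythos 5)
Your proposal is correct and follows essentially the same route as the paper's proof: Fuchs--van de Graaf to convert fidelity into a spectral-norm bound, \cref{lem:pure-state-purification} to lift to $\varphi\otimes\nu$, \cref{lem:locality-of-product} plus \cref{cor:qac0-whole} to bound $\degeps{O(d/n)}{U\ketbra{0}^{n+a}U^\dagger}$ by $\tilde{O}((n+a)^{1-3^{-d}/2})$, and a trace-down to $\varphi$ at the end. Your explicit use of \cref{lem:zeropart} in the last step is just a spelled-out justification of the paper's inequality $\degeps{\varepsilon}{\varphi}\le\degeps{\varepsilon}{\varphi\otimes\nu}$, so there is no substantive difference.
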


Setting $a=O(n)$ and combing with \cref{cor:degnekomata}, we have the following corollary.
\begin{cor}
For any $0 < \delta\le 1$, it holds that 
   \[\degeps{\varepsilon}{\stateQLCz\Br{\delta}}=o(n).\]
   for $\varepsilon = 10\delta^{1/4} + \Omega(d/n)$. In particular, for any family of $n$-nekomata state $\set{\ketbra{\nu_n}}_{n\in\mathbb{N}}$, it holds that $\set{\ketbra{\nu_n}}_{n\in\mathbb{N}}\notin\stateQLCz$.
\end{cor}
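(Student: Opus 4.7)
The plan is to express the synthesized pure state $\ket{\psi} = U\ket{0}^{n+a}$ as (approximately) a low-degree operator, transfer this to $\ketbra{\varphi}$ via purification, and finally strip off the ancillae using \cref{lem:zeropart}. Concretely, I will proceed as follows.

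First, convert the fidelity assumption into a spectral-norm bound. Since $\sigma := \Tr_{n+1,\dots,n+a}\Br{\ketbra{\psi}}$ satisfies $F(\ketbra{\varphi},\sigma) \ge 1-\delta$, the Fuchs--van de Graaf inequalities (\cref{lem:fuchs-vandegraaf}) give $\norm{\ketbra{\varphi}-\sigma} \le \norm{\ketbra{\varphi}-\sigma}_{\mathrm{TD}} \le 2\sqrt{2\delta}$. Then \cref{lem:pure-state-purification} yields a pure state $\ket{\nu}$ on $a$ qubits with
\begin{equation*}
  \norm{\ketbra{\psi} \;-\; \ketbra{\varphi}\otimes\ketbra{\nu}} \;\le\; 5\sqrt{2\sqrt{2\delta}} \;\le\; 10\,\delta^{1/4}.
\end{equation*}

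Second, bound the approximate degree of $\ketbra{\psi} = U\ketbra{0}^{n+a}U^\dagger$. By \cref{lem:locality-of-product} applied with a single-qubit $\ket{0}$ and $r = \Theta\bigl(\sqrt{(n+a)\log(n+a)}\bigr)$, the initial product state satisfies $\deg_{\,\eta}\br{\ketbra{0}^{n+a}} = \tilde O\br{\sqrt{n+a}}$ for some inverse-polynomial $\eta$. Feeding this into \cref{cor:qac0-whole} with $\ell = \tilde O(\sqrt{n+a})$ gives
\begin{equation*}
  \degeps{\varepsilon_1}{U\ketbra{0}^{n+a}U^\dagger} \;=\; \tilde O\!\br{(n+a)^{1-3^{-d}}\cdot (n+a)^{3^{-d}/2}} \;=\; \tilde O\!\br{(n+a)^{1-3^{-d}/2}},
\end{equation*}
with $\varepsilon_1 = O(d/n)+\eta = O(d/n)$. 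Combining with the spectral bound from step one,
\begin{equation*}
  \degeps{\varepsilon_2}{\ketbra{\varphi}\otimes\ketbra{\nu}} \;=\; \tilde O\!\br{(n+a)^{1-3^{-d}/2}}, \qquad \varepsilon_2 \;=\; 10\delta^{1/4} + O(d/n).
\end{equation*}

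Third, remove the ancilla tensor factor. Apply \cref{lem:zeropart} with the ancilla density operator chosen to be $\ketbra{\nu}$ itself: since $\Tr_{n+1,\dots,n+a}\Br{(I\otimes \ketbra{\nu})(\ketbra{\varphi}\otimes\ketbra{\nu})} = \ketbra{\varphi}$, we conclude
\begin{equation*}
  \degeps{\varepsilon_2}{\ketbra{\varphi}} \;\le\; \degeps{\varepsilon_2}{\ketbra{\varphi}\otimes\ketbra{\nu}} \;=\; \tilde O\!\br{(n+a)^{1-3^{-d}/2}},
\end{equation*}
which is the stated inequality. For the nekomata consequence, pick $\delta_0$ small enough that $\varepsilon_2 < 1/(4\sqrt{2})$; then \cref{cor:degnekomata} forces $\degeps{\varepsilon_2}{\varphi} \ge n$, hence $(n+a)^{1-3^{-d}/2} = \tilde{\Omega}(n)$.

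The main obstacle is bookkeeping the error: the fidelity $1-\delta$ must survive two ``square-rootings'' (Fuchs--van de Graaf, then \cref{lem:pure-state-purification}) yielding a $\delta^{1/4}$ loss, which then has to be added to the additive $O(d/n)$ error produced by \cref{cor:qac0-whole}, and one must check that $\ket{\nu}$ being merely \emph{some} pure state (not explicitly known) is irrelevant for invoking \cref{lem:zeropart}. The key conceptual step is recognising that using the ancilla state $\ketbra{\nu}$ inside \cref{lem:zeropart} gives $\Tr[\ketbra{\nu}^2]=1$, so the reduction recovers $\ketbra{\varphi}$ exactly with no extra normalization loss.
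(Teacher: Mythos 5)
Your proposal is correct and follows essentially the same route as the paper: the paper derives this corollary in one line by setting $a=O(n)$ in \cref{thm:state-synthesis} and invoking \cref{cor:degnekomata}, and your argument reconstructs the proof of \cref{thm:state-synthesis} itself via the identical chain (Fuchs--van de Graaf $\to$ \cref{lem:pure-state-purification} $\to$ \cref{lem:locality-of-product} $\to$ \cref{cor:qac0-whole} $\to$ stripping the ancilla factor $\to$ \cref{cor:degnekomata}). Your explicit use of \cref{lem:zeropart} with the density operator $\ketbra{\nu}$ to justify $\degeps{\varepsilon}{\varphi}\le\degeps{\varepsilon}{\varphi\otimes\nu}$ is a slightly more careful bookkeeping of a step the paper states without comment, and your constant tracking ($2\sqrt{2\delta}$ rather than the paper's $2\sqrt{\delta}$) is in fact the more accurate one while still landing within the stated $10\delta^{1/4}$.
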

\begin{remark}
    Our \cref{def:state-synthesis} requires that the output of the $\QACz$ circuit have fidelity $1-\delta$ with the target state $\psi$,
    which is a relatively strong assumption.
    In the proof of \cref{thm:state-synthesis},
    to prove our circuit size lower bound,
    we actually only require the output state of the $\QACz$ circuit to be close to the target state $\psi$
    in spectral norm.
    This is a very weak assumption because, according to the Fuchs–van de Graaf inequalities~\cite[Theorem 3.33]{watrous2018theory},
    high fidelity is equivalent to low unnormalized trace norm, and the spectral norm can be exponentially smaller than the unnormalized trace norm.
    Hence, the first step of the proof would be acquiring a spectral norm upper bound given the fidelity requirement.
\end{remark}
%Combining with \cref{cor:degnekomata}, we have the following corollary.
%\begin{cor}
%    For any family of $n$-nekomata state $\set{\ketbra{\nu_n}}_{n\in\mathbb{N}}$, it holds that $\set{\ketbra{\nu_n}}_{n\in\mathbb{N}}\notin\stateQLCz$.
%\end{cor}
\begin{proof}[Proof of \cref{thm:state-synthesis}]
  By \cref{def:state-synthesis}, we have
  \begin{equation*}
    F\br{\ketbra{\varphi}, \Tr_{n+1,\dots,n+a}\Br{U\ketbra{0}^{n+a}U^\dagger}} = \sqrt{\bra{\varphi}\Tr_{n+1,\dots,n+a}\Br{U\ketbra{0}^{n+a}U^\dagger}\ket{\varphi}} \ge 1 - \delta.
  \end{equation*}
  By the Fuchs–van de Graaf inequalities~\cref{lem:fuchs-vandegraaf},
  along with the fact that the unnormalized trace norm is larger than the spectral norm, we have
  \begin{equation*}
    \norm{\varphi - \Tr_{n+1,\dots,n+a}\Br{U\ketbra{0}^{n+a}U^\dagger}} \le 2\sqrt{\delta}.
  \end{equation*}
  Now by \cref{lem:pure-state-purification},
  there exists a pure state $\nu = \ketbra{\nu}$ on $a$ qubits such that
  \begin{equation*}
    \norm{\varphi\otimes\nu - U\ketbra{0}^{n+a}U^\dagger} \le 10\delta^{1/4}.
  \end{equation*}
  We turn to prove a low-degree approximation of the operator $U\ketbra{0}^{n+a}U^\dagger$.
  First by \cref{lem:locality-of-product}, we have
  \begin{equation*}
    \degeps{O(1/(n+a))}{\ketbra{0}^{n+a}} = \tilde{O}\br{\sqrt{n+a}}.
  \end{equation*}
  Hence for $\tau = O(d/n)$, by \cref{cor:qac0-whole},
  we have $\degeps{\tau}{U\ketbra{0}^{n+a}U^\dagger} \le \tilde{O}\br{(n+a)^{1-3^{-d}/2}}$.
  This implies that the state $\varphi\otimes\nu$ has a degree $\tilde{O}\br{(n+a)^{1-3^{-d}/2}}$ approximation
  with spectral distance at most $10\delta^{1/4} + O(d/n) \le \varepsilon$.
  Finally, the lemma follows because
  \begin{equation*}
    \degeps{\varepsilon}{\varphi} \le \degeps{\varepsilon}{\varphi\otimes\nu}.
  \end{equation*}
  Now we prove the circuit lower bound for synthesizing $n$-nekomata states.
  By \cref{cor:degnekomata}, we need $\varepsilon < \frac{1}{4\sqrt{2}}$.
  Hence we only need to set $\delta_0$ such that $10\delta_0^{1/4} < \varepsilon < \frac{1}{4\sqrt{2}}$,
  which implies $\delta_0 < \frac{1}{10240000}.$
\end{proof}

\subsection{Synthesizing Long Range Correlation Using \texorpdfstring{$\QACz$}{QAC0} Circuits}

It is well known that the output state $\theta$ of a shallow quantum circuit - with a product state input - has zero correlation length. That is, the ``two-point correlation function'' $\Tr(AB\theta) - \Tr(A\theta)\Tr(B\theta)$ is $0$ for all local operators $A,B$ that are supported out of each other's light cones. In this section, we give an example which shows that $\QACz$ circuits are more powerful and can produce long range entanglement in depth $1$.

We start with the state $$\ket{\rho^0}=\left(\sqrt{1-\frac{1}{n}}\ket{0}+\sqrt{\frac{1}{n}}\ket{1}\right)^{\otimes n}=\sum_{k=0}^n a_k \ket{\psi_k},$$
where $a_k=\sqrt{\binom{n}{k}\left(1-\frac{1}{n}\right)^{n}\frac{1}{(n-1)^k}}$ and $\ket{\psi_k}= \frac{1}{\sqrt{\binom{n}{k}}}\sum_{x:|x|=k}\ket{x}$ is the uniform superposition over strings of Hamming weight $k$. The following state in $\QACz$: 
$$\ket{\rho^1}=X^{\otimes n}\CZG X^{\otimes n}\ket{\rho^0}=-a_0\ket{\psi_0} + \sum_{k=1}^na_k\ket{\psi_k}$$ will be shown to have a long range correlation. Towards this, define two projectors for a subset $S$ of qubits: 
$$\Pi^0_S = \ketbra{0}^{\otimes |S|},\quad \Pi^1_S = \frac{1}{2}\left(\ket{0}^{\otimes |S|} + \frac{1}{\sqrt{|S|}}\sum_{x_S:|x_S|=1}\ket{x_S}\right)\left(\bra{0}^{\otimes |S|} + \frac{1}{\sqrt{|S|}}\sum_{x_S:|x_S|=1}\bra{x_S}\right).$$
We have 
$$\Pi^0_S\ket{\rho^0}=a_0\ket{\psi_0} + \ket{0}^{|S|}\otimes\sum_{k=1}^na_k\ket{\psi_k^{S_c}}, \quad \Pi^0_S\ket{\rho^1}=-a_0\ket{\psi_0} + \ket{0}^{|S|}\otimes\sum_{k=1}^na_k\ket{\psi_k^{S_c}},$$
where $\ket{\psi_k^{S_c}} = \frac{1}{\sqrt{\binom{n}{k}}}\sum_{x_{S_c}:|x_{S_c}|=k}\ket{x_{S_c}}$ is a sub-normalized state on the qubits in $S_c$ (the complement of $S$) with Hamming weight $k$. Note that $$\|\Pi^0_S\ket{\rho^0}\|=\|\Pi^0_S\ket{\rho^1}\| = \left(1-\frac{1}{n} \right)^{|S|},$$ since the minus sign in front of $\psi_0$ does not affect the norm and the last equality is the probability that $|S|$ sequences of $0$ are seen. Next, 
\begin{eqnarray*}
\Pi^1_S\ket{\rho^0}&=&\frac{1}{2}\left(\ket{0}^{\otimes |S|} + \frac{1}{\sqrt{|S|}}\sum_{x_S:|x_S|=1}\ket{x_S}\right)\left(a_0\ket{0}^{\otimes |S_c|}+\sum_{k=1}^na_k\ket{\psi_k^{S_c}}+a_1\sqrt{\frac{|S|}{n}}\ket{0}^{\otimes |S_c|}+\ket{\mu}\right)\\
&=& \frac{1}{2}\left(\ket{0}^{\otimes |S|} + \frac{1}{\sqrt{|S|}}\sum_{x_S:|x_S|=1}\ket{x_S}\right)\left(\left(a_0+a_1\sqrt{\frac{|S|}{n}}\right)\ket{0}^{\otimes |S_c|}+\sum_{k=1}^na_k\ket{\psi_k^{S_c}}+\ket{\mu}\right) \\
\Pi^1_S\ket{\rho^1}&=&\frac{1}{2}\left(\ket{0}^{\otimes |S|} + \frac{1}{\sqrt{|S|}}\sum_{x_S:|x_S|=1}\ket{x_S}\right)\left(-a_0\ket{0}^{\otimes |S_c|}+\sum_{k=1}^na_k\ket{\psi_k^{S_c}}+a_1\sqrt{\frac{|S|}{n}}\ket{0}^{\otimes |S_c|}+\ket{\mu}\right)\\
&=& \frac{1}{2}\left(\ket{0}^{\otimes |S|} + \frac{1}{\sqrt{|S|}}\sum_{x_S:|x_S|=1}\ket{x_S}\right)\left(\left(-a_0+a_1\sqrt{\frac{|S|}{n}}\right)\ket{0}^{\otimes |S_c|}+\sum_{k=1}^na_k\ket{\psi_k^{S_c}}+\ket{\mu}\right) 
\end{eqnarray*}
where $\ket{\mu}:=\left(\frac{1}{\sqrt{|S|}}\sum_{x_S:|x_S|=1}\bra{x_S}\right)\left(\sum_{k=2}^n a_k\ket{\psi_k}\right)$ is a state with Hamming weight at least $1$ on the qubits in $S_c$. Note that
\begin{equation}
\label{eq:expecdiff1}
\|\Pi^1_S\ket{\rho^0}\|^2 - \|\Pi^1_S\ket{\rho^1}\|^2 = \frac{\left(a_0+a_1\sqrt{\frac{|S|}{n}}\right)^2-\left(-a_0+a_1\sqrt{\frac{|S|}{n}}\right)^2}{2} = 2a_0a_1\sqrt{\frac{|S|}{n}}.
\end{equation}
From here, we can compare the correlation functions 
$$C_0:= \bra{\rho^0}\Pi^0_S \Pi^1_T\ket{\rho^0} -\bra{\rho^0}\Pi^0_S\ket{\rho^0} \bra{\rho^0}\Pi^1_T\ket{\rho^0}, \quad C_1:= \bra{\rho^1}\Pi^0_S \Pi^1_T\ket{\rho^1} -\bra{\rho^1}\Pi^0_S\ket{\rho^1} \bra{\rho^1}\Pi^1_T\ket{\rho^1},$$ for two distinct sets of qubits $S,T$. The difference  $$\bra{\rho^0}\Pi^0_S\Pi^1_T\ket{\rho^0}-\bra{\rho^1}\Pi^0_S\Pi^1_T\ket{\rho^1} = \|\Pi^1_T\Pi^0_S\ket{\rho^0}\|^2 - \|\Pi^1_T\Pi^0_S\ket{\rho^1}\|^2,$$ can be evaluated similar to Equation \ref{eq:expecdiff1} by replacing the coefficients $a_k$ with $a_k\sqrt{\frac{\binom{n-|S|}{k}}{\binom{n}{k}}}$ (since, for each $b\in\{0,1\}$, the states $\ket{\rho^b}$ and $\Pi^0_S\ket{\rho^b}$ are very similar except that the former is defined on $n$ qubits, latter is defined on $n-|S|$ qubits and the coefficients are rescaled). This shows that $$\bra{\rho^0}\Pi^0_S\Pi^1_T\ket{\rho^0}-\bra{\rho^1}\Pi^0_S\Pi^1_T\ket{\rho^1} = 2a_0a_1\sqrt{\frac{n-|S|}{n}}\sqrt{\frac{|T|}{n}}.$$ Collectively, using the fact that $C_0=0$ for the product state $\rho^0$, we have
\begin{eqnarray*}
-C_1&=&C_0-C_1= 2a_0a_1\sqrt{\frac{n-|S|}{n}}\sqrt{\frac{|T|}{n}} - 2a_0a_1\sqrt{\frac{|S|}{n}}\cdot\|\Pi^0_S\ket{\rho^0}\|^2 \\
&=& 2a_0a_1\left(\sqrt{\frac{n-|S|}{n}}\sqrt{\frac{|T|}{n}}-\sqrt{\frac{|S|}{n}}\left(1-\frac{1}{n}\right)^{|S|}\right).
\end{eqnarray*}
We can choose $|S|=|T|= \Theta(n)$, which gives $|C_1|= \Theta(1)$ using $a_0,a_1=\Theta(1)$. This is near maximal correlation between two large regions and if we are considering lattices, one can arrange for these regions to be far apart. If we insist on considering operators of constant locality, we can choose $|S|=1, |T|=2$ and obtain $|C_1| = \Theta(1/\sqrt{n})$. 

\subsection{Bounds on Low Energy State Preparation}

Given that $\QACz$ circuits can produce long range correlations, the possibility of using them to probe low-energy regime of local Hamiltonians emerges. More concretely, we can ask if a sufficiently low-energy state of any local Hamiltonian can be prepared by a $\QACz$ circuit. Here we argue that this is not possible - 
 \cref{cor:qac0-whole} implies that the low-energy states of the local Hamiltonian considered in \cite{anshu2022nlts} cannot be generated by $\QACz$ circuits.
This holds since we can use a concentration bound similar to \cref{lem:anticoncentration} to show approximate degree lower bounds on the low-energy states of~\cite{anshu2022nlts}.
%Other results in \cite{anshu2022nlts} are about the local Hamiltonian, and the argument directly applies. 

\begin{lemma}
    Consider a $[[n, k, d]]$ CSS code satisfying Property $1$ from \cite{anshu2022nlts} with parameters $\delta_0, c_1, c_2$ as stated.
    Let $\mathbf{H} = \mathbf{H}_x + \mathbf{H}_z$ be the corresponding local Hamiltonian,
    with $m_x$ and $m_z$ local terms, respectively.
    Then for
    $$\varepsilon < \frac{1}{400c_1}\br{\frac{\min\set{m_x, m_z}}{n}}\cdot\min\set{\br{\frac{k-1}{4n}}^2, \delta_0, \frac{c_2}{2}},$$
    and every pure state $\varphi = \ketbra{\varphi}$ such that $\Tr\Br{\mathbf{H}\varphi}\le\varepsilon n$,
    we have for $\delta = 1/8000$,
    \begin{equation*}
        \degeps{\delta}{\varphi} = \Omega(n).
    \end{equation*}
    In particular, the quantum Tanner codes~\cite{9996782} satisfy the above property.
\end{lemma}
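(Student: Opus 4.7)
The plan is to adapt the proof strategy from Corollary~\ref{cor:degnekomata} (the nekomata lower bound), which works by combining the low-degree concentration inequality of Lemma~\ref{lem:anticoncentration} with the fact that the nekomata state has bimodal Hamming-weight distribution. Here the role of ``bimodality'' will be played by Property 1 of \cite{anshu2022nlts}, which (in essence) says that every state $\varphi$ with $\Tr[\mathbf{H}\varphi] \le \varepsilon n$ produces, upon computational-basis measurement, Hamming-weight outcomes that are split into two well-separated heavy intervals. Once this is in hand, the concentration bound of Lemma~\ref{lem:anticoncentration} forces the approximate degree of $\varphi$ to exceed the separation between these two intervals, which is $\Omega(n)$.

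First I would unpack Property 1 into a concrete statement about the distribution of Hamming weights when $\varphi$ is measured in the computational basis. The parameters $\delta_0$, $c_1$, $c_2$ appearing in the hypothesis correspond respectively to a gap length $\delta_0 n$ between two ``heavy'' regions of Hamming weight, a local-sparsity normalization (reflecting the $O(1)$-locality of $\mathbf{H}_x$ and $\mathbf{H}_z$), and a lower bound $c_2$ on the probability mass in each region. The Markov-type step, using the energy hypothesis $\Tr[\mathbf{H}\varphi] \le \varepsilon n$ with $\varepsilon$ scaled by $\min\{m_x,m_z\}/n$, ensures that Property 1 is activated with the quantitative constants stated.

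Next I would apply Lemma~\ref{lem:anticoncentration}. Suppose for contradiction that $\degeps{1/8000}{\varphi} \le r$. Let $W_\varphi$ be the Hamming-weight random variable obtained by measuring $\varphi$ and let $m$ be a median. The lemma gives
\begin{equation*}
\Pr[W_\varphi > m + r] \le 4(1/8000)^2, \qquad \Pr[W_\varphi < m - r] \le 4(1/8000)^2,
\end{equation*}
so the total mass outside $[m-r, m+r]$ is at most $8/8000^2 < c_2/2$. On the other hand, Property 1 produces two intervals $I_0, I_1$ with $\Pr[W_\varphi \in I_j] \ge c_2/2$ and $\mathrm{dist}(I_0, I_1) \ge \delta_0 n$ (after rescaling to account for the $(k-1)/(4n)$ and $c_2/2$ factors). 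The median window $[m-r, m+r]$ can intersect at most one of $I_0, I_1$, so one of them contributes more than $c_2/2$ mass outside the window, giving a contradiction unless $2r \ge \delta_0 n$, i.e.\ $r = \Omega(n)$.

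The main obstacle will be verifying that Property 1 (as stated in \cite{anshu2022nlts} with its precise constants) really gives the bimodal Hamming-weight split, rather than merely a bimodal split in some non-Hamming basis such as the syndrome basis. If Property 1 is phrased in terms of distinct syndrome regions, a short additional argument is needed: measuring the $Z$-stabilizers is equivalent to measuring functions of the computational-basis string, and one must pull back the two syndrome-heavy sets through the code structure to obtain two Hamming-weight–separated sets of computational basis strings, using the code distance $d \ge \delta_0 n$ to guarantee the $\Omega(n)$ Hamming-weight gap. The final verification that quantum Tanner codes~\cite{9996782} satisfy Property 1 with the claimed constants is already done in the NLTS paper and can be cited directly.
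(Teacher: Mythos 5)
There is a genuine gap in your plan. The bimodality that the NLTS machinery (Lemma~2 of \cite{anshu2022nlts}) actually delivers is not a bimodal \emph{Hamming-weight} distribution: it produces two subsets $B_0,B_1\subseteq\set{0,1}^n$, each carrying measurement probability at least $1/400$, such that every string of $B_0$ is at Hamming \emph{distance} $\Omega(n)$ from every string of $B_1$ --- and this holds in one of the two bases $b\in\set{x,z}$ (computational or Hadamard), not necessarily the computational one. Two such sets can consist entirely of strings of the \emph{same} Hamming weight (think of a superposition of two codewords both of weight $n/2$), so there is no reduction to two separated weight intervals, and \cref{lem:anticoncentration} cannot be invoked as a black box on the random variable $W_\varphi$. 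Your proposed patch of ``pulling back the syndrome-heavy sets to Hamming-weight--separated sets'' therefore does not go through in general, and the $b=x$ case is not addressed at all (the paper handles it by replacing $\varphi$ with $H^{\otimes n}\varphi H^{\otimes n}$, which has the same approximate degree).

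What the paper does instead is generalize the \emph{mechanism} behind \cref{lem:anticoncentration} from weight intervals to arbitrary well-separated sets: letting $\Pi_0,\Pi_1$ project onto the strings of $B_0,B_1$, any operator $R$ with Pauli degree below $\operatorname{dist}(B_0,B_1)=\Omega(n)$ satisfies $\Pi_0R\Pi_1=0$ (a degree-$<D$ Pauli monomial flips fewer than $D$ bits), whence
\begin{equation*}
\norm{\varphi - R} \;\ge\; \norm{\Pi_0(\varphi-R)\Pi_1} \;=\; \norm{\Pi_0\varphi\Pi_1} \;\ge\; \Tr\Br{\Pi_1\varphi}\sqrt{\Tr\Br{\Pi_0\varphi}} \;\ge\; \tfrac{1}{400}\cdot\tfrac{1}{20} \;=\; \tfrac{1}{8000}.
\end{equation*}
This directly yields $\degeps{1/8000}{\varphi}=\Omega(n)$ without ever converting the separation into a statement about Hamming weights. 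If you replace your Step involving \cref{lem:anticoncentration} with this projector argument (and add the Hadamard-basis reduction), your outline becomes the paper's proof; as written, the central step would fail.
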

\begin{proof}
    Let $W_z$ be the measurement outcome of $\varphi$ on the computational basis,
    and $W_x$ be the measurement outcome of $\varphi$ on the Hadamard basis,
    which is equivalent to measuring $H^{\otimes n}\varphi H^{\otimes n}$ on the computational basis.
    By \cite[Lemma 2]{anshu2022nlts}, there exist $b\in\set{x, z}$ and two sets $B_0$ and $B_1$ such that
    \begin{equation*}
        W_b(B_0) \ge \frac{1}{400} \text{ and } W_b(B_1) \ge \frac{1}{400}.
    \end{equation*}
    Also, the distance between $B_0$ and $B_1$ is $\Omega(n)$.
    
    Afterward, if $b = z$, then we will prove the approximate degree lower bounds for $\varphi$,
    and if $b = x$, we will prove the approximate degree lower bounds for $H^{\otimes n}\varphi H^{\otimes n}$.
    Since $\degeps{\delta}{\varphi} = \degeps{\delta}{H^{\otimes n}\varphi H^{\otimes n}}$,
    we can assume without loss of generality that $b = z$.
    
    Now we can use an argument similar to \cref{lem:anticoncentration}, to prove that $\degeps{\delta}{\varphi} = \Omega(n)$.
    %In particular, let $\degeps{\delta}{\varphi} = k$.
    Let $\Pi_{0}$ and $\Pi_{1}$ be the projectors on the strings in $B_0$ and $B_1$ respectively.
    For any operator $R$ with degree smaller than $\operatorname{dist}(B_0, B_1) = \Omega(n)$, we have
    $\Pi_0R\Pi_1 = 0$. Then
    \begin{align*}
        \norm{\varphi - R} &\ge \norm{\Pi_0\br{\varphi - R}\Pi_1} \\
                    &= \norm{\Pi_0\varphi\Pi_1} \\
                    &\ge \norm{\Pi_0\ketbra{\varphi}\Pi_1\ket{\varphi}}_2 \\
                    &= \bra{\varphi}\Pi_1\ket{\varphi}\normsub{\Pi_0\ket{\varphi}}{2} \\
                    &= \Tr\Br{\Pi_1\varphi}\cdot\br{\Tr\Br{\Pi_0\varphi}}^{1/2} \\
                    &= W_0(B_1)\cdot\sqrt{W_0(B_0)} \\
                    &\ge \frac{1}{8000}.
    \end{align*}
    Thus every operator $R$ with degree lower than $\Omega(n)$ has spectral distance at least $\frac{1}{8000}$ to $\varphi$.
    Hence we have
    $$\degeps{1/8000}{\varphi} = \Omega(n).$$
\end{proof}

With the approximate degree lower bound above,
we can immediately invoke \cref{thm:state-synthesis} to get the following hardness result for $\QACz$ circuits.
\begin{cor}
    There exists a constant $\varepsilon_0 > 0$, such that
    for all depth-$d$ $\QACz$ circuits with $a$ ancillae
that synthesize the low energy state $\varphi$ with fidelity $1 - \delta = 1-\varepsilon_0 + O(d^4/n^4)$,
    we have
    \begin{equation*}
        a = \tilde{\Omega}\br{n^{1+3^{-d}/2}}.
    \end{equation*}
\end{cor}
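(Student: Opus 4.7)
The plan is to combine the approximate-degree lower bound for low-energy states proved in the preceding lemma with the state-synthesis hardness theorem \cref{thm:state-synthesis}. The preceding lemma guarantees that for any pure $\varphi$ with $\Tr[\mathbf{H}\varphi]\le\varepsilon n$ (with $\varepsilon$ small enough depending on the code parameters), we have $\degeps{1/8000}{\varphi}=\Omega(n)$. Thus we already have access to a universal constant $\eta=1/8000$ such that the spectral approximate degree of $\varphi$ at error $\eta$ is linear.

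Now I would apply \cref{thm:state-synthesis}. That theorem says that if a depth-$d$ $\QACz$ circuit with $a$ ancillae synthesizes $\varphi$ with fidelity $1-\delta$, then
\begin{equation*}
  (n+a)^{1-3^{-d}/2}=\tilde{\Omega}\br{\degeps{\varepsilon}{\varphi}}
\end{equation*}
where $\varepsilon=10\delta^{1/4}+\Omega(d/n)$. To plug in the degree lower bound $\degeps{\eta}{\varphi}=\Omega(n)$, I need to choose $\delta$ so that the effective error $\varepsilon$ stays below $\eta=1/8000$. Solving $10\delta^{1/4}+\Omega(d/n)\le 1/8000$ gives a threshold of the form $\delta\le \varepsilon_0-O(d^4/n^4)$ for some absolute constant $\varepsilon_0>0$ (essentially $\varepsilon_0=(1/80000)^4$ up to constants). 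This is exactly the fidelity regime claimed in the corollary: $1-\delta\ge 1-\varepsilon_0+O(d^4/n^4)$.

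Under this choice, \cref{thm:state-synthesis} yields $(n+a)^{1-3^{-d}/2}=\tilde{\Omega}(n)$. To extract the stated lower bound on $a$, I raise both sides to the power $1/(1-3^{-d}/2)$, obtaining
\begin{equation*}
  n+a=\tilde{\Omega}\br{n^{1/(1-3^{-d}/2)}}=\tilde{\Omega}\br{n^{1+3^{-d}/2+O(3^{-2d})}}=\tilde{\Omega}\br{n^{1+3^{-d}/2}},
\end{equation*}
so in particular $a=\tilde{\Omega}(n^{1+3^{-d}/2})$, as desired.

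Essentially no obstacle is anticipated here: the argument is a one-line composition of the two previously established results, and the only quantitative care required is tracking the error budget so that the $10\delta^{1/4}+O(d/n)$ term in \cref{thm:state-synthesis} remains below the degree-lower-bound threshold $1/8000$. The mildly delicate point is packaging this budget as $1-\varepsilon_0+O(d^4/n^4)$, which follows from inverting the fourth-power relation between $\delta$ and $\varepsilon$ and absorbing the $d/n$ contribution into the additive error.
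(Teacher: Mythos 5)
Your proposal is correct and follows essentially the same route as the paper: combine the $\degeps{1/8000}{\varphi}=\Omega(n)$ lower bound from the preceding lemma with \cref{thm:state-synthesis}, choosing $\delta$ so that $10\delta^{1/4}+O(d/n)$ stays below $1/8000$. The only addition is that you explicitly carry out the final exponent inversion from $(n+a)^{1-3^{-d}/2}=\tilde{\Omega}(n)$ to $a=\tilde{\Omega}(n^{1+3^{-d}/2})$, which the paper leaves implicit.
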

\begin{proof}
    By \cref{thm:state-synthesis}, we have for
    \begin{equation*}
    \varepsilon = 10\delta^{1/4} + \Omega(d/n) = 10\br{\varepsilon_0 - O(d^2/n^2)}^{1/4} + \Omega(d/n) = 10\varepsilon_0^{1/4},
    \end{equation*}
    \begin{equation*}
        \br{n + a}^{1-3^{-d}/2} = \tilde{\Omega}\br{\degeps{\varepsilon}{\varphi}}.
    \end{equation*}
    For $\varepsilon = \frac{1}{8000}$, we have $\degeps{\varepsilon}{\varphi} = \Omega(n)$.
    Thus for $\varepsilon_0 = \frac{\varepsilon^4}{10000}$,
    we have
    \begin{equation*}
        \br{n + a}^{1-3^{-d}/2} = \tilde{\Omega}\br{n}.
    \end{equation*}
\end{proof}

%\begin{proof}
%    \begin{equation*}
%        p = \abs{\br{\bra{\varphi}\otimes\bra{0}^{a-n}} U_n\ket{0}^a}^2
%        = \Tr U_n\ketbra{0}^aU_n^\dagger\br{\ketbra{\varphi}\otimes\ketbra{0}^{a-n}}.
%    \end{equation*}
%    Let $\NA{1} = U_n\ketbra{0}^aU_n^\dagger$ and $A = \br{\id\otimes\bra{0}^{a-n}}\NA{1}\br{\id\otimes\ket{0}^{a-n}}$, then
%    \begin{equation*}
%        p = \Tr A \ketbra{\varphi}.
%    \end{equation*}
%    Let $\varepsilon = O(D/n)$ and $\ell = $,
%    by \cref{cor:qac0-whole}, we have $\degeps{\varepsilon}{A} \le \ell$. Hence
%    \begin{align*}
%        p &= 2^n\sum_{\sigma\in\set{0,1}^n} \widehat{A}(\sigma) \widehat{\varphi}(\sigma) \\
%            &\le 2^n\abs{\sum_{\abs{\sigma} \le \ell} \widehat{A}(\sigma) \widehat{\varphi}(\sigma)} + 2^n\abs{\sum_{\abs{\sigma} > \ell} \widehat{A}(\sigma) \widehat{\varphi}(\sigma)} \\
%            &\le 2^n\sqrt{\Wgt{\le\ell}{A}\Wgt{\le\ell}{\varphi}}+2^n\sqrt{\Wgt{>\ell}{A}\Wgt{>\ell}{\varphi}} \\
%            &\le \sqrt{2^n\Wgt{\le\ell}{\varphi}} + 2^n\normsub{A - \tilde{A}}{2}\sqrt{\Wgt{>\ell}{\varphi}} \\
%            &\le \sqrt{2^n\Wgt{\le\ell}{\varphi}} + \varepsilon ????
%    \end{align*}
%\end{proof}

\section{Quantum Channels Synthesis}

%\ynote{TODO: this lemma seems useful}
%\begin{lemma} Given integer $n\geq 1$, and a quantum channel satisfying that $\Phi_{\Cat,n}$ be a channel satisfying that $ \CE_{\Cat,n}\br{\ketbra{x}}=\br{\f{1}{\sqrt{2}}\p{\ket{x} + \ket{\bar{x}}}}\br{\f{1}{\sqrt{2}}\p{\ket{x} + \ket{\bar{x}}}}$, let $\Phi_{\Cat,n}$ be the Choi representation of $\CE_{\Cat,n}$. It holds that 
%    $$\degeps{\varepsilon}{\Phi_{\Cat,n}} = \Theta(n)$$
%\end{lemma}
%\begin{proof}
%\end{proof}

In this section, we prove \QACz\ hardness results for general quantum channels.
Recall for a unitary operator $U$, we use $\CE_{k, U, \psi}$ to define the $k$ qubit output quantum channel
using $\psi$ as ancilla as
$$\CE_{k, U, \psi}(\rho) = \Tr_{[k]^c}\Br{U\br{\rho\otimes \psi}U^\dagger}.$$
The Choi representation of $\CE_{k, U, \psi}$ is denoted by
$$\Phi_{k, U, \psi} = \br{\CE_{k, U, \psi}\otimes\id}\br{\EPR_n},$$
where $\EPR_n$ is the density operator of unnormalized $n$-qubits EPR state $\sum_{x\in\set{0,1}^n}\ket{x}\otimes\ket{x}$.
The subscript $k$ may be omitted if it is clear from context.
If there is no ancilla, we use the notation $\CE_U$ and $\Phi_U$.
We adapt the following identity for the Choi state of quantum channels.
\begin{fact}[\cite{nadimpalli_pauli_2024}]\label{fact:Choi:1}
    $$\Phi_U = \br{\id\otimes U^T}\br{\EPR_k\otimes\id_{n-k}}\br{\id\otimes\overline{U}}.$$
    If $\psi = \ketbra{\psi}$ is a pure state,
    $$\Phi_{U, \psi} = \bra{\psi}\Phi_{U}\ket{\psi}.$$
\end{fact}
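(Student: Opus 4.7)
The plan is to derive both parts of \cref{fact:Choi:1} from the standard \emph{ricochet} (transpose trick) identity for the maximally entangled state, $\br{A\otimes I}\ket{\Phi^+} = \br{I\otimes A^T}\ket{\Phi^+}$, which holds for any operator $A$ of compatible dimension, with $\ket{\Phi^+}=\sum_x \ket{x}\otimes\ket{x}$ the unnormalized EPR vector so that $\EPR_n = \ketbra{\Phi^+}$.

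For the first identity, I would start from the definition $\Phi_U = \br{\CE_U \otimes \id_n}\br{\EPR_n} = \Tr_B\Br{\br{U\otimes I_C}\EPR_n\br{U^\dagger\otimes I_C}}$, where I partition the $n$-qubit input register as $AB$ with $|A|=k$ and $|B|=n-k$, and call the $n$-qubit reference register $C$. Applying the ricochet identity to move $U$ from the input side to the reference side yields $\br{U\otimes I_C}\EPR_n\br{U^\dagger\otimes I_C}=\br{I_{AB}\otimes U^T}\EPR_n\br{I_{AB}\otimes \overline{U}}$. Since the moved operators no longer act on $B$, they commute past $\Tr_B$, so $\Phi_U = \br{I_A\otimes U^T}\Tr_B\Br{\EPR_n}\br{I_A\otimes \overline{U}}$. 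Finally, splitting $C=C_AC_B$ with $|C_A|=k$ and factoring $\ket{\Phi^+}_{AB,C} = \ket{\Phi^+}_{A,C_A}\otimes \ket{\Phi^+}_{B,C_B}$ immediately gives $\Tr_B\Br{\EPR_n} = \EPR_k\otimes I_{n-k}$, completing the first identity.

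For the second identity, I would view $U$ as an $(n+a)$-qubit unitary and $\CE_U$ as the corresponding $(n+a)$-to-$k$ channel, so that $\Phi_U$ is a $(k+n+a)$-qubit operator whose reference register naturally splits into an $n$-qubit input-reference part and an $a$-qubit ancilla-reference part. Using a Kraus decomposition $\CE_U(\rho)=\sum_i K_i \rho K_i^\dagger$ with $K_i = \br{\bra{i}\otimes I_{2^k}}U$, I would write $\Phi_U = \sum_i \ketbra{\omega_i}$, where $\ket{\omega_i} = \br{K_i\otimes I_{n+a}}\ket{\Phi^+}_{n+a}$, and factor $\ket{\Phi^+}_{n+a} = \ket{\Phi^+}_n\otimes \ket{\Phi^+}_a$ along the $(n,a)$ split. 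Projecting with $\bra{\psi}$ on the ancilla-reference half then collapses the ancilla EPR pair and, under the paper's EPR convention, effectively plugs $\ket{\psi}$ into the ancilla-input slot of each Kraus operator, producing new Kraus operators $K_i^\psi = K_i\br{I_n\otimes \ket{\psi}}$ for $\CE_{U,\psi}$. Resumming yields $\bra{\psi}\Phi_U\ket{\psi} = \sum_i \br{K_i^\psi \otimes I_n}\EPR_n\br{K_i^\psi\otimes I_n}^\dagger = \Phi_{U,\psi}$.

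The main obstacle is bookkeeping rather than mathematical depth: the inner product of $\bra{\psi}$ with one half of an EPR pair generically introduces a complex conjugation, so one must verify carefully that the EPR convention chosen in the paper gives $\ket{\psi}$ rather than $\ket{\psi^*}$ on the ancilla-input side (for real-amplitude ancillae this distinction is invisible, but in general it matters). Once the convention is pinned down, both identities collapse to direct tensor-factor manipulations combined with the transpose trick.
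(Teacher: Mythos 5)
The paper does not actually prove this statement: it is imported verbatim as a Fact from \cite{nadimpalli_pauli_2024}, so there is no internal proof to compare against. Your derivation is the standard one and is correct: the first identity is exactly the ricochet identity $\br{A\otimes I}\ket{\Phi^+}=\br{I\otimes A^T}\ket{\Phi^+}$ applied to $\EPR_{n}$, followed by commuting $\id\otimes U^T$ past $\Tr_B$ and factoring $\Tr_B\Br{\EPR_n}=\EPR_k\otimes\id_{n-k}$ (up to a harmless reordering of the reference registers), and the second identity follows from the tensor factorization $\EPR_{n+a}=\EPR_n\otimes\EPR_a$ together with a Kraus decomposition of the partial trace. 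The one substantive point is the conjugation caveat you raise at the end, and you are right to insist on it: with the paper's literal convention $\EPR_a=\sum_{x}\ket{x}\otimes\ket{x}$, contracting the ancilla-reference half with $\bra{\psi}$ plugs $\ket{\overline{\psi}}$ (the entrywise conjugate in the computational basis) into the ancilla-input slot, so what one actually obtains is $\bra{\psi}\Phi_U\ket{\psi}=\Phi_{U,\overline{\psi}}$; the statement as written is exactly correct only for real-amplitude $\ket{\psi}$ or under a convention that builds the conjugation into the reference side. This discrepancy is immaterial for every use of the fact in the paper --- in \cref{thm:qchannel-degree} the bound is uniform over the ancilla state and both the spectral norm and the approximate degree are invariant under entrywise conjugation --- but your instinct to pin down the convention before claiming the identity verbatim is the right one.
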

Since we are working with spectral norm approximations of matrices,
the spectral norms of the Choi states are important.
In fact, by \cref{fact:Choi:1} and \cite[Lemma 4]{10510479},
we see that the spectral norms of Choi states are upper bounded by $2^k$.
We approximate the operator $2^{-k}\Phi_{U, \psi}$,
which is achieved by using \cref{cor:qac0-whole}.
%we can also get low-degree approximations for the Choi states
%of the quantum channels implemented by \QAC\ circuits.
\begin{theorem}\label{thm:qchannel-degree}
    Let $n\ge 1$.
    Suppose $U$ is a depth-$d$ \QACz\ circuit with $n$ input qubits and $a$ ancillae initialized in the state $\psi$.
    For $k\le n$, we take the first $k$ qubits as output and implement the quantum channel $\CE_{U, \psi}$.
    For $\ell = \tilde{O}\br{(n+a)^{1-3^{-D}}k^{3^{-D}/2}}, \varepsilon = O(d/n)$, we have
    \begin{equation}
        \degeps{\varepsilon}{2^{-k}\Phi_{U, \psi}} \le \ell.
    \end{equation}
    %then the Choi operator $\Phi_{U, \psi}$ is \br{o(n), o(1)}-local.
    %That is, there exists an $\ell$-degree operator $A$ such that 
    %$$\norm{\CE_U - A} \le \varepsilon.$$
\end{theorem}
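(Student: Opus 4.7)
The plan is to express the Choi representation as a Heisenberg evolution $W^{\dagger} B W$ of a low-degree operator $B$ under a $\QACz$ circuit $W$, and then invoke \cref{cor:qac0-whole}. First, I unfold the Choi state using \cref{fact:Choi:1}: $\Phi_U = (\id_k \otimes U^T)(\EPR_k \otimes \id_{n+a-k})(\id_k \otimes \overline{U})$. Since $U^T = (\overline{U})^\dagger$ and the $\QACz$ gate set (single-qubit unitaries and multi-qubit $\CZG$s, the latter being real and self-adjoint) is closed under complex conjugation and under reversal-with-adjoints, both $\overline{U}$ and $U^T$ are themselves depth-$d$ $\QACz$ circuits. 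Setting $V = \overline{U}$ and $W = \id_k \otimes V$, I obtain a depth-$d$ $\QACz$ circuit on $k + n + a$ qubits satisfying
\[
2^{-k}\Phi_U \;=\; W^{\dagger}\,\tilde{B}\,W,\qquad \tilde{B} \;=\; (2^{-k}\EPR_k)\otimes \id_{n+a-k},
\]
where $\tilde{B}$ is a density operator and in particular has spectral norm at most $1$.

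Next, the block $2^{-k}\EPR_k$ is an $k$-fold tensor product of normalized Bell pairs, so \cref{lem:locality-of-product} applied with the single-copy size $\ell = 2$ and the tensor multiplicity $n \leftarrow k$, choosing $r = \Theta\!\bigl(\sqrt{k\log(n/d)}\bigr)$, produces an operator $B$ of degree $\tilde{O}(\sqrt{k})$ with $\|\tilde{B} - B\| \le O(d/n)$. Tensoring with $\id_{n+a-k}$ preserves both the degree and the spectral error. This step is where the spectral-norm framework pays off: the $\tilde O(\sqrt{k})$ bound (rather than the trivial $2k$) is precisely what produces the $k^{3^{-d}/2}$ factor in the theorem, as opposed to $k^{3^{-d}}$.

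I now apply \cref{cor:qac0-whole} to the depth-$d$ $\QACz$ circuit $W^{\dagger}$ on $k+n+a$ qubits with the operator $\tilde{B}$ of approximate degree $\tilde{O}(\sqrt{k})$. This gives
\[
\degeps{O(d/n)}{W^{\dagger}\tilde{B}W} \;\le\; \tilde{O}\!\bigl((k+n+a)^{1-3^{-d}}\cdot k^{3^{-d}/2}\bigr) \;=\; \tilde{O}\!\bigl((n+a)^{1-3^{-d}}\cdot k^{3^{-d}/2}\bigr),
\]
using $k \le n$. Finally, the Choi state with ancillae is $2^{-k}\Phi_{U,\psi} = \bra{\psi}(2^{-k}\Phi_U)\ket{\psi} = \Tr_{\mathrm{anc}}\!\bigl[(2^{-k}\Phi_U)(\id\otimes\ketbra{\psi})\bigr]$, a partial trace against the pure ancilla state on $a$ qubits. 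After reordering qubits so the ancilla register is last, this is exactly the operation treated by the \emph{Moreover} clause of \cref{cor:qac0-whole}, which does not increase the approximate degree, yielding the claimed bound.

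The main obstacle is conceptual rather than technical: one has to recognise that the suitably normalised Choi state can be packaged as a Heisenberg-evolved near-product state, and that the spectral-norm framework of \cref{cor:qac0-whole} is essential to absorb the exponentially small factor $2^{-k}$ that would be fatal in the Frobenius norm. Once this packaging is in place, the bookkeeping (closure of $\QACz$ under conjugation/reversal, bounding $\deg(2^{-k}\EPR_k)$ via \cref{lem:locality-of-product}, and verifying $k \le n$ in the parameter combination) is routine.
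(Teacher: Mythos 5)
Your proposal is correct and follows essentially the same route as the paper's proof: apply \cref{fact:Choi:1}, approximate $2^{-k}\EPR_k$ (rather than $\EPR_k$, whose spectral norm is $2^k$) by a degree-$\tilde{O}(\sqrt{k})$ operator via \cref{lem:locality-of-product}, and then invoke \cref{cor:qac0-whole} together with its ``Moreover'' clause to absorb the projection onto the ancilla state $\ket{\psi}$. You in fact spell out details the paper leaves implicit, notably that $\overline{U}$ and $U^T$ are themselves depth-$d$ $\QACz$ circuits so that \cref{cor:qac0-whole} genuinely applies to $W=\id_k\otimes\overline{U}$.
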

\begin{remark}
    This result is incomparable to the result of \cite{nadimpalli_pauli_2024}.
    In our result, we approximate the channel by an $o(n)$-degree operator with respect to the spectral norm.
    In \cite{nadimpalli_pauli_2024}, they approximate the channel by a constantly local operator with respect to the Frobenius norm.
\end{remark}
\begin{proof}[Proof of \cref{thm:qchannel-degree}]
    By \cref{fact:Choi:1},
    $$2^{-k}\Phi_{U, \psi} = 2^{-k}\bra{\psi}\Phi_{U}\ket{\psi} = \bra{\psi}\br{\id\otimes U^T}\br{2^{-k}\EPR_k\otimes\id_{n-k}}\br{\id\otimes\overline{U}}\ket{\psi}.$$
    So the proof idea would be to get a low-degree approximation of $\EPR_k$ and then invoke \cref{cor:qac0-whole}.
    However, note that $\norm{\EPR_k} = 2^k$, so instead we approximate the operator $2^{-k}\EPR_k$.
    Note that $2^{-k}\EPR_k$ is the tensor product of $k$ EPR pairs. Hence by \cref{lem:locality-of-product},
    \begin{equation*}
        \degeps{O(1/n)}{2^{-k}\EPR_k} \le \tilde{O}\br{\sqrt{k}}.
    \end{equation*}
    %$\EPR_k$ is \br{\sqrt{k\log n}, O(D/n)}-local.
    Then by \cref{cor:qac0-whole}, the theorem follows.
\end{proof}
Using this theorem,
we can prove it is hard for $\QACz$ circuits with linear ancillae to implement high degree quantum channels.
Here, we use the completely bounded spectral norm as a measure for quantum channels.
\begin{definition}[Completely Bounded Spectral Norm]
  Let $\CE$ be a quantum channel. The completely bounded spectral norm of $\CE$ is defined as
  \begin{equation*}
    \norm{\CE} = \max_{X: \norm{X}\le 1}\norm{\br{\CE\otimes\id}(X)}.
  \end{equation*}
\end{definition}
We then have the following corollary of \cref{thm:qchannel-degree}:
\begin{cor}
    Let $\CE$ be a quantum channel from $n$ qubits to $k$ qubits.
    %Let $p \ge 1$ and consider the completely bounded $p$-norm.
    Suppose there exists a \QACz\ circuit $U$ with $a$ ancillae initialized in the state $\psi$ that approximates $\CE$.
    That is, for some $\varepsilon \le 1$,
    \begin{equation*}
        \norm{\CE - \CE_{U, \psi}} \le 2^k\varepsilon,
    \end{equation*}
    then let $\Phi$ be the Choi state of $\CE$, we have
    \begin{equation*}
        \degeps{\varepsilon+O(d/n)}{2^{-k}\Phi} \le \tilde{O}\br{(n+a)^{1-3^{-D}}k^{3^{-D}/2}}.
    \end{equation*}
\end{cor}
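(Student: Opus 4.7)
The plan is a short triangle-inequality argument combining \cref{thm:qchannel-degree} with a translation of the channel-level hypothesis to a Choi-state-level bound, so no new technical ingredient beyond \cref{thm:qchannel-degree} should be required.

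First, I would translate the completely bounded spectral norm hypothesis into a spectral bound on the Choi states. Using the linearity identity
$\Phi - \Phi_{U,\psi} = ((\CE - \CE_{U,\psi}) \otimes \id)(\EPR_n)$
together with the defining property of the completely bounded spectral norm applied to an appropriate rescaling of $\EPR_n$, the hypothesis $\norm{\CE - \CE_{U,\psi}} \le 2^k \varepsilon$ yields $\norm{2^{-k}\Phi - 2^{-k}\Phi_{U,\psi}} \le \varepsilon$. The factor $2^k$ in the hypothesis is calibrated precisely to cancel against the $2^{-k}$ normalization of the Choi state that was used in \cref{thm:qchannel-degree}.

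Next, \cref{thm:qchannel-degree} directly supplies a low-degree approximation $\tilde{P}$ of $2^{-k}\Phi_{U,\psi}$: more precisely, there exists an operator $\tilde{P}$ with $\deg(\tilde{P}) \le \tilde{O}((n+a)^{1-3^{-d}} k^{3^{-d}/2})$ and $\norm{2^{-k}\Phi_{U,\psi} - \tilde{P}} \le O(d/n)$.

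Finally, the triangle inequality gives $\norm{2^{-k}\Phi - \tilde{P}} \le \norm{2^{-k}\Phi - 2^{-k}\Phi_{U,\psi}} + \norm{2^{-k}\Phi_{U,\psi} - \tilde{P}} \le \varepsilon + O(d/n)$, establishing the claimed approximate degree bound on $2^{-k}\Phi$. The main obstacle, to the extent there is one, is the bookkeeping of normalization factors in the first step, making sure that the completely bounded spectral norm scales correctly against the maximally entangled state in the convention used by the paper. This is routine, and all the substantive work was already done in \cref{thm:qchannel-degree}.
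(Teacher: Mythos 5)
Your proposal is correct and follows essentially the same route as the paper: use the completely bounded spectral norm hypothesis to bound $\norm{\Phi - \Phi_{U,\psi}}$ by $2^k\varepsilon$ (equivalently $\norm{2^{-k}\Phi - 2^{-k}\Phi_{U,\psi}}\le\varepsilon$), then combine with the degree-$\tilde{O}\br{(n+a)^{1-3^{-d}}k^{3^{-d}/2}}$, error-$O(d/n)$ approximation of $2^{-k}\Phi_{U,\psi}$ from \cref{thm:qchannel-degree} via the triangle inequality. The only difference is that you make the intermediate approximating operator and the triangle inequality explicit, which the paper leaves implicit.
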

\begin{proof}
  By the definition of completely bounded spectral norms,
  \begin{equation*}
      2^k\varepsilon\ge\norm{\CE - \CE_{U, \psi}} = \max_{X: \norm{X}\le 1}\norm{\br{\br{\CE - \CE_{U, \psi}}\otimes\id}(X)} \ge \norm{\Phi - \Phi_{U, \psi}}.
  \end{equation*}
  Then by \cref{thm:qchannel-degree}, for $\varepsilon^\prime = \varepsilon + O(d/n)$, we have
    \begin{equation*}
        \degeps{\varepsilon+O(d/n)}{2^{-k}\Phi} \le \tilde{O}\br{(n+a)^{1-3^{-D}}k^{3^{-D}/2}}.
    \end{equation*} 
\end{proof}

%\subsection{Degree Lower Bounds}
%The Choi representation of quantum channels are just un-nomalized quantum states.
%So we can use the techniques in \cref{sec:degree-lower-bound-of-quantum-states} to prove
%degree lower bounds of the Choi state of quantum channels.
%
%\paragraph{Fanout}
%For $n\ge 1$,
%the fanout channel implements the map $\ket{b, x_1, \dots, x_n}\mapsto\ket{b, x_1\oplus b, \dots, x_n\oplus b}$.
%The Choi representation is the $2^{2n+2}\times2^{2n+2}$ matrix
%\begin{equation*}
%    \Phi = \sum_{x, y\in\set{0,1}^{n+1}}\ket{x_0, x_1\oplus x_0, \dots, x_n\oplus x_0}\bra{y_0, y_1\otimes y_0, \dots, y_n\oplus y_0}\otimes\ket{x}\bra{y}.
%\end{equation*}
%To show degree lower bounds for $\Phi$,

\bibliographystyle{alpha}
\bibliography{references}

\appendix
\section{Deferred Proofs}\label{appendix:proofs}

\begin{proof}[Proof of \cref{lem:spectral-multiply}]
    The fact that $\norm{AB}\le1$ follows from the submultiplicativity of the Schatten p-norm \cite[Eq. 1.176]{watrous2018theory}. Then
    %Since $\degeps{\varepsilon_0}{A} = \ell_0$,
    %there exists an operator $\tilde{A}$ such that $\norm{A-\tilde{A}} \le \varepsilon_0$ and $\deg(\tilde{A}) = \ell_0$.
    %Similarly for $B$ there exists an operator $\tilde{B}$ such that $\norm{B - \tilde{B}} \le \varepsilon_1$ and $\deg(\tilde{B}) = \ell_1$.
    %Then clearly $\deg(\tilde{A}\tilde{B}) = \ell_0 + \ell_1$ and
    \begin{align*}
        \norm{AB - \tilde{A}\tilde{B}} &= \norm{AB - A\tilde{B} + A\tilde{B} - \tilde{A}\tilde{B}} \\
          &\le \norm{A \br{B-\tilde{B}}} + \norm{\br{A-\tilde{A}} \tilde{B}} \\
          &\le \norm{A}\norm{\br{B-\tilde{B}}} + \norm{\br{A-\tilde{A}}}\norm{\tilde{B}} \\
          &\le \varepsilon_1 + \varepsilon_0\norm{\tilde{B} - B + B} \\
          &\le \varepsilon_1 + \varepsilon_0\br{\norm{\tilde{B} - B} + \norm{B}} \\
          &\le \varepsilon_0 + \varepsilon_1 + \varepsilon_0\varepsilon_1.
    \end{align*}
\end{proof}

\begin{proof}[Proof of \cref{thm:ac0-circ-qac0}]
    Let $A = U^\dagger M_fU$.
    %Let $\varphi$ be $\br{a(n) - n}$-qubit the ancilla state.
    Then we have
    $$p(x) = \Tr\Br{\br{\ketbra{x}\otimes\varphi}A} = \bra{x}\Tr_{n + 1,\dots, n + a}\Br{A\br{\id\otimes\varphi}}\ket{x}.$$
    Then the diagonal matrix $M_p$ as defined in~\Cref{eqn:diagonalf} can be obtained by zeroing out all the non-diagonal entries of the matrix
    $$\Tr_{n + 1,\dots, n + a}\Br{A\br{\id\otimes\varphi}}.$$
    By \cref{cor:qac0-whole}, we have $\degeps{\varepsilon}{\Tr_{n+1,\dots, n+a}\Br{A\br{\id\otimes\varphi}}} \le \tilde{O}\br{(n+a)^{1-3^{-d}}\cdot\ell^{3^{-d}}}$.
    Finally, by \cref{lem:diagonaldegree} and \cref{fact:degreecoincide} we prove our theorem.
\end{proof}

\begin{proof}[Proof of \cref{lem:pure-state-purification}]
    First we write down a Schmidt decomposition of $\psi$ as
    \begin{equation*}
        \ket{\psi} = \sum_{i}\sqrt{s_i}\ket{\mu_i}\otimes\ket{\nu_i},
    \end{equation*}
    where $\set{\mu_i}$ is a set of orthogonal basis on $n$ qubits,
    and $\set{\nu_i}$ is a set of orthogonal basis on $a$ qubits.
    Also, we assume $s_1\ge s_2\ge\dots \ge s_{\rank(\psi)}$.
    Then
    \begin{equation*}
        \Tr_{n+1,\dots,n+a}\Br{\ketbra{\psi}} = \sum_is_i\ketbra{\mu_i},
    \end{equation*}
    and
    \begin{align*}
        \varepsilon &\ge \norm{\varphi - \Tr_{n+1,\dots,n+a}\Br{\ketbra{\psi}}} \\
                    &= \norm{\varphi - \sum_is_i\ketbra{\mu_i}} \\
                    &\ge \bra{\varphi}\br{\varphi - \sum_is_i\ketbra{\mu_i}}\ket{\varphi} \\
                    &= 1 - \sum_is_i\abs{\bra{\mu_i}\ket{\varphi}}^2 \\
                    &\ge 1 - \sum_is_1\abs{\bra{\mu_i}\ket{\varphi}}^2 \\
                    &= 1 - s_1.
    \end{align*}
    Hence $s_1 \ge 1 - \varepsilon$, and then $\sum_{i\ge 2}s_i\le 1 - s_1 \le \varepsilon$.
    \begin{equation*}
        \norm{\varphi - \mu_1} \le \norm{\varphi - \sum_is_i\ketbra{\mu_i}} + \norm{(1-s_1)\ketbra{\mu_1} - \sum_{i\ge 2}s_i\ketbra{\mu_i}} \le \varepsilon + \varepsilon = 2\varepsilon.
    \end{equation*}
    \begin{align*}
        \norm{\ket{\psi} - \ket{\mu_1}\otimes\ket{\nu_1}} &= \norm{-(1-\sqrt{s_1})\ket{\mu_1}\otimes\ket{\nu_1}+\sum_{i\ge 2}\sqrt{s_i}\ket{\mu_i}\otimes\ket{\nu_i}} \\
        &= \sqrt{\br{1-\sqrt{s_1}}^2 + \sum_{i\ge 2}s_i} \\
        &\le \sqrt{\br{1 - \sqrt{1-\varepsilon}}^2 + \varepsilon} \\
        &\le \sqrt{2\varepsilon}.
    \end{align*}
    So
    \begin{equation*}
        \norm{\ketbra{\psi} - \ketbra{\mu_1}\otimes\ketbra{\nu_1}} \le \sqrt{8\varepsilon}
    \end{equation*}
    and

    We let $\ket{\nu} = \ket{\nu_1}$, then
    \begin{align*}
        \norm{\varphi\otimes\nu - \psi} \le \norm{\varphi\otimes\nu - \mu_1\otimes\nu} + \norm{\mu_1\otimes\nu - \psi} \le 2\varepsilon + \sqrt{8\varepsilon}\le 5\sqrt{\varepsilon}.
    \end{align*}
\end{proof}

\end{document}